%
%
\documentclass{amsart}
\usepackage{graphicx} 
\usepackage{epsfig}

\usepackage{mathrsfs}

\usepackage{bbm}

\DeclareMathAlphabet      {\mathbfit}{OML}{cmm}{b}{it}
\let\bm=\mathbfit













\let\text=\mbox

\catcode`\@=11
\let\ced=\c
\renewcommand{\a}{\alpha}
\renewcommand{\b}{\beta}
\renewcommand{\d}{\delta}

\newcommand{\q}{\quad}

\newcommand{\s}{\sigma}

\newcommand{\cal}{\mathcal}

\newcommand{\M}{{\cal M}}

\newcommand{\ty}{\infty}
\newcommand{\e}{\varepsilon}

\newcommand{\ov}[1]{\overline{#1}}

\renewcommand{\O}{\Omega}
\newcommand{\pa}{\partial}
\newcommand{\I}{\mathbbm{i}}
\newcommand{\D}{\mathrm{d}}

\newcommand{\stq}{\subseteq}

\newcount\br@j
\br@j=0

\newcommand{\udesno}[1]{\unskip\nobreak\hfil\penalty50\hskip1em\hbox{}
             \nobreak\hfil{#1\unskip\ignorespaces}
                 \parfillskip=\z@ \finalhyphendemerits=\z@\par
                 \parfillskip=0pt plus 1fil}
\catcode`\@=11

\newcommand{\eR}{\mathbb{R}}
\newcommand{\eN}{\mathbb{N}}
\newcommand{\Ze}{\mathbb{Z}}

\newcommand{\Ce}{\mathbb{C}}

\newcommand{\re}{\mathop{\mathrm{Re}}}
\newcommand{\im}{\mathop{\mathrm{Im}}}

\newcommand{\po}{{\mathop{\mathcal P}}}

\newcommand{\res}{\operatorname{res}}



\newcommand{\sideremark}[1]{\ifvmode\leavevmode\fi\vadjust{\vbox to0pt{\vss 
      \hbox to 0pt{\hskip\hsize\hskip1em           
 \vbox{\hsize2cm\tiny\raggedright\pretolerance10000
 \noindent #1\hfill}\hss}\vbox to8pt{\vfil}\vss}}}%

                                                   %

\newcommand{\ovb}[1]{\mkern 1.5mu\overline{\mkern-1.5mu#1\mkern-1.5mu}\mkern 1.5mu}

\newcommand{\E}{\mathrm{e}}
\newcommand{\di}{\,\mathrm{d}}


\address[lapidus@math.ucr.edu]{Michel L.\ Lapidus,  University of California, Riverside, Department of Mathematics, 900 University Avenue, Riverside, CA 92521-0135, USA}

\address[goran.radunovic@fer.hr]{Goran Radunovi\'c, University of Zagreb, Faculty of Electrical Engineering and Computing, Department of Applied Mathematics, Unska 3, 10000 Zagreb, Croatia}

\address[darko.zubrinic@fer.hr]{Darko \v Zubrini\'c, University of Zagreb, Faculty of Electrical Engineering and Computing, Department of Applied Mathematics, Unska 3, 10000 Zagreb, Croatia}


\newtheorem{theorem}{Theorem}[section]
\newtheorem{corollary}[theorem]{Corollary}
\newtheorem{lemma}[theorem]{Lemma}
\newtheorem{proposition}[theorem]{Proposition}


\theoremstyle{definition}
\newtheorem{definition}[theorem]{Definition}
\newtheorem{example}[theorem]{Example}
\newtheorem{remark}[theorem]{Remark}

\numberwithin{equation}{section}





\title[Fractal Tube Formulas for Compact Sets and Relative Fractal Drums]{Fractal Tube Formulas for Compact Sets and Relative Fractal Drums: Oscillations, Complex Dimensions and Fractality}

\author[M.\ L.\ Lapidus, G.\ Radunovi\'c and D.\ \v Zubrini\' c]{Michel L.\ Lapidus, Goran Radunovi\'c and Darko \v Zubrini\' c}

\thanks{The work of Michel L.\ Lapidus was partially supported by the US National Science Foundation (NSF) under the research grants DMS-0707524 and DMS-1107750, as well as by the Institut des Hautes Etudes Scientifiques (IHES) in Paris/Bures-sur-Yvette, France, where the first author was a visiting professor in the Spring of 2012 while part of this work was completed.}

\thanks{The research of Goran Radunovi\'c and Darko \v Zubrini\'c was supported by the Croatian Science Foundation under the project IP-2014-09-2285 and by the Franco-Croatian 
PHC-COGITO~project.}

\begin{document}

\begin{abstract}
We establish pointwise and distributional fractal tube formulas for a large class of relative fractal drums in Euclidean spaces of arbitrary dimensions.
A relative fractal drum (or RFD, in short) is an ordered pair $(A,\O)$ of subsets of the Euclidean space (under some mild assumptions) which generalizes the notion of a (compact) subset and that of a fractal string.
By a fractal tube formula for an RFD $(A,\O)$, we mean an explicit expression for the volume of the $t$-neighborhood of $A$ intersected by $\O$ as a sum of residues of a suitable meromorphic function (here, a fractal zeta function) over the complex dimensions of the RFD $(A,\O)$.
The complex dimensions of an RFD are defined as the poles of its meromorphically continued fractal zeta function (namely,  the distance or the tube zeta function), which generalizes the well-known geometric zeta function for fractal strings.
These fractal tube formulas generalize in a significant way to higher dimensions the corresponding ones previously obtained for fractal strings by the first author and van Frankenhuijsen and later on, by the first author, Pearse and Winter in the case of fractal sprays.
They are illustrated by several interesting examples which demonstrate the various phenomena that may occur in the present general situation.
These examples include fractal strings, the Sierpi\'nski gasket and the 3-dimensional carpet, fractal nests and geometric chirps, as well as self-similar fractal sprays.
We also propose a new definition of fractality according to which a bounded set (or RFD) is considered to be fractal if it possesses at least one nonreal complex dimension or if its fractal zeta function possesses a natural boundary.
This definition, which extends to RFDs and arbitrary bounded subsets of $\eR^N$ the previous one introduced in the context of fractal strings, is illustrated by the Cantor graph (or devil's staircase) RFD, which is shown to be `subcritically fractal'. 
\end{abstract}

\bigskip

\subjclass[2010]{
Primary: 11M41, 28A12, 28A75, 28A80, 28B15, 42B20, 44A05.
Secondary: 35P20, 40A10, 42B35, 44A10, 45Q05.
}

\keywords{
Mellin transform, complex dimensions of a relative fractal drum, relative fractal drum, fractal set, box dimension, fractal zeta function, distance zeta function, tube zeta function, fractal string, Minkowski content, Minkowski measurable set, fractal tube formula, residue, meromorphic extension.
}

\maketitle

\tableofcontents

\section{Introduction}

In this paper our main goal is to obtain {\em fractal tube formulas} for a class of {\em relative fractal drums} in Euclidean spaces of arbitrary dimension.
The fractal tube formulas are interesting since, roughly speaking, they describe the ``fractality'' of the set (or relative fractal drum) in a more detailed way than, for instance, the mere Minkowski (or box) dimension which, by definition, only corresponds to a part of the leading term of these tube formulas.
The main results of this paper are the obtained expressions of the fractal tube formulas for relative fractal drums in terms of sums of the residues of the fractal zeta functions, i.e., the {\em distance} and {\em tube zeta functions} associated to these relative fractal drums.
Furthermore, the sum in these expressions will be taken over the set of poles of the fractal zeta function at hand, also called the set of {\em complex dimensions} of a given relative fractal drum.
In short, we will show that (after a suitable meromorphic continuation), the relative distance (or tube) zeta function encodes the information about the inner geometry of a relative fractal drum into the distribution of its poles as well as into the values of the corresponding residues.

The notion of a {\em relative fractal drum} which is defined, roughly, as an ordered pair $(A,\O)$ of subsets of $\eR^N$ (see Definition \ref{drum}), was introduced in \cite{refds} (see also \cite{fzf}) in order to generalize the notion of a bounded subset as well as of a fractal string.
It enables us to study a wide range of fractal phenomena; for instance, the corresponding {\em relative Minkowski $($or box$)$ dimension} (see Equation \eqref{dimrel}) can attain negative values, including $-\infty$.
In particular, we stress that the results of this paper can be applied to bounded subsets of $\eR^N$, with $N\geq 1$ arbitrary.
In fact, the theory developed in this paper, which gives an explicit connection between complex dimensions and fractal tube formulas, very significantly generalizes the analogous theory already developed for fractal strings (i.e., when $N=1$) by the first author and M.\ van Frankenhuijsen (see \cite[Chapter 8]{lapidusfrank12} and also [Lap-vFr1--2]). 
It also broadly extends the later work of the first author and E.\ Pearse [LapPe1--2] 
and that of those two authors and S.\ Winter \cite{lappewi1} on fractal tube formulas for fractal sprays (higher-dimensional generalizations of fractal strings, in the sense of \cite{LapPo2}).

By a fractal tube formula of the relative fractal drum $(A,\O)$, we mean an exact or asymptotic expansion of the relative tube function $t\mapsto|A_t\cap\O|_N$ as $t\to0^+$, where $A_t$ is the $t$-neighborhood of $A$ (i.e., the set of points of $\eR^N$ within a distance less than $t$ from $A$) and $|\cdot|_N$ denotes the $N$-dimensional Lebesgue measure (or volume).
The formulas obtained in this paper will hold pointwise or distributionally (in the sense of Schwartz), depending on the growth properties of the corresponding relative fractal zeta function.
More specifically, the fractal tube formulas which we will establish will be written as sums of residues of the appropriate fractal zeta function evaluated at each (visible) complex dimension, and will either be exact or else, involve an error term; see Equation \eqref{ttt} below.

Furthermore, the relative distance and tube zeta functions which will appear in the fractal tube formulas obtained here are also introduced and extensively studied in [LapRa\v Zu1--7] 
and generalize the theory of complex dimensions and geometric zeta functions for fractal strings studied by the first author and M.\ van Frankenhuijsen in [Lap-vFr1--3], 
as well as in a significant number of research papers by numerous experts in fractal geometry and other areas; see the extensive list of relevant references provided in [Lap-vFr1--3] 
and \cite{fzf}, including [DemDenKo\"U, DemKo\"O\"U, DubSep, ElLapMacRo, EsLi1--2, Fal2, Fr, Gat, HamLap, HeLap, HerLap1--3, KeKom, Kom, KomPeWi, LalLap1--2, Lap1--8, LapL\'eRo, LapLu, LapLu-vFr1--2, LapMa1--2, LapPe1--3, LapPeWi1--2, LapPo1--3, LapRa\v Zu2--8, LapRo, LapRo\v Zu, L\'eMen, MorSep, MorSepVi, Ra1--2, RatWi, Tep1--2, \v Zu].

Moreover, we expect to use the results of this paper in order to establish a connection with earlier tube formulas and their potential interpretation in terms of curvatures or curvature measures in Federer's sense (see \cite{Fed}) associated with integer dimensions.
Namely, in \cite{Fed}, H.\ Federer has unified into a single framework (that of the `sets of positive reach') the Steiner tube formula for compact convex sets in $\eR^N$ and its generalization by Weyl \cite{Wey3} for smooth compact submanifolds of Euclidean spaces (described in \cite{BergGos}, and in the even more general setting of Riemannian manifolds, in \cite{Gra}).
See also the book \cite{Schn} and the articles [Schn1, Z\"a1--3], 
along with \cite{HuLaWe} for a generalization to the case of compact sets in $\eR^N$ and [Wi, WiZ\"a, Z\"a4--5, KeKom] 
for the case of certain self-similar sets and their self-conformal and random generalizations.

As was already mentioned, the first fractal tube formulas were obtained in the books [Lap-vFr1--3] 
in the case of fractal strings via fractal zeta functions (called {\em geometric zeta functions});
see, in particular, \cite[Chapter 8]{lapidusfrank12}.
Furthermore, these fractal tube formulas were generalized to a class of (higher-dimensional) fractal sprays in [LapPe1--2] 
and then in [LapPeWi1--2], 
via {\em tubular zeta functions} for fractal sprays and self-similar tilings which are significantly different from the fractal zeta functions considered in this paper (see \cite[Section 13.1]{lapidusfrank12} for a survey of these results).
Directly inspired by the just mentioned earlier work on fractal tube formulas, further developments were obtained in \cite{DDeKU,DemKoOU,DeKoOzUr}.
Finally, we point out that our general fractal tube formulas for relative fractal drums obtained here can be used to recover the previously obtained results for fractal strings and self-similar sprays, as we show in Subsections \ref{subsec_frstr} and \ref{subsec_self_similar_sp}, respectively.

For further results concerning tube formulas and their various generalizations in a variety of settings, as well as related topics, we mention, in particular, 
[Bla, CheeM\"uSchr1--2, Fu1--2, KlRot, Kow, LapLu3, LapLu--vFr1--2, LapPe3, LapRa\v Zu7--8, Mil, Mink, Mit\v Zu, Ol1--2, Sta, Stein], 
along with the many relevant references therein.

By analogy with the case of fractal strings, the complex dimensions of a relative fractal drum are defined as the poles, or more generally, singularities of its corresponding distance or tube zeta function.
Under mild hypotheses, the Minkowski (or box) dimension of a given relative fractal drum (RFD) will be its unique {\em real} complex dimension with maximal real part.
It follows directly from the definitions that in the case of a Minkowski measurable bounded subset $A$ of $\eR^N$, its Minkowski dimension appears as the co-exponent of the (monotonic) leading term in the asymptotic expansion of its tube function $t\mapsto|A_t|_N$ as $t\to 0^+$.

We will show here that for a given relative fractal drum $(A,\O)$ and under appropriate assumptions, the other (visible) complex dimensions will also appear as (complex) co-exponents, either in the leading term or in the higher order terms, in the asymptotic expansion of the tube function $t\mapsto|A_t\cap\O|_N$ as $t\to 0^+$.
Consequently, if $\omega$ is a nonreal complex dimension of a given relative fractal drum $(A,\O)$ of $\eR^N$, then the corresponding term $t^{N-\omega}$ appearing in the fractal tube formula of the RFD will generate oscillations of order $t^{N-\re\omega}$, to which we refer to as being the {\em inner geometric oscillations} of $(A,\O)$ {\em $($of order $t^{N-\re\omega}$$)$}.
In other words, the amplitude (resp., frequency) of the associated `geometric wave' is determined by the real (resp., imaginary) part of the complex dimension $\omega$.

In particular, for a given relative fractal drum $(A,\O)$, under suitable growth hypotheses on the fractal zeta function of the RFD and the assumption that all of the (visible) complex dimensions (i.e., the poles of its distance zeta function denoted by $\zeta_{A,\O}$, see Definition \ref{zeta_r}) are simple, the fractal tube formula takes the following form:
\begin{equation}\label{ttt}
|A_t\cap\O|_N=\sum_{\omega\in\po({\zeta}_{A,\O},\bm{W})}\frac{t^{N-\omega}}{N-\omega}\res\left({\zeta}_{A,\O},\omega\right)+R^{[0]}_{A,\O}(t).
\end{equation}
Here, $\po({\zeta}_{A,\O},\bm{W})$ denotes the set of {\em visible complex dimensions} through the {\em window} $\bm W$; that is, the poles of $\zeta_{A,\O}$ which are contained in the window $\bm{W}\subseteq\Ce$ (see Definitions \ref{comp_dim_def} and \ref{window_def_5}).
Furthermore, $R^{[0]}_{A,\O}(t)$ is an error term that corresponds to the terms of orders higher than those appearing in the sum in Equation \eqref{ttt}.
Roughly speaking, its estimate is directly connected to the `size' of the window $\bm W$; i.e., to how far to the left of the {\em critical line} (see Definition \ref{window_def_5}) the distance zeta function $\zeta_{A,\O}$ can be meromorphically extended.
The fractal tube formula \eqref{ttt} should be understood pointwise or distributionally, depending on the growth properties of $\zeta_{A,\O}$.
Moreover, if $\zeta_{A,\O}$ can be meromorphically extended to all of $\Ce$, then, under appropriate assumptions, the error term $R^{[0]}_{A,\O}(t)$ disappears; i.e., it is identically equal to zero and therefore, the corresponding fractal tube formula is said to be {\em exact}.

\medskip

In order to illustrate the results of this paper, we now give a sketch of two examples of (fractal) sets and their associated fractal tube formulas.

Let $A$ be the Sierpi\'nski gasket in $\eR^2$, constructed inside an equilateral triangle of side length equal to 1.
Then, its distance zeta function $\zeta_A$ is meromorphic on all of $\Ce$ and given by
\begin{equation}\label{gask_s}
\zeta_A(s)=\frac{6(\sqrt3)^{1-s}2^{-s}}{s(s-1)(2^s-3)}+\frac{2\pi}{s}+\frac{3}{s-1},
\end{equation}
for all $s\in\Ce$.
(See Example \ref{gsk_fract} for details and note that in \eqref{gask_s} above, we have chosen $\d=1$, without loss of generality.)
Clearly, it then follows that the set of complex dimensions of $A$ is given by
\begin{equation}
\po({\zeta}_{A}):=\po(\zeta_A,\Ce)=\{0,1\}\cup\left(\log_23+\frac{2\pi}{\log 2}\I\Ze\right),
\end{equation}
and that all of the complex dimensions are simple (i.e., are simple poles of $\zeta_A$).
The distance zeta function of $A$ is shown to satisfy the appropriate conditions and we can therefore deduce the exact fractal tube formula of $A$ from Equation \eqref{ttt}.
More specifically, we obtain that
\begin{equation}\nonumber
\begin{aligned}
|A_t|&=\sum_{\omega\in\po({\zeta}_{A})}\frac{t^{2-\omega}}{2-\omega}\res\left({\zeta}_A,\omega\right)\\
&=t^{2-\log_23}\,\frac{6\sqrt{3}}{\log 2}\sum_{k=-\ty}^{+\ty}\frac{(4\sqrt{3})^{-\omega_k}t^{-\I k\mathbf{p}}}{(2-\omega_k)(\omega_k-1)\omega_k}+\left(\frac{3\sqrt{3}}{2}+\pi\right)t^2,\\
\end{aligned}
\end{equation}
valid pointwise for all $t\in(0,1/2\sqrt{3})$ and where we have let $\mathbf{p}:=\frac{2\pi}{\log 2}$ and $\omega_k:=\log_23+\I k\mathbf{p}$ for every $k\in\Ze$.
Of course, the above formula coincides with the one obtained earlier in~\cite{lappe2} and \cite{lappewi1}.

Next, we discuss the example of the fractal nest $(A_a,\O)$ in $\eR^2$ generated by the $a$-string, which is given in full detail in Example \ref{ex_nest}.
Here, $a>0$ is a parameter, $A_a$ is a union of concentric circles centered at the origin of radii $j^{-a}$ for every integer $j\geq 1$ and $\O$ is the unit ball in $\eR^2$; see Figure \ref{nest_center} in that example.
In Subsection \ref{subsec_nestch}, we show that the distance zeta function $\zeta_{A_a,\O}$ of the associated RFD has a meromorphic continuation to all of $\Ce$ and that the set of complex dimensions of $(A_a,\O)$ satisfies the inclusion
\begin{equation}
\begin{aligned}
\po({\zeta}_{A_a,\O})&:=\po({\zeta}_{A_a,\O},\Ce)\subseteq\left\{1,\frac{2}{a+1},\frac{1}{a+1}\right\}\cup\left\{-\frac{m}{a+1}:m\in\eN\right\}.
\end{aligned}
\end{equation}
We do not have an equality above since some of the complex dimensions above may vanish, due to zero-pole cancellations.
Furthermore, if $a\neq1$, all of the above (potential) complex dimensions are simple and if $a=1$, the complex dimension $\omega=1$ has multiplicity 2.
On the other hand, we do know that $1$ and $D:=\frac{2}{a+1}$ are never canceled; that is, they always belong to $\po({\zeta}_{A_a,\O})$.
In Subsection \ref{subsec_nestch}, we also show that the distance zeta function satisfies growth conditions which are good enough so as to enable us to obtain the following pointwise fractal tube formula with error term for $(A_a,\O)$, provided $a\neq 1$:
\begin{equation}\label{nest_tub_f}
\begin{aligned}
|(A_a)_t\cap\O|&=\sum_{\omega\in\po({\zeta}_{A,\O},\bm{W})}\frac{t^{N-\omega}}{N-\omega}\res\left({\zeta}_{A,\O},\omega\right)+O(t^{2-\sigma})\\
&=\frac{t^{2-D}}{2-D}\res\left({\zeta}_{A,\O},D\right)+\frac{t^{2-1}}{2-1}\res\left({\zeta}_{A,\O},1\right)+O(t^{2-\sigma})\\
&=\frac{2^{2-D}D\pi}{(2-D)(D-1)}a^{D-1}t^{2-D}+\big(4\pi\zeta(a)-2\pi\big)t+O(t^{2-\sigma}),
\end{aligned}
\end{equation}
as $t\to 0^+$.
Here, $\bm{W}:=\{\re s>\sigma\}$ with $\sigma\in(-1/2(a+1),0)$ arbitrary and $\zeta$ is the Riemann zeta function.
Observe that if $a\in(0,1)$, then $D>1$ and the leading power of $t$ in \eqref{nest_tub_f} above is $2-D$.
In other words, $\dim_B(A_a,\O)=D$ and $(A_a,\O)$ is Minkowski measurable with Minkowski content (see Equation \eqref{minkrel} and the text surrounding it)
$$
\mathcal{M}^D(A,\O)=\frac{2^{2-D}D\pi}{(2-D)(D-1)}a^{D-1}.
$$
On the other hand, if $a>1$, then $D<1$ and the leading power of $t$ in \eqref{nest_tub_f} above is $2-1=1$.
Therefore, we conclude that $\dim_B(A_a,\O)=1$ and $(A_a,\O)$ is Minkowski measurable with Minkowski content
$$
\mathcal{M}^1(A,\O)=\big(4\pi\zeta(a)-2\pi\big).
$$
Finally, when $a=1$, the two distinct complex dimensions $D$ and $1$ `merge' into a single complex dimension of order $2$.
In this case, we also obtain the corresponding pointwise fractal tube formula with error term but we have to use the more general fractal tube formula which is valid in the presence of complex dimensions of higher order; i.e., poles of the associated distance fractal zeta function of higher multiplicities.
In that case, formula \eqref{ttt} must be replaced by the following more general formula:
\begin{equation}\label{tth}
|A_t\cap\O|_N=\sum_{\omega\in\po({\zeta}_{A,\O},\bm{W})}\res\left(\frac{t^{N-s}}{N-s}{\zeta}_{A,\O}(s),\omega\right)+R^{[0]}_{A,\O}(t).
\end{equation}
By choosing a window $\bm{W}:=\{\re s>\sigma\}$, with $\sigma\in(-3/4,-1/2)$, we now obtain the following pointwise fractal tube formula with error term for $(A_1,\O)$:
\begin{equation}
\begin{aligned}
|(A_1)_t\cap\O|&=\sum_{\omega\in\po({\zeta}_{A,\O},\bm{W})}\res\left(\frac{t^{N-s}}{N-s}{\zeta}_{A,\O}(s),\omega\right)+O(t^{2-\sigma})\\
&=\res\left(\frac{t^{2-s}}{2-s}\zeta_{A_1,\O}(s),1\right)+\frac{2}{3}{\res\left(\zeta_{A_1,\O},\frac{1}{2}\right)t^{\frac{3}{2}}}\\
&\phantom{=}+\frac{2}{5}{\res\left(\zeta_{A_1,\O},-\frac{1}{2}\right)t^{\frac{5}{2}}}+O(t^{2-\sigma})\ \textrm{as}\ t\to0^+.\\
\end{aligned}
\end{equation}
In order to calculate the residue at $1$, we expand the function $t^{2-s}/(2-s)$ into a Taylor series around $s=1$ and we multiply this with the Laurent expansion of $\zeta_{A_1,\O}$ around $s=1$, which then yields
\begin{equation}
\res\left(\frac{t^{2-s}}{2-s}\zeta_{A_1,\O}(s),1\right)=2\pi t\log t^{-1}+\mathrm{const}\cdot t;
\end{equation}
so that (still pointwise)
\begin{equation}
\begin{aligned}
|(A_1)_t\cap\O|&=2\pi t\log t^{-1}+\mathrm{const}\cdot t+o(t)\quad \textrm{as}\ t\to0^+.
\end{aligned}
\end{equation}
We point out that the above tube formula is in agreement with the fact $\dim_B(A_1,\O)=1$ and that $(A_1,\O)$ is Minkowski degenerate, i.e., $\mathcal{M}^1(A_1,\O)=+\ty$.

In general, and under suitable assumptions, a complex dimension $\omega$ of an RFD $(A,\O)$ which is of order $m\geq 1$ will generate terms of the type $t^{N-\omega}(\log t^{-1})^{k-1}$, for $k=1,2,\ldots,m$, in the corresponding fractal tube formula.
We note that RFDs having (even) principal complex dimensions of arbitrary orders exist and are relatively easy to construct, as was done in \cite[Section 4.4]{refds} and also in \cite[Subsection 4.2.2]{fzf}.
Furthermore, also in the just mentioned references, RFDs with principal complex dimensions of infinite order (i.e., with principal complex dimensions that are essential singularities of the associated fractal zeta function) have been constructed; see \cite[Section 4.4]{refds} and \cite[Subsection 4.2.2]{fzf}.
We stress that the theory of the present paper can also be applied if we allow complex dimensions of infinite order.
Indeed, all of the statements and proofs of the relevant theorems are also valid almost verbatim\footnote{One needs to appropriately replace, for instance, the phrase ``meromorphic extension'' by ``meromorphic extension with possible isolated essential singularities'', etc.} if we allow complex dimensions to be also essential singularities (alongside poles) of the associated fractal zeta functions.

\medskip

In our forthcoming paper \cite{ftf_b}, we will apply the results of this paper in order to obtain a Minkowski measurability criterion for relative fractal drums formulated in terms of the nonexistence of nonreal complex dimensions with maximal real part; this criterion generalizes the corresponding Minkowski measurability criterion for fractal strings obtained in [Lap-vFr1--3]. 
(See \cite[Section 8.3]{lapidusfrank12})
More precisely, under appropriate hypotheses, we will show in \cite{ftf_b} that the relative fractal drum is Minkowski measurable if and only if the only complex dimension with maximal real part is the Minkowski dimension itself, and it is simple.
The distributional fractal tube formulas obtained in the present paper, alongside a Tauberian theorem due to Wiener and Pitt, will play a crucial role in establishing the aforementioned criterion.

We point out that in this paper, we work with four kinds of fractal zeta functions for relative fractal drums, including, the already mentioned {\em distance} and {\em tube} zeta functions (see Definitions \ref{zeta_r} and \ref{tube_zeta_deff}, respectively).
These latter two zeta functions are connected by a relatively simple functional equation (see Equation \eqref{equ_tilde}), which implies that for a given relative fractal drum $(A,\O)$ in $\eR^N$, they generate the same complex dimensions, provided the upper Minkowski dimension of $(A,\O)$ is strictly less than $N$.
Of these two fractal zeta functions, the tube zeta function has a more theoretical value, whereas the distance zeta function is more practical since it is often easier to compute in concrete examples.
In light of this, we first obtain the fractal tube formulas expressed in terms of the tube zeta function and then translate them in terms of the distance zeta function by introducing a new (intermediate) fractal zeta function, called the {\em shell zeta function} (see Definition \ref{shell_defn}). 
The reason for introducing this new fractal zeta function is of a technical nature since the shell zeta function satisfies an even more direct functional equation connecting it with the distance zeta function; see Equation \eqref{shell_dist}.

Finally, the last zeta function we introduce in this work is the {\em relative Mellin zeta function}, in Subsection \ref{subsec_mellin}.
The reason for introducing this zeta function is also of a technical nature since it will be crucial in order to extend our distributional tube formulas to a larger class of test functions.
This greater generality will be needed in our aforementioned forthcoming paper \cite{ftf_b}, where we use, in particular, our distributional tube formula to derive a Minkowski measurability criterion for RFDs.  

The results of this paper justify in a natural way the notion of complex dimensions and enable us to propose a new definition of fractality which (roughly) states that a relative fractal drum (or a bounded subset) of $\eR^N$ is considered to be `fractal' if it possesses a nonreal complex dimension.
This definition of fractality was already given in [Lap-vFr1--3] 
(see, e.g., \cite[Sections 12.1 and 12.2]{lapidusfrank12}) but we now have to our disposal a general theory of fractal zeta functions and of associated fractal tube formulas valid in any dimension $N$ (with $N\geq 1$) for any bounded subset (and relative fractal drum) of $\eR^N$.
We will demonstrate how this proposed definition `recognizes' the fractality of a number of subsets which would not be fractal in the classical sense but which everyone `feels' that they should nevertheless be considered fractal, just by looking at them.
For instance, this is the case with the relative fractal drum generated by the `devil's staircase', i.e., the Cantor function graph (see Example \ref{stair}), as well as with the examples of the $1/2$-square and the $1/3$-square fractals (see Examples \ref{1/2-tube_formula} and \ref{1/3-tube_formula}).

Finally, we refer the interested reader to our monograph \cite{fzf} for a complete and detailed exposition of the higher-dimensional theory of complex dimensions and fractal zeta functions.

\medskip

In closing this introduction, it may be helpful to the readers to point out the relationship between the results of our present work and the classic Steiner tube formula \cite{Stein}, as generalized in various ways by many authors (including Minkowski \cite{minkow}, Weyl \cite{Wey3} and later, Federer [Fed1--2]) 
and as stated in the original case of compact convex sets in \cite[Theorem 4.2.1]{Schn}.\footnote{Our exposition of this material closely follows part of \cite[Subsection 13.1.3]{lapidusfrank12}; see also [LapPe2--3] and \cite{lappewi1}.}

Let $A$ be a compact convex subset of $\eR^N$ $(N\geq 1)$ and let $B^k$ denote the $k$-dimensional unit ball of $\eR^k$ (for any integer $k\geq 0$), with $k$-dimensional volume (or Lebesgue measure) denoted by $|B^k|_k$.
(For $k=0$, we let $|B^0|_0:=1$.)
Note that for $t\geq 0$, the $t$-neighborhood (or $t$-parallel body) of $A$ can be written as $A_t=A+tB^N$.
Then its volume $V_A(t):=|A_t|_N$ can be expressed as a polynomial of degree $\leq N$ (exactly $N$ if $|A|_N>0$, e.g., if $A$ has nonempty interior) in the variable $t$:
\begin{equation}\label{xxx}
V_A(t)=\sum_{k=0}^N\mu_k(A)|B^{N-k}|_{N-k}t^{N-k},
\end{equation}
where for $k=0,1,\ldots,N$, $\mu_k(A)$ denotes the $k$-th {\em intrinsic volume} of $A$.

Up to some suitable normalizing and multiplicative constant (depending on $k$), for each $k\in\{0,1,\ldots,N\}$, the $k$-th intrinsic volume $\mu_k(A)$ coincides with the $k$-th {\em total curvature} of $A$ or the so-called $(N-k)$-th {\em Quermassintegral} of $A$.
Moreover, still for $k\in\{0,1,\ldots,N\}$, $\mu_k(A)$ can be interpreted either combinatorially and algebraically in terms of appropriate valuations (see \cite{KlRot}) or (in a closely related context) within the framework of integral geometry, as the average measure of orthogonal projections to $(N-k)$-dimensional subspaces of Euclidean space $\eR^N$; see, e.g., \cite{Schn} and \cite[Chapter 7]{KlRot}.
(This latter interpretation was already implicit in Steiner's original work \cite{Stein} and that of his immediate successors, where $N=2$ or 3.)

To make a long and beautiful story short, let us simply mention here that (up to a suitable normalizing multiplicative constant) $\mu_0$ corresponds to the {\em Euler characteristic},\footnote{In the present case of compact convex sets, $\mu_0$ is always identically equal to one. However, in the more general setting of sets of positive reach or of finite unions of such sets, it is $\Ze$-valued; see, e.g., \cite[Section 3.4]{Schn} and \cite{Fed,Za2}.} $\mu_1$ to the so-called {\em mean width}, $\mu_{N-1}$ the {\em surface area}, and $\mu_N$ the $N$-dimensional {\em volume} of $A$ (i.e., $\mu_N(A)=|A_N|=|A|$, in our notation).

Finally, let us point out that the intrinsic volumes $\mu_k$ have the following algebraic and geometric properties (for every $k=0,1,\ldots,N$):

\medskip

$(i)$ Each $\mu_k$ is homogeneous of degree $k$, i.e., for all $\lambda>0$,
\begin{equation}\label{x}
\mu_k(\lambda A)=\lambda^k\mu_k(A),
\end{equation}
and

\medskip

$(ii)$ each $\mu_k$ is rigid motion invariant; more specifically, for any (affine) isometry $R$ of $\eR^N$, we have that
\begin{equation}\label{xx}
\mu_k(R(A))=\mu_k(A).
\end{equation}

Remarkably, for any (visible) complex dimension $\omega$ of a bounded subset $A$ of $\eR^N$ (or, more generally, of an RFD $(A,\O)$ of $\eR^N$), the corresponding coefficient of our fractal tube formula (assuming that we are in the case of simple poles), that is, essentially, the residue of the fractal zeta function at $s=\omega$ (see Equation \eqref{ttt} above), satisfies entirely analogous homogeneous and geometric invariance properties (with $k$ replaced by $\omega$ in Equations \eqref{x} and \eqref{xx}).\footnote{The analog of Equation \eqref{xx} follows easily from the definitions, while the counterpart of Equation \eqref{x} follows from the scaling property of the fractal zeta function and hence of its residues (see \cite[Section 2.2]{mefzf} or \cite[Theorem 4.1.38]{fzf}).}
Furthermore, of course, the resulting fractal tube formula is no longer a polynomial of degree $N$ in the variable $t$ but involves a typically infinite sum ranging over all of the underlying visible complex dimensions of $A$ (or of the RFD $(A,\O)$).
Moreover, as we shall see in many examples, the coefficients of the fractal tube formula that correspond to the set of (visible) complex dimensions can frequently be naturally decomposed as a set of {\em integer dimensions} (say, $\omega=k\in\{0,1,\ldots,N\}$) and of {\em scaling dimensions} (say, $\omega\in \mathcal{D}_{\mathfrak{S}}$).\footnote{Of course, if $\mathcal{D}_{\mathfrak{S}}$ happens to be empty (which is certainly the case if $A$ is a compact convex set), then $V_A(t)$ reduces to a polynomial expression of degree $\leq N$ in $t$ and the corresponding tube formula is Steiner-like, much as in Equation \eqref{xxx} above.}
(See, especially, the discussion of the Sierpi\'nski gasket and of the $3$-carpet in Section \ref{subsec_sier}, along with that of self-similar sprays in Section \ref{subsec_self_similar_sp}; such a situation already arose in the very special but important case of fractal sprays studied in [LapPe1--2] and [LapPeWi1--2].)

We leave to a later work a further and much more detailed exploration of the possible geometric, algebraic and combinatorial interpretations of our fractal tube formulas (as well as potential local versions thereof), in the spirit of the above discussion and particularly, the work of Stein \cite{Stein}, Minkowski \cite{minkow} (see also \cite{Schn}), Weyl \cite{Wey3} (see also \cite{BergGos} and \cite{Gra}), Federer (\cite{federer} and, especially in \cite{Fed}, his work on {\em local tube formulas} and {\em curvature measures}), Klain and Rota \cite{KlRot}, and many other authors; see, e.g., the books \cite{Bla}, \cite{Schn}, \cite{Gra}, [Lap-vFr1--3], along with the articles [Fu1--2], \cite{HuLaWe}, \cite{kombrink}, \cite{Kom}, \cite{Kow}, [LapPe1--3], [LapPeWi1--2], \cite{Mil}, [Ol1--2], \cite{winter}, \cite{Schn1}, \cite{stacho}, \cite{Wi}, \cite{WiZa}, [Z\"a1--5], and the many relevant references therein.

\medskip

The rest of this paper is organized as follows: In Section \ref{prelm}, we provide some basic definitions (concerning the Minkowski dimension and the distance zeta functions of RFDs) and technical preliminaries about the Mellin transform.
In Section \ref{sec_point}, we establish the pointwise fractal tube formula, with or without error term, and expressed in terms of the relative tube zeta function.
In Section \ref{sec_distr}, we then use this pointwise tube formula in order to derive the distributional fractal tube formula, with or without error term and still expressed in terms of the tube zeta function.
In Section \ref{distance_tube}, we establish the pointwise and distributional fractal tube formulas, with or without error term, but now formulated via the (relative) distance zeta function.
In the process, we introduce the notion of shell zeta function (as well as that of Mellin zeta function) which enables us, in particular, to use the results of Sections \ref{sec_point} and \ref{sec_distr} formulated via the tube zeta function.
Finally, in Section \ref{exp_app}, we illustrate our results by obtaining fractal tube formulas in a variety of concrete examples, including fractal strings (Subsection \ref{subsec_frstr}), the Sierpi\'nski gasket and the 3-dimensional carpet (Subsection \ref{subsec_sier}), the Cantor graph RFD (Subsection \ref{subsec_devil}), fractal nests and (unbounded) geometric chirps (Subsection \ref{subsec_nestch}), as well as fractal sprays, and more specifically, self-similar sprays (Subsection \ref{subsec_self_similar_sp}).

\section{Preliminaries}\label{prelm}

We begin this section by stating some definitions and results from [LapRa\v Zu2--5] 
(see also the research monograph~\cite{fzf})
which will be needed here,
such as the definition of a relative fractal drum in $\eR^N$ and its associated relative distance and tube zeta functions.
In order to exclude dealing with trivial cases and shorten the statements of the results, we will always assume throughout this paper that all the sets $A$ and $\O$ are nonempty.

First of all, given a subset $A$ of $\eR^N$, we denote its {\em $\delta$-neighborhood} (or {\em $\d$-parallel set}) by
\begin{equation} 
A_\d:=\{x\in\eR^N:d(x,A)<\d\}.
\end{equation}
Here, $d(x,A):=\inf\{|x-y|:y\in A\}$ is the Euclidean distance between the point $x$ and the set $A\subseteq\eR^N$.

\begin{definition}[{[LapRa\v Zu1,4]}]\label{zeta_r}\label{drum}
Let $\Omega$ be a Lebesgue measurable subset of $\eR^N$, not necessarily bounded, but of finite $N$-dimensional Lebesgue measure (or ``volume'').
Furthermore, let $A\subseteq\eR^N$, also possibly unbounded, be such that $\Omega$ is contained in $A_\delta$ for some $\delta>0$.
The {\em distance zeta function $\zeta_{A,\O}$ of $A$ relative to $\Omega$} (or the {\em relative distance 
zeta function})
 is defined by the following Lebesgue integral:
\begin{equation}\label{rel_dist_zeta}
\zeta_{A,\O}(s):=\int_{\Omega} d(x,A)^{s-N}\D x,
\end{equation}
for all $s\in\Ce$ with $\re s$ sufficiently large.
The ordered pair $(A,\Omega)$, appearing in Definition~\ref{zeta_r} is called   
a {\em relative fractal drum} or RFD in short. 
In light of this, we will also use the phrase {\em zeta functions of relative fractal drums} instead of relative zeta functions.
\end{definition}

\medskip

\begin{remark}\label{holo_diff}
If we replace the domain of integration $\O$ in Equation \eqref{rel_dist_zeta} with $A_\d\cap\O$ for some fixed $\d>0$, that is, if we let
\begin{equation}\label{rel_dist_zeta_d}
\zeta_{A,\O}(s;\d):=\int_{A_\d\cap\Omega} d(x,A)^{s-N}\D x,
\end{equation}
then the difference $\zeta_{A,\O}(s)-\zeta_{A,\O}(s;\d)$ is an entire function (see~[LapRa\v Zu1--5]).
Therefore, we can alternatively define the relative distance function of $(A,\O)$ by~\eqref{rel_dist_zeta_d},
since in the theory of complex dimensions, we are mostly interested in the poles (or, more generally, in the singularities) of meromorphic extensions of (various) fractal zeta functions.
Then, in light of the principle of analytic continuation, the dependence of $\zeta_{A,\O}(\,\cdot\,;\d)$ on $\d$ is inessential from the point of view of the complex dimensions (defined in Definition \ref{comp_dim_def} just below).
  
The condition that $\O\subseteq A_\d$ for some $\d>0$ is of a technical nature and ensures that the function $x\mapsto d(x,A)$ is bounded for $x\in\O$.
If $\O$ does not satisfy this condition, we can still use the alternative definition given by Equation~\eqref{rel_dist_zeta_d}.\footnote{Since then, $\O\setminus A_\d$ and $A$ are a positive distance apart, this replacement will not affect the relative box dimension of $(A,\O)$ introduced just below or any other fractal properties of $(A,\O)$ that will be introduced later on.}

\end{remark}

\begin{remark}
As was already stated in the introduction, the notion of a relative fractal drum generalizes the notion of a bounded subset of $\eR^N$.
Namely, in order to apply the results of this paper to an arbitrary bounded subset $A$ of $\eR^N$, one chooses any bounded open set $\O$ containing $A_\delta$ for some $\delta>0$ (for instance, $A_\d$ itself) and applies the theory to the RFD $(A,\O)$.
\end{remark}

Analogous comments also hold for the relative tube zeta function, which we now introduce.

\begin{definition}[{[LapRa\v Zu1,4]}]\label{tube_zeta_deff}
Let $(A,\O)$ be an RFD in $\eR^N$ and fix $\d>0$. We define the {\em tube zeta function} $\widetilde{\zeta}_{A,\O}(s;\d)$ {\em of $A$ relative to $\O$} (or the {\em relative tube zeta function} by
\begin{equation}\label{401/2}
\widetilde{\zeta}_{A,\O}(s;\d):=\int_0^{\delta}t^{s-N-1}|A_t\cap\O|\,\D t,
\end{equation}
for all $s\in\Ce$ with $\re s$ sufficiently large, where the integral is taken in the Lebesgue sense and $|A_t\cap\O|:=|A_t\cap\O|_N$ denotes the $N$-dimensional volume of $A_t\cap\O\subseteq\eR^N$.
\end{definition} 

The distance and tube zeta functions of relative fractal drums are a special case of Dirichlet-type integrals (or, in short, DTIs), and as such, have a well-defined {\em abscissa of $($absolute$)$ convergence}.
The abscissa of convergence of a DTI $\zeta\colon E\to\Ce$, where $E\subseteq\Ce$ is a domain, is defined as the infimum of all the real numbers $\a$ for which the integral $\zeta(\a)$ is absolutely convergent and we denote it by $D(\zeta)$.\footnote{For a precise definition of a DTI, as well as for the results mentioned here concerning them (and their generalizations), we refer the interested reader to~\cite[Appendix A]{dtzf} and for more details, to \cite[Appendix~A]{fzf}.} 

In short, a DTI is given by
\begin{equation}\label{2.3.1/2}
\zeta_{E,\varphi,\nu}(s):=\int_{E}\varphi(x)^s\di\nu(x),
\end{equation}
for all $s\in\Ce$ with $\re s$ sufficiently large, where $E$ is a locally compact and metrizable topological space (e.g., $E:=\O$, $E:=A_\delta\cap\O$ or $E:=[0,\d]$, in Equation \eqref{rel_dist_zeta}, \eqref{rel_dist_zeta_d} or \eqref{401/2}, respectively), $\nu$ is a (positive or complex) local measure with total variation measure denoted by $|\nu|$, and $\varphi\colon E\to\eR$ satisfies $\varphi\geq 0$ $|\nu|$-a.e.\ on $E$ and is tamed (i.e., there exists $C<\ty$ such that $\varphi\leq C$ $|\nu|$-a.e.\ on $E$).

A general result about a DTI $\zeta$ is the fact that it is a holomorphic function in the open half-plane to the right of its abscissa of convergence; that is, on the {\em half-plane of $($absolute$)$ convergence} $\Pi(\zeta):=\{\re s>D(\zeta)\}$.\footnote{Here and thereafter, subsets of $\Ce$ of the type $\{s\in\Ce:\re s< \a\}$, $\{s\in\Ce:\re s> \a\}$ and $\{s\in\Ce:\re s= \a\}$ are denoted by $\{\re s<\a\}$, $\{\re s>\a\}$ and $\{\re s=\a\}$, respectively.}
In the sequel, the vertical line $\{\re s=D(\zeta)\}$ is often referred to as the {\em critical line} (for $\zeta$).
Furthermore, the relative distance and tube zeta functions are connected by the functional equation
\begin{equation}\label{equ_tilde}
\zeta_{A,\O}(s;\d)=\delta^{s-N}|A_\delta\cap\O|+(N-s)\widetilde\zeta_{A,\O}(s;\d),
\end{equation}
which is valid on any open connected subset $U$ of $\Ce$ to which any of these two zeta functions has a meromorphic continuation (see~\cite{refds} or \cite{fzf}).
This result is very useful since in many concrete examples, the distance zeta function is much easier to calculate than the tube zeta function.
On the other hand, the tube zeta function has an important theoretical value and many results in [LapRa\v Zu1--5] 
are proven in terms of the tube zeta function and then reformulated in terms of the distance zeta function.
This will also be the case in the present paper.

A key technical observation underlying some of the methods used in this paper is that the tube zeta function coincides with the Mellin transform of a modified tube function $t\mapsto|A_t\cap\O|$.
More specifically, as we will see in a moment, one has that for all $s\in\Ce$ with $\re s$ sufficiently large,
\begin{equation}\label{mellin_tube}
\begin{aligned}
\widetilde{\zeta}_{A,\O}(s;\d)&=\int_0^{+\ty}t^{s-1}\left(\chi_{(0,\d)}(t)t^{-N}|A_t\cap\O|\right)\D t,
\end{aligned}
\end{equation}
where $\chi_{(0,\d)}$ denotes the characteristic function of the set $(0,\d)$.
Recall that the Mellin transform of a function $f\colon\eR\to\eR$ is defined by
\begin{equation}\label{mell_trans_def}
\{\mathfrak{M}f\}(s):=\int_0^{+\ty}t^{s-1}f(t)\di t,
\end{equation}
where $s$ is a complex number with large enough real part.
Then, by letting 
\begin{equation}
f_{\d}(t):=\chi_{(0,\d)}(t)t^{-N}|A_t\cap\O|,
\end{equation}
we have that
\begin{equation}
\widetilde{\zeta}_{A,\O}(s;\d)=\{\mathfrak{M}f_{\d}\}(s),
\end{equation}
for all $s\in\Ce$ with $\re s$ sufficiently large.
This will enable us (in Theorem \ref{tube_inversion} below) to recover the tube function $t\mapsto|A_t\cap\O|$ from the relative tube zeta function $\widetilde{\zeta}_{A,\O}$ by using the Mellin inversion theorem (recalled in Theorem \ref{mellin_inv}).
More interestingly, the functional equation \eqref{equ_tilde} will then enable us to use the distance zeta function ${\zeta}_{A,\O}$ instead of the tube zeta function $\widetilde{\zeta}_{A,\O}$ for operating this recovery.

\begin{remark}\label{2.4.1/2}
The important special case of a bounded set $A\subset\eR^N$ is obtained by considering the RFD $(A,A_{\delta})$ (i.e., by letting $\O=A_{\delta}$, for some $\delta>0$) in Equation \eqref{rel_dist_zeta} and Equation \eqref{401/2} in order to obtain the distance zeta function $\zeta_A$ and the tube zeta function $\widetilde{\zeta}_A$ of $A$, respectively.
(See \cite{dtzf} and \cite[Chapters 2 and 3]{fzf}.)
We note that the notion of distance zeta function $\zeta_A$ was first introduced by the first author in 2009.

An entirely analogous comment could be made for the (upper, lower) Minkowski dimension and (upper, lower) Minkowski content of a bounded subset $A$ of $\eR^N$.
Namely, in the discussion just below, it would suffice, for example, to consider the RFD $(A,A_\delta)$ in Equation \eqref{minkrel} or Equation \eqref{dimrel} in order to recover $\mathcal{M}^{*r}(A)$ or $\ov{\dim}_BA$, respectively.
\end{remark}

We now proceed by introducing the notions of Minkowski content and Minkowski (or box) dimension of a relative fractal drum (RFD) and relating them to the distance and tube zeta functions of this RFD.
For any {\em real} number $r$, we define the {\em upper $r$-dimensional Minkowski content of $A$ relative to $\Omega$}
(or {\em the upper relative Minkowski content}, or {\em the upper Minkowski content of the relative fractal drum $(A,\Omega)$}) by
\begin{equation}\label{minkrel}
{{\mathcal{M}}}^{*r}(A,\Omega):=\limsup_{t\to0^+}\frac{|A_t\cap\Omega|}{t^{N-r}}, 
\end{equation}
and we then proceed in the usual way in order to define $\ov{\dim}_B(A,\O)$:
\begin{equation}\label{dimrel}
\begin{aligned}
\ov\dim_B(A,\Omega)&=\inf\{r\in\eR:{{\mathcal{M}}}^{*r}(A,\Omega)=0\} \\
&=\sup\{r\in\eR:{{\mathcal{M}}}^{*r}(A,\Omega)=+\ty\}.
\end{aligned}
\end{equation}
We call it the {\em relative upper box dimension}
 $($\rm{or} {\em relative Minkowski dimension}$)$ of $A$ with respect to $\Omega$ (or else the {\em relative upper box dimension of $(A,\Omega)$}).
Note that $\ov\dim_B(A,\Omega)\in[-\ty,N]$.
We stress that the values of $\ov{\dim}_B(A,\O)$ can indeed be negative, even equal to $-\ty$; 
see \cite{refds} or \cite[Chapter 4]{fzf}.\footnote{However, in the important special case of a bounded set $A\subset\eR^N$ discussed in Remark \ref{2.4.1/2}, we always have that $\dim_BA\in[0,N]$; in particular, $\ov{\dim}_BA\geq 0$.}
Also note that for these definitions to make sense, it suffices that $|A_\d\cap\O|<\ty$ for some $\d>0.$

The value ${\mathcal{M}}_*^{r}(A,\Omega)$ of the {\em lower $r$-dimensional Minkowski content} of $(A,\Omega)$, is defined as in \eqref{minkrel}, except for a lower instead of an upper limit.
Analogously as in \eqref{dimrel}, we then define the {\em relative lower box $(${\rm or} Minkowski$)$ dimension} of $(A,\Omega)$ by using the lower $r$-dimensional Minkowski content of $(A,\Omega)$ instead of the upper.
Furthermore, in the case when $\underline\dim_B(A,\Omega)=\ov\dim_B(A,\Omega)$, we denote by
$
\dim_B(A,\Omega)
$ 
this common value and call it the {\em relative box $(${\rm or} Minkowski$)$ dimension\label{rel_box_dim}}.
Moreover, if $0<{\mathcal{M}}_*^D(A,\Omega)\le{\mathcal{M}}^{*D}(A,\Omega)<\ty$, we say that the relative 
fractal drum $(A,\O)$ is {\em Minkowski nondegenerate}.\label{nondeg_rel}
It then follows that $\dim_B(A,\O)$ exists and is equal to $D$.

Finally, if ${\mathcal{M}}_*^D(A,\Omega)={\mathcal{M}}^{*D}(A,\Omega)$, we denote this common value by $\mathcal{M}^D(A,\Omega)$ and call it the {\em relative 
Minkowski content} of $(A,\O)$.
If $\mathcal{M}^D(A,\Omega)$ exists and is different from $0$ and $\ty$ (in which case $\dim_B(A,\Omega)$ exists and then necessarily $D=\dim_B(A,\Omega)$), we say that the relative fractal drum $(A,\Omega)$ is {\em Minkowski 
measurable}.
Many examples and properties of the relative box dimension can be found in [Lap1--3], [LapPo1--3], \cite{lapidushe}, [Lap-vFr1--3], \cite{rae}, [LaPe2--3], [LapPeWi1--2], in various special cases, and in [LapRa\v Zu1--7] in the present general setting of RFDs.

In the following three theorems, we recall some basic results about zeta functions of relative fractal drums.
(See [LapRa\v Zu1--2] 
for the special case of bounded subsets of $\eR^N$.)

\begin{theorem}[{[LapRa\v Zu1,4]}]\label{an_rel} 
Let $(A,\O)$ be a relative fractal drum in $\eR^N$. Then the following properties hold$:$

\bigskip 

$(a)$ The relative distance zeta function $\zeta_{A,\O}$ is holomorphic in the half-plane $\{\re s>\overline{\dim}_B(A,\Omega)\}$.
More precisely,
\begin{equation}
D(\zeta_{A,\O})=\overline{\dim}_B(A,\Omega).
\end{equation}

\bigskip

$(b)$ If the relative box $($or Minkowski$)$ dimension $D:=\dim_B(A,\O)$ exists, $D<N$ and ${\M}_*^{D}(A,\O)>0$, then $\zeta_{A,\O}(s)\to+\ty$ as $s\in\eR$
converges to $D$ from the right.   
\end{theorem}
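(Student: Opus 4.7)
My plan is to reduce both parts to the analysis of the truncated tube zeta function $\widetilde{\zeta}_{A,\O}(\,\cdot\,;\d)$, whose integrand involves directly the relative tube function $t\mapsto|A_t\cap\O|$ that appears in the definition of the Minkowski contents. Two facts from the preliminaries will underpin the reduction: by Remark \ref{holo_diff}, the functions $\zeta_{A,\O}$ and $\zeta_{A,\O}(\,\cdot\,;\d)$ differ by an entire function; and the functional equation \eqref{equ_tilde} expresses $\zeta_{A,\O}(s;\d)$ as $(N-s)\widetilde{\zeta}_{A,\O}(s;\d)+\d^{s-N}|A_\d\cap\O|$, a polynomial multiple of $\widetilde{\zeta}_{A,\O}(\,\cdot\,;\d)$ plus an entire term. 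Since $|d(x,A)^{s-N}|=d(x,A)^{\re s-N}$, absolute convergence at $s$ is equivalent to convergence at $\re s$, so it suffices to work with real arguments.

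For the upper bound $D(\zeta_{A,\O})\le\overline{\dim}_B(A,\O)$ in part (a), I would fix real $s>\overline{\dim}_B(A,\O)$ and choose $r\in(\overline{\dim}_B(A,\O),s)$; by the infimum characterization of $\overline{\dim}_B$, $\mathcal{M}^{*r}(A,\O)=0$, so $|A_t\cap\O|\le t^{N-r}$ for all sufficiently small $t$, and $\int_0^\d t^{s-N-1}|A_t\cap\O|\,\D t$ is then dominated by $\int_0^{t_0}t^{s-r-1}\,\D t$ plus a bounded tail. For the matching lower bound, fix real $s<\overline{\dim}_B(A,\O)$ and pick $r\in(s,\overline{\dim}_B(A,\O))$; the supremum characterization of $\overline{\dim}_B$ together with monotonicity of $r\mapsto\mathcal{M}^{*r}(A,\O)$ yields $\mathcal{M}^{*r}(A,\O)=+\infty$, so there is a sequence $t_n\downarrow 0$ with $|A_{t_n}\cap\O|\ge t_n^{N-r}$. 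The monotonicity of $t\mapsto|A_t\cap\O|$ upgrades this pointwise bound into the uniform bound $|A_t\cap\O|\ge t_n^{N-r}$ on each dyadic window $(t_n,2t_n)$, producing
\[
\int_{t_n}^{2t_n}t^{s-N-1}|A_t\cap\O|\,\D t\ge C_s\,t_n^{s-r}\longrightarrow+\infty,
\]
so by nonnegativity of the integrand the full integral must diverge. Holomorphicity on $\{\re s>\overline{\dim}_B(A,\O)\}$ then follows at once from the general DTI property recorded immediately after \eqref{2.3.1/2}.

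For part (b), the hypothesis $\mathcal{M}_*^D(A,\O)>0$ provides constants $c,t_0>0$ with $|A_t\cap\O|\ge c\, t^{N-D}$ for $0<t<t_0$. For real $s>D$ this bound inserts into the tube zeta function to give
\[
\widetilde{\zeta}_{A,\O}(s;\d)\ge c\int_0^{t_0}t^{s-D-1}\,\D t=\frac{c\, t_0^{s-D}}{s-D},
\]
which tends to $+\infty$ as $s\to D^+$. Because $D<N$, the factor $N-s$ in \eqref{equ_tilde} stays bounded away from $0$ in a right neighborhood of $D$, so $\zeta_{A,\O}(s;\d)\to+\infty$, and by Remark \ref{holo_diff} the same blow-up transfers to $\zeta_{A,\O}(s)$.

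The main obstacle I expect lies in the divergence half of part (a): the hypothesis $s<\overline{\dim}_B(A,\O)$ only supplies a $\limsup$-type lower bound on $|A_t\cap\O|$, which holds along a subsequence rather than uniformly, so a bare pointwise bound at the scales $t_n$ does not integrate to anything useful. The dyadic-window monotonicity argument is the only non-routine step and is precisely what converts a sequential lower bound into a genuinely divergent integral.
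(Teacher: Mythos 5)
Your proposal is correct, but note that the paper itself gives no proof of Theorem \ref{an_rel}: it is recalled from [LapRa\v Zu1,4], so the relevant comparison is with the arguments given there. Those proofs work directly on the distance integral: convergence for real $s>\overline{\dim}_B(A,\O)$ is obtained by decomposing $A_\d\cap\O$ into dyadic annuli $\{2^{-k-1}\d\le d(x,A)<2^{-k}\d\}$ and using $|A_{2^{-k}\d}\cap\O|\lesssim(2^{-k}\d)^{N-r}$, while divergence for $s<\overline{\dim}_B(A,\O)$ follows in one line from $\int_{A_{t_k}\cap\O}d(x,A)^{s-N}\,\D x\ge t_k^{s-N}|A_{t_k}\cap\O|\ge t_k^{s-r}\to+\infty$ along the subsequence furnished by $\mathcal{M}^{*r}(A,\O)=+\infty$; part $(b)$ is proved exactly as you do, by inserting $|A_t\cap\O|\ge c\,t^{N-D}$ into the tube integral. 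You instead funnel everything through $\widetilde{\zeta}_{A,\O}(\,\cdot\,;\d)$ and the functional equation \eqref{equ_tilde}, replacing the annuli in $x$-space by dyadic windows $(t_n,2t_n)$ in $t$-space together with the monotonicity of $t\mapsto|A_t\cap\O|$; this is equivalent in substance and your window trick does correctly convert the $\limsup$-type lower bound into divergence. Two small points should be made explicit to make the transfer between the two zeta functions airtight: first, \eqref{equ_tilde} is quoted in the paper as an identity of meromorphic continuations, whereas you need it as an identity in $[0,+\infty]$ for \emph{real} $s<N$ (so that divergence of $\widetilde{\zeta}_{A,\O}(s;\d)$ forces divergence of $\zeta_{A,\O}(s;\d)$, the factor $N-s$ being positive); this is a one-line Tonelli computation, since $d(x,A)^{s-N}-\d^{s-N}=(N-s)\int_{d(x,A)}^{\d}t^{s-N-1}\,\D t$ with nonnegative integrand when $s<N$. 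Second, for real $s\ge N$ the factor $N-s$ is no longer positive, but there the distance zeta function converges trivially ($d(x,A)^{s-N}\le\d^{s-N}$ on $A_\d\cap\O$, which has finite measure), and since $\overline{\dim}_B(A,\O)\le N$ this is all that is needed; with these two remarks your reduction, your appeal to Remark \ref{holo_diff}, and the DTI holomorphicity statement complete the proof exactly as in the cited sources, at the cost of a slightly longer route than the direct estimate of the distance integral.
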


\begin{remark}\label{tube_holo}
For a general relative fractal drum $(A,\O)$ in $\eR^N$, the right half-plane $\{\re s>\overline{\dim}_B(A,\Omega)\}$ is not necessarily the maximal open right half-plane to which its relative distance zeta function can be holomorphically continued. 
For instance, this is the case with the line segment $I:=[0,1]\subset\eR$, understood as a relative fractal drum $(I,I_\d)$; see Subsection \ref{subsec_lisp}.\footnote{We would like to thank E.\ P.\ J.\ Pearse for this example.}
However, this situation cannot occur if $(A,\O)$ satisfies the hypotheses of part $(b)$ of Theorem~\ref{an_rel}. 


Furthermore, if $\ovb{\dim}_B(A,\O)<N$, in light of the functional equation~\eqref{equ_tilde}, Theorem~\ref{an_rel} is also valid if we replace the relative distance zeta function by the relative tube zeta function in its statement.
Moreover, it can be shown directly (i.e.,  without the use of the functional equation) that in the case of the tube zeta function, Theorem~\ref{an_rel} is also valid in the special case when $\ovb{\dim}_B(A,\O)=N$.
\end{remark}

\begin{theorem}[{[LapRa\v Zu1,4]}]\label{pole1} 
Assume that $(A,\O)$ is a nondegenerate RFD in $\eR^N$, 
that is, $0<{\M}_*^{D}(A,\O)\le{\M}^{*D}(A,\O)<\ty$ $($in particular, $\dim_B(A,\O)=D)$, 
and $D<N$.
If $\zeta_{A,\O}=\zeta_{A,\O}(\,\cdot\,,\d)$ has a meromorphic extension to a connected open neighborhood of $s=D$,
then $D$ is necessarily a simple pole of $\zeta_{A,\O}$, the residue $\res(\zeta_{A,\O},D)$ is independent of $\d$ and 
\begin{equation}\label{res}
(N-D){\M}_*^{D}(A,\O)\le\res(\zeta_{A,\O},D)\le(N-D){\M}^{*D}(A,\O).
\end{equation}
Furthermore, if $(A,\O)$ is Minkowski measurable, then 
\begin{equation}\label{pole1minkg1=}
\res(\zeta_{A,\O}, D)=(N-D)\M^D(A,\O).
\end{equation}
\end{theorem}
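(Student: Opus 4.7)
The plan is to transfer the problem from $\zeta_{A,\O}$ to the tube zeta function $\widetilde{\zeta}_{A,\O}(\,\cdot\,;\d)$ via the functional equation \eqref{equ_tilde}, and then exploit the fact that $\widetilde{\zeta}_{A,\O}(\,\cdot\,;\d)$ is the Mellin-type integral $\int_0^{\d}t^{s-N-1}|A_t\cap\O|\,\D t$ in order to read off the pole structure at $s=D$ directly from the Minkowski-content asymptotics of $|A_t\cap\O|$ as $t\to 0^+$. Since $\d^{s-N}|A_\d\cap\O|$ is entire in $s$ and $N-D\neq 0$, the functional equation \eqref{equ_tilde} shows that a meromorphic extension of $\zeta_{A,\O}(\,\cdot\,;\d)$ near $D$ is equivalent to one of $\widetilde{\zeta}_{A,\O}(\,\cdot\,;\d)$, that $D$ is a pole of the same order for both (or for neither), and that $\res(\zeta_{A,\O},D)=(N-D)\,\res(\widetilde{\zeta}_{A,\O}(\,\cdot\,;\d),D)$.

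The core step is a one-sided Abelian estimate. Set $g(t):=t^{D-N}|A_t\cap\O|$, so that $\liminf_{t\to 0^+}g(t)=\M_*^D(A,\O)$ and $\limsup_{t\to 0^+}g(t)=\M^{*D}(A,\O)$ are, by hypothesis, both finite and strictly positive. For real $s>D$, rewrite
$$\widetilde{\zeta}_{A,\O}(s;\d)=\int_0^\d t^{s-D-1}g(t)\,\D t.$$
Given $\e>0$, pick $\d'\in(0,\d)$ with $\M_*^D(A,\O)-\e\le g(t)\le\M^{*D}(A,\O)+\e$ for all $t\in(0,\d')$. Splitting at $\d'$, the portion over $[\d',\d]$ contributes nothing to $(s-D)\widetilde{\zeta}_{A,\O}(s;\d)$ as $s\to D^+$ (the integrand is bounded and the interval is bounded away from $0$), while
$$(\M_*^D(A,\O)-\e)(\d')^{s-D}\le(s-D)\int_0^{\d'}t^{s-D-1}g(t)\,\D t\le(\M^{*D}(A,\O)+\e)(\d')^{s-D}.$$
Since $(\d')^{s-D}\to 1$ as $s\to D^+$ and $\e$ is arbitrary, one obtains
$$\M_*^D(A,\O)\le\liminf_{s\to D^+}(s-D)\widetilde{\zeta}_{A,\O}(s;\d)\le\limsup_{s\to D^+}(s-D)\widetilde{\zeta}_{A,\O}(s;\d)\le\M^{*D}(A,\O).$$

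Because $\widetilde{\zeta}_{A,\O}(\,\cdot\,;\d)$ is meromorphic near $D$ and real on the real axis for $\re s$ large, its Laurent coefficients at $D$ are real; hence the one-sided real limit of $(s-D)\widetilde{\zeta}_{A,\O}(s;\d)$ at $D$ equals $0$ if $D$ is a removable singularity, a finite nonzero real number if $D$ is a simple pole, and $\pm\ty$ for higher-order poles. The sandwich above excludes both $0$ and $\pm\ty$, forcing $D$ to be a simple pole with residue lying in $[\M_*^D(A,\O),\M^{*D}(A,\O)]$. Multiplying through by $N-D$ via the functional equation yields \eqref{res}. Independence of $\d$ follows from Remark~\ref{holo_diff}: the difference $\zeta_{A,\O}(\,\cdot\,;\d_1)-\zeta_{A,\O}(\,\cdot\,;\d_2)$ is entire, so it does not shift the residue at $D$. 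Finally, in the Minkowski measurable case the two Minkowski contents coincide with $\M^D(A,\O)$, and the sandwich collapses to the equality \eqref{pole1minkg1=}.

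The truly substantive step is the Abelian estimate above, converting crude $t\to 0^+$ bounds on $g(t)$ into bounds on $(s-D)\widetilde{\zeta}_{A,\O}(s;\d)$ as $s\to D^+$. The hypothesis that $\zeta_{A,\O}$ is already meromorphic near $D$ is what keeps the argument on the elementary Abelian side, since it forces the one-sided limit to exist in $\ov{\eR}$ and one only needs to bracket it; the converse direction (producing a pole from Minkowski nondegeneracy alone) would require a genuinely Tauberian input and is not attempted here.
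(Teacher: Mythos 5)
Your proof is correct and follows essentially the same route as the source the paper cites for this result ([LapRa\v Zu1,4]) and as the paper's own framework suggests: transfer to $\widetilde{\zeta}_{A,\O}(\,\cdot\,;\d)$ via the functional equation \eqref{equ_tilde} (harmless since $D<N$ and the term $\d^{s-N}|A_\d\cap\O|$ is entire), then squeeze $(s-D)\widetilde{\zeta}_{A,\O}(s;\d)$ between ${\M}_*^{D}(A,\O)$ and ${\M}^{*D}(A,\O)$ by the elementary Abelian estimate, which is exactly the content of the tube-zeta version recorded as Theorem \ref{pole1mink_tilde}. The $\d$-independence via Remark \ref{holo_diff} and the collapse to equality in the Minkowski measurable case are handled correctly as well.
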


The above theorem can be reformulated in terms of the relative tube zeta function and in that case, we can remove the condition $\dim_B(A,\O)<N$.

\begin{theorem}[{[LapRa\v Zu1,4]}]\label{pole1mink_tilde} 
Assume that $(A,\O)$ is a nondegenerate RFD in $\eR^N$ $($so that $D:=\dim_B(A,\O)$ exists$)$, 
and that for some $\d>0$ there exists a meromorphic extension of $\widetilde\zeta_{A,\O}=\widetilde\zeta_{A,\O}(\,\cdot\,;\d)$ to a connected open neighborhood of $D$.
Then,  $D$ is a simple  pole of $\widetilde\zeta_{A,\O}$
and $\res(\widetilde{\zeta}_{A,\O},D)$ is independent of $\delta$.
Furthermore, we have
\begin{equation}\label{zeta_tilde_M}
{\M}_*^{D}(A,\O)\le\res(\widetilde\zeta_{A,\O}, D)\le {\M}^{*D}(A,\O).
\end{equation}
In particular, if $(A,\O)$ is Minkowski measurable, then 
\begin{equation}\label{zeta_tilde_Mm}
\res(\widetilde\zeta_{A,\O}, D)=\M^D(A,\O).
\end{equation}
\end{theorem}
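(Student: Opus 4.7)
The plan is to leverage the Mellin-type representation
\begin{equation*}
\widetilde\zeta_{A,\O}(s;\delta) = \int_0^\delta t^{s-D-1}\phi(t)\,\D t, \qquad \phi(t) := t^{D-N}|A_t\cap\O|,
\end{equation*}
combined with the direct identifications $\liminf_{t\to 0^+}\phi(t) = \M_*^D(A,\O)$ and $\limsup_{t\to 0^+}\phi(t) = \M^{*D}(A,\O)$ (positive and finite, respectively, by nondegeneracy). The key heuristic is that as the real parameter $s$ decreases to $D$ from above, the weight $(s-D)t^{s-D-1}\,\D t$ on $(0,\delta]$ concentrates its mass near $t=0$ while its total tends to $1$, so that $(s-D)\widetilde\zeta_{A,\O}(s;\delta)$ along the real axis detects precisely the small-$t$ behavior of $\phi$.

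\emph{Step 1 ($\delta$-independence).} For any $0<\delta_1<\delta_2$, the difference
\begin{equation*}
\widetilde\zeta_{A,\O}(s;\delta_2)-\widetilde\zeta_{A,\O}(s;\delta_1)=\int_{\delta_1}^{\delta_2}t^{s-N-1}|A_t\cap\O|\,\D t
\end{equation*}
is an entire function of $s$, because the integrand is continuous in $s$ and uniformly bounded in $t$ on the compact interval $[\delta_1,\delta_2]$ (which is bounded away from $0$). Hence the principal parts at $D$ of the two meromorphic extensions coincide, and in particular the residues agree.

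\emph{Step 2 (sandwich estimate, pole order and residue bounds).} Fix $\varepsilon>0$ and choose $\tau\in(0,\delta)$ such that $\M_*^D-\varepsilon\leq\phi(t)\leq\M^{*D}+\varepsilon$ for every $t\in(0,\tau)$. For real $s>D$, split the integral at $\tau$. The outer piece is uniformly bounded in a real neighborhood of $D$ (since $|A_t\cap\O|\leq|\O|<\infty$ and $t$ stays away from $0$), so once multiplied by $(s-D)$ it vanishes as $s\to D^+$. Sandwiching $\phi$ on $(0,\tau)$ between the two constants and invoking the elementary identity $(s-D)\int_0^\tau t^{s-D-1}\,\D t=\tau^{s-D}\to 1$, we arrive at
\begin{equation*}
\M_*^D-\varepsilon \leq \liminf_{s\to D^+}(s-D)\widetilde\zeta_{A,\O}(s;\delta) \leq \limsup_{s\to D^+}(s-D)\widetilde\zeta_{A,\O}(s;\delta) \leq \M^{*D}+\varepsilon.
\end{equation*}
Since $\widetilde\zeta_{A,\O}$ is meromorphic near $D$, its Laurent expansion reads $\sum_{j\geq 1}a_{-j}(s-D)^{-j}+(\text{holomorphic})$; boundedness of $(s-D)\widetilde\zeta_{A,\O}(s;\delta)$ along real $s\to D^+$ forces $a_{-j}=0$ for every $j\geq 2$, since otherwise this quantity would diverge. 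Hence the pole order is at most one and the limit equals $a_{-1}=\res(\widetilde\zeta_{A,\O},D)$. Letting $\varepsilon\to 0$ yields the inequality \eqref{zeta_tilde_M}; nondegeneracy gives $a_{-1}\geq\M_*^D>0$, so $D$ is in fact a simple pole. Finally, if $(A,\O)$ is Minkowski measurable, then $\M_*^D=\M^{*D}=\M^D(A,\O)$ and the sandwich collapses to the equality \eqref{zeta_tilde_Mm}.

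\emph{Main obstacle.} The conceptual core is unsurprising, but the delicate step is the upgrade from a one-sided real-axis limit to a genuine statement about the residue of the meromorphic extension. The boundedness of $(s-D)\widetilde\zeta_{A,\O}(s;\delta)$ along real $s\to D^+$, together with analyticity on a punctured neighborhood of $D$, is precisely what eliminates higher-order poles via the Laurent expansion; without the meromorphy hypothesis, one would obtain only a bound on a one-sided limit with no control over the pole order itself.
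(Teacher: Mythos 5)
Your proof is correct, and it follows essentially the same route as the argument in the cited sources [LapRa\v Zu1,4] (the paper itself only recalls this theorem without proof): the integral representation of $\widetilde\zeta_{A,\O}$ for real $s>D$, the splitting of the integral at a small $\tau$ with the sandwich $\M_*^{D}-\e\le\phi\le\M^{*D}+\e$ near $0$, the elementary limit $(s-D)\int_0^\tau t^{s-D-1}\di t\to1$, and the Laurent-expansion argument ruling out a pole of order $\geq 2$ and identifying the one-sided limit with the residue. The $\d$-independence via the entire difference $\widetilde\zeta_{A,\O}(\,\cdot\,;\d_2)-\widetilde\zeta_{A,\O}(\,\cdot\,;\d_1)$ is likewise the standard argument, so nothing further is needed.
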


We continue by stating some of the definitions that were already introduced in [Lap-vFr1--3] 
in the setting of generalized fractal strings and adapt them to the present context of relative fractal drums in $\eR^N$.
(See, e.g., \cite[Chapter 5]{lapidusfrank12}.) 

\begin{definition}\label{window_def_5}
The {\em screen} $\bm S$ is the graph of a bounded, real-valued Lipschitz continuous function $S(\tau)$, with the horizontal and vertical axes interchanged:
\begin{equation}\label{g_screen}
\bm{S}:=\{S(\tau)+\I \tau\,:\,\tau\in\eR\}.
\end{equation}
The Lipschitz constant of $\bm S$ is denoted by $\|S\|_{\mathrm{Lip}}$; so that
$$
|S(x)-S(y)|\leq\|S\|_{\mathrm{Lip}}|x-y|,\quad\textrm{ for all }x,y,\in\eR.
$$
Furthermore, the following quantities are associated to the screen:
\begin{equation}\nonumber
\inf S:=\inf_{\tau\in\eR}S(\tau)\quad\textrm{ and }\quad\sup S:=\sup_{\tau\in\eR}S(\tau).
\end{equation}
\end{definition}

As before, given an RFD $(A,\O)$ in $\eR^N$, we denote its upper relative box dimension by $\ov{D}:=\ov{\dim}_B(A,\O)$; recall that $\ov{D}\leq N$.
We always assume, additionally, that $\ov{D}>-\ty$ and the screen $\bm S$ lies always to the left of the {\em critical line} $\{\re s=\ov{D}\}$, i.e., that $\sup S\leq\ov{D}$.
Also, in the sequel, we assume that $\inf S>-\ty$ (see, however, Remark \ref{5.1.1.1/2R} below); hence, we have that
\begin{equation}\label{5.1.1.1/2-}
-\ty<\inf S\leq\sup S\leq\ov{D}.
\end{equation}

Moreover, the {\em window} $\bm W$ is defined as the part of the complex plane to the right of $\bm S$; that is,
\begin{equation}
\bm W:=\{s\in\Ce:\re s\geq S(\im s)\}.
\end{equation}
We say that the relative fractal drum $(A,\O)$ is {\em admissible} if its relative tube (or distance) zeta function can be meromorphically extended (necessarily uniquely) to an open connected neighborhood of some window $\bm W$, defined as above.

\begin{remark}\label{5.1.1.1/2R}
Occasionally, in the strongly languid case, in the sense of Definition \ref{str_languid} or Definition \ref{d_lang} below (and hence, in particular, when the fractal zeta function involved is meromorphic on all of $\Ce$), it is convenient to implicitly move the screen $\bm{S}$ to $-$infinity (i.e., to let $S\equiv -\ty$) and thus to choose $\bm{W}:=\Ce$. 
\end{remark}

The next definition adapts~\cite[Definition~5.2]{lapidusfrank12} to the case of relative fractal drums in $\eR^N$ (and, in particular, to the case of bounded subsets of $\eR^N$).

\begin{definition}[{\em Languidity}, adapted from~\cite{lapidusfrank12}]\label{languid}
An admissible relative fractal drum $(A,\O)$ in $\eR^N$ is said to be {\em languid} if for some fixed $\d>0$, its tube zeta function $\widetilde{\zeta}_{A,\O}(\,\cdot\,;\d)$ satisfies the following growth conditions:

\medskip

There exists a real constant $\kappa$ and a two-sided sequence $(T_n)_{n\in\Ze}$ of real numbers such that $T_{-n}<0<T_n$ for all $n\geq 1$ and
\begin{equation}\label{seq_T_n}
\lim_{n\to\ty}T_n=+\ty,\quad\lim_{n\to\ty}T_{-n}=-\ty
\end{equation}
satisfying the following two hypotheses, {\bf L1} and {\bf L2}:\footnote{Here, unlike in the definition given in~\cite{lapidusfrank12}, we do not need to assume that $\lim_{n\to +\ty}T_n/|T_{-n}|=1$.}
\newline

{\bf L1}\ \ For a fixed real constant $c>\ov{\dim}_B(A,\O)$, there exists a positive constant $C>0$ such that for all $n\in\Ze$ and all $\s\in (S(T_n),c)$,\footnote{This is a slight modification of the original definition of languidity from~\cite{lapidusfrank12}, where $c$ was replaced by $+\ty$; compare with \cite[Definition~5.2, pp.\ 146--147]{lapidusfrank12}.
Furthermore, it is clear that if condition {\bf L1} is satisfied for some $c>\ov{\dim}_B(A,\O)$, then it is also satisfied for any $c_1$ such that $\ov{\dim}_B(A,\O)<c_1<c$.}
\begin{equation}\label{L1}
|\widetilde{\zeta}_{A,\O}(\s+\I T_n;\d)|\leq C(|T_n|+1)^{\kappa}.
\end{equation}

{\bf L2}\ \ For all $\tau\in\eR$, $|\tau|\geq 1$,
\begin{equation}\label{L2}
|\widetilde{\zeta}_{A,\O}(S(\tau)+\I \tau;\d)|\leq C|\tau|^{\kappa},
\end{equation}
where $C$ is a positive constant which can be chosen to be the same one as in condition~{\bf L1}.
\end{definition}

Note that hypothesis {\bf L1} is a polynomial growth condition along horizontal segments (necessarily not passing through any singularities of $\widetilde{\zeta}_{A,\O}(\,\cdot\,;\d)$), while hypothesis {\bf L2} is a polynomial growth condition along the vertical direction of the screen.
These hypotheses will be needed to establish the pointwise and distributional tube formulas with error term.

In order to obtain the pointwise and distributional tube formulas without error terms (that is, {\em exact} tube formulas), we will need a stronger notion of languidity.
Accordingly, we introduce the following definition, which adapts to our current more general situation the definition of strong laguidity given in \cite[Definition~5.3]{lapidusfrank12}.

\begin{definition}[{\em Strong languidity}, adapted from~\cite{lapidusfrank12}]\label{str_languid}
We say that an admissible relative fractal drum $(A,\O)$ in $\eR^N$ is {\em strongly languid} if $(i)$ for some $\d>0$, its tube zeta function satisfies condition {\bf L1} with $S(T_n)\equiv -\ty$ (that is, with $S(T_n)$ replaced by $-\ty$) in~\eqref{L1}, i.e., estimate \eqref{L1} holds for every $\sigma<c$; and, additionally, $(ii)$ there exists a sequence of screens $\bm{S}_m\colon \tau\mapsto S_m(\tau)+\I \tau$ for $m\geq 1$, $\tau\in\eR$ with $\sup S_m\to -\ty$ as $m\to\ty$ and with a uniform Lipschitz bound, $\sup_{m\geq 1}\|S_m\|_{\mathrm{Lip}}<\ty$, such that the following condition holds:
\newline

{\bf L2'} There exist constants $B,C>0$ such that  for all $\tau\in\eR$ and $m\geq 1$,
\begin{equation}\label{L2'}
|\widetilde{\zeta}_{A,\O}(S_m(\tau)+\I \tau;\d)|\leq CB^{|S_m(\tau)|}(|\tau|+1)^{\kappa}.
\end{equation} 
\end{definition}

One immediately sees that hypothesis {\bf L2'} implies hypothesis {\bf L2}; so that a strongly languid relative fractal drum is also languid.
We also note that if a relative fractal drum is languid for some $\kappa\in\eR$, then it is also languid for any $\kappa_1>\kappa$.
Observe that for $\widetilde{\zeta}_{A,\O}:=\widetilde{\zeta}_{A,\O}$ (or, equivalently, the RFD $(A,\O)$) to be strongly languid, $\widetilde{\zeta}_{A,\O}$ must admit a meromorphic continuation to all of $\Ce$; see also Remark \ref{5.1.1.1/2R} above.

We will also use the notion of languid (or else, strongly languid) relative tube zeta function, in the obvious sense.

\medskip

As we shall see, most of the geometrically interesting examples of RFDs (and, in particular, of bounded sets) in $\eR^N$ considered here are either languid (relative to a suitable screen), in the sense of Definition \ref{languid} above (or of its counterpart for the distance zeta function, in Definition \ref{d_lang} below) or else, strongly languid, in the sense of Definition \ref{str_languid} just above (or, again, in the sense of Definition \ref{d_lang}).

\begin{proposition}\label{propA}
Let $(A,\O)$ be a relative fractal drum in $\eR^N$.
If the relative tube zeta function $\widetilde{\zeta}_{A,\O}(\,\cdot\,;\d)$ satisfies the languidity conditions {\bf L1} and {\bf L2} for some $\d>0$ and $\kappa\in\eR$, then so does
$\widetilde{\zeta}_{A,\O}(\,\cdot\,;\d_1)$ for any $\d_1>0$ and with $\kappa_{\d_1}:=\max\{\kappa,0\}$.

Furthermore, the analogous statement is also true in the case when $\widetilde{\zeta}_{A,\O}(\,\cdot\,;\d)$ is strongly languid, under the additional assumption that $\d\geq 1$ and $\d_1\geq 1$.
\end{proposition}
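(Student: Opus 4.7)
The plan is to exploit the fact that the two tube zeta functions differ by an entire function given by the explicit integral
\begin{equation*}
\widetilde{\zeta}_{A,\O}(s;\d_1) - \widetilde{\zeta}_{A,\O}(s;\d) = \int_{\d}^{\d_1} t^{s-N-1}|A_t\cap\O|\,\D t
\end{equation*}
(with a sign change if $\d_1<\d$). Using the trivial bound $|A_t\cap\O|\leq|\O|<\ty$ and writing $\sigma:=\re s$, the modulus of this difference is majorized by $|\O|$ times $\bigl|\int_{\d}^{\d_1}t^{\sigma-N-1}\,\D t\bigr|$, so everything reduces to showing that this last integral is uniformly bounded in $\sigma$ over the regions relevant to \textbf{L1}, \textbf{L2} and \textbf{L2'}.

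For the languid case, note that both in~\textbf{L1} and in~\textbf{L2} the real part $\sigma$ stays inside the bounded interval $(\inf S,c)$: on horizontal segments $\sigma\in(S(T_n),c)\subseteq(\inf S,c)$, and on the screen $\sigma=S(\tau)\in[\inf S,\sup S]$. Since the integral above is a continuous function of $\sigma$, it is uniformly bounded on $(\inf S,c)$ by some constant $C_0=C_0(\d,\d_1,\inf S,c)$, so
\begin{equation*}
|\widetilde{\zeta}_{A,\O}(s;\d_1)|\leq |\widetilde{\zeta}_{A,\O}(s;\d)|+|\O|\,C_0
\end{equation*}
throughout the region of interest. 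Combining this with~\eqref{L1} and~\eqref{L2} for $\d$ yields the analogous estimates for $\d_1$; the reason for replacing $\kappa$ by $\kappa_{\d_1}:=\max\{\kappa,0\}$ is that when $\kappa<0$ the original bounds $(|T_n|+1)^{\kappa}$ and $|\tau|^{\kappa}$ decay to zero and cannot absorb a constant additive term.

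For the strongly languid case with $\d,\d_1\geq 1$, the complication is that in~\textbf{L2'} the value $\sigma=S_m(\tau)$ is no longer confined to a bounded interval, since $\sup S_m\to-\ty$. However, the domain of integration in the difference formula lies in $[1,\ty)$, and for each fixed $t\geq 1$ the map $x\mapsto t^x$ is increasing. Consequently, for $\sigma\leq c$ one has $t^{\sigma-N-1}\leq t^{c-N-1}\leq\max\{\d^{c-N-1},\d_1^{c-N-1}\}$ on $[\min(\d,\d_1),\max(\d,\d_1)]$, yielding a uniform bound $C_0'=C_0'(\d,\d_1,c)$ on the difference valid for all $\sigma\leq c$. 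After enlarging $B$ if necessary so that $B\geq 1$ (which only weakens~\textbf{L2'} for $\d$), we have $B^{|S_m(\tau)|}(|\tau|+1)^{\max\{\kappa,0\}}\geq 1$ for $|\tau|\geq 1$, so $C_0'$ is absorbed into a new multiplicative constant, producing~\textbf{L2'} for $\d_1$ with exponent $\kappa_{\d_1}$. The strengthened version of~\textbf{L1} is handled identically.

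The main subtle point, and what forces the restriction $\d,\d_1\geq 1$ in the strongly languid setting, is precisely this uniform-in-$\sigma$ control of the difference as $\sigma\to-\ty$. If $\min(\d,\d_1)<1$, then the integrand $t^{\sigma-N-1}$ blows up like $\min(\d,\d_1)^{\sigma-N-1}\to+\ty$ at the left endpoint as $\sigma\to-\ty$, and in general this growth is faster than the prefactor $B^{|S_m(\tau)|}$ can accommodate for a fixed $B$, destroying~\textbf{L2'}.
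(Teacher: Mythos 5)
Your proof is correct and follows essentially the same route as the paper's: write $\widetilde{\zeta}_{A,\O}(\,\cdot\,;\d_1)-\widetilde{\zeta}_{A,\O}(\,\cdot\,;\d)$ as the entire function given by the integral over $[\d,\d_1]$, bound it by $|\O|$ times an integral independent of $\im s$ that is uniformly bounded in $\sigma$ over the relevant region (bounded since $\inf S>-\ty$ in the languid case, and since $\d,\d_1\geq 1$ when $\sigma\to-\ty$ in the strongly languid case), and absorb this constant at the cost of replacing $\kappa$ by $\max\{\kappa,0\}$. Your treatment is even a bit more explicit than the paper's (e.g., enlarging $B$ to $B\geq 1$ so the constant can be absorbed in {\bf L2'}), but the underlying argument is the same.
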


\begin{proof}
Without loss of generality, we may assume that $\d<\d_1$.
Then the conclusion follows from the fact that $\widetilde{\zeta}_{A,\O}(\,\cdot\,;\d_1)=\widetilde{\zeta}_{A,\O}(\,\cdot\,;\d)+f(s)$, where $f$ is entire and
\begin{equation}\label{f_s_upper}
|f(s)|\leq\int_{\d}^{\d_1}t^{\re s-N-1}|A_t\cap\O|\di t\leq\begin{cases}
|\O|\frac{{\d_1^{\re s-N}-\d^{\re s-N}}}{\re s-N},& \re s\neq N\\
|\O|(\log\delta_1-\log\delta),&\re s= N.
\end{cases}
\end{equation}
Since, clearly, the upper bound on $|f(s)|$ does not depend on $\im s$, we conclude that $f$ satisfies the languidity conditions {\bf L1} and {\bf L2} with the languidity exponent $\kappa_f:=0$ and for any given window $\bm W$.
This observation implies that then, $\widetilde{\zeta}_{A,\O}(\,\cdot\,;\d_1)$ is languid for $\kappa_{\d_1}:=\max\{\kappa,0\}$ and with the same window as for $\widetilde{\zeta}_{A,\O}(\,\cdot\,;\d)$.

The additional assumption for strong languidity is needed since {\bf L1} must then be satisfied for all $\sigma\in(-\ty,c)$, in the notation of Definition \ref{languid}, and for this to be achieved we need that $\d_1>\d\geq 1$ in \eqref{f_s_upper} since otherwise, we do not have an upper bound on $|f(s)|$ when $\re s\to -\ty$.
\end{proof}

Let us now introduce the notion of complex dimensions of a relative fractal drum.

\begin{definition}[Complex dimensions of an RFD~{[LapRa\v Zu1,4]}]\label{comp_dim_def} 
Let $(A,\O)$ be a relative fractal drum in $\eR^N$. 
Assume that the relative tube zeta function $\widetilde{\zeta}_{A,\O}(\,\cdot\,;\d)$ has a meromorphic extension to a connected neighborhood $U$ of the critical line $\{\re s=\ov{\dim}_B(A,\O)\}$. Then, the set of {\em visible complex dimensions of $(A,\O)$ $($with respect to $U)$} is the set of poles of $\widetilde{\zeta}_{A,\O}(\,\cdot\,;\d)$ that belong to $U$ and we denote it by
\begin{equation}
\po\big(\widetilde{\zeta}_{A,\O}(\,\cdot\,;\d),U\big):=\big\{\omega\in U:\omega\textrm{ is a pole of }\widetilde{\zeta}_{A,\O}(\,\cdot\,;\d)\big\}.
\end{equation}
If $U:=\Ce$, we say that $\po(\widetilde{\zeta}_{A,\O}(\,\cdot\,;\d),\Ce)$ is the set of {\em complex dimensions} of $(A,\O)$ and denote it by $\dim_{\Ce}(A,\O)$.

Furthermore, we call the set of poles located on the critical line $\{\re s=\ov{\dim}_B(A,\O)\}$ the set of {\em principal complex dimensions of $(A,\O)$} and denote it by
\begin{equation}
\dim_{PC}(A,\O):=\big\{\omega\in \po(\widetilde{\zeta}_{A,\O}(\,\cdot\,;\d),U):\re \omega=\ov{\dim}_B(A,\O)\big\}.
\end{equation} 
\end{definition}

\begin{remark}
In light of the functional equation~\eqref{equ_tilde} and the relevant discussion concerning it, the above definition can also be made in terms of the relative distance zeta function; that is, we always have
$\po(\widetilde{\zeta}_{A,\O}(\,\cdot\,;\d),U)=\po({\zeta}_{A,\O}(\,\cdot\,;\d),U)$
whenever one of the above zeta functions has a meromorphic extension to the domain $U$ containing the critical line $\{\re s=\ov{\dim}_B(A,\O)\}$ and if $N\notin\po(\widetilde{\zeta}_{A,\O}(\,\cdot\,;\d),U)$.\footnote{In that case, the other zeta function also has a meromorphic continuation to $U$.}
Furthermore, according to Remark~\ref{holo_diff}, the set of (visible) complex dimensions $\po(\widetilde{\zeta}_{A,\O}(\,\cdot\,;\d),U)$ of a relative fractal drum $(A,\O)$ does not depend on $\d$. 
\end{remark}

In order to obtain the relative tube formula expressed in terms of the complex dimensions of the relative fractal drum $(A,\O)$, we will need to work (for each $k\in\eN$) with the $k$-th primitive (or $k$-th anti-derivative) function, $V^{[k]}=V^{[k]}(t)$, of the relative tube function $V=V(t)$ vanishing along with its first $(k-1)$ derivatives at $t=0$.
Therefore, we let
\begin{equation}\label{V_tube}
V(t)=V_{A,\O}(t)=V^{[0]}(t):=|A_t\cap\O|
\end{equation}
and
\begin{equation}
V^{[k]}(t)=V^{[k]}_{A,\O}(t):=\int_0^tV^{[k-1]}(\tau)\di\tau,\quad\textrm{ for each }k\in\eN.
\end{equation}
(Here and thereafter, we let $\eN:=\{1,2,3,\ldots\}$ and $\eN_0:=\eN\cup\{0\}$.)
In the case of a bounded subset $A\subset\eR^N$, we use the analogous notation $V^{[k]}(t)=V_A^{[k]}(t)$ for the $k$-th primitive function of the tube function $V(t)=V_A(t):=|A_t|$, where $k\in\eN_0$.
Furthermore, we recall that for any $s\in\Ce$, the {\em Pochammer} symbol is defined by
\begin{equation}\label{pochamer}
(s)_0:=1,\quad (s)_k:=s(s+1)\cdots(s+k-1),
\end{equation}
for any nonnegative integer $k$ and, more generally, for the purpose of Section~\ref{sec_distr}, for every $k\in\Ze$ by
\begin{equation}\label{pochamer_gamma}
(s)_k:=\frac{\mathrm{\Gamma}(s+k)}{\mathrm{\Gamma}(s)},
\end{equation}
where $\mathrm{\Gamma}$ denotes the gamma function.

\begin{remark}\label{2.16.1/2}
One may legitimately wonder why we work with the $k$-th primitive, for any $k\geq 0$ rather than simply for $k=0$.
There are several reasons for that, one of them being that the larger $k$, the weaker our assumptions in the statement of our pointwise tube formulas.
Furthermore, in proving the distributional tube formulas, we will essentially use our corresponding pointwise tube formula at level $k$, with $k$ sufficiently large, and then distributionally differentiate the resulting formula in order to obtain a distributional formula valid at any level $l\in\Ze$ (rather than for $l\in\eN_0$), the case when $l=-1$ being the most fundamental one in that distributional situation.\footnote{This is analogous to the way periodic distributions are shown to have a distributionally convergent Fourier series (under rather weak hypotheses), by integrating sufficiently many times and then using the classic pointwise result about the uniform convergence of Fourier series; see \cite[Section VII, I, esp., p.\ 226]{Schw}.}
(See, e.g, the proof of Theorem \ref{dist_error} in Section \ref{subs_distr}.)
When $N=1$ (i.e., in the case of fractal strings), this same method was already used in [Lap-vFr1--3] in order to deduce the distributional explicit formula from its pointwise counterpart; see Remark 5.20 along with the first proof of Theorem 5.18 in \cite{lapidusfrank12}.
There will, however, be several technical differences in the execution of the method, which we will not necessarily point out.
\end{remark}

Before stating the main relationship connecting $V^{[k]}=V^{[k]}_{A,\O}$ and the tube zeta function $\widetilde{\zeta}_{A,\O}$ of the RFD $(A,\O)$, valid for any integer $k\geq 0$, we begin by considering the key special case when $k=0$ (so that $V^{[0]}=V=V_{A,\O}$).
In order to proceed, we need to briefly provide some basic information about the Mellin transform and its inverse transform.
Recall that the Mellin transform of a function $f\colon\eR\to\eR$ is defined by Equation \eqref{mell_trans_def}.
Furthermore, the Mellin inversion theorem, which we recall here for the sake of completeness, together with 
Equation \eqref{mellin_tube} yields an integral expression for the tube function of a given relative fractal drum.

\begin{theorem}[Mellin's inversion theorem, {\rm cited from~{\cite[Theorem 28]{titch}}}]\label{mellin_inv}
Let $f\colon(0,+\ty)\to\eR$ be such that for a given $y>0$, $f(t)$ is of bounded variation in a neighborhood of the point $t=y$.
Furthermore, assume that $t\mapsto t^{c-1}f(t)$ belongs to $L^1(0,+\ty)$, where $c$ is a real number, and define
\begin{equation}\label{ddeff_mell}
\{\mathfrak{M}f\}(s):=\int_{0}^{+\ty}t^{s-1}f(t)\di t
\end{equation}
for all $s\in\Ce$ such that $\re s=c$.
Then, for the above value of $y$, the following inversion formula holds$:$
\begin{equation}\label{ddeff_inv}
\frac{1}{2}\big(f(y+0)+f(y-0)\big)=\frac{1}{2\pi\I}\int_{c-\I\ty}^{c+\I\ty}y^{-s}\{\mathfrak{M} f\}(s)\di s,
\end{equation}
where $f(y+0)$ and $f(y-0)$ denote, respectively, the right and left limits of $f$ at $y$.
Here, on the right-hand side of \eqref{ddeff_inv}, the contour integral is taken over the vertical line $\{\re s=c\}$.
%
\end{theorem}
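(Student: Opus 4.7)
The plan is to reduce Mellin inversion to the classical Jordan--Dirichlet form of Fourier inversion via the substitution $t = e^{-u}$. Writing $s = c + \I\tau$ and setting $g(u) := e^{-cu} f(e^{-u})$, a direct change of variables gives
\[
\{\mathfrak{M}f\}(c + \I\tau) = \int_{-\infty}^{+\infty} e^{-cu} f(e^{-u}) e^{-\I \tau u} \, \D u = \widehat{g}(\tau),
\]
so the Mellin transform along the vertical line $\{\re s = c\}$ is exactly the Fourier transform of $g$. The hypothesis $t^{c-1} f(t) \in L^1(0,+\infty)$ is equivalent to $g \in L^1(\eR)$, and the bounded variation of $f$ in a neighborhood of $y$ transfers, via the smooth monotone substitution $u = -\log t$, to bounded variation of $g$ near $u_0 := -\log y$. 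Moreover $g(u_0 \pm 0) = y^c f(y \mp 0)$, so $g(u_0+0) + g(u_0-0) = y^c \bigl(f(y+0) + f(y-0)\bigr)$.

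Next I would invoke the Jordan--Dirichlet Fourier inversion theorem: for $g \in L^1(\eR)$ of bounded variation in a neighborhood of $u_0$,
\[
\tfrac{1}{2}\bigl(g(u_0+0) + g(u_0-0)\bigr) = \lim_{T \to \infty} \frac{1}{2\pi} \int_{-T}^{T} \widehat{g}(\tau) e^{\I \tau u_0} \, \D\tau.
\]
Parametrizing the Mellin contour by $s = c + \I\tau$, so that $\D s = \I \, \D\tau$ and $y^{-s} = y^{-c} e^{\I \tau u_0}$, one obtains
\[
\frac{1}{2\pi \I} \int_{c - \I\infty}^{c + \I\infty} y^{-s} \{\mathfrak{M} f\}(s) \, \D s = \frac{y^{-c}}{2\pi} \int_{-\infty}^{+\infty} \widehat{g}(\tau) e^{\I \tau u_0} \, \D\tau.
\]
Combining the two displays with the identity $g(u_0+0) + g(u_0-0) = y^c (f(y+0) + f(y-0))$ and cancelling the factor $y^c$ yields the Mellin inversion formula \eqref{ddeff_inv}, interpreted as a principal value integral.

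The hard part is the Jordan--Dirichlet Fourier inversion theorem itself, which is the classical deep ingredient. The standard approach expresses the truncated inverse Fourier integral $\frac{1}{2\pi}\int_{-T}^{T} \widehat{g}(\tau) e^{\I\tau u_0} \, \D\tau$ as the convolution of $g$ with the Dirichlet kernel $\sin(Tu)/(\pi u)$, invokes Riemann's localization principle to reduce the analysis to an arbitrarily small neighborhood of $u_0$ (using that the Fourier transform of an $L^1$ function is continuous and vanishes at infinity), and then exploits the Jordan decomposition of $g$ into two monotone pieces on that neighborhood together with the Dirichlet integral $\int_0^{+\infty} \sin(Tu)/u \, \D u = \pi/2$ to extract the symmetric average of the one-sided limits at $u_0$. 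The remaining verifications (the change of variables identity, the transfer of integrability and bounded variation across $u = -\log t$, and the algebraic cancellation of constants) are routine book-keeping.
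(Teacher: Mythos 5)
Your proposal is correct: the paper itself gives no proof of this theorem, quoting it directly from Titchmarsh's \emph{Introduction to the Theory of Fourier Integrals}, and your reduction to the Jordan--Dirichlet form of Fourier inversion via the substitution $t=\E^{-u}$ (with the inversion integral understood as a principal value) is precisely the classical argument used in that cited source. No gaps; the transfer of the $L^1$ and bounded-variation hypotheses and the bookkeeping with $y^c$ are handled correctly.
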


We can now state the announced integral formula connecting the relative tube function of the RFD $(A,\O)$ and the tube zeta function $\widetilde{\zeta}_{A,\O}:=\widetilde{\zeta}_{A,\O}(\,\cdot\,;\d)$. 

\begin{theorem}\label{tube_inversion}
Let $(A,\O)$ be a relative fractal drum in $\eR^N$ and fix $\d>0$.
Then, for any fixed $c>\ov{\dim}_{B}(A,\O)$ and for every $t\in(0,\d)$, we have
\begin{equation}\label{tube_inversion_formula}
|A_t\cap\O|=\frac{1}{2\pi\I}\int_{c-\I\ty}^{c+\I\ty}t^{N-s}\widetilde{\zeta}_{A,\O}(s;\d)\di s.
\end{equation}
\end{theorem}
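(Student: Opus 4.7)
The plan is to identify the relative tube zeta function with the Mellin transform of a rescaled tube function (as already noted in Equation \eqref{mellin_tube}) and then invoke Mellin's inversion theorem (Theorem \ref{mellin_inv}). Set
\[
f_\d(\tau):=\chi_{(0,\d)}(\tau)\,\tau^{-N}\,|A_\tau\cap\O|,\qquad \tau\in(0,+\infty),
\]
so that \eqref{mellin_tube} reads $\widetilde{\zeta}_{A,\O}(s;\d)=\{\mathfrak M f_\d\}(s)$ for $\re s$ sufficiently large. The goal is to apply Mellin inversion at each $y=t\in(0,\d)$ along the vertical contour $\{\re s=c\}$, and then multiply through by $t^N$.

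To apply Theorem \ref{mellin_inv}, I would verify its two hypotheses. First, the integrability condition $\tau\mapsto \tau^{c-1}f_\d(\tau)\in L^1(0,+\infty)$ reduces to
\[
\int_0^{+\infty}\tau^{c-1}|f_\d(\tau)|\,\D\tau=\int_0^{\d}\tau^{c-N-1}|A_\tau\cap\O|\,\D\tau=\widetilde{\zeta}_{A,\O}(c;\d),
\]
which is finite for every real $c>\ov{\dim}_B(A,\O)$; this is because, as recorded in Remark \ref{tube_holo}, the tube-function analog of Theorem \ref{an_rel} gives $D(\widetilde{\zeta}_{A,\O})=\ov{\dim}_B(A,\O)$. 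Second, the local bounded variation hypothesis at $\tau=t$ is immediate from the fact that $f_\d$ is (on any compact subinterval of $(0,\d)$) the product of the nondecreasing function $\tau\mapsto|A_\tau\cap\O|$ and the decreasing function $\tau\mapsto \tau^{-N}$; both are monotone, hence of bounded variation, and therefore so is their product.

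Mellin's inversion theorem then yields, for every $t\in(0,\d)$,
\[
\tfrac12\bigl(f_\d(t+0)+f_\d(t-0)\bigr)=\frac{1}{2\pi\I}\int_{c-\I\infty}^{c+\I\infty}t^{-s}\widetilde{\zeta}_{A,\O}(s;\d)\,\D s.
\]
Since $\tau\mapsto \tau^{-N}$ is continuous, the two one-sided limits of $f_\d$ at $t$ coincide whenever $\tau\mapsto|A_\tau\cap\O|$ is continuous at $t$; left-continuity follows from monotone convergence applied to the increasing union $\bigcup_{\tau<t}A_\tau=A_t$, while right-continuity holds provided $|\{x\in\O:d(x,A)=t\}|_N=0$. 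Under this mild continuity at $t$, the left-hand side collapses to $f_\d(t)=t^{-N}|A_t\cap\O|$, and multiplying through by $t^N$ gives \eqref{tube_inversion_formula}.

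The main obstacle I anticipate is precisely the pointwise continuity of the relative tube function at every $t\in(0,\d)$: the level sets $\{d(\cdot,A)=t\}\cap\O$ could carry positive $N$-dimensional Lebesgue measure for certain values of $t$, at which points Mellin inversion only recovers the average of the one-sided limits rather than the value itself. Since $\O$ has finite volume and the level sets for distinct $t$ are pairwise disjoint, the exceptional set of such $t$ is at most countable, so one can either restrict the statement to the (dense, co-countable set of) continuity points or, equivalently, interpret $|A_t\cap\O|$ as its left-continuous representative; in either formulation the identity \eqref{tube_inversion_formula} follows at once from the calculation above.
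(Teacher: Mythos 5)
Your argument follows exactly the paper's route: identify $\widetilde{\zeta}_{A,\O}(\,\cdot\,;\d)$ with the Mellin transform of $f_\d(\tau)=\chi_{(0,\d)}(\tau)\tau^{-N}|A_\tau\cap\O|$, check local bounded variation (product of two monotone functions) and the $L^1$ condition $\widetilde{\zeta}_{A,\O}(c;\d)<\ty$ for $c>\ov{\dim}_B(A,\O)$ via Theorem \ref{an_rel} and the functional equation \eqref{equ_tilde}, and then apply Theorem \ref{mellin_inv}. Up to that point your verification is correct and is essentially the paper's proof.

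The shortfall is in your last step. The theorem asserts \eqref{tube_inversion_formula} for \emph{every} $t\in(0,\d)$, whereas you only obtain it at continuity points of $\tau\mapsto|A_\tau\cap\O|$ and then propose to restrict the statement to a co-countable set or to pass to a left-continuous representative; that proves a weaker statement than the one claimed. The hedge is unnecessary: the relative tube function is in fact continuous at every $t>0$, because for each $t>0$ the level set $\{x\in\eR^N: d(x,A)=t\}$ has $N$-dimensional Lebesgue measure zero. This is Stach\'o's theorem (cited in the paper as \cite{stacho} at the beginning of Section \ref{distance_tube}, where it is used to get $|\pa A_t|=0$ and hence \eqref{shell_tb}); combined with your monotone-convergence argument for left-continuity, it gives $f_\d(t+0)=f_\d(t-0)=f_\d(t)$ for all $t\in(0,\d)$, so the averaged left-hand side in Mellin's inversion formula collapses to $t^{-N}|A_t\cap\O|$ everywhere on $(0,\d)$ and the identity \eqref{tube_inversion_formula} holds as stated, with no exceptional set. (The paper relies on this continuity implicitly here and invokes it explicitly elsewhere, e.g.\ in Definition \ref{Vdist}.) With that one citation added, your proof is complete and coincides with the paper's.
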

\begin{proof}
Let $f(t):=\chi_{(0,\d)}(t)t^{-N}|A_t\cap\O|$ and observe that $t\mapsto|A_t\cap\O|$ is nondecreasing, and hence, is locally of bounded variation on $(0,+\ty)$.
Since the product of two functions of locally bounded variation is also a function of locally bounded variation, we conclude that $f$ is also locally of bounded variation on $(0,+\ty)$.
Furthermore, we deduce from Theorem~\ref{an_rel} and from the functional equation~\eqref{equ_tilde} (see also the end of Remark \ref{tube_holo}) that the integral defining the tube zeta function $\widetilde{\zeta}_{A,\O}$ in Equation \eqref{mellin_tube} is absolutely convergent (and hence, convergent) for all $s\in\Ce$ such that $\re s>\ov{\dim}_B(A,\O)$ or, in other words, $t\mapsto t^{\re s-1}f(t)$ belongs to $L^1(0,+\ty)$ for such $s$.
Consequently, the Mellin transform $\{\mathfrak{M}f\}(s)$ of $f$ is well defined by Equation \eqref{ddeff_mell} and coincides with $\widetilde{\zeta}_{A,\O}(s;\d)$ for $c=\re s>\ov{\dim}_B(A,\O)$; that is, Equation \eqref{mellin_tube} holds for all $s\in\Ce$ such that $\re s>\ov{\dim}_B(A,\O)$, as was claimed above.
Therefore, by Theorem~\ref{mellin_inv}, we can recover the relative tube function from the relative tube zeta function and for positive $y\neq\d$, we have
\begin{equation}
\chi_{(0,\d)}(y)y^{-N}|A_y\cap\O|=\frac{1}{2\pi\I}\int_{c-\I\ty}^{c+\I\ty}y^{-s}\widetilde{\zeta}_{A,\O}(s;\d)\di s,
\end{equation}
where $c>\ov{\dim}_B(A,\O)$ is arbitrary; that is, \eqref{tube_inversion_formula} is valid for all $t\in(0,\d)$, as desired.
\end{proof}

One of our main goals in this paper will be to express formula \eqref{tube_inversion_formula} in a more useful and applicable way.
More specifically, we will express the right-hand side of \eqref{tube_inversion_formula} in terms of the relative distance zeta function and as a sum (interpreted in a suitable way) of residues over the complex dimensions of the given relative fractal drum.
The resulting identity will be called a ``fractal tube formula'' (as in \cite{lapidusfrank12}) or simply, a tube formula.

\medskip

A priori, one would naively expect that Equation \eqref{ddeff_mell} and hence also, Equation \eqref{ddeff_inv}, only holds for $c\geq N$.
(Indeed, since $f(t)=0$ for all $t\geq\d$ and $|A_t\cap\O|\leq|\O|$, we easily see that $t\mapsto t^{c-1}f(t)$ belongs to $L^1(0,+\ty)$ for $c\geq N$.)
The stronger conclusion obtained in Theorem \ref{tube_inversion} requires the aforementioned results obtained in [LapRa\v Zu1--2] 
and \cite{refds}.

\medskip

The following result is really a corollary of Theorem~\ref{tube_inversion} but given its importance for the rest of this section, we state it as a separate proposition.

\begin{proposition}\label{kth_prim}
Let $(A,\O)$ be a relative fractal drum in $\eR^N$ and let $\d>0$ be fixed.
Then for every $t\in(0,\d)$ and $k\in\eN_0$, we have
\begin{equation}\label{kth_prim_eq}
V^{[k]}_{A,\O}(t)=\frac{1}{2\pi\I}\int_{c-\I\ty}^{c+\I\ty}\frac{t^{N-s+k}}{(N\!-\! s\! +\! 1)_{k}}\widetilde{\zeta}_{A,\O}(s;\d)\di s,
\end{equation}
where $c\in(\ov{\dim}_{B}(A,\O),N+1)$ is arbitrary.
\end{proposition}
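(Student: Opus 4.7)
My proof proceeds by induction on $k\in\eN_0$. For the base case $k=0$, we have $(N-s+1)_0=1$ by the convention in \eqref{pochamer}, so \eqref{kth_prim_eq} reduces exactly to the identity \eqref{tube_inversion_formula} of Theorem~\ref{tube_inversion}; the added upper bound $c<N+1$ is compatible with the hypothesis $c>\ov{\dim}_B(A,\O)$ of that theorem.

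For the inductive step from $k$ to $k+1$, combining the definition $V^{[k+1]}(t)=\int_0^t V^{[k]}(\tau)\di\tau$, the induction hypothesis, and an interchange of the order of integration yields
\begin{equation*}
V^{[k+1]}(t)=\frac{1}{2\pi\I}\int_{c-\I\ty}^{c+\I\ty}\frac{\widetilde{\zeta}_{A,\O}(s;\d)}{(N-s+1)_k}\left(\int_0^t\tau^{N-s+k}\di\tau\right)\di s.
\end{equation*}
The inner integral equals $t^{N-s+k+1}/(N-s+k+1)$, which is well defined precisely because $\re s=c<N+1\leq N+k+1$. Invoking the Pochhammer recursion $(N-s+1)_{k+1}=(N-s+1)_k\cdot(N-s+k+1)$ then delivers \eqref{kth_prim_eq} at level $k+1$.

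The principal technical obstacle is justifying the exchange of integrations, since the vertical contour integral is in general only conditionally convergent. I would handle this by truncating the contour to $\{\re s=c,\,|\im s|\leq T\}$ (where Fubini is immediate) and then passing to the limit $T\to\ty$, using the polynomial decay $|(N-s+1)_k|\geq|\im s|^k$ at infinity combined with the boundedness of $u\mapsto\widetilde{\zeta}_{A,\O}(c+\I u;\d)$ (coming from the Riemann--Lebesgue lemma applied to the Mellin representation of $\widetilde{\zeta}_{A,\O}$ via \eqref{mellin_tube}). This renders the integrand absolutely integrable whenever $k\geq 2$. For the more delicate case $k=0\to 1$, where the decay provided by $1/(N-s+1)$ is only of order $1/|\im s|$, a direct argument is preferable: integration by parts yields an explicit formula for the Mellin transform of $V^{[1]}(t)\chi_{(0,\d)}(t)$, and a subsequent contour shift across the simple pole at $s=N+1$ (whose residue equals $-\widetilde{\zeta}_{A,\O}(N+1;\d)=-V^{[1]}(\d)$) cancels the boundary contribution, thereby producing the claimed formula at level $k=1$ for every $c\in(\ov{\dim}_B(A,\O),N+1)$.
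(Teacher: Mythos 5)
Your argument is correct and is essentially the paper's own proof: the base case is Theorem \ref{tube_inversion}, and the inductive step integrates from $0$ to $t$ and interchanges the order of integration, using $N-c+1>0$ together with the Pochhammer recursion (the paper phrases the induction as ``iterating the calculation''). The paper justifies the interchange uniformly in $k$ by the same truncation idea you describe (Fubini--Tonelli on the truncated contour combined with dominated convergence), so your separate integration-by-parts/contour-shift treatment of the step $k=0\to1$, while workable, is not needed; note also that the boundedness of $u\mapsto\widetilde{\zeta}_{A,\O}(c+\I u;\d)$ follows directly from the absolute convergence of the defining integral at $\re s=c>\ov{\dim}_B(A,\O)$, rather than from the Riemann--Lebesgue lemma.
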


\begin{proof}
By Theorem~\ref{tube_inversion}, we have the following equalities, valid (pointwise) for all $t\in(0,\d)$:
$$
\begin{aligned}
V^{[1]}_{A,\O}(t)=\int_{0}^tV_{A,\O}(\tau)\di\tau&=\frac{1}{2\pi\I}\int_{0}^t\int_{c-\I\ty}^{c+\I\ty}\tau^{N-s}\widetilde{\zeta}_{A,\O}(s;\d)\di s\di\tau\\
&=\frac{1}{2\pi\I}\int_{c-\I\ty}^{c+\I\ty}\frac{t^{N-s+1}}{N\!-\! s\!+\! 1}\widetilde{\zeta}_{A,\O}(s;\d)\di s,\\
\end{aligned}
$$
since $N-c+1>0$.
The change of the order of integration is justified by combining Lebesgue's dominated convergence theorem and the Fubini--Tonelli theorem. Iterating this calculation $k-1$ times, we prove the statement of the proposition.
\end{proof}

We adapt the following definition of the truncated screen and window from Section 5.3 of \cite{lapidusfrank12}, where it was stated for languid generalized fractal strings and can now be used in the same form in the case of relative fractal drums in $\eR^N$.
 
\begin{definition}[The truncated screen and window]\label{trunc_screen_window}
Given an integer $n\geq 1$ and a languid relative fractal drum in $\eR^N$, the {\em truncated screen} $\bm{S}_{|n}$ is the part of the screen $\bm S$ restricted to the interval $[T_{-n},T_n]$, and the {\em truncated window} $\bm{W}_{|n}$ is the window $\bm W$ intersected with the horizontal strip between $T_{-n}$ and $T_n$; i.e.,
\begin{equation}\label{trunc_window}
\bm W_{|n}:=\bm W\cap\{s\in\Ce\,:\, T_{-n}\leq\im s\leq T_n\}.
\end{equation}

We then call $\po(\widetilde{\zeta}_{A,\O},\bm W_{|n})$ the set of {\em truncated visible complex dimensions},
i.e., it is the set of visible complex dimensions of $(A,\O)$ relative to the window $\bm W$ and with imaginary parts between $T_{-n}$ and $T_n$.
Note that since by assumption, there are no poles of $\widetilde{\zeta}_{A,\O}$ along the screen $\bm S$, we could replace $\bm{W}_{|n}$ by its interior $\mathring{\bm W}_{|n}$, in the aforementioned notation:
\begin{equation}
\po\left(\widetilde{\zeta}_{A,\O},\bm W_{|n}\right)=\po\left(\widetilde{\zeta}_{A,\O},\mathring{\bm W}_{|n}\right).
\end{equation}
\end{definition}

\section{Pointwise Fractal Tube Formula}\label{sec_point}

In this section, our main goal is to obtain fractal tube formulas via the tube zeta function which are valid pointwise.
Furthermore, depending on the growth properties of the corresponding tube zeta function, these fractal tube formulas will be either exact or else approximate with a pointwise error term.

\subsection{Pointwise Tube Formula with Error Term}\label{subsec_point_error}

From now on, the phrase ``let $(A,\O)$ be a languid (or strongly languid) relative fractal drum'' will implicitly mean that $(A,\O)$ is admissible for some window $\bm W$ and for some $\d>0$, the relative tube zeta function $\widetilde{\zeta}_{A,\O}(s;\d)$ of $(A,\O)$ satisfies the languidity conditions of Definition~\ref{languid} (or Definition~\ref{str_languid}, respectively).
We will first obtain a `truncated pointwise tube formula' (Lemma \ref{trunc_point}), from which the main result (Theorem~\ref{pointwise_formula}) will follow.
(Note that Lemma \ref{trunc_point} is the counterpart, now valid for any $N\geq 1$, of \cite[Lemma 5.9]{lapidusfrank12}.)
Recall from the end of Section \ref{prelm} that for each integer $n\geq 1$, the truncated screen $\bm{S}_{|n}$ and associated truncated window $\bm{W}_{|n}$ were defined in Definition \ref{trunc_screen_window}.

\begin{lemma}[Truncated pointwise tube formula]\label{trunc_point}
Let $k\geq 0$ be an integer and $(A,\O)$ a languid RFD in $\eR^N$ for a fixed $\d>0$.
Furthermore, fix a constant $c\in(\ov{\dim}_B(A,\O),N+1)$.
Then, for all $t\in(0,\d)$ and all integers $n\geq 1$, we have
\begin{equation}\label{trunc_point_formula}
\begin{aligned}
I_n&:=\frac{1}{2\pi\I}\int_{c+\I T_{-n}}^{c+\I T_n}\frac{t^{N-s+k}}{(N\!-\! s\!+\! 1)_k}\widetilde{\zeta}_{A,\O}(s;\d)\di s\\
&\phantom{:}=\sum_{\omega\in\po(\widetilde{\zeta}_{A,\O},\bm W_{|n})}\res\left(\frac{t^{N-s+k}}{(N\!-\! s\!+\! 1)_k}\widetilde{\zeta}_{A,\O}(s;\d),\omega\right)\\
&\phantom{:=}+\frac{1}{2\pi\I}\int_{\bm{S}_{|n}}\frac{t^{N-s+k}}{(N\!-\! s\!+\! 1)_k}\widetilde{\zeta}_{A,\O}(s;\d)\di s+E_n(t).
\end{aligned}
\end{equation}
Moreover, assuming that hypothesis {\bf L1} is fulfilled, we have the following pointwise remainder estimate, valid for all $t\in(0,\d)$$:$
\begin{equation}\label{Eest0}
\begin{aligned}
|E_n(t)|\leq{t^{N+k}}K_{\kappa}\max\big\{T_n^{\kappa-k},|T_{-n}|^{\kappa-k}\big\}(c-\inf S)\max\big\{t^{-c},t^{-\inf S}\big\},\\
\end{aligned}
\end{equation}
where $K_{\kappa}$ is a positive constant depending only on $\kappa$.\footnote{More precisely, $K_{\kappa}$ depends only on $\kappa$ and the constant $C$ occurring in hypothesis {\bf L1}.}

Finally, for each point $s=S(\tau)+\I\tau$, where $\tau\in\eR$ is such that $|\tau|>1$, and for all $t\in(0,\d)$, the integrand over the truncated screen $\bm{S}_{|n}$ appearing in~\eqref{trunc_point_formula} is bounded in absolute value by
\begin{equation}\label{integrand1}
Ct^{N+k}\max\big\{t^{-\sup S},t^{-\inf S}\big\}|\tau|^{\kappa-k},
\end{equation}
when hypothesis {\bf L2} holds, and by
\begin{equation}\label{integrand2}
C_{\kappa}t^{N+k}\max\big\{B^{|\inf S|},B^{|\sup S|}\big\}\max\big\{t^{-\sup S},t^{-\inf S}\big\}|\tau|^{\kappa-k},
\end{equation}
when hypothesis {\bf L2'} holds, with the constant $C_\kappa$ depending only on $\kappa$.\footnote{Here, the constant $C_\kappa$ actually depends only on $\kappa$ and on the constant $C$ appearing in hypothesis~{\bf L1}.}
\end{lemma}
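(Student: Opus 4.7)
The strategy is a contour shift argument based on the residue theorem. Form the closed rectangular contour $\Gamma_n$, oriented counterclockwise, whose right edge is the vertical segment from $c+\I T_{-n}$ to $c+\I T_n$ (which produces $I_n$), whose left edge is the truncated screen $\bm S_{|n}$ (traversed downward), and whose top and bottom edges are the horizontal segments at heights $T_n$ and $T_{-n}$ joining the screen to the vertical line $\{\re s=c\}$. Admissibility of $(A,\O)$ provides a meromorphic extension of $\widetilde{\zeta}_{A,\O}(\,\cdot\,;\d)$ to a neighborhood of the closed region bounded by $\Gamma_n$; by construction of the screen and by the implicit content of hypothesis {\bf L1}, no pole lies on $\Gamma_n$. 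The integrand $t^{N-s+k}(N-s+1)_k^{-1}\widetilde{\zeta}_{A,\O}(s;\d)$ has poles inside $\Gamma_n$ only at the visible complex dimensions $\omega\in\po(\widetilde{\zeta}_{A,\O},\bm W_{|n})$, since the additional singularities coming from the zeros of $(N-s+1)_k$ at $s=N+1,\dots,N+k$ lie strictly to the right of $\Gamma_n$ under the hypothesis $c<N+1$, and hence contribute nothing. The residue theorem then yields identity \eqref{trunc_point_formula}, with $E_n(t)$ being $-(2\pi\I)^{-1}$ times the sum of the two horizontal-segment integrals.

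To estimate $E_n(t)$, parametrize the top segment as $s=\sigma+\I T_n$ with $\sigma\in[S(T_n),c]$. Hypothesis {\bf L1} yields $|\widetilde{\zeta}_{A,\O}(\sigma+\I T_n;\d)|\leq C(|T_n|+1)^\kappa$, which is comparable to $|T_n|^\kappa$ for $|T_n|\geq 1$. Writing $(N-s+1)_k=\prod_{j=1}^k(N-\sigma+j-\I T_n)$, each factor has modulus at least $|T_n|$, so $|(N-s+1)_k|\geq|T_n|^k$. Since $\sigma\in[\inf S,c]$, the factor $t^{N-\sigma+k}$ is bounded above by $t^{N+k}\max\{t^{-c},t^{-\inf S}\}$. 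Multiplying these bounds together with the segment length $\leq c-\inf S$, and applying the symmetric estimate on the bottom segment, yields \eqref{Eest0} after absorbing all universal constants into a single $K_\kappa$.

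On the screen, at $s=S(\tau)+\I\tau$ with $|\tau|>1$, the same type of estimates apply: $|t^{N-s+k}|=t^{N-S(\tau)+k}\leq t^{N+k}\max\{t^{-\sup S},t^{-\inf S}\}$ and $|(N-s+1)_k|\geq|\tau|^k$. Under hypothesis {\bf L2}, combining these with $|\widetilde{\zeta}_{A,\O}(s;\d)|\leq C|\tau|^\kappa$ produces \eqref{integrand1}; under {\bf L2'}, the additional factor $B^{|S(\tau)|}$ is uniformly controlled by $\max\{B^{|\inf S|},B^{|\sup S|}\}$, yielding \eqref{integrand2}. The entire argument amounts to careful bookkeeping in a standard contour-shift computation; the only non-routine points are the exclusion of the Pochhammer poles via the assumption $c<N+1$, correct tracking of the orientations of $\Gamma_n$ to pin down the sign of $E_n(t)$, and the crucial observation that the denominator $(N-s+1)_k$ contributes an extra $|\tau|^{-k}$ factor in the screen-integrand bound, which will subsequently be needed to pass to the limit $n\to\infty$ in Theorem \ref{pointwise_formula}.
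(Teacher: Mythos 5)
Your proposal is correct and follows essentially the same route as the paper's proof: close the contour with the truncated screen and two horizontal segments, apply the residue theorem (noting that $c<N+1$ keeps the zeros of $(N-s+1)_k$ outside the contour), and then estimate the horizontal segments via {\bf L1} with $|(N-s+1)_k|\geq|T_{\pm n}|^k$ and the screen integrand via {\bf L2} or {\bf L2'}. The only differences are cosmetic bookkeeping (you bound $t^{-\sigma}$ uniformly and multiply by the segment length, whereas the paper integrates $t^{-\sigma}$ explicitly), which yields the same estimates.
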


\begin{proof}
We let $\ov{D}:=\ov{\dim}_B(A,\O)$ and for the sake of brevity, write $\widetilde{\zeta}_{A,\O}(s)$ instead of $\widetilde{\zeta}_{A,\O}(s;\d)$ throughout the proof.
Next, we replace the integral over the segment $[c+\I T_{-n},c+\I T_n]$ with the integral over the contour $\Gamma$ consisting of this segment, the truncated screen $\bm{S}_{|n}$ and the two horizontal segments joining $S(T_{\pm n})+\I T_{\pm n}$ and $c+\I T_{\pm n}$ (see Figure~\ref{screen_fig}).
In other words, we have
$$
\begin{aligned}
I_n=\frac{1}{2\pi\I}\int_{c+\I T_{-n}}^{c+\I T_n}\!\!\frac{t^{N-s+k}}{(N\!-\! s\!+\! 1)_k}\widetilde{\zeta}_{A,\O}(s)\di s&=\frac{1}{2\pi\I}\oint_{\Gamma}\frac{t^{N-s+k}}{(N\!-\! s\!+\! 1)_k}\widetilde{\zeta}_{A,\O}(s)\di s\\
&\phantom{=}+\frac{1}{2\pi\I}\int_{\bm{S}_{|n}}\!\!\frac{t^{N-s+k}}{(N\!-\! s\!+\! 1)_k}\widetilde{\zeta}_{A,\O}(s)\di s\!+\! E_n(t),
\end{aligned}
$$
where
$$
\begin{aligned}
E_n(t)&:=\frac{1}{2\pi\I}\int_{\Gamma_L\cup\Gamma_U}\frac{t^{N-s+k}}{(N\!-\! s\!+\! 1)_k}\widetilde{\zeta}_{A,\O}(s)\di s.\\
\end{aligned}
$$
\begin{figure}[t]
\begin{center}
\includegraphics[trim=0cm 0cm 0cm 0cm,clip=true,width=9cm]{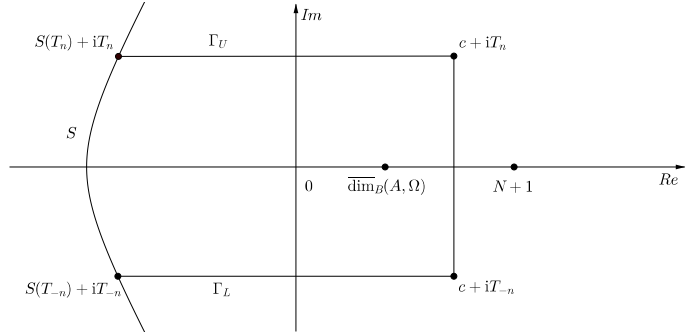}
\end{center}
\caption{The truncated window $\bm{W}_{|n}$ and the contour $\Gamma$ which we use to estimate the integral $I_n$ in the proof of Lemma~\ref{trunc_point}.}
\label{screen_fig}
\end{figure}

Furthermore, note that the integrand appearing above is meromorphic on the bounded domain having $\Gamma$ as its boundary and its poles are exactly the poles of the relative tube zeta function since $c\in(\ov{\dim}_B(A,\O),N+1)$ ensures that there are no zeros of $(N\!-\!s \!+\!1)_k$ inside of $\Gamma$.
Consequently, we deduce from the residue theorem that
$$
\begin{aligned}
I_n&=\!\!\!\!\sum_{\omega\in\po(\widetilde{\zeta}_{A,\O},W_{|n})}\!\!\!\!\res\left(\frac{t^{N-s+k}}{(N\!-\! s\!+\! 1)_k}\widetilde{\zeta}_{A,\O}(s),\omega\right)+\frac{1}{2\pi\I}\int_{\bm{S}_{|n}}\!\!\frac{t^{N-s+k}\,\widetilde{\zeta}_{A,\O}(s)}{(N\!-\! s\!+\! 1)_k}\di s+E_n(t).
\end{aligned}
$$
To obtain the upper bound on $|E_n(t)|$, we first observe that for $s=\sigma+\I T_n$ we have $|(N-s+1)_k|\geq T_n^k$ and then, under hypothesis {\bf L1}, we estimate the integrals over the upper segment $\Gamma_U$ and the lower segment $\Gamma_L$ as follows:
$$
\begin{aligned}
\left|\int_{\Gamma_U}\frac{t^{N-s+k}}{(N\!-\!s \!+\!1)_k}\widetilde{\zeta}_{A,\O}(s)\di s\right|&=\left|\int_{\bm{S}(T_n)}^{c}\frac{t^{N+k-\sigma-\I T_n}}{(N\!+\! 1\!-\! (\sigma\! +\! \I T_n))_k}\widetilde{\zeta}_{A,\O}(\sigma+\I T_n)\di\sigma\right|\\
&\leq  t^{N+k}C(T_n+1)^{\kappa}T_n^{-k}\int_{\bm{S}(T_n)}^{c}t^{-\sigma}\di\sigma\\
&\leq  t^{N+k}K_\kappa T_n^{\kappa-k}\big(c-S(T_n)\big)\max\big\{t^{-c},t^{-S(T_n)}\big\},
\end{aligned}
$$
where $K_{\kappa}$ is a positive constant such that $C(|T_n|+1)^{\kappa}\leq K_\kappa|T_n|^{\kappa}$ for all $n\in\Ze$.
Furthermore, since $\inf S\leq S(\tau)$ for all $\tau\in\eR$, we have
\begin{equation}\label{E3}
\begin{aligned}
\left|\int_{\Gamma_U}\frac{t^{N-s+k}\widetilde{\zeta}_{A,\O}(s)\di s}{(N\!-\! s\!+\! 1)_k}\right|\leq t^{N+k}K_\kappa T_n^{\kappa-k}(c-\inf S)\max\big\{t^{-c},t^{-\inf S}\big\}.
\end{aligned}
\end{equation}
Analogously, we bound the integral over the lower line segment by
\begin{equation}\label{E4}
\begin{aligned}
t^{N+k}K_\kappa |T_{-n}|^{\kappa-k}(c-\inf S)\max\big\{t^{-c},t^{-\inf S}\big\}.
\end{aligned}
\end{equation}
Therefore, putting~\eqref{E3} and \eqref{E4} together, we obtain the upper bound~\eqref{Eest0}.\footnote{The constant $K_\kappa$ in~\eqref{Eest0} is actually equal to the present constant $K_\kappa$ divided by $\pi$.}

To estimate the integrand over the truncated screen $\bm{S}_{|n}$, we observe that for $s=S(\tau)+\I\tau$ with $|\tau|>1$, we have
\begin{equation}
\begin{aligned}
\left|\frac{t^{N-s+k}}{(N\!-\! s\!+\! 1)_k}\widetilde{\zeta}_{A,\O}(s)\right|&\leq Ct^{N-S(\tau)+k}|\tau|^{\kappa-k}\\
&\leq Ct^{N+k}\max\big\{t^{-\sup S},t^{-\inf S}\big\}|\tau|^{\kappa-k},
\end{aligned}
\end{equation}
under hypothesis {\bf L2} and similarly, under hypothesis {\bf L2'}, which completes the proof of the lemma.
\end{proof}

Next, we state and prove the main result of this subsection.

\begin{theorem}[Pointwise fractal tube formula with error term, via $\widetilde{\zeta}_{A,\O}$]\label{pointwise_formula}
Let $(A,\O)$ be a relative fractal drum in $\eR^N$ which is languid for some fixed $\d>0$ and some fixed exponent $\kappa\in\eR$.
Furthermore, let $k>\kappa+1$ be a nonnegative integer.
Then, the following pointwise fractal tube formula with error term, expressed in terms of the tube zeta function $\widetilde{\zeta}_{A,\O}:=\widetilde{\zeta}_{A,\O}(\,\cdot\,;\d)$, is valid for every $t\in(0,\d)$$:$
\begin{equation}\label{point_form}
V^{[k]}_{A,\O}(t)=\sum_{\omega\in\po(\widetilde{\zeta}_{A,\O},\bm{W})}\res\left(\frac{t^{N-s+k}}{(N\!-\! s\!+\! 1)_k}\widetilde{\zeta}_{A,\O}(s),\omega\right)+\widetilde{R}^{[k]}_{A,\O}(t).
\end{equation}
Here, for every $t\in(0,\d)$, the $($pointwise$)$ error term $\widetilde{R}^{[k]}_{A,\O}$ is given by the absolutely convergent $($and hence, convergent$)$ integral
\begin{equation}\label{error_term}
\widetilde{R}^{[k]}_{A,\O}(t)=\frac{1}{2\pi\I}\int_{\bm S}\frac{t^{N-s+k}}{(N\!-\! s\!+\! 1)_k}\widetilde{\zeta}_{A,\O}(s)\di s.
\end{equation}
Furthermore, we have the following pointwise error estimate, valid for all $t\in(0,\d)$$:$
\begin{equation}\label{R_upper_bound}
\big|\widetilde{R}^{[k]}_{A,\O}(t)\big|\leq t^{N+k}\max\{t^{-\sup S},t^{-\inf S}\}\left(\frac{C\big(1+\|S\|_{\mathrm{Lip}}\big)}{2\pi(k-\kappa-1)}+C'\right),
\end{equation}
where $C$ is the positive constant appearing in {\bf L1} and {\bf L2} and $C'$ is some suitable positive constant.
These constants depend only on the relative fractal drum $(A,\O)$ and the screen, but not on the value of the nonnegative integer $k$.

In particular, we have the following pointwise error estimate$:$
\begin{equation}\label{estm}
\widetilde{R}^{[k]}_{A,\O}(t)=O(t^{N-\sup S+k})\quad\mathrm{\ as\ }\quad t\to 0^+.
\end{equation}
Moreover, if $S(\tau)<\sup S$ for all $\tau\in\eR$ $($i.e., if the screen $\bm S$ lies strictly to the left of the vertical line $\{\re s=\sup S\}$$)$, then we have the following stronger pointwise estimate$:$
\begin{equation}\label{S(t)<sup}  
\widetilde{R}^{[k]}_{A,\O}(t)=o(t^{N-\sup S+k})\quad\mathrm{\ as\ }\quad t\to 0^+.
\end{equation}
\end{theorem}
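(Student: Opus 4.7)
My plan is to pass to the limit $n \to \ty$ in the truncated pointwise formula~\eqref{trunc_point_formula} of Lemma~\ref{trunc_point}, combined with the integral representation~\eqref{kth_prim_eq} supplied by Proposition~\ref{kth_prim}. I would fix $c \in (\ov{\dim}_B(A,\O), N+1)$ (a nonempty interval, since $\ov{\dim}_B(A,\O) \leq N$) and observe that the truncated integrals $I_n$ converge as $n \to \ty$ to the full contour integral along $\{\re s = c\}$, which by Proposition~\ref{kth_prim} equals $V^{[k]}_{A,\O}(t)$ for each $t \in (0, \d)$. Hence it suffices to analyze what the three remaining terms on the right-hand side of~\eqref{trunc_point_formula} become in the limit.

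The sum of residues over $\po(\widetilde{\zeta}_{A,\O}, \bm{W}_{|n})$ converges to the corresponding sum over $\po(\widetilde{\zeta}_{A,\O}, \bm{W})$ because, by discreteness of the set of poles, only finitely many lie in any horizontal strip. By estimate~\eqref{Eest0} of Lemma~\ref{trunc_point}, the remainder $E_n(t)$ is controlled by a multiple of $\max\{T_n^{\kappa-k}, |T_{-n}|^{\kappa-k}\}$; since the hypothesis $k > \kappa + 1$ gives $\kappa - k < -1 < 0$, and $T_n \to +\ty$, $T_{-n} \to -\ty$ by~\eqref{seq_T_n}, this factor tends to zero, so $E_n(t) \to 0$. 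For the screen integral, hypothesis~{\bf L2} together with estimate~\eqref{integrand1} shows that, for $s = S(\tau) + \I\tau$ with $|\tau| \geq 1$, the integrand is dominated by $C\, t^{N+k}\max\{t^{-\sup S}, t^{-\inf S}\}\, |\tau|^{\kappa - k}$, which is integrable over $\{|\tau| \geq 1\}$ precisely because $\kappa - k < -1$. Combined with the boundedness of the integrand on the compact piece $\{|\tau| \leq 1\}$ of the screen (using that $\widetilde{\zeta}_{A,\O}$ is holomorphic on a neighborhood of $\bm{S}$), this proves absolute convergence of the full screen integral in~\eqref{error_term} and in particular that the truncated screen integrals converge to $\widetilde{R}^{[k]}_{A,\O}(t)$. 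Assembling the four limits gives formula~\eqref{point_form}.

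For the explicit bound~\eqref{R_upper_bound}, I would parametrize the screen as $s = S(\tau) + \I\tau$, so that the arclength element satisfies $|\di s| \leq (1 + \|S\|_{\mathrm{Lip}})\di\tau$. Splitting the screen integral into $\{|\tau| \geq 1\}$ and $\{|\tau| < 1\}$, hypothesis~{\bf L2} together with the identity $\int_1^{\ty} \tau^{\kappa - k}\di\tau = 1/(k - \kappa - 1)$ bounds the former by
\[
\frac{C(1 + \|S\|_{\mathrm{Lip}})}{2\pi(k - \kappa - 1)}\, t^{N+k}\max\{t^{-\sup S}, t^{-\inf S}\},
\]
while the compact piece contributes at most $C' \cdot t^{N+k}\max\{t^{-\sup S}, t^{-\inf S}\}$ for a suitable constant $C'$ depending only on $(A,\O)$ and on the screen. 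Adding the two estimates yields~\eqref{R_upper_bound}. Since $t^{-x}$ is nondecreasing in $x$ for $t \in (0, 1)$ and $\sup S \geq \inf S$, we have $\max\{t^{-\sup S}, t^{-\inf S}\} = t^{-\sup S}$ for small $t$, which immediately gives the asymptotic~\eqref{estm}.

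Finally, for the sharpened estimate~\eqref{S(t)<sup}, under the assumption $S(\tau) < \sup S$ for all $\tau \in \eR$, I would divide the integrand in~\eqref{error_term} by $t^{N - \sup S + k}$: for each fixed $\tau$ the resulting factor $t^{\sup S - S(\tau)}$ tends to $0$ as $t \to 0^+$, while the absolute value of the whole quotient remains dominated by a function independent of $t \in (0, 1)$ that is integrable over $\eR$ (by the analysis of the previous paragraph, since $t^{\sup S - S(\tau)} \leq 1$ for $t \in (0, 1)$). Lebesgue's dominated convergence theorem then gives $\widetilde{R}^{[k]}_{A,\O}(t)/t^{N - \sup S + k} \to 0$ as $t \to 0^+$. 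The main technical obstacle is establishing absolute convergence and the explicit bound for the screen integral; the hypothesis $k > \kappa + 1$ is precisely what is needed, as it guarantees a net polynomial decay of order strictly below $-1$ along the screen, and hence integrability at infinity with the correct coefficient $1/(k - \kappa - 1)$ appearing in~\eqref{R_upper_bound}.
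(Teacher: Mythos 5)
Your proposal is correct and follows essentially the same route as the paper's proof: both pass to the limit $n\to\ty$ in Lemma~\ref{trunc_point} combined with Proposition~\ref{kth_prim}, dispose of $E_n(t)$ via estimate \eqref{Eest0} since $k>\kappa$, and obtain the bound \eqref{R_upper_bound} from \eqref{integrand1} using the Lipschitz parametrization of the screen together with $\int_1^{+\ty}\tau^{\kappa-k}\di\tau=(k-\kappa-1)^{-1}$, with $C'$ accounting for the compact piece $|\im s|<1$. The only (inessential) difference is in the proof of \eqref{S(t)<sup}, where you invoke Lebesgue's dominated convergence theorem after dividing by $t^{N-\sup S+k}$, whereas the paper uses the classical Ingham-style $\e$-splitting of the screen integral into the parts $|\im s|>T$ and $|\im s|\leq T$; the two arguments are interchangeable here.
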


Before proving Theorem \ref{pointwise_formula}, we make the following two comments (in parts $(a)$ and $(b)$ of Remark \ref{a_b_rem}), which will help the reader to understand the statement of the theorem.
Furthermore, we also point out that comments similar to those in Remark \ref{a_b_rem} also apply to all other theorems stated below, in which a (typically infinite) sum over the $($visible$)$ complex dimensions appears, either in reference to a pointwise or distributional fractal tube formula.
Of course, in the case of the distributional fractal tube formula the (potentially infinite) sum has to be interpreted as a distributional (rather than pointwise) limit of the partial sums.

\begin{remark}\label{a_b_rem}
$(a)$\ \ The (potentially infinite) sum appearing in~\eqref{point_form} in the above theorem (Theorem \ref{pointwise_formula}) should be interpreted as the limit
\begin{equation}\label{5.1.13.1/2.1/2}
\lim_{n\to\ty}\sum_{\omega\in\po(\widetilde{\zeta}_{A,\O},\bm{W}_{|n})}\res\left(\frac{t^{N-s+k}}{(N\!-\! s\!+\! 1)_k}\widetilde{\zeta}_{A,\O}(s),\omega\right),
\end{equation}
where $\bm{W}_{|n}$ is the truncated window (see Definition~\ref{trunc_screen_window}); that is, as the pointwise limit of the partial sums over the (visible) truncated complex dimensions, i.e., the poles of $\widetilde{\zeta}_{A,\O}$ located in $\bm{W}_{|n}$.
More specifically, the existence of this limit follows from the proof of the theorem in which we show that the series in \eqref{point_form} converges pointwise and conditionally.
On the other hand, we point out that Theorem~\ref{pointwise_formula} does not give any information about the possible absolute convergence of the series in \eqref{point_form}.
This situation is similar to the one which occurs in \cite[Chapters~5 and~8]{lapidusfrank12} and, in fact, also in Riemann's original explicit formula for the counting function of the prime numbers (see, e.g., \cite{Edw}).

\medskip

$(b)$\ \ The sum over the set $\po(\widetilde{\zeta}_{A,\O},\bm{W})$ in Equation \eqref{point_form} of Theorem~\ref{pointwise_formula} is independent of the parameter $\d$ since changing $\d$ has no effect on the residues appearing in~\eqref{point_form}.
This follows directly from the fact that the principal parts of a meromorphic extension of the relative tube zeta function around any of its poles do not depend on $\d$ (see [LapRa\v Zu1--4]). 
In other words, when applying Theorem~\ref{pointwise_formula}, one has to determine that $(A,\O)$ is languid for some $\d>0$, but when calculating the sum, one can take any $\d>0$; that is, in practice, the most convenient one.
\end{remark}

\begin{proof}[Proof of Theorem \ref{pointwise_formula}]
Without loss of generality, let $c\in(\ov{\dim}_B(A,\O),N+1)$ be the constant from the languidity condition {\bf L1} of Definition \ref{languid}.
We will prove the theorem by using Lemma~\ref{trunc_point} in order to obtain~\eqref{trunc_point_formula} for all $n\geq 1$ and then, by letting $n\to\ty$.
We note that $E_n(t)$ tends to zero for $k>\kappa$ at the rate of some negative power of $\min\{T_n, |T_{-n}|\}$.
Furthermore, for $k>\kappa+1$, the error term $\widetilde{R}_{A,\O}^{[k]}(t)$ is absolutely convergent.
Indeed, note that, since the function $\tau\mapsto S(\tau)$ is Lipschitz continuous, it is differentiable almost everywhere and, consequently, the derivative of the map $\tau\mapsto S(\tau)+\I\tau$ is bounded by $(1+\|S\|_{\mathrm{Lip}})$ for almost all $\tau\in\eR$.
Moreover, since
$
\int_{1}^{+\ty}\tau^{\kappa-k}\di\tau=(k-\kappa-1)^{-1}
$
for $k>\kappa+1$, the upper bound \eqref{R_upper_bound} on the error term $\widetilde{R}^{[k]}_{A,\O}(t)$ now follows from~\eqref{integrand1}.
The positive constant $C'$ in \eqref{R_upper_bound} is the constant which corresponds to the integral over the part of the screen for which $|\tau|<1$; i.e.,
$$
C':=\frac{1}{2\pi}\int_{\bm{S}\cap\{|\im S|<1\}}\frac{|\tilde{\zeta}_{A,\O}(s)|}{|(N\!-\! s\!+\! 1)_k|}\,|\D s|.
$$

In the case when the screen stays strictly to the left of the line $\{\re s=\sup S\}$, we can obtain the better estimate~\eqref{S(t)<sup} by using a well-known method; see, e.g.,~\cite[pp. 33--34]{In}.
Namely, for any given $\e>0$, we have to show that~\eqref{error_term} is bounded by $\e t^{N-\sup S+k}$.
For a given $T>0$, we can split the integral~\eqref{error_term} into two parts; namely, the integral over the part of the screen for which $|\im S|>T$ and the integral over the part of the screen for which $|\im S|\leq T$.
Since the first integral is absolutely convergent, we can choose $T$ sufficiently large so that it is bounded by $\frac{1}{2}\e t^{N-\sup S+k}$.
For the second integral, we observe that the maximum of $S(\tau)$ for $\tau\in[-T,T]$ is strictly less than $\sup S$; i.e., we can choose $\a>0$ such that $S(\tau)<\sup S-\a$ for all $\tau\in[-T,T]$.
This implies that the integral over the part of the screen for which $|\im S|\le T$ is of order $O(t^{N-\sup S+k+\a})$ as $t\to 0^+$.\footnote{Observe that since the screen $\bm S$ avoids the poles of the relative tube zeta function, we have that $\widetilde{\zeta}_{A,\O}(s)$ is bounded for all $s\in\Ce$ in the part of the screen $\bm S$ for which $|\im S|\leq T$.}
Hence, for all sufficiently small $t>0$ it is bounded by $\frac{1}{2}\e t^{N-\sup S+k}$.
This proves that $\widetilde{R}^{[k]}_{A,\O}(t)=o(t^{N-\sup S+k})$ as $t\to0^+$, as desired, and therefore completes the proof of the theorem.   
\end{proof}

\subsection{Exact Pointwise Tube Formula}\label{subs_exact}

In this subsection, we show that in the case of a strongly languid relative fractal drum, we are able to obtain an exact pointwise tube formula; that is, a pointwise formula without an error term.
This is the main content of the following theorem.

\begin{theorem}[Exact pointwise fractal tube formula via $\widetilde{\zeta}_{A,\O}$]\label{str_pointwise_formula}
Let $(A,\O)$ be a relative fractal drum in $\eR^N$ which is strongly languid for some fixed $\d>0$ and some fixed exponent $\kappa\in\eR$.
Furthermore, let $k>\kappa$ be a nonnegative integer.
Then, the following exact pointwise fractal tube formula, expressed in terms of the tube zeta function $\widetilde{\zeta}_{A,\O}:=\widetilde{\zeta}_{A,\O}(\,\cdot\,;\d)$, holds for all $t\in(0,\min\{1,\d,B^{-1}\})$$:$
\begin{equation}\label{point_form_w}
V^{[k]}_{A,\O}(t)=\sum_{\omega\in\po(\widetilde{\zeta}_{A,\O},\Ce)}\res\left(\frac{t^{N-s+k}}{(N\!-\! s\!+\! 1)_k}\widetilde{\zeta}_{A,\O}(s),\omega\right).
\end{equation}
Here, $B$ is the positive constant appearing in hypothesis {\bf L2'}.
\end{theorem}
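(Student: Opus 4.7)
The plan is to derive the exact formula as a limit of the formulas with error term from Theorem~\ref{pointwise_formula}, using the sequence of screens $\bm{S}_m$ provided by the strong languidity hypothesis and then letting $m\to\infty$. Since strong languidity implies languidity (L2' implies L2), for each fixed $m\geq 1$ the pointwise formula with error term applies relative to the screen $\bm{S}_m$ and its associated window $\bm{W}^{(m)}=\{s\in\Ce:\re s\geq S_m(\im s)\}$, yielding
\begin{equation*}
V^{[k]}_{A,\O}(t)=\sum_{\omega\in\po(\widetilde{\zeta}_{A,\O},\bm{W}^{(m)})}\res\!\left(\frac{t^{N-s+k}}{(N\!-\! s\!+\! 1)_k}\widetilde{\zeta}_{A,\O}(s),\omega\right)+\widetilde{R}^{[k]}_{A,\O,m}(t),
\end{equation*}
where $\widetilde{R}^{[k]}_{A,\O,m}(t)$ is the integral of the usual integrand over $\bm{S}_m$ (obtained, as in Lemma~\ref{trunc_point}, by first truncating at height $T_{\pm n}$ and letting $n\to\infty$, using condition L1 to dispose of the horizontal pieces, which is legitimate since under strong languidity L1 holds for all $\sigma<c$).

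The core of the argument is to show that $\widetilde{R}^{[k]}_{A,\O,m}(t)\to 0$ as $m\to\infty$ for every fixed $t\in(0,\min\{1,\d,B^{-1}\})$. Parametrizing $\bm{S}_m$ by $s=S_m(\tau)+\I\tau$ and using L2' together with the estimate $|(N-s+1)_k|\geq \mathrm{const}\cdot(|\tau|+1)^k$ for $|\tau|$ large, the integrand on $\bm{S}_m$ is bounded in absolute value by
\begin{equation*}
C'\,t^{N+k}\,t^{-S_m(\tau)}B^{|S_m(\tau)|}(|\tau|+1)^{\kappa-k}.
\end{equation*}
For $m$ large enough we have $\sup S_m<0$, hence $|S_m(\tau)|=-S_m(\tau)$ for all $\tau$, and the factor becomes $(tB)^{-S_m(\tau)}$; since $tB<1$ and $-S_m(\tau)\geq -\sup S_m\to+\infty$, this factor is dominated by $(tB)^{-\sup S_m}\to 0$. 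Combining this with the uniform Lipschitz bound on the $S_m$ (so that arc length on $\bm{S}_m$ is controlled by $(1+\sup_m\|S_m\|_{\mathrm{Lip}})|d\tau|$) and integrating in $\tau$ yields
\begin{equation*}
\bigl|\widetilde{R}^{[k]}_{A,\O,m}(t)\bigr|\leq C''\,t^{N+k}(tB)^{-\sup S_m}\longrightarrow 0\quad\text{as } m\to\infty,
\end{equation*}
where $C''$ absorbs the $\tau$-integral (finite because $k>\kappa$ suffices once we carry out the truncation-at-$T_{\pm n}$ analysis and control the horizontal pieces by the strong form of L1, just as in Lemma~\ref{trunc_point}). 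This is the step I expect to be the main obstacle: making the interchange of the limits $n\to\infty$ (truncation at height $T_{\pm n}$) and $m\to\infty$ (screen pushed to $-\infty$) rigorous while keeping track of $t<B^{-1}$.

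Once the error term vanishes in the limit, it remains to identify the limiting sum of residues. Because $\sup S_m\to-\infty$ and each $S_m$ is bounded, the windows $\bm{W}^{(m)}$ form an increasing exhaustion of $\Ce$, so every pole $\omega\in\po(\widetilde{\zeta}_{A,\O},\Ce)$ eventually lies in $\bm{W}^{(m)}$. Passing to the limit $m\to\infty$ in the identity above yields the exact fractal tube formula
\begin{equation*}
V^{[k]}_{A,\O}(t)=\sum_{\omega\in\po(\widetilde{\zeta}_{A,\O},\Ce)}\res\!\left(\frac{t^{N-s+k}}{(N\!-\! s\!+\! 1)_k}\widetilde{\zeta}_{A,\O}(s),\omega\right),
\end{equation*}
valid pointwise for all $t\in(0,\min\{1,\d,B^{-1}\})$, the convergence of the (possibly infinite) series being understood as in Remark~\ref{a_b_rem}(a), i.e., as the pointwise limit of the partial sums over truncated windows $\bm{W}^{(m)}_{|n}$.
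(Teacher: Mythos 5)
There is a genuine gap, and it is precisely at the point you flag as the ``main obstacle'' but do not resolve: the mismatch between the hypothesis $k>\kappa$ of the theorem and the hypothesis $k>\kappa+1$ of Theorem~\ref{pointwise_formula}. Your plan starts by applying the pointwise formula with error term to each screen $\bm{S}_m$, with the error $\widetilde{R}^{[k]}_{A,\O,m}(t)$ written as an integral over the \emph{full} (infinite) screen $\bm{S}_m$. But that formula is only available when $k>\kappa+1$, because the error integral is controlled by $\int_{\eR}(1+|\tau|)^{\kappa-k}\,\di\tau$, which diverges whenever $\kappa<k\leq\kappa+1$; in that range your constant $C''$ is infinite and the decomposition you want to pass to the limit in does not exist. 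This range is not a technicality: the typical applications (e.g.\ Example~\ref{segment_example}, the Cantor string, the Sierpi\'nski gasket) have $\kappa=-1$ and use the exact formula at level $k=0$, which your route cannot reach. Saying that ``$k>\kappa$ suffices once we carry out the truncation-at-$T_{\pm n}$ analysis'' is exactly the statement that needs proof, and the naive interchange of the limits $n\to\infty$ and $m\to\infty$ is where it can fail.

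The paper's proof closes this gap by taking the iterated limit in the \emph{opposite} order. Fix the truncation height $n$ and apply Lemma~\ref{trunc_point} with the screen $\bm{S}_m$; the integral $I_{n,m}$ over the truncated screen $\bm{S}_{m|n}$ has \emph{finite} length (controlled by $T_n-T_{-n}$ and the uniform Lipschitz bound), so no decay in $\tau$ is needed: the factor $(Bt)^{|\sup S_{m|n}|}$, with $Bt<1$, forces $I_{n,m}\to0$ as $m\to\infty$. Meanwhile the horizontal-segment error $E_{n,m}(t)$ is bounded \emph{uniformly in $m$}: since under strong languidity {\bf L1} holds for all $\sigma<c$ and $0<t<1$, one has $\int_{-\infty}^{c}t^{-\sigma}\di\sigma=t^{-c}/\log t^{-1}<\infty$, giving
\begin{equation*}
|E_{n,m}(t)|\leq\frac{t^{N-c+k}}{\pi\log t^{-1}}\,K_\kappa\,\max\big\{T_n^{\kappa-k},|T_{-n}|^{\kappa-k}\big\},
\end{equation*}
which tends to $0$ as $n\to\infty$ under the sole assumption $k>\kappa$. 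This reversal of limits is exactly what buys the extra unit (exactness for $k>\kappa$ rather than $k>\kappa+1$), and it is the ingredient missing from your argument; as written, your proof only establishes the theorem under the stronger hypothesis $k>\kappa+1$.
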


\begin{proof}
We begin by fixing an integer $n\geq 1$ and applying Lemma~\ref{trunc_point} with the screen $\bm{S}_m$ given by hypothesis {\bf L2'}.
Next, we proceed by letting $m\to\ty$ while keeping $n$ fixed.
The fact that the screens $\bm{S}_m$ have a uniform Lipschitz bound implies that if we take $t<\min\{1,B^{-1}\}$, then the sequence of integrals over the truncated screens $(\bm{S}_{m|n})_{m\geq 1}$ converges to $0$ as $m\to\ty$.
(Here and throughout this proof, the truncated screen $\bm{S}_{m|n}$ denotes the $n$-th restriction of the screen $\bm{S}_m$, in the notation of Definition \ref{trunc_screen_window}.)
Indeed, to see this, let us take $m_0$ large enough so that $\sup S_m<0$ for every $m\geq m_0$.
This is possible since $\sup S_m\to -\ty$ as $m\to\ty$ by hypothesis {\bf L2'} of Definition \ref{str_languid}.

Furthermore, note that for every $m\geq 1$ and $n\geq 1$, the integral over the truncated screen $\bm{S}_{m|n}$ is given by
\begin{equation}
I_{n,m}:=\frac{1}{2\pi\I}\int_{\bm{S}_{m|n}}\frac{t^{N-s+k}}{(N\!-\! s\!+\! 1)_k}\widetilde{\zeta}_{A,\O}(s)\di s
\end{equation}
and, similarly as in the proof of Lemma~\ref{trunc_point}, we have that the integrand is bounded in absolute value by
\begin{equation}
C_{\kappa}\max\big\{B^{|\inf S_{m|n}|},B^{|\sup S_{m|n}|}\big\}{t^{N+|\sup S_{m|n}|+k}},
\end{equation}
where $C_{\kappa}$ is a suitable constant depending only on $\kappa$.
Here, we use the notation 
\begin{equation}
\inf S_{m|n}:=\inf_{\tau\in[T_{-n},T_n]}S_m(\tau)\ \ \textrm{and}\ \ \sup S_{m|n}:=\sup_{\tau\in[T_{-n},T_n]}S_m(\tau).
\end{equation}
We now let $L:=\sup_{m\geq 1}\|S_m\|$ be the uniform Lipschitz bound for the sequence of screens $\bm{S}_m$.
Then, the derivative of $\tau\mapsto S_m(\tau)+\I\tau$ is bounded for almost every $\tau\in[T_{-n},T_n]$ by $(1+L)$.

We must next consider the following two cases: firstly, if $B<1$, we then have that
$$
|I_{n,m}|\leq\frac{C_{\kappa}(1+L)B^{|\sup S_{m|n}|}}{2\pi}(T_n-T_{-n}){t^{N+|\sup S_{m|n}|+k}},
$$
and, since $t<1$, we have that $I_{n,m}\to 0$ as $m\to\ty$.
Secondly, if $B\geq 1$, we deduce from the Lipschitz condition on $S_m$ that we have
$$
\sup S_{m|n}-\inf S_{m|n}\leq L(T_n-T_{-n});
$$
i.e.,
$$
|\inf S_{m|n}|\leq |\sup S_{m|n}|+L(T_n-T_{-n}),
$$
from which we deduce the estimate
$$
|I_{n,m}|\leq\frac{C_{\kappa}(1+L)B^{L(T_n-T_{-n})}}{2\pi}(T_n-T_{-n})(Bt)^{|\sup S_{m|n}|}{t^{N+k}}.
$$
Therefore, $I_{n,m}\to 0$ as $m\to\ty$ since $Bt<1$.

We now let $E_{n,m}(t)$ be the error function appearing in~\eqref{trunc_point_formula} for the truncated screen $\bm{S}_{m|n}$ and we finalize the proof by showing that its iterated limit converges to zero pointwise.
Namely, for $c\in({\ov{\dim}_B(A,\O)},N+1)$ and since $0<t<1$, we have, much as in the proof of Lemma~\ref{trunc_point}, that
\begin{equation}
\begin{aligned}
\left|\int_{\Gamma_{U_m}}\frac{t^{N-s+k}\widetilde{\zeta}_{A,\O}(s)\di s}{(N\!-\! s\!+\! 1)_k}\right|&\leq  t^{N+k}C(T_n+1)^{\kappa}T_n^{-k}\int_{-\ty}^{c}t^{-\sigma}\di\sigma\leq  \frac{t^{N-c+k}K_\kappa T_n^{\kappa-k}}{\log t^{-1}}.
\end{aligned}
\vspace{6pt} 
\end{equation}
Here, $\Gamma_{U_m}$ is the segment connecting $S_m(T_n)+\I T_n$ and $c+\I T_n$.
A similar reasoning for the corresponding integral over the lower segment gives us the following upper bound on $|E_{n,m}(t)|$, independent of $m$:
$$
|E_{n,m}(t)|\leq\frac{t^{N-c+k}}{\pi\log t^{-1}}K_\kappa \max\{T_n^{\kappa-k},|T_{-n}|^{\kappa-k}\}.
$$
Finally, this inequality, which is valid for all $m\geq 1$ and all $n\geq 1$, implies that for a fixed $k>\kappa$, the iterated limit of $E_{n,m}(t)$ tends to $0$ when $m\to\ty$ and then $n\to\ty$; i.e., we have
$
\lim_{n\to\ty}\big(\lim_{m\to\ty}E_{n,m}(t)\big)=0.
$
This concludes the proof of the theorem.
\end{proof}

Of course, Theorems~\ref{pointwise_formula} and~\ref{str_pointwise_formula} are of most interest in the case when $k=0$, i.e., when we obtain a pointwise formula for the volume of the relative $t$-neighborhood $|A_t\cap\O|$ in terms of the complex dimensions of $(A,\O)$.
In that case, the sum over the (visible) complex dimensions of $(A,\O)$ takes the simpler form
\begin{equation}\label{3.17.1/4}
\sum_{\omega\in\po(\widetilde{\zeta}_{A,\O},\bm{W})}\res\left(t^{N-s}\widetilde{\zeta}_{A,\O}(s),\omega\right)
\end{equation}
or
\begin{equation}\label{3.17.1/2}
\sum_{\omega\in\po(\widetilde{\zeta}_{A,\O},\Ce)}\res\left(t^{N-s}\widetilde{\zeta}_{A,\O}(s),\omega\right)
\end{equation}
in Equation \eqref{point_form} of Theorem \ref{pointwise_formula} or in Equation \eqref{point_form_w} of Theorem \ref{str_pointwise_formula}, respectively.
Observe that in the important special case considered in Remarks \ref{5.1.13.1/4} and \ref{5.1.13.3/4} below when all of the (visible) complex dimensions of the RFD are simple, then Equation \eqref{3.17.1/4} becomes
\begin{equation}\label{3.17.3/4}
\sum_{\omega\in\po(\widetilde{\zeta}_{A,\O},\bm{W})}\res\left(\widetilde{\zeta}_{A,\O},\omega\right)t^{N-\omega}
\end{equation}
(much as in the Equation \eqref{ttt}, where $\zeta_{A,\O}$ is used instead of $\widetilde{\zeta}_{A,\O}$), while Equation \eqref{3.17.1/2} naturally becomes
\begin{equation}\label{3.17.4/5}
\sum_{\omega\in\po(\widetilde{\zeta}_{A,\O},\Ce)}\res\left(\widetilde{\zeta}_{A,\O},\omega\right)t^{N-\omega}.
\end{equation}
An analogous comment applies to all the fractal tube formulas obtained in Sections \ref{sec_distr} and \ref{distance_tube} below (see, especially, Theorems \ref{dist_error}, \ref{dist_no_error}, \ref{pointwise_formula_rd} and \ref{str_pointwise_formula_d_bez}).
In the case of the distance zeta function $\zeta_{A,\O}$ (instead of $\widetilde{\zeta}_{A,\O}$), this is so provided $\ov{\dim}_B(A,\O)<N$; furthermore, in that case, $t^{N-s}$ (resp., $t^{N-\omega}$) should be replaced by $\frac{t^{N-s}}{N-s}$ (resp., $\frac{t^{N-\omega}}{N-\omega}$) in the counterpart for $\zeta_{A,\O}$ of Equations \eqref{3.17.1/4} and \eqref{3.17.1/2} (resp., \eqref{3.17.3/4} and \eqref{3.17.4/5}).

%
%

\begin{remark}\label{5.1.13.1/4}
We point out that in the applications, the common situation is when all of the visible complex dimensions are simple.
More specifically, if we assume that all of the poles of $\widetilde{\zeta}_{A,\O}$ visible through the window $\bm W$ (i.e., lying in $\bm W$) are simple, then in the statement of Theorem~\ref{pointwise_formula} (when $k=0$ in the statement of that theorem), the sum over the visible complex dimensions appearing in Equation \eqref{point_form} reduces to the following expression:
\begin{equation}\label{5.1.40.1/4}
\sum_{\omega\in\po(\widetilde{\zeta}_{A,\O},\bm{W})}\widetilde{c}_{\omega}t^{N-\omega},
\end{equation}
where for each $\omega\in\po(\widetilde{\zeta}_{A,\O},\bm{W})$, we have
$
\widetilde{c}_{\omega}:=\res(\widetilde{\zeta}_{A,\O},\omega).
$
\end{remark}

\begin{remark}\label{5.1.13.3/4}
We also note that, in light of Theorem \ref{pointwise_formula} and Theorem \ref{str_pointwise_formula}, 
the counterpart of Remark \ref{5.1.13.1/4} 
holds for any level $k$ (satisfying the assumptions of the relevant result).
For example, provided that all of the complex dimensions visible through $\bm W$ are simple, the exact pointwise fractal tube formula \eqref{point_form_w} of Theorem \ref{str_pointwise_formula} becomes (for all $t\in(0,\min\{1,\d,B^{-1}\})$)
\begin{equation}\label{V_k_simple}
V_{A,,\O}^{[k]}(t)=\sum_{\omega\in\po(\widetilde{\zeta}_{A,\O},\Ce)}\res\left(\widetilde{\zeta}_{A,\O}(s),\omega\right)\frac{t^{N-\omega+k}}{(N-\omega+1)_k},
\end{equation}
and similarly for the pointwise fractal tube formula with error term given in \eqref{point_form} of Theorem \ref{pointwise_formula}.

Note that in light of \eqref{pochamer} and for each $k\in\eN_0$, we have (with the obvious convention if $k=0$)
\begin{equation}\label{N-s+1}
(N-s+1)_k=(N-s+1)(N-s+2)\cdots(N-s+k)
\end{equation}
and hence, the zeros of the polynomial function $s\mapsto(N-s+1)_k$ are simple and occur precisely at 
$
s=N+1,N+2,\ldots,N+k.
$
(Clearly, since $(N-s+1)_0=1$, \eqref{N-s+1} does not have any zeros if $k=0$.)
Consequently, since $\ov{\dim}_B(A,\O)\leq N$ and $k$ is nonnegative (i.e., $k\in\eN_0$) in the present case of pointwise tube formulas, the complex number $(N-\omega+1)_k$ is never equal to zero for $\omega\in\po(\widetilde{\zeta}_{A,\O}):=\po(\widetilde{\zeta}_{A,\O},\Ce)$ (or else for $\omega\in\po(\widetilde{\zeta}_{A,\O},\bm{W})$, in the case of a pointwise tube formula with error term).
Moreover, if we work with a distributional tube formula (as will be case in Section \ref{sec_distr} and part of Section \ref{distance_tube}, for example), the level $k$ is allowed to be negative (i.e., $k\in\Ze$).
However, in the case of a negative integer $k$, the function $s\mapsto(N-s+1)_k$ does not have any zeros, but only simple poles located precisely at 
$
s=N+1+k,N+2+k,\ldots,N;
$
so that its reciprocal has simple zeros precisely at those same points.
%
Therefore, we note that in the distributional case, it may happen that $\omega$ is a zero of $s\mapsto(N-s+1)_k^{-1}$, which in that case will cancel out the term corresponding to $t^{N-\omega}$ in Equation \eqref{V_k_simple}.
\end{remark}

\begin{remark}\label{5.1.17}
As was alluded to above, the obvious counterpart of Remark \ref{5.1.13.1/4} 
and Remark \ref{5.1.13.3/4} holds for all of the fractal tube formulas considered in this paper, whether they are pointwise or distributional formulas, with or without error term, as well as expressed in terms of either $\zeta_{A,\O}$ or $\widetilde{\zeta}_{A,\O}$ or (with the notation of Subsection \ref{subsec_shell}) $\breve{\zeta}_{A,\O}$.
In the case of $\zeta_{A,\O}$ and $\breve{\zeta}_{A,\O}$, one must assume, in addition, that $\ov{D}:=\dim_B(A,\O)<N$.
\end{remark}

\section{Distributional Fractal Tube Formula}\label{sec_distr}

In this section, our goal is to weaken the languidity conditions imposed on the tube zeta function and still obtain a fractal tube formula expressed in terms of $\widetilde{\zeta}_{A,\O}$.
More precisely, if we want to relax the condition on the languidity exponent $\kappa$, we will still obtain a fractal tube formula but only in the sense of Schwartz distributions.
In other words, we will establish the distributional analogs of Theorems~\ref{pointwise_formula} and~\ref{str_pointwise_formula} in order to derive a distributional fractal tube formula for $V^{[k]}_{A,\O}(t)$, valid for any integer $k\in\Ze$ and still expressed in terms of the (visible) poles of the tube zeta function $\widetilde{\zeta}_{A,\O}$.
This will provide us with asymptotic information (in the sense of Schwartz distributions or generalized functions) about the tube function of a relative fractal drum $(A,\O)$, independently of for which exponent $\kappa\in\eR$ the relative fractal drum $(A,\O)$ is languid.
(See Definition~\ref{languid}.)  
More precisely, let $\delta>0$ and define $\mathcal{D}(0,\d):=C_c^{\ty}(0,\delta)$\label{Dspace} to be the space of infinitely differentiable (complex-valued) test functions with compact support contained in $(0,\d)$.
Actually, let us introduce a larger space of test functions for which the formulas obtained here will be valid.
Namely, let $\mathcal{K}(0,\delta)$\label{Kspace} be the set of test functions $\varphi$ in the class $C^{\ty}(0,\d)$, such that for all $m\in\Ze$ and $q\in\eN$, we have $t^m\varphi^{(q)}(t)\to 0$, as $t\to 0^+$ and also that $(t-\d)^m\varphi^{(q)}(t)\to 0$ as $t\to\d^-$, where $\varphi^{(q)}$ denotes the $q$-th derivative of $\varphi$.

Note that $\mathcal{D}(0,\d)\subseteq\mathcal{K}(0,\d)$.\label{D_K}
Hence, we have the following (reverse) inclusion between the corresponding spaces of distributions (i.e., the dual spaces):
\begin{equation}\label{5.2.1/2}
\mathcal{K}'(0,\d)\subseteq\mathcal{D}'(0,\d).
\end{equation}

General information about the theory of distributions (or generalized functions) can be found in \cite{Schw,Bre,folland,ho2,johlap,johlapni,resi1}.

\begin{definition}\label{Vdist}
Let $(A,\O)$ be a relative fractal drum in $\eR^N$ and let $k\in\Ze$ be an arbitrary integer.
We define the distribution $\mathcal{V}^{[k]}=\mathcal{V}_{A,\O}^{[k]}$ on $\mathcal{K}(0,\d)$ to be the $|k|$-th distributional derivative of $V(t)=|A_t\cap\O|$ in case $k<0$ and the $k$-th primitive (or $k$-th anti-derivative) function (considered as a regular distribution in $\mathcal{K}'(0,\d)$) of $V(t)$ if $k>0$.
For $k=0$, this is the (regular) distribution generated by the locally integrable function $V(t)$.
(Note that the local integrability of $V=V(t)$ on $(0,+\ty)$ follows from its continuity.)
More specifically, for any test function $\varphi\in\mathcal{K}(0,\d)$, we have 
\begin{equation}\label{V_k}
\langle\mathcal{V}^{[k]},\varphi\rangle:=\int_0^{+\ty}V^{[k]}(t)\varphi(t) \di t,\quad\textrm{ for } k\geq 0,
\end{equation}
and
\begin{equation}\label{V_-k}
\langle\mathcal{V}^{[k]},\varphi\rangle:=(-1)^{|k|}\int_0^{+\ty}V(t)\varphi^{(|k|)}(t) \di t,\quad\textrm{ for } k<0.
\end{equation}
Here and thereafter, for convenience, we always extend the test function $\varphi\in\mathcal{K}(0,\d)$ to the interval $[\delta,+\ty)$ by letting $\varphi_{|[\d,+\ty)}\equiv 0$.
\end{definition}


Let now $\varphi\in\mathcal{K}(0,\d)$ be a test function.
The decay conditions on $\varphi$ imply that $t^s\varphi(t)$ is integrable on $(0,\d)$ for every $s\in\Ce$ and that its Mellin transform $\{\mathfrak{M}\varphi\}(s)$ is an entire function.
This follows directly from a general result about the holomorphicity of an integral depending analytically on a parameter (see, e.g.,~\cite[Theorem~31]{titch} or \cite[Theorem 2.1.46]{fzf}).

Furthermore, let $g(s)$ be a meromorphic function.
Then, the residue $\res(g(s),\omega)$ vanishes unless $\omega$ is a pole of $g$.
Moreover, for all $k\in\Ze$, $N\in\eN$ and by choosing a suitable closed contour $\Gamma$ around $\omega$, we have
$$
\begin{aligned}
\int_0^{+\ty}\varphi(t)\res(t^{N-s+k}g(s),\omega)\di t&=\int_0^{+\ty}\varphi(t)\frac{1}{2\pi\I}\oint_{\Gamma}t^{N-s+k}g(s)\di s\di t\\
&=\frac{1}{2\pi\I}\oint_{\Gamma}g(s)\int_0^{+\ty}t^{N-s+k}\varphi(t)\di t\di s\\
&=\res\big(\{\mathfrak{M}\varphi\}(N\!-\! s\!+\! k\!+\! 1)\,g(s),\omega\big). 
\end{aligned}
$$
The change of the order of integration is justified by the Fubini--Tonelli theorem since the last integral above is absolutely convergent.
In short, for every $\varphi\in\mathcal{K}(0,\d)$, we have
\begin{equation}\label{res_res}
\Big\langle\res\big(t^{N-s+k}g(s),\omega\big),\varphi\Big\rangle=\res\Big(\{\mathfrak{M}\varphi\}(N\!-\! s\!+\! 1\!+\! k)\,g(s),\omega\Big),
\end{equation}
where $g(s)$ is a meromorphic function on a connected open neighborhood of $\omega\in\Ce$ and where $k\in\Ze$ and $N\in\eN$.

\begin{remark}\label{4.1.1/2}
We refer to our earlier Remark \ref{2.16.1/2} for an explanation of the usefulness (both conceptually and technically) of working with any $k\in\eN\cup\{0\}$, in the pointwise case, and any $k\in\Ze$, in the present distributional case.

As was already alluded to in that remark, particular attention should be paid to the case when $k=-1$.
Indeed, observe that for $k=-1$, the distribution $\mathcal{V}^{[-1]}=\mathcal{V}_{A,\O}^{[-1]}$ can be viewed as a (positive) {\em measure} on $(0,+\ty)$; indeed, it is the distributional derivative of the nondecreasing and locally integrable function $t\mapsto V(t)=V_{A,\O}(t)$ on $(0,+\ty)$.
By analogy with the special case of fractal strings discussed in \cite[Subsection 6.3.1]{lapidusfrank12}, we call it the {\em geometric density of $($volume$)$ states} of the RFD $(A,\O)$.
(Compare with \cite{lapidusfrank12} and the relevant references therein about the mathematical and theoretical physics literature on spectral theory, semiclassical approximation and quantum mechanics.)
From a fundamental point of view, this measure $\mathcal{V}^{[-1]}=\mathcal{V}_{A,\O}^{[-1]}$ is the most important `distributional tube function' and the corresponding fractal tube formulas the most useful distributional tube formulas.

We leave it as a simple exercise for the reader to write explicitly the corresponding distributional tube formula at level $k=-1$ as a mere corollary of Theorem \ref{dist_error} (in Section \ref{subs_distr}) and Theorem \ref{dist_no_error} (in Section \ref{sec_ex_dis}) below, as well as of other distributional tube formulas obtained in this paper, and to also compare their general form with the corresponding results in \cite[Subsection 6.3.1]{lapidusfrank12} obtained for the geometry and spectra of fractal strings (i.e., when $N=1$).
\end{remark}

\subsection{Distributional Tube Formula with Error Term}\label{subs_distr}

After having introduced the necessary preliminaries just above, we are now able to state the distributional analog of Theorem~\ref{pointwise_formula}; that is, the distributional tube formula with an error term.

\begin{theorem}[Distributional fractal tube formula with error term, via $\widetilde{\zeta}_{A,\O}$]\label{dist_error}
Let $(A,\O)$ be a relative fractal drum in $\eR^N$ which is languid for some $\kappa\in\eR$ and $\d>0$.
Then, for every $k\in\Ze$, the distribution $\mathcal V^{[k]}_{A,\O}$ in $\mathcal{K}'(0,\d)$ $($and hence, also in $\mathcal{D}'(0,\d)$$)$ is given by the following distributional fractal tube formula, with error term and expressed in terms of the tube zeta function $\widetilde{\zeta}_{A,\O}:=\widetilde{\zeta}_{A,\O}(\,\cdot\,;\d)$$:$
\begin{equation}\label{dist_form_error}
\mathcal V^{[k]}_{A,\O}(t)=\sum_{\omega\in\po(\widetilde{\zeta}_{A,\O},\bm{W})}\res\left(\frac{t^{N-s+k}}{(N\!-\! s\!+\! 1)_k}\widetilde{\zeta}_{A,\O}(s),\omega\right)+\widetilde{\mathcal R}^{[k]}_{A,\O}(t).
\end{equation}
That is, the action of $\mathcal V^{[k]}_{A,\O}$ on an arbitrary test function $\varphi\in\mathcal{K}(0,\d)$ is given by
\begin{equation}\label{error_action}
\begin{aligned}
\big\langle\mathcal V^{[k]}_{A,\O},\varphi\big\rangle&=\sum_{\omega\in\po(\widetilde{\zeta}_{A,\O},\bm{W})}\res\left(\frac{\{\mathfrak{M}\varphi\}(N\!-\! s\!+\! 1\!+\! k)\,\widetilde{\zeta}_{A,\O}(s)}{(N\!-\! s\!+\! 1)_k},\omega\right)+\big\langle\widetilde{\mathcal R}^{[k]}_{A,\O},\varphi\big\rangle.
\end{aligned}
\end{equation}
Here, the $($distributional$)$ error term $\widetilde{\mathcal R}^{[k]}_{A,\O}$ is given by the distribution in $\mathcal{K}'(0,\d)$ defined for all test functions $\varphi\in\mathcal{K}(0,\d)$ by
\begin{equation}\label{R_distr}
\big\langle\widetilde{\mathcal R}^{[k]}_{A,\O},\varphi\big\rangle=\frac{1}{2\pi\I}\int_{\bm S}\frac{\{\mathfrak{M}\varphi\}(N\!-\! s\!+\! 1\!+\! k)\,\widetilde{\zeta}_{A,\O}(s)}{(N\!-\! s\!+\! 1)_k}\di s.
\end{equation}
$($The corresponding distributional error estimate for $\widetilde{\mathcal R}^{[k]}_{A,\O}$ will be given in Theorem \ref{dist_estimate} of Subsection \ref{subsec_error} below.$)$
\end{theorem}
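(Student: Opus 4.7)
\medskip

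\textbf{Proof proposal.} The plan is to reduce the distributional formula at an arbitrary level $k\in\Ze$ to the pointwise formula at a sufficiently high level $k_0$, via distributional differentiation, following the strategy outlined in Remark \ref{2.16.1/2}. Concretely, I would fix an integer $k_0\in\eN_0$ with $k_0>\kappa+1$ and $k_0\geq k$. Theorem \ref{pointwise_formula} then furnishes the pointwise identity
\begin{equation*}
V^{[k_0]}_{A,\O}(t)=\sum_{\omega\in\po(\widetilde{\zeta}_{A,\O},\bm{W})}\res\left(\frac{t^{N-s+k_0}}{(N\!-\! s\!+\! 1)_{k_0}}\widetilde{\zeta}_{A,\O}(s),\omega\right)+\widetilde{R}^{[k_0]}_{A,\O}(t),
\end{equation*}
valid for all $t\in(0,\d)$, with the error $\widetilde{R}^{[k_0]}_{A,\O}$ represented by an absolutely convergent screen integral. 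Since $V^{[k_0]}_{A,\O}$ is a $(k_0-k)$-th antiderivative of $\mathcal{V}^{[k]}_{A,\O}$, differentiating $k_0-k$ times in the sense of $\mathcal{K}'(0,\d)$ yields $\mathcal{V}^{[k]}_{A,\O}$ on the left-hand side, and the task reduces to computing the distributional derivatives of the residue series and of the error term.

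The algebraic heart of the argument is the relation between the derivatives of a test function and its Mellin transform. For $\varphi\in\mathcal{K}(0,\d)$, the decay conditions at $0^+$ and $\d^-$ allow unlimited integration by parts, giving
\begin{equation*}
\{\mathfrak{M}\varphi^{(q)}\}(s)=(-1)^q(s-q)_q\,\{\mathfrak{M}\varphi\}(s-q),\qquad q\in\eN_0,
\end{equation*}
with the standard convention for the Pochhammer symbol. Applied with $q=k_0-k$ and $s=N-\sigma+1+k_0$, together with the factorization $(N-\sigma+1)_{k_0}=(N-\sigma+1)_k\cdot(N-\sigma+1+k)_{k_0-k}$ coming from \eqref{pochamer_gamma}, this produces the exact identity
\begin{equation*}
\frac{(-1)^{k_0-k}\,\{\mathfrak{M}\varphi^{(k_0-k)}\}(N-\sigma+1+k_0)}{(N-\sigma+1)_{k_0}}=\frac{\{\mathfrak{M}\varphi\}(N-\sigma+1+k)}{(N-\sigma+1)_k}.
\end{equation*}
Testing the pointwise formula against $(-1)^{k_0-k}\varphi^{(k_0-k)}$, using \eqref{res_res} on each residue term, and applying the above identity at $\sigma=\omega$, I recover precisely the residues appearing on the right-hand side of \eqref{error_action}. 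A parallel Fubini argument on the screen integral defining $\widetilde{R}^{[k_0]}_{A,\O}$, combined with the same Mellin-derivative identity, converts it into the announced expression \eqref{R_distr} for $\langle\widetilde{\mathcal R}^{[k]}_{A,\O},\varphi\rangle$; absolute convergence of the double integral (ensured since $k_0>\kappa+1$ and $\{\mathfrak{M}\varphi\}$ is entire with rapid decay in vertical strips) legitimizes the interchange.

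The step I expect to be the main obstacle is justifying the termwise distributional differentiation of the residue series, since the pointwise series in \eqref{point_form} is only asserted to converge as the limit \eqref{5.1.13.1/2.1/2} of truncated partial sums over $\bm{W}_{|n}$, and not necessarily absolutely. The workaround is to write, for each $n\geq 1$, the truncated identity from Lemma \ref{trunc_point} at level $k_0$, namely
\begin{equation*}
V^{[k_0]}_{A,\O}(t)-\!\!\!\!\sum_{\omega\in\po(\widetilde{\zeta}_{A,\O},\bm{W}_{|n})}\!\!\!\!\res\!\left(\!\frac{t^{N-s+k_0}}{(N\!-\! s\!+\! 1)_{k_0}}\widetilde{\zeta}_{A,\O}(s),\omega\!\right)=\frac{1}{2\pi\I}\!\int_{\bm{S}_{|n}}\!\!\frac{t^{N-s+k_0}\widetilde{\zeta}_{A,\O}(s)}{(N\!-\! s\!+\! 1)_{k_0}}\,ds+E_n(t),
\end{equation*}
and observe that for $k_0>\kappa+1$ the bound \eqref{Eest0} on $E_n(t)$ together with \eqref{integrand1} yields uniform control in $t$ on compact subsets of $(0,\d)$ as $n\to\infty$. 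Since any $\varphi\in\mathcal{K}(0,\d)$ has the required decay (and each $\varphi^{(k_0-k)}$ is integrable against the uniform bounds), pairing with $(-1)^{k_0-k}\varphi^{(k_0-k)}$ and passing to the limit $n\to\infty$ is permitted. A final routine verification (using that $\widetilde{\mathcal R}^{[k]}_{A,\O}$ is well defined on $\mathcal{K}(0,\d)$ because $\{\mathfrak{M}\varphi\}$ decays faster than any polynomial on vertical lines while $\widetilde{\zeta}_{A,\O}$ grows only polynomially along $\bm S$ by {\bf L2}) then gives \eqref{dist_form_error}--\eqref{R_distr} and completes the proof.
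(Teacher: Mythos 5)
Your proposal is correct and follows essentially the same route as the paper's own proof: the paper likewise establishes the formula at a level exceeding $\kappa+1$ by pairing the truncated contour identity of Lemma \ref{trunc_point} with the test function (via the representation of Proposition \ref{kth_prim}, Fubini and \eqref{res_res}), lets $n\to\ty$ using the estimates \eqref{Eest0} and \eqref{integrand1}, and then descends to arbitrary $k\in\Ze$ through $\big\langle\mathcal V^{[k]}_{A,\O},\varphi\big\rangle=(-1)^q\big\langle\mathcal V^{[k+q]}_{A,\O},\varphi^{(q)}\big\rangle$ combined with the Mellin-transform shift identity, which is exactly your Mellin-derivative identity in a rearranged order of operations. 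Your workaround for the conditional convergence of the residue series (pairing the truncated identity rather than the limiting series) is precisely the device used in the paper, so the two arguments coincide in substance.
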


\begin{proof}
We begin the proof by fixing $k\in\eN_0$ such that $k>\kappa+1$ and a constant $c\in(\ov{\dim}_B(A,\O),N+1)$.
Note that by fixing $c\in(\ov{\dim}_B(A,\O),N+1)$, we have ensured that none of the poles of $(N\!-\! s\!+\! 1)_k^{-1}$ is located in the window $\bm W$.
Indeed, according to the discussion provided in Remark \ref{5.1.13.3/4}, the set of poles of $(N\!-\! s\!+\! 1)_k^{-1}$
is a subset of $\{N+n:n\in\eN\}$.
Then, for every test function $\varphi\in\mathcal{K}(0,\d)$, we have successively:
\begin{equation}\label{5.2.8.1/2}
\begin{aligned}
\big\langle V^{[k]}_{A,\O},\varphi\big\rangle&=\int_0^{+\ty}V^{[k]}_{A,\O}(t)\varphi(t)\di t=\frac{1}{2\pi\I}\int_{c-\I\ty}^{c+\I\ty}\int_0^{+\ty}\varphi(t)\frac{t^{N-s+k}\,\widetilde{\zeta}_{A,\O}(s)}{(N\!-\! s\!+\! 1)_{k}}\di t\di s\\
&=\frac{1}{2\pi\I}\int_{c-\I\ty}^{c+\I\ty}\frac{\{\mathfrak{M}\varphi\}(N\!-\! s\!+\! 1\!+\! k)\,\widetilde{\zeta}_{A,\O}(s)}{(N\!-\! s\!+\! 1)_{k}}\di s.\\
\end{aligned}
\vspace{6pt}
\end{equation}
Here, the change of the order of integration in the second equality of \eqref{5.2.8.1/2} is justified by the Fubini--Tonelli theorem since the first integral above is absolutely convergent.
(It is easy to see that $|V^{[k]}(t)|\leq|A_t|t^{k}$, for all $t\in(0,+\ty)$ and $k\geq 0$.)
One can now approximate the last integral in \eqref{5.2.8.1/2} above in the same way as in Lemma~\ref{trunc_point}; that is, we approximate it by the following expression:
\begin{equation}
\begin{aligned}
&\sum_{\omega\in\po(\widetilde{\zeta}_{A,\O},{\bm W}_{|n})}\res\left(\frac{\{\mathfrak{M}\varphi\}(N\!-\! s\!+\! 1\!+\! k)\,\widetilde{\zeta}_{A,\O}(s)}{(N\!-\! s\!+\! 1)_{k}}\right)\\
&+\frac{1}{2\pi\I}\int_{\bm{S}_{|n}}\frac{\{\mathfrak{M}\varphi\}(N\!-\! s\!+\! 1\!+\! k)\,\widetilde{\zeta}_{A,\O}(s)}{(N\!-\! s\!+\! 1)_{k}}\di s\\
&+\frac{1}{2\pi\I}\int_{\Gamma_L\cup\Gamma_U}\frac{\{\mathfrak{M}\varphi\}(N\!-\! s\!+\! 1\!+\! k)\,\widetilde{\zeta}_{A,\O}(s)}{(N\!-\! s\!+\! 1)_{k}}\di s.\\
\end{aligned}
\end{equation}
Furthermore, in light of~\eqref{res_res}, the latter expression is equal to
\begin{equation}
\begin{aligned}
&\sum_{\omega\in\po(\widetilde{\zeta}_{A,\O},W_{|n})}\left\langle\res\left(\frac{t^{N-s+k}}{(N\!-\! s\!+\! 1)_{k}}\widetilde{\zeta}_{A,\O}(s),\omega\right),\varphi\right\rangle\\
&+\frac{1}{2\pi\I}\int_{\bm{S}_{|n}}\frac{\{\mathfrak{M}\varphi\}(N\!-\! s\!+\! 1\!+\! k)\,\widetilde{\zeta}_{A,\O}(s)}{(N\!-\! s\!+\! 1)_{k}}\di s+\int_0^{+\ty}E_n(t)\varphi(t)\di t,\\
\end{aligned}
\end{equation}
where the error term $E_n(t)$ is given as in Lemma \ref{trunc_point} and its proof.

Next, by letting $n\to\ty$, we deduce by the same argument as in Theorem~\ref{pointwise_formula} that the integral $\int_0^{+\ty}E_n(t)\varphi(t)\di t$ tends to zero and, similarly, we show that the above integral over the truncated screen $\bm{S}_{|n}$ converges absolutely.
Thus, we deduce that
\begin{equation}\label{V_dist}
\begin{aligned}
\big\langle V^{[k]}_{A,\O},\varphi\big\rangle&=\!\!\!\sum_{\omega\in\po(\widetilde{\zeta}_{A,\O},\bm{W})}\!\!\!\res\left(\frac{\{\mathfrak{M}\varphi\}(N\!-\! s\!+\! 1\!+\! k)\,\widetilde{\zeta}_{A,\O}(s)}{(N\!-\! s\!+\! 1)_{k}},\omega\right)+\langle\widetilde{\mathcal R}^{[k]}_{A,\O},\varphi\rangle,
\end{aligned}
\vspace{6pt}
\end{equation}
where $\widetilde{\mathcal R}^{[k]}_{A,\O}$ is given by its action on test functions as shown in Equation \eqref{R_distr}.

Moreover, observe that the expression on the right-hand side of \eqref{V_dist} defines a distribution in $\mathcal{K}'(0,\d)$ (since $V^{[k]}_{A,\O}$ is locally integrable).
This concludes the proof of the theorem in the case when $k>\max\{-1,\kappa+1\}$.

In the case when $k\leq\kappa+1$ and $k\in\Ze$, we choose an integer $q$ such that $k+q>\max\{\kappa+1,-1\}$ and note that by the definition of the distributional derivative (or alternatively, in light of Equations \eqref{V_k} and \eqref{V_-k} defining $\mathcal{V}_{A,\O}^{[k]}$), we have that
\begin{equation}\label{dist_der_V}
\big\langle\mathcal V^{[k]}_{A,\O},\varphi\big\rangle=(-1)^q\big\langle\mathcal V^{[k+q]}_{A,\O},\varphi^{(q)}\big\rangle.
\end{equation}
Finally, in order to complete the proof, we use identity \eqref{dist_der_V} together with~\eqref{V_dist} applied at level $k+q$, 
along with the following well-known (and easy to verify) fact about the Mellin transform (see Equation \eqref{ddeff_mell} defining $\{\mathfrak{M}f\}(s)$):
\begin{equation}
\{\mathfrak{M}\varphi\}(s)=\frac{(-1)^q}{(s)_q}\big\{\mathfrak{M}\varphi^{(q)}\big\}(s+q),
\end{equation}
for all $s\in\Ce$ and $q\in\Ze$.
We therefore deduce that \eqref{error_action} holds, with $\big\langle\widetilde{\mathcal R}^{[k]}_{A,\O},\varphi\big\rangle$ given by \eqref{R_distr}, as required.
This concludes the proof of the theorem.
\end{proof}

\begin{remark}
Note the above proof of Theorem~\ref{dist_error} establishes the fact that the sum over the (visible) complex dimensions appearing in~\eqref{dist_form_error} defines a distribution in $\mathcal{K}'(0,\d)$ (since it is a difference of two distributions in $\mathcal{K}'(0,\d)$) and hence, according to the inclusion \eqref{5.2.1/2}, also in $\mathcal{D}'(0,\d)$.
In turn, this fact implies that both terms on the right-hand side of~\eqref{dist_form_error} are, on their own, distributions in $\mathcal{K}'(0,\d)$.
Namely, this is a consequence of a well-known fact about the convergence of distributions, which itself follows from a suitable generalization of the Hahn--Banach theorem to locally convex topological spaces (see, for example, \cite[Theorem~2.1.8, p.~39]{ho2}):
%
%
%
%

An entirely analogous comment applies to Theorem~\ref{dist_no_error} below, with the space of test functions now coinciding with $\mathcal{D}(0,\d_0)$ and thus the associated space of distributions being equal to $\mathcal{D}'(0,\d_0)$.
\end{remark}

\subsection{Exact Distributional Tube Formula}\label{sec_ex_dis}

The main result of this subsection is a distributional analog of the pointwise tube formula without error term stated in Theorem \ref{str_pointwise_formula} of Subsection \ref{subs_exact}; see Theorem \ref{dist_no_error} just below.
The resulting fractal tube formula will be an asymptotic distributional formula meaning that it will be valid for test functions in $\mathcal{K}(0,\d)$ that are supported on the left of $B^{-1}$, where $B>0$ is the constant appearing in hypothesis {\bf L2'}.

\begin{theorem}[Exact distributional tube formula via $\widetilde{\zeta}_{A,\O}$]\label{dist_no_error}
Let $(A,\O)$ be a relative fractal drum in $\eR^N$ which is strongly languid for some $\kappa\in\eR$ and $\d>0$.
Furthermore, let $\d_0:=\min\{1,\d,B^{-1}\}$.
Then, for every $k\in\Ze$, the distribution $\mathcal V^{[k]}_{A,\O}$ in $\mathcal{D}'(0,\d_0)$ is given by the following exact distributional tube formula in $\mathcal{D}'(0,\d_0)$, expressed in terms of the tube zeta function $\widetilde{\zeta}_{A,\O}:=\widetilde{\zeta}_{A,\O}(\,\cdot\,;\d)$$:$
\begin{equation}\label{dist_form_no_error}
\mathcal V^{[k]}_{A,\O}(t)=\sum_{\omega\in\po(\widetilde{\zeta}_{A,\O},\Ce)}\res\left(\frac{t^{N-s+k}}{(N\!-\! s\!+\! 1)_k}\widetilde{\zeta}_{A,\O}(s),\omega\right).
\end{equation}
That is, the action of $\mathcal V^{[k]}_{A,\O}$ on an arbitrary test function $\varphi\in\mathcal{D}(0,\d_0)$ is given by
\begin{equation}\label{no_error_action}
\begin{aligned}
\big\langle\mathcal V^{[k]}_{A,\O},\varphi\big\rangle&=\sum_{\omega\in\po(\widetilde{\zeta}_{A,\O},\Ce)}\res\left(\frac{\{\mathfrak{M}\varphi\}(N\!-\! s\!+\! 1\!+\! k)\,\widetilde{\zeta}_{A,\O}(s)}{(N\!-\! s\!+\! 1)_k},\omega\right).\\
\end{aligned}
\end{equation}
\end{theorem}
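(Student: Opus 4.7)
The strategy is to mimic the proof of Theorem~\ref{dist_error} but to exploit the strong languidity (specifically hypothesis \textbf{L2'}) in order to let the screens recede to $-\ty$, thereby killing the distributional error term. This is the distributional counterpart of the passage from Theorem~\ref{pointwise_formula} to Theorem~\ref{str_pointwise_formula}.

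First I would reduce to the case $k > \kappa + 1$. Given an arbitrary $k \in \Ze$, choose a nonnegative integer $q$ with $k + q > \kappa + 1$. Using the definition of distributional derivative,
$$
\langle \mathcal V^{[k]}_{A,\O}, \varphi \rangle = (-1)^{q}\langle \mathcal V^{[k+q]}_{A,\O}, \varphi^{(q)}\rangle,
$$
together with the functional identity $\{\mathfrak M \varphi\}(s) = \frac{(-1)^{q}}{(s)_q}\{\mathfrak M \varphi^{(q)}\}(s+q)$, a formula at level $k+q$ implies the corresponding formula at level $k$, exactly as at the end of the proof of Theorem~\ref{dist_error}. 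Hence it suffices to establish \eqref{no_error_action} for $k$ chosen large enough that the Mellin integral in Proposition~\ref{kth_prim} converges absolutely along some vertical line $\{\re s = c\}$ with $c \in (\ov{\dim}_B(A,\O), N+1)$.

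Fix such a $k$ and a test function $\varphi \in \mathcal D(0, \d_0)$ with $\supp \varphi \subseteq [a,b] \subset (0, \d_0)$. Pairing the representation of Proposition~\ref{kth_prim} against $\varphi$ and invoking the Fubini--Tonelli theorem gives
$$
\langle \mathcal V^{[k]}_{A,\O}, \varphi\rangle = \frac{1}{2\pi\I}\int_{c-\I\ty}^{c+\I\ty}\frac{\{\mathfrak M \varphi\}(N-s+1+k)\,\widetilde\zeta_{A,\O}(s)}{(N-s+1)_k}\di s.
$$
For each $m \geq 1$, I would push the contour leftward onto the screen $\bm S_m$ provided by strong languidity, exactly as in the proof of Lemma~\ref{trunc_point}: between truncation heights $T_{-n}$ and $T_n$ the residue theorem collects the poles of $\widetilde\zeta_{A,\O}$ lying in the truncated window $\bm W_{m|n}$, and letting $n \to \infty$ eliminates the horizontal pieces (by condition \textbf{L1} combined with the rapid decay of $\{\mathfrak M \varphi\}$ at $\pm\I\ty$). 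This yields
$$
\langle \mathcal V^{[k]}_{A,\O}, \varphi\rangle = \sum_{\omega \in \po(\widetilde\zeta_{A,\O}, \bm W_m)}\res\left(\frac{\{\mathfrak M \varphi\}(N-s+1+k)\,\widetilde\zeta_{A,\O}(s)}{(N-s+1)_k}, \omega\right) + \big\langle \widetilde{\mathcal R}^{[k]}_{m}, \varphi\big\rangle.
$$

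The main obstacle is to show that the screen integral $\langle \widetilde{\mathcal R}^{[k]}_{m}, \varphi\rangle$ vanishes as $m \to \infty$, and this is where the hypothesis $\d_0 \leq B^{-1}$ becomes decisive. Repeated integration by parts (i.e., applying the Mellin functional identity $q_0$ times) together with $\supp \varphi \subseteq [a,b] \subset (0, B^{-1})$ gives the bound
$$
\big|\{\mathfrak M \varphi\}(N-s+1+k)\big| \leq \frac{\|\varphi^{(q_0)}\|_{\ty}\,(b-a)\,b^{N+k+q_0-\re s}}{|(N-s+1+k)_{q_0}|},
$$
valid for $\re s$ sufficiently negative. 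Combined with the strong languidity estimate $|\widetilde\zeta_{A,\O}(s)| \leq C B^{|S_m(\tau)|}(|\tau|+1)^{\kappa}$ of \textbf{L2'}, the integrand on $\bm S_m$ is dominated by a constant multiple of $(Bb)^{-\re s}(|\tau|+1)^{\kappa - k - q_0}$. Choosing $q_0$ so that $\kappa - k - q_0 < -1$ makes the integral in $\tau$ absolutely convergent uniformly in $m$ (using the uniform Lipschitz bound $\sup_m \|S_m\|_{\mathrm{Lip}} < \ty$), while the factor $(Bb)^{-\re s} \to 0$ as $\sup S_m \to -\ty$ since $Bb < 1$. Consequently $\langle \widetilde{\mathcal R}^{[k]}_{m}, \varphi\rangle \to 0$, and taking $m \to \infty$ on both sides converts the truncated sum of residues into the full (conditionally convergent, in the sense of Remark~\ref{a_b_rem}) sum over $\po(\widetilde\zeta_{A,\O}, \Ce)$, yielding \eqref{no_error_action} and completing the proof.
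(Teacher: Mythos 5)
Your proposal is correct and follows essentially the same route as the paper's proof: apply the distributional tube formula with error term along the screens $\bm{S}_m$ furnished by strong languidity, and kill the screen term as $m\to\ty$ by combining the {\bf L2'} bound with the fact that $\supp\varphi\subset(0,B^{-1})$, so that the factor $(Bb)^{|\sup S_m|}\to 0$, after first reducing to a level $k$ exceeding $\kappa+1$ via distributional differentiation and the Mellin identity. The only cosmetic differences are that you re-derive the error-term formula from Proposition~\ref{kth_prim} instead of quoting Theorem~\ref{dist_error}, perform the reduction at the start rather than the end, and insert an extra (unnecessary, but harmless) integration by parts in estimating $\{\mathfrak{M}\varphi\}$.
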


\begin{proof}
The theorem is proved by applying Theorem~\ref{dist_error} to the sequence of screens $\bm{S}_m$ (occurring in hypothesis {\bf L2'} of strong languidity, see Equation \eqref{L2'})  and then showing that the corresponding error term tends to zero as $m\to\ty$.
More specifically, by choosing $q\in\eN$ such that $k+q>\kappa +1$ and $m\in\eN$ such that $\sup S_m<0$, we deduce from~\eqref{dist_form_error} the following distributional identity, viewed as an equality in $\mathcal{D}'(0,\d_0)$:
\begin{equation}\label{dist_form_error_m}
\begin{aligned}
\mathcal V^{[k+q]}_{A,\O}(t)&=\sum_{\omega\in\po(\widetilde{\zeta}_{A,\O},\bm W_{m})}\res\left(\frac{t^{N-s+k+q}}{(N\!-\! s\!+\! 1)_{k+q}}\widetilde{\zeta}_{A,\O}(s),\omega\right)+\widetilde{\mathcal R}_m^{[k+q]}(t).\\
\end{aligned}
\end{equation}
Next, we fix a test function $\varphi\in\mathcal{D}(0,\d_0)$.
Since by definition, $\varphi$ has compact support, there exists $\nu\in(0,1)$ such that the support of $\varphi$ is contained in $(0,\nu B^{-1}]$.
Using this fact, we estimate the Mellin transform of $\varphi$ in the following way, for all $s\in\Ce$ such that $\re s<0$:
\begin{equation}\label{estm2}
\big|\{\mathfrak{M}\varphi\}(N\!-\! s\!+\! 1\!+\! k\!+\! q)\big|\leq(\nu B^{-1})^{-\re s}\int_0^{+\ty}t^{N+k+q}|\varphi(t)|\di t.
\end{equation}
By using the above estimate \eqref{estm2}, hypothesis {\bf L2'}, along with the obvious fact that
$
|(N\!-\! S_m(\tau)\!-\! \I\tau\!+\! 1)_{k+q}|\geq\big(\sqrt{1+\tau^2}\,\big)^{k+q},
$
we now estimate the distributional error term $\widetilde{\mathcal R}_m^{[k+q]}$ as follows (we let $|\di s|:=|s'(\tau)|\di\tau$):
\begin{equation}\label{5.2.13.1/2}
\begin{aligned}
\big|\big\langle\widetilde{\mathcal R}_m^{[k+q]},\varphi\big\rangle\big|&\leq\int_{\bm{S}_m}\big|\{\mathfrak{M}\varphi\}(N\!-\! s\!+\! 1\!+\! k\!+\! q)\big|\frac{|\widetilde{\zeta}_A(s)|}{|(N\!-\! s\!+\! 1)_{k+q}|}\,|\di s|\\
&\leq \widetilde{K}(1+\|S_m\|_{\mathrm{Lip}})\int_{-\ty}^{+\ty}(B\nu B^{-1})^{|S_m(\tau)|}\frac{(1+|\tau|)^{\kappa}}{\big(\sqrt{1+\tau^2}\,\big)^{k+q}}\di\tau\\
&\leq K\nu^{|\sup S_m|}\int_{-\ty}^{+\ty}\frac{(1+|\tau|)^{\kappa}}{\big(\sqrt{1+\tau^2}\,\big)^{k+q}}\di\tau,
\end{aligned}
\end{equation}
with $K$ being a suitable positive constant.
The last inequality follows since, according to hypothesis {\bf L2'}, the sequence of screens $(\bm{S}_m)_{m\geq 1}$ has a uniform Lipschitz bound; see the definition of strong languidity given in Definition \ref{str_languid}.
Furthermore, the last integral in the above calculation is convergent since $k+q>\kappa+1$.

Next, by letting $m\to\ty$, we deduce that $\langle\widetilde{\mathcal R}_m^{[k+q]},\varphi\rangle\to 0$ since $|\sup S_m|\to\ty$, and thus we conclude that $\widetilde{\mathcal R}_m^{[k+q]}\to 0$ as $m\to\ty$, in $\mathcal{D}'(0,\d_0)$.
Finally, in light of \eqref{dist_form_error_m}, we obtain the statement of the theorem for the distribution $\mathcal{V}^{[k+q]}_{A,\O}$ in $\mathcal{D}'(0,\d_0)$.
Finally, in order to complete the proof of the theorem, i.e., to obtain the statement for $\mathcal{V}^{[k]}_{A,\O}$ itself, we use the exact same argument as in the proof of Theorem~\ref{dist_error} in Subsection \ref{subs_distr} above.
\end{proof}


Of course, the most interesting special case of the distributional fractal tube formula (with and without an error term) is the case when $k=0$ and hence, $\mathcal{V}_{A,\O}^{[0]}(t)=|A_t\cap\O|$ for all $t>0$ (and as a regular distribution in $\mathcal{D}'(0,\d_0)$).

\subsection{Estimate for the Distributional Error Term}\label{subsec_error}

In this subsection, our goal is to give an asymptotic estimate for the distributional error term appearing in Theorem~\ref{dist_error}, interpreted in the sense of~\cite[Subsection~5.2.4]{lapidusfrank12}. 
In order to do so, we now introduce the notion of the distributional order of growth (see~\cite{EsKa,jm,PiStVi} and also, independently, [Lap-vFr1--2] and ~\cite[Definition~5.29]{lapidusfrank12}).

For a test function $\varphi\in\mathcal{D}(0,+\ty)$ and $a>0$, we let
\begin{equation}\label{varphi_a}
\varphi_a(t):=\frac{1}{a}\varphi\left(\frac{t}{a}\right).
\end{equation}
Observe that $\int_0^{+\ty}\varphi_a(t)\di t=\int_0^{+\ty}\varphi(t)\, \di t$, for every $a>0$.

\begin{definition}\label{dist_order_dis}
Let $\mathcal{R}$ be a distribution in $\mathcal D'(0,\d)$ and let $\alpha\in\eR$.
We say that $\mathcal R$ is of {\em asymptotic order} at most $t^{\alpha}$ (resp., less than $t^{\alpha}$) as $t\to 0^+$ if applied to an arbitrary test function $\varphi_a$ in $\mathcal{D}(0,\d)$, we have that\footnote{In this formula, the implicit constant may depend on the test function $\varphi$.}
\begin{equation}
\langle\mathcal R,\varphi_a\rangle=O(a^{\alpha})\quad(\textrm{resp., }\ \langle\mathcal R,\varphi_a\rangle=o(a^{\alpha})),\quad\textrm{as}\quad a\to 0^+.
\end{equation}
We then write that $\mathcal R(t)=O(t^{\alpha})$ (resp., $\mathcal R(t)=o(t^{\alpha})$),\quad as\quad $a\to 0^+$.
\end{definition}

\begin{remark}\label{asymp_dist}
Note that it is easy to see that if $f$ is a continuous function such that pointwise, $f(t)=O(t^{\alpha})$ or $f(t)=o(t^\a)$ as $t\to 0^+$, for some $\a\in\eR$, then $f$ also satisfies the same asymptotics, in the distributional sense of Definition~\ref{dist_order_dis}.
On the other hand, we note that clearly, a distributional asymptotic estimate (in the case of regular distributions), does not in general imply the usual pointwise one; see, e.g., \cite{PiStVi} where an explicit counterexample is given.
\end{remark}

Finally, also observe that for a test function $\varphi\in\mathcal{D}(0,\d)$ and $a>0$, the Mellin transform of $\varphi_a$ satisfies the following (see Equation \eqref{ddeff_mell} defining $\{\mathfrak{M}f\}(s)$):
\begin{equation}\label{mellin_a}
\{\mathfrak{M}\varphi_a\}(s)=a^{s-1}\{\mathfrak{M}\varphi\}(s),
\end{equation}
for all $s\in\Ce$.

\medskip

We now state the main result of this subsection, dealing with the order of growth of the distributional error term appearing in Theorem~\ref{dist_error}.
It is the analog in our present context of \cite[Theorem~5.30]{lapidusfrank12}.

\begin{theorem}[Estimate for the distributional error term]\label{dist_estimate}
Assume that the hypotheses of Theorem~\ref{dist_error} are satisfied, for a fixed $k\in\Ze$.
Then, the distribution $\widetilde{\mathcal R}^{[k]}_{A,\O}(t)$ given by~\eqref{R_distr} is of asymptotic order at most $t^{N-\sup S+k}$ as $t\to 0^+$; i.e,
\begin{equation}
\widetilde{\mathcal R}^{[k]}_{A,\O}(t)=O(t^{N-\sup S+k})\quad\mathrm{as}\quad t\to0^+,
\end{equation}
in the sense of Definition~\ref{dist_order_dis}.

Moreover, if $S(\tau) < \sup S$ for all $\tau\in\eR$ $($that is, if the screen $\bm S$ lies strictly
to the left of the vertical line $\{\re s =\sup S\})$, then $\widetilde{\mathcal R}^{[k]}_{A,\O}(t)$ is of asymptotic order less than $t^{N-\sup S+k}$; i.e.,
\begin{equation}\label{dist_estimate_o}
\widetilde{\mathcal R}^{[k]}_{A,\O}(t)=o(t^{N-\sup S+k})\quad\mathrm{as}\quad t\to0^+,
\end{equation}
also in the sense of Definition~\ref{dist_order_dis}.
\end{theorem}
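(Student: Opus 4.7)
The plan is to apply $\widetilde{\mathcal{R}}^{[k]}_{A,\O}$ to the rescaled test function $\varphi_a$ defined in \eqref{varphi_a}, invoke the scaling identity \eqref{mellin_a} for the Mellin transform, and then extract the desired power of $a$ directly from the integral representation \eqref{R_distr}. Substituting $\varphi_a$ into \eqref{R_distr} and using $\{\mathfrak{M}\varphi_a\}(N-s+1+k) = a^{N-s+k}\{\mathfrak{M}\varphi\}(N-s+1+k)$, we would obtain
$$
\langle\widetilde{\mathcal{R}}^{[k]}_{A,\O},\varphi_a\rangle = \frac{1}{2\pi\I}\int_{\bm S}a^{N-s+k}\frac{\{\mathfrak{M}\varphi\}(N-s+1+k)\,\widetilde{\zeta}_{A,\O}(s)}{(N-s+1)_k}\,\D s.
$$
Since $|a^{N-s+k}| = a^{N-S(\tau)+k} \leq a^{N-\sup S+k}$ for every $a \in (0,1)$ and every $s = S(\tau) + \I\tau$ on $\bm S$, pulling this factor out of the integral will yield the claimed $O$-estimate, provided the remaining integral converges absolutely.

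The crucial technical step is therefore to establish the absolute convergence of $\int_{\bm S}\bigl|\{\mathfrak{M}\varphi\}(N-s+1+k)\,\widetilde{\zeta}_{A,\O}(s)/(N-s+1)_k\bigr|\,|\D s|$, uniformly in $a$. This would rely on three ingredients. First, the languidity hypothesis {\bf L2} gives $|\widetilde{\zeta}_{A,\O}(S(\tau)+\I\tau;\d)| \leq C|\tau|^\kappa$ for all $|\tau| \geq 1$. Second, the Pochhammer factor $|(N-s+1)_k|$ contributes only polynomial growth or decay of fixed order in $|\tau|$: for $k \geq 0$ this is evident from the product form \eqref{N-s+1}, while for $k < 0$ the identity \eqref{pochamer_gamma} reduces it to a ratio of Gamma values whose vertical-line behavior is controlled by Stirling's formula. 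Third, since $\varphi \in \mathcal{D}(0,\d)$ is compactly supported, iterated integration by parts in \eqref{ddeff_mell} shows that $\{\mathfrak{M}\varphi\}(\sigma + \I\tau)$ decays faster than any negative power of $|\tau|$, uniformly for $\sigma$ in a bounded interval. Combined with the Lipschitz bound $|\D s|\le(1+\|S\|_{\mathrm{Lip}})\,\D\tau$ on the screen, this Schwartz-type decay dominates the other two polynomial factors, so the integral converges absolutely and the $O$-estimate follows.

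For the improvement to the $o$-estimate when $S(\tau) < \sup S$ for every $\tau \in \eR$, my plan is to split the screen integral into the pieces $|\tau| > T$ and $|\tau| \leq T$, imitating the argument used at the end of the proof of Theorem~\ref{pointwise_formula}. Given $\e > 0$, absolute convergence of the full integral allows us to choose $T$ so large that the tail $|\tau| > T$ contributes at most $(\e/2)\,a^{N-\sup S+k}$, again using $a \in (0,1)$ and $S(\tau) \leq \sup S$. On the compact interval $[-T,T]$, the Lipschitz continuous function $S$ attains its maximum at some point, and the strict inequality hypothesis forces this maximum $\alpha$ to satisfy $\alpha < \sup S$. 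Then the $|\tau| \leq T$ piece is bounded by $M_T\,a^{N-\alpha+k} = M_T\,a^{\sup S - \alpha}\,a^{N-\sup S+k}$, where $M_T$ controls the remaining integrand on $[-T,T]$; since $\sup S - \alpha > 0$, the prefactor $M_T\,a^{\sup S-\alpha}$ can be made smaller than $\e/2$ by taking $a$ sufficiently small, yielding the desired $o$-estimate.

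The main obstacle I foresee is the uniform bookkeeping for the Pochhammer factor $(N-s+1)_k$ across both signs of $k$: for $k \geq 0$ it only provides decay, but for $k < 0$ it contributes genuine polynomial growth in $|\tau|$ that must be absorbed by the rapid decay of $\{\mathfrak{M}\varphi\}$; verifying the Stirling-based bound carefully along the screen (which need not be a vertical line) is the delicate point. Once this is in place, both estimates follow cleanly from the factorization supplied by \eqref{mellin_a}.
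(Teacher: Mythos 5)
Your proposal is correct and follows essentially the same route as the paper's proof: apply $\widetilde{\mathcal R}^{[k]}_{A,\O}$ to $\varphi_a$, use \eqref{mellin_a} to extract the factor $a^{N-\re s+k}\leq a^{N-\sup S+k}$ along the screen, and rely on the absolute convergence of the screen integral (guaranteed by {\bf L2}, the polynomially controlled Pochhammer factor and the rapid decay of $\{\mathfrak{M}\varphi\}$), then obtain the $o$-estimate by the same $|\tau|\leq T$ versus $|\tau|>T$ splitting used for \eqref{S(t)<sup} in Theorem~\ref{pointwise_formula}. The only cosmetic difference is that you invoke Stirling's formula for $(N-s+1)_k$ with $k<0$, whereas its reciprocal is simply a polynomial of degree $|k|$ in $s$, so no asymptotics of the Gamma function are needed.
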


\begin{proof}
For a test function $\varphi$, the integral defining $\langle\widetilde{\mathcal R}^{[k]}_{A,\O},\varphi\rangle$ in Equation \eqref{R_distr} converges absolutely.
Furthermore, for any $a\in(0,1)$, and by using~\eqref{mellin_a}, we obtain the following estimate:
$$
\begin{aligned}
\big|\big\langle\widetilde{\mathcal R}^{[k]}_{A,\O},\varphi_a\big\rangle\big|&\leq\frac{1}{2\pi}\int_S\frac{\big|\{\mathfrak{M}\varphi_a\}(N\!-\! s\!+\! 1\!+\!k)\big|}{|(N\!-\! s\!+\! 1)_k|}|\widetilde{\zeta}_{A,\O}(s)|\,|\di s|\\
&=\frac{1}{2\pi}\int_Sa^{N-\re s+k}\frac{\big|\{\mathfrak{M}\varphi\}(N\!-\! s\!+\! 1\!+\! k)\big|\,|\widetilde{\zeta}_{A,\O}(s)|}{|(N\!-\! s\!+\! 1)_k|}|\di s|.
\end{aligned}
$$
The last integral above is bounded by $Ca^{N-\sup S+k}$, where $C$ is a positive constant and this proves the first part of the theorem.

In order to establish the second part of the theorem, we use an argument similar to the one used in the proof of estimate~\eqref{S(t)<sup} of Theorem~\ref{pointwise_formula}.
\end{proof}

\section{Tube Formulas via the Relative Distance Zeta Function}\label{distance_tube}

In this section, our main goal is to reformulate the results from the previous sections in terms of the relative distance zeta function $\zeta_{A,\O}:=\zeta_{A,\O}(\,\cdot\,;\d)$.
This is extremely useful in the applications since the relative distance zeta function $\zeta_{A,\O}$ of an RFD $(A,\O)$, can be calculated without knowing its relative tube function $t\mapsto|A_t\cap\O|$ (which, of course, is not the case for the tube zeta function $\widetilde{\zeta}_{A,\O}$).
Naturally, the results will follow, in particular, from the functional equation~\eqref{equ_tilde} which connects these two fractal zeta functions, $\zeta_{A,\O}$ and $\widetilde{\zeta}_{A,\O}$.
More precisely, in order to derive the analogous results in terms of the distance zeta function, we will introduce a new fractal zeta function, called the {\em relative shell zeta function}, which satisfies a more direct functional equation, compared to \eqref{equ_tilde}.

For $A\subseteq\eR^N$ and $t,\d>0$ with $t\leq\d$, we let
\begin{equation}\label{shell_ozn}
A_{t,\d}:=A_\d\setminus \ov{A_t}.
\end{equation}
Note that $A_{t,\d}$ can be thought of as the $(t,\d)$-{\em shell} associated with $A$.
It was proved in \cite{stacho} that for any bounded set $A\subset\eR^N$ and every $t>0$, we have that $|\partial A_t|=0$, where $\pa A_t$ denotes the boundary of $A_t$ in $\eR^N$ and (as usual) $|\pa A_t|$ denotes its $N$-dimensional volume.
Since any unbounded set in $\eR^N$ may be partitioned into a countable union of bounded subsets, this also holds for unbounded subsets of $\eR^N$.
Consequently, for any relative fractal drum $(A,\O)$ in $\eR^N$, we have (for $0<t\leq\d$)
\begin{equation}\label{shell_tb}
|A_{t,\d}\cap\O|=|A_\d\cap\O|-|\ov{A_t}\cap\O|=|A_\d\cap\O|-|{A_t}\cap\O|.
\end{equation}

\subsection{The Relative Shell Zeta Function}\label{subsec_shell}

Let $\widetilde{\zeta}_{A,\O}(\,\cdot\,;\d)$ be the tube zeta function of the relative fractal drum $(A,\O)$ in $\eR^N$ and assume that $\re s>N$, then we have
\begin{equation}\label{racun}
\begin{aligned}
\widetilde{\zeta}_{A,\O}(s;\d)&=\int_0^{\d}t^{s-N-1}|A_t\cap\O|\di t=\int_0^{\d}t^{s-N-1}(|A_\d\cap\O|-|A_{t,\d}\cap\O|)\di t\\
&=\frac{\d^{s-N}|A_\d\cap\O|}{s-N}-\int_0^{\d}t^{s-N-1}|A_{t,\d}\cap\O|\di t.
\end{aligned}
\end{equation}

\begin{definition}\label{shell_defn}
Let $(A,\O)$ be an RFD in $\eR^N$ and fix $\d>0$. We define the {\em shell zeta function} $\breve{\zeta}_{A,\O}:=\breve{\zeta}_{A,\O}(\,\cdot\,;\d)$ {\em of $A$ relative to $\O$} (or the {\em relative shell zeta function} of $(A,\O)$) by
\begin{equation}\label{shell_zeta}
\breve{\zeta}_{A,\O}(s;\d):=-\int_0^{\delta}t^{s-N-1}|A_{t,\d}\cap\O|\di t,
\end{equation}
for all $s\in\Ce$ with $\re s$ sufficiently large.
Here, the integral is taken in the Lebesgue sense and $A_{t,\d}$ is defined by Equation \eqref{shell_ozn}. 
\end{definition}

In light of \eqref{racun}, we can now easily obtain the following theorem.

\begin{theorem}\label{shell_z_thm}
Let $(A,\O)$ be an RFD in $\eR^N$ and fix $\d>0$. Then the shell zeta function $\breve{\zeta}_{A,\O}(\,\cdot\,;\d)$ of $(A,\O)$ is holomorphic on the open right half-plane $\{\re s>N\}$ and
\begin{equation}\label{shell_der}
\frac{\mathrm{d}}{\mathrm{d}s}\breve{\zeta}_{A,\O}(s;\d)=-\int_0^{\delta}t^{s-N-1}|A_{t,\d}\cap\O|\log t\di t,
\end{equation}
for all $s\in\Ce$ such that $\re s>N$.

Furthermore, for all $s\in\Ce$ such that $\re s>N$,  $\breve{\zeta}_{A,\O}(\,\cdot\,;\d)$ satisfies the following functional equations, connecting it to the tube and distance zeta functions of $(A,\O)$, respectively$:$
\begin{equation}\label{shell_tube}
\widetilde{\zeta}_{A,\O}(s;\d)=\frac{\d^{s-N}|A_\d\cap\O|}{s-N}+\breve{\zeta}_{A,\O}(s;\d)
\end{equation}
and
\begin{equation}\label{shell_dist}
{\zeta}_{A,\O}(s;\d)=(N-s)\breve{\zeta}_{A,\O}(s;\d).
\end{equation}
\end{theorem}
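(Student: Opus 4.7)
The proof will proceed in three short stages, each essentially a direct consequence of either a bound on the shell volume $|A_{t,\delta}\cap\Omega|$ or of the previously established functional equation \eqref{equ_tilde}.

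First, I would establish the holomorphicity claim together with formula \eqref{shell_der}. The key observation is the trivial but crucial bound
\begin{equation}\nonumber
0 \leq |A_{t,\delta}\cap\Omega| \leq |\Omega| < \infty \qquad \text{for all } t\in(0,\delta],
\end{equation}
which holds since $(A,\Omega)$ is an RFD and $A_{t,\delta}\subseteq\Omega$ (using that $\Omega\subseteq A_{\delta'}$ for some $\delta'>0$ and the assumption that $|\Omega|<\infty$). Thus the integrand in \eqref{shell_zeta} is dominated in absolute value by $|\Omega|\,t^{\re s - N - 1}$, which is Lebesgue integrable on $(0,\delta)$ precisely when $\re s > N$. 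Hence $\breve{\zeta}_{A,\Omega}(\,\cdot\,;\delta)$ is a Dirichlet-type integral of the form \eqref{2.3.1/2}, and the general DTI theory recalled in Section \ref{prelm} (or, equivalently, a direct application of Morera's theorem and Fubini's theorem, together with the standard parameter-differentiation theorem cited after Definition \ref{Vdist}) yields holomorphicity on the open half-plane $\{\re s > N\}$ as well as the differentiation-under-the-integral formula \eqref{shell_der}.

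Second, I would establish the functional equation \eqref{shell_tube}. This is in fact precisely the content of the chain of equalities \eqref{racun} appearing just before Definition \ref{shell_defn}; the only point that must be verified is that all three integrals involved converge absolutely on the half-plane $\{\re s > N\}$, and that the splitting
\begin{equation}\nonumber
|A_t\cap\Omega| = |A_\delta\cap\Omega| - |A_{t,\delta}\cap\Omega| \qquad (0<t\leq\delta)
\end{equation}
used in \eqref{racun} is valid. The first point follows from the bound $|A_t\cap\Omega|\leq|\Omega|$ applied to the defining integral \eqref{401/2} of $\widetilde{\zeta}_{A,\Omega}$, together with the DTI bound on $\breve{\zeta}_{A,\Omega}$ established in the previous step. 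The second point follows from \eqref{shell_tb}, which is itself a consequence of the result of Stach\'o \cite{stacho} (cited in the text preceding \eqref{shell_tb}) ensuring that $|\partial A_t|=0$ for every $t>0$. Once all integrals are known to converge absolutely, \eqref{shell_tube} follows immediately by splitting the integral defining $\widetilde{\zeta}_{A,\Omega}(s;\delta)$ and computing the elementary integral $\int_0^\delta t^{s-N-1}\,dt = \delta^{s-N}/(s-N)$, which is valid for $\re s > N$.

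Third, I would derive the functional equation \eqref{shell_dist} purely algebraically, by combining \eqref{shell_tube} with the functional equation \eqref{equ_tilde} already established in \cite{refds, fzf}, namely
\begin{equation}\nonumber
\zeta_{A,\Omega}(s;\delta) = \delta^{s-N}|A_\delta\cap\Omega| + (N-s)\widetilde{\zeta}_{A,\Omega}(s;\delta).
\end{equation}
Substituting \eqref{shell_tube} into the right-hand side and noting the cancellation
\begin{equation}\nonumber
\delta^{s-N}|A_\delta\cap\Omega| + (N-s)\cdot\frac{\delta^{s-N}|A_\delta\cap\Omega|}{s-N} = 0
\end{equation}
produces \eqref{shell_dist} on $\{\re s > N\}$, which is precisely the half-plane on which all three fractal zeta functions $\zeta_{A,\Omega}$, $\widetilde{\zeta}_{A,\Omega}$, $\breve{\zeta}_{A,\Omega}$ are simultaneously holomorphic (recalling that $\ov{\dim}_B(A,\Omega)\leq N$). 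There is no genuine analytical obstacle in the proof; the only minor subtlety to verify carefully is the justification of \eqref{shell_tb} (i.e., that $\ov{A_t}$ and $A_t$ have the same volume), for which one invokes Stach\'o's result in the form described above.
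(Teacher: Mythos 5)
Your proposal is correct and follows essentially the same route as the paper: holomorphicity and \eqref{shell_der} via the trivial volume bound and the standard parameter-differentiation theorem, \eqref{shell_tube} as a restatement of the computation \eqref{racun} (justified by \eqref{shell_tb} and Stach\'o's result), and \eqref{shell_dist} by combining \eqref{shell_tube} with the functional equation \eqref{equ_tilde}. The only cosmetic difference is that you bound the shell volume by $|\O|$ rather than $|A_\d\cap\O|$ and write out the cancellation in the last step explicitly, which the paper leaves implicit.
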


\begin{proof}
To prove the holomorphicity of $\breve{\zeta}_{A,\O}(\,\cdot\,;\d)$, one observes that for every real number $\sigma>N$, we have 
$
|\breve{\zeta}_{A,\O}(\sigma;\d)|\leq |A_\d\cap\O|\int_0^{\d}t^{\sigma-N-1}\di t<\ty,
$
and uses a well-known theorem about differentiation of an integral depending analytically on a parameter (see, e.g., \cite{Matt} or \cite[Theorem~31]{titch}) which also gives the formula \eqref{shell_der} for the derivative.
Formula \eqref{shell_tube} is a rewriting of \eqref{racun} and by combining it with the functional equation \eqref{equ_tilde}, which connects the relative distance and tube zeta functions, we obtain \eqref{shell_dist}.
\end{proof}

In light of Theorem \ref{an_rel}, the principle of analytic continuation combined with Equation \eqref{shell_tube} (or \eqref{shell_dist}) now immediately yields the following properties of the relative shell zeta function.

\begin{theorem}\label{an_shell}
Let $(A,\O)$ be a relative fractal drum in $\eR^N$ such that $\ov{\dim}_B(A,\O)<N$ and fix $\d>0$. Then the following properties hold$:$

\bigskip 

$(a)$ The relative shell zeta function $\breve{\zeta}_{A,\O}(s;\d)$ is meromorphic in the half-plane $\{\re s>\overline{\dim}_B(A,\Omega)\}$, with a single simple pole at $s=N$.
Furthermore,
\begin{equation}
\res(\breve{\zeta}_{A,\O}(\,\cdot\,;\d),N)=-|A_\d\cap\O|.
\end{equation}

\bigskip

$(b)$ If the relative box $($or Minkowski$)$ dimension $D:=\dim_B(A,\O)$ exists, $D<N$ and $\M_*^D(A,\O)>0$, then $\breve{\zeta}_{A,\O}(s)\to+\ty$ as $s\in\eR$
converges to $D$ from the right.   
\end{theorem}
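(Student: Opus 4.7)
The plan is to derive both parts of Theorem~\ref{an_shell} as essentially immediate consequences of the functional equations \eqref{shell_tube} and \eqref{shell_dist} combined with Theorem~\ref{an_rel} (and Remark~\ref{tube_holo}). Since $\breve{\zeta}_{A,\O}$ has already been defined by a convergent Lebesgue integral on $\{\re s>N\}$, everything beyond that half-plane will be obtained via analytic continuation, so there are no delicate growth or convergence issues to worry about.

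For part (a), I would rearrange the functional equation \eqref{shell_tube} as
\begin{equation*}
\breve{\zeta}_{A,\O}(s;\d)=\widetilde{\zeta}_{A,\O}(s;\d)-\frac{\d^{s-N}|A_\d\cap\O|}{s-N}.
\end{equation*}
By Theorem~\ref{an_rel} applied to the tube zeta function (valid here thanks to the second statement of Remark~\ref{tube_holo}, which guarantees that $\widetilde{\zeta}_{A,\O}(\,\cdot\,;\d)$ is holomorphic on $\{\re s>\ov{\dim}_B(A,\O)\}$), the first term on the right is holomorphic on the half-plane $\{\re s>\ov{\dim}_B(A,\O)\}$. The second term is meromorphic on all of $\Ce$ with a single simple pole at $s=N$, whose residue is $\d^{0}|A_\d\cap\O|=|A_\d\cap\O|$. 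Since $\ov{\dim}_B(A,\O)<N$ by hypothesis, the point $s=N$ lies in the half-plane under consideration, and so $\breve{\zeta}_{A,\O}(\,\cdot\,;\d)$ is meromorphic there with a unique simple pole at $s=N$ and residue $-|A_\d\cap\O|$, as claimed. (Note that the right-hand side of the above display provides the required analytic continuation of $\breve{\zeta}_{A,\O}$ beyond its initial half-plane of convergence $\{\re s>N\}$.)

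For part (b), I would use instead the functional equation \eqref{shell_dist}, rewritten as
\begin{equation*}
\breve{\zeta}_{A,\O}(s;\d)=\frac{\zeta_{A,\O}(s;\d)}{N-s}.
\end{equation*}
Since $D<N$ by assumption, the factor $1/(N-s)$ is positive, real-analytic, and bounded away from $0$ in a right-neighborhood of $s=D$ (with limit $1/(N-D)>0$ as $s\to D^+$). By Theorem~\ref{an_rel}(b) applied to the distance zeta function $\zeta_{A,\O}$ (whose hypotheses $D<N$ and $\M_*^D(A,\O)>0$ are precisely those assumed here), we have $\zeta_{A,\O}(s;\d)\to+\infty$ as $s\in\eR$ converges to $D$ from the right. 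Dividing by the positive factor $N-s$ then yields $\breve{\zeta}_{A,\O}(s;\d)\to+\infty$ as $s\to D^+$, which completes the argument.

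Neither step presents a real obstacle: the substance of the theorem lives in Theorem~\ref{an_rel} and in the functional equations derived in Theorem~\ref{shell_z_thm}. The only care needed is to invoke the correct variant of Theorem~\ref{an_rel} (i.e., the one for $\widetilde{\zeta}_{A,\O}$ in the proof of (a), via Remark~\ref{tube_holo}) and to check that $s=N$ is indeed visible in the half-plane $\{\re s>\ov{\dim}_B(A,\O)\}$, which is ensured by the standing hypothesis $\ov{\dim}_B(A,\O)<N$.
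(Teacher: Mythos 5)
Your proof is correct and follows essentially the same route as the paper's: part $(a)$ is obtained from the functional equation \eqref{shell_tube} together with the tube-zeta-function counterpart of Theorem \ref{an_rel} (via Remark \ref{tube_holo}), and part $(b)$ from \eqref{shell_dist} together with Theorem \ref{an_rel}$(b)$, with the principle of analytic continuation extending the functional equations beyond $\{\re s>N\}$, exactly as in the paper. No gaps.
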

\begin{proof}
We deduce from the principle of analytic continuation that the functional equations \eqref{shell_tube} and \eqref{shell_dist} continue to hold on any open connected set $U\supseteq\{\re s>N\}$ to which any of the three relative zeta functions, $\breve{\zeta}_{A,\O}$, $\widetilde{\zeta}_{A,\O}$ or $\zeta_{A,\O}$, has a holomorphic continuation.
In light of this, part $(a)$ follows from the counterpart of Theorem~\ref{an_rel} for the relative tube zeta function (see also Equation \eqref{401/2} and the last paragraph of Remark \ref{tube_holo}) and \eqref{shell_tube}, while part $(b)$ follows from Theorem~\ref{an_rel} and \eqref{shell_dist}. 
\end{proof}

The following corollary is an immediate consequence of the above theorem, or more precisely, of the functional equation \eqref{shell_tube} and the fact that for a given RFD $(A,\O)$ in $\eR^N$ and any fixed $\d_1,\d_2>0$, the difference $\widetilde{\zeta}_{A,\O}(s;\d_1)-\widetilde{\zeta}_{A,\O}(s;\d_2)$ is an entire function.
(See \cite{refds} or \cite{fzf}.)

\begin{corollary}
Let $(A,\O)$ be an RFD in $\eR^N$ such that $\ov{\dim}_B(A,\O)<N$ and fix $\d_1,\d_2>0$ such that $\d_1<\d_2$.
Then, the difference $\breve{\zeta}_{A,\O}(s;\d_1)-\breve{\zeta}_{A,\O}(s;\d_2)$ is meromorphic on all of $\Ce$ with a single simple pole at $s=N$ of residue $|A_{\d_1,\d_2}\cap\O|$.
\end{corollary}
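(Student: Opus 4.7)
The plan is to exploit the functional equation \eqref{shell_tube}, combined with the fact (recalled in the paragraph immediately preceding the corollary and established in \cite{refds}) that for any fixed $\d_1, \d_2 > 0$, the difference $\widetilde{\zeta}_{A,\O}(s;\d_1) - \widetilde{\zeta}_{A,\O}(s;\d_2)$ extends to an entire function of $s$ on $\Ce$. Writing \eqref{shell_tube} twice (with $\d=\d_i$, $i=1,2$) and subtracting yields
\begin{equation*}
\breve{\zeta}_{A,\O}(s;\d_1) - \breve{\zeta}_{A,\O}(s;\d_2)
= \big[\widetilde{\zeta}_{A,\O}(s;\d_1) - \widetilde{\zeta}_{A,\O}(s;\d_2)\big]
- \frac{\d_1^{s-N}|A_{\d_1}\cap\O| - \d_2^{s-N}|A_{\d_2}\cap\O|}{s-N},
\end{equation*}
valid initially on $\{\re s>N\}$ and then on all of $\Ce$ by analytic/meromorphic continuation. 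The bracketed difference is entire by the cited fact, while the second summand is an entire function of $s$ divided by $s-N$; hence the left-hand side is meromorphic on $\Ce$ with at most a simple pole at $s=N$.

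Next I would read off the residue by evaluating the numerator $f(s):= \d_1^{s-N}|A_{\d_1}\cap\O| - \d_2^{s-N}|A_{\d_2}\cap\O|$ at $s=N$: since $\d_i^{0}=1$, one gets $f(N) = |A_{\d_1}\cap\O| - |A_{\d_2}\cap\O|$, and so
\begin{equation*}
\res\big(\breve{\zeta}_{A,\O}(\,\cdot\,;\d_1) - \breve{\zeta}_{A,\O}(\,\cdot\,;\d_2),\,N\big) = -f(N) = |A_{\d_2}\cap\O| - |A_{\d_1}\cap\O|.
\end{equation*}
Applying \eqref{shell_tb} with $t=\d_1$ and $\d=\d_2$ (valid since $0<\d_1<\d_2$) identifies this value as $|A_{\d_1,\d_2}\cap\O|$, which is the claimed residue.

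There is no substantive obstacle in this argument; it is essentially an algebraic consequence of \eqref{shell_tube} combined with the known entirety of the tube-zeta-function difference. The one minor point worth flagging is that if $|A_{\d_1,\d_2}\cap\O|=0$, the residue vanishes and the \emph{simple pole} at $s=N$ is in fact a removable singularity; the statement of the corollary should then be read with the standard convention that a simple pole with zero residue means holomorphic continuation through that point.
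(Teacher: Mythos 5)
Your argument is correct and is exactly the paper's intended proof: the corollary is stated there as an immediate consequence of the functional equation \eqref{shell_tube} together with the fact that $\widetilde{\zeta}_{A,\O}(\,\cdot\,;\d_1)-\widetilde{\zeta}_{A,\O}(\,\cdot\,;\d_2)$ is entire, which is precisely the subtraction you carry out, and your residue computation and identification via \eqref{shell_tb} are right. Your closing remark about the degenerate case $|A_{\d_1,\d_2}\cap\O|=0$ is a fair (minor) caveat that the paper leaves implicit.
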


The next corollary follows at once from the first part of the proof of Theorem~\ref{an_shell}.

\begin{corollary}\label{5.3.4.1/2}
Let $(A,\O)$ be an RFD in $\eR^N$.
The functional equations \eqref{shell_tube} and \eqref{shell_dist} continue to hold on any connected open neighborhood $U\subseteq\Ce$ of the critical line $\{\re s=\ov{\dim}_B(A,\O)\}$ to which any of the three relative zeta functions $\breve{\zeta}_{A,\O}$, $\widetilde{\zeta}_{A,\O}$ or $\zeta_{A,\O}$ can be meromorphically continued.
More specifically, if either $\breve{\zeta}_{A,\O}$, $\widetilde{\zeta}_{A,\O}$ or $\zeta_{A,\O}$ has a $($necessarily unique$)$ meromorphic continuation on the domain $U\subseteq\Ce$, then so do the other two fractal zeta functions and the functional equations \eqref{shell_tube} and \eqref{shell_dist} continue to hold for all $s\in U$ between the resulting meromorphic extensions of $\breve{\zeta}_{A,\O}$, $\widetilde{\zeta}_{A,\O}$ and $\zeta_{A,\O}$.
\end{corollary}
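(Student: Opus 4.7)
The proof rests on combining the two functional equations \eqref{shell_tube} and \eqref{shell_dist} of Theorem \ref{shell_z_thm} with the principle of meromorphic continuation. The key preliminary observation is that on the overlap $V := U \cap \{\re s > \ov{\dim}_B(A,\O)\}$, which is open and nonempty since $U$ is an open neighborhood of the critical line $\{\re s = \ov{\dim}_B(A,\O)\}$, all three zeta functions $\widetilde{\zeta}_{A,\O}$, $\zeta_{A,\O}$ and $\breve{\zeta}_{A,\O}$ are already meromorphic (by Theorem \ref{an_rel}, Remark \ref{tube_holo}, and Theorem \ref{an_shell}(a), respectively), and moreover the identities \eqref{shell_tube} and \eqref{shell_dist} hold on $V$ by meromorphic continuation from the half-plane $\{\re s > N\}$ where they were established in Theorem \ref{shell_z_thm}.

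I would then treat first the case in which $\breve{\zeta}_{A,\O}$ is assumed to admit a meromorphic extension to $U$: define
$$
\widetilde{\zeta}^{*}(s) := \frac{\d^{s-N}|A_\d\cap\O|}{s-N} + \breve{\zeta}_{A,\O}(s), \qquad \zeta^{*}(s) := (N-s)\,\breve{\zeta}_{A,\O}(s),
$$
both of which are manifestly meromorphic on $U$. By the preceding observation, $\widetilde{\zeta}^{*}$ and $\zeta^{*}$ coincide with $\widetilde{\zeta}_{A,\O}$ and $\zeta_{A,\O}$, respectively, on $V$; by the uniqueness of meromorphic continuation on the connected set $U$ they therefore furnish the required meromorphic extensions of $\widetilde{\zeta}_{A,\O}$ and $\zeta_{A,\O}$ to $U$, and the functional equations \eqref{shell_tube} and \eqref{shell_dist} hold on all of $U$ by construction. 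The two remaining cases reduce to this one: if $\widetilde{\zeta}_{A,\O}$ (respectively, $\zeta_{A,\O}$) extends meromorphically to $U$, solve \eqref{shell_tube} (respectively, \eqref{shell_dist}) algebraically for $\breve{\zeta}_{A,\O}$ to obtain a meromorphic extension $\breve{\zeta}^{*}$ of $\breve{\zeta}_{A,\O}$ to $U$, which agrees with $\breve{\zeta}_{A,\O}$ on $V$, and then invoke the first case.

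The argument is essentially formal once Theorem \ref{shell_z_thm} is in hand; the only point meriting a moment of care is the bookkeeping at $s = N$, where the auxiliary rational factors $\d^{s-N}|A_\d\cap\O|/(s-N)$ and $1/(N-s)$ may introduce or absorb a (possibly removable) simple pole. Since these factors are meromorphic on all of $\Ce$, solving the functional equations in either direction preserves meromorphy on $U$ and cannot generate any worse singularity. Consequently, there is no genuine obstacle: the entire content of the corollary is carried by the functional equations of Theorem \ref{shell_z_thm} together with the uniqueness of meromorphic continuation on the connected domain $U$.
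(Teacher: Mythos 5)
Your proof is correct and follows essentially the same route as the paper, which deduces the corollary from the principle of analytic continuation applied to the functional equations \eqref{shell_tube} and \eqref{shell_dist} established in Theorem \ref{shell_z_thm} on $\{\re s>N\}$. Your write-up merely makes explicit the construction of the meromorphic extensions of the other two zeta functions (by solving the functional equations) and the identity-theorem step on the overlap $U\cap\{\re s>\ov{\dim}_B(A,\O)\}$, which the paper leaves implicit.
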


Moreover, in light of Theorem~\ref{pole1mink_tilde} and the functional equation \eqref{shell_tube}, we have the following result.

\begin{theorem}\label{pole_shell}
Assume that $(A,\O)$ is a Minkowski nondegenerate RFD in $\eR^N$, 
that is, $0<\M_*^D(A,\O)\le{\M}^{*D}(A,\O)<\ty$ $($in particular, $\dim_B(A,\O)=D)$, 
and $D<N$.
If $\breve{\zeta}_{A,\O}(s)$ can be meromorphically  extended to a connected open neighborhood of $s=D$,
then $D$ is necessarily a simple pole of $\breve{\zeta}_{A,\O}(s)$ and 
\begin{equation}\label{res_shell}
\M_*^D(A,\O)\le\res(\breve{\zeta}_{A,\O},D)\le{\M}^{*D}(A,\O).
\end{equation}
Furthermore, if $(A,\O)$ is Minkowski measurable, then 
\begin{equation}\label{pole1minkg1=2}
\res(\breve{\zeta}_{A,\O}, D)=\M^D(A,\O).
\end{equation}
\end{theorem}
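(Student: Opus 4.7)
The plan is to deduce Theorem~\ref{pole_shell} from its counterpart for the tube zeta function, namely Theorem~\ref{pole1mink_tilde}, by transferring information through the functional equation \eqref{shell_tube}. The key observation is that since $D<N$ by assumption, a suitable small connected open neighborhood $U$ of $s=D$ on which $\breve{\zeta}_{A,\O}(\,\cdot\,;\d)$ admits its meromorphic extension can be chosen to avoid the point $s=N$. On such a neighborhood $U$, the first summand
\[
h(s):=\frac{\d^{s-N}|A_\d\cap\O|}{s-N}
\]
on the right-hand side of \eqref{shell_tube} is holomorphic (in fact entire away from $N$), so the functional equation \eqref{shell_tube} simultaneously produces a meromorphic extension of $\widetilde{\zeta}_{A,\O}(\,\cdot\,;\d)$ to $U$ (this is the content of Corollary~\ref{5.3.4.1/2} applied at $s=D$).

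First, I would verify that $D$ is a simple pole of $\widetilde{\zeta}_{A,\O}(\,\cdot\,;\d)$ by invoking Theorem~\ref{pole1mink_tilde} directly, which also gives the two-sided estimate
\[
\M_*^D(A,\O)\le\res\bigl(\widetilde{\zeta}_{A,\O}(\,\cdot\,;\d),D\bigr)\le\M^{*D}(A,\O),
\]
together with equality to $\M^D(A,\O)$ in the Minkowski measurable case. Next, since $h$ is holomorphic at $s=D$ (because $D\neq N$), the functional equation \eqref{shell_tube}, rewritten as
\[
\breve{\zeta}_{A,\O}(s;\d)=\widetilde{\zeta}_{A,\O}(s;\d)-h(s),
\]
shows that $\breve{\zeta}_{A,\O}(\,\cdot\,;\d)$ has the same principal part at $s=D$ as $\widetilde{\zeta}_{A,\O}(\,\cdot\,;\d)$. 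In particular, $D$ is a simple pole of $\breve{\zeta}_{A,\O}(\,\cdot\,;\d)$ and
\[
\res\bigl(\breve{\zeta}_{A,\O}(\,\cdot\,;\d),D\bigr)=\res\bigl(\widetilde{\zeta}_{A,\O}(\,\cdot\,;\d),D\bigr),
\]
from which the inequalities \eqref{res_shell} and the Minkowski measurable identity \eqref{pole1minkg1=2} follow at once.

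The argument is essentially formal once the meromorphic-extension transfer via \eqref{shell_tube} is in place, so there is no genuine analytic obstacle; the only point requiring care is ensuring that the chosen neighborhood $U$ of $D$ does not contain $N$ (so that $h$ contributes nothing to the residue at $D$) and noting that the $\d$-independence of $\res(\breve{\zeta}_{A,\O},D)$ then follows from the $\d$-independence of $\res(\widetilde{\zeta}_{A,\O},D)$ asserted in Theorem~\ref{pole1mink_tilde}. This is all packaged cleanly by Corollary~\ref{5.3.4.1/2}, which is the main technical device I would cite.
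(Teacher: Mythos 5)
Your proposal is correct and follows exactly the route the paper intends: the theorem is stated there as an immediate consequence of Theorem~\ref{pole1mink_tilde} together with the functional equation \eqref{shell_tube}, using precisely the observation that the term $\d^{s-N}|A_\d\cap\O|/(s-N)$ is holomorphic near $s=D$ since $D<N$, so the principal parts (and residues) of $\breve{\zeta}_{A,\O}$ and $\widetilde{\zeta}_{A,\O}$ at $D$ coincide. Your additional care about choosing the neighborhood $U$ to avoid $s=N$ and invoking Corollary~\ref{5.3.4.1/2} for the transfer of meromorphic continuation matches the paper's implicit argument.
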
 

The most useful fact about the relative shell zeta function is that the residues of its meromorphic extension at any of its (simple) poles belonging to the open left half-plane $\{\re s<N\}$ have a simple connection to the residues of the relative tube or distance zeta functions.
(See also Corollary \ref{5.3.4.1/2} just above.)

\begin{lemma}\label{reziduuumi}
Assume that $(A,\O)$ is an RFD in $\eR^N$ such that its tube or distance or shell zeta function is meromorphic on some connected open  neighborhood $U\subseteq\Ce$ of the critical line $\{\re s=\ov{\dim}_B(A,\O)\}$.
Then, the multisets of poles located in $U\setminus\{N\}$ of each of the three zeta functions, $\breve{\zeta}_{A,\O}$, $\widetilde{\zeta}_{A,\O}$ and $\zeta_{A,\O}$, coincide$:$
\begin{equation}\label{5.3.10.1/2}
\po\big(\breve{\zeta}_{A,\O},U\setminus\{N\}\big)=\po\big(\widetilde{\zeta}_{A,\O},U\setminus\{N\}\big)=\po\big({\zeta}_{A,\O},U\setminus\{N\}\big).
\end{equation}

Moreover, if $\omega\in U\setminus\{N\}$ is a simple pole of one of the three fractal zeta functions $\breve{\zeta}_{A,\O}$, $\widetilde{\zeta}_{A,\O}$ or $\zeta_{A,\O}$, then it is also a simple pole of the other two fractal zeta functions and we have
\begin{equation}
\res(\breve{\zeta}_{A,\O},\omega)=\res(\widetilde{\zeta}_{A,\O},\omega)=\frac{\res({\zeta}_{A,\O},\omega)}{N-\omega}.
\end{equation}
\end{lemma}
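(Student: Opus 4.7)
The plan is to derive both assertions directly from the two functional equations \eqref{shell_tube} and \eqref{shell_dist}, which by Corollary \ref{5.3.4.1/2} persist on any connected open neighborhood $U$ of the critical line to which one (and hence all three) of the fractal zeta functions $\breve{\zeta}_{A,\O}$, $\widetilde{\zeta}_{A,\O}$, $\zeta_{A,\O}$ admits a meromorphic continuation. So the very first step is to invoke that corollary in order to conclude that the hypothesis on any one of the three zeta functions being meromorphic on $U$ automatically guarantees that the other two are meromorphic on $U$ as well, and that \eqref{shell_tube} and \eqref{shell_dist} hold on all of $U$ between the three meromorphic extensions.

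Next, to obtain \eqref{5.3.10.1/2}, I would exploit the fact that both correction factors appearing in the functional equations are well understood on $U \setminus \{N\}$. On the one hand, rewriting \eqref{shell_tube} as
\begin{equation}
\widetilde{\zeta}_{A,\O}(s;\d) - \breve{\zeta}_{A,\O}(s;\d) = \frac{\d^{s-N}|A_\d \cap \O|}{s-N},
\end{equation}
the right-hand side is meromorphic on all of $\Ce$ with only a simple pole at $s = N$, hence holomorphic on $U \setminus \{N\}$; therefore $\widetilde{\zeta}_{A,\O}$ and $\breve{\zeta}_{A,\O}$ share the same poles with the same multiplicities on $U \setminus \{N\}$. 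On the other hand, from \eqref{shell_dist} one has $\zeta_{A,\O}(s;\d) = (N-s)\breve{\zeta}_{A,\O}(s;\d)$, and since $s \mapsto N - s$ is entire with its unique (simple) zero at $s = N$, multiplication by this factor leaves the polar structure on $U \setminus \{N\}$ unchanged. This yields the triple equality of multisets \eqref{5.3.10.1/2}.

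For the residue identities at a simple pole $\omega \in U \setminus \{N\}$, the same reasoning specialized at $\omega$ does the job: since $\omega \neq N$, the holomorphic term $\d^{s-N}|A_\d \cap \O|/(s-N)$ contributes nothing to the residue at $\omega$, so \eqref{shell_tube} gives $\res(\widetilde{\zeta}_{A,\O}, \omega) = \res(\breve{\zeta}_{A,\O}, \omega)$, and both are simple poles. Similarly, the factor $N - s$ in \eqref{shell_dist} is holomorphic and nonzero at $\omega$, so $\omega$ is a simple pole of $\zeta_{A,\O}$ iff it is a simple pole of $\breve{\zeta}_{A,\O}$, and multiplying the Laurent expansion of $\breve{\zeta}_{A,\O}$ at $\omega$ by $(N - s)$ and reading off the residue yields $\res(\zeta_{A,\O}, \omega) = (N - \omega)\res(\breve{\zeta}_{A,\O}, \omega)$, which rearranges to the claimed formula.

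There is no real obstacle here: the lemma is essentially a bookkeeping consequence of the two functional equations once Corollary \ref{5.3.4.1/2} is in hand. The only point meriting any care is the exclusion of $s = N$ in \eqref{5.3.10.1/2}, which is forced by the explicit simple pole of $\d^{s-N}|A_\d \cap \O|/(s-N)$ at $s = N$ (and by the corresponding zero of $N - s$ at $s = N$ in the other functional equation), so that the pole structures of the three fractal zeta functions can legitimately differ at that single point.
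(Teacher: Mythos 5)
Your argument is correct and is precisely the route the paper intends: the lemma is an immediate consequence of the functional equations \eqref{shell_tube} and \eqref{shell_dist}, extended to all of $U$ via Corollary \ref{5.3.4.1/2}, with the term $\d^{s-N}|A_\d\cap\O|/(s-N)$ holomorphic and the factor $N-s$ holomorphic and nonvanishing on $U\setminus\{N\}$, so pole locations, multiplicities and residues transfer exactly as you describe. The paper leaves this bookkeeping implicit, and your write-up fills it in faithfully, including the correct justification for excluding the point $s=N$.
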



The shell zeta function, $\breve{\zeta}_{A,\O}$, may seem rather artificial and unnecessary in the present context of relative fractal drums, but it will prove to be quite useful as a ``translation tool'' for deriving the tube formulas (originally obtained via the tube zeta function, $\widetilde{\zeta}_{A,\O}$, in Sections \ref{sec_point} and \ref{sec_distr}) in terms of the much more practical distance zeta function, $\zeta_{A,\O}$.
We note that the shell zeta function originally arose naturally in \cite{ra}, where it was used, in particular, to generalize the theory of complex dimensions developed in [LapRa\v Zu1--8] 
to the special case of {\em unbounded sets at infinity} having infinite Lebesgue measure.
(See also \cite{ra2} for a study of the fractal properties of unbounded sets of finite Lebesgue measure at infinity.)

\subsection{Pointwise Tube Formula via the Distance Zeta Function}\label{subsec_point_dist}

Analogously as in the case of the relative tube zeta function of $(A,\O)$, we observe that
$\breve{\zeta}_{A,\O}(s)=\{\mathfrak{M}f\}(s)$, where $f(s):=-t^{-N}\chi_{(0,\d)}(t)|A_{t,\d}\cap\O|$.
We also note that $f$ is continuous and of bounded variation on $(0,+\ty)$; so that we can apply the Mellin inversion theorem (Theorem \ref{mellin_inv}), much as in the proof of Theorem \ref{tube_inversion}, and conclude that
\begin{equation}\label{shell_invv}
|A_{t,\d}\cap\O|=-\frac{1}{2\pi\I}\int_{c-\I\ty}^{c+\I\ty}t^{N-s}\breve{\zeta}_{A,\O}(s;\d)\di s,
\end{equation}
where $c>N$ is arbitrary and $t\in(0,\d)$.
In light of \eqref{shell_tb}, the following theorem is an immediate consequence of the identity \eqref{shell_invv}.

\begin{theorem}\label{shell_inversion}
Let $(A,\O)$ be a relative fractal drum in $\eR^N$ and fix $\d>0$.
Then, for every $t\in(0,\d)$ and any real number $c>N$, we have
\begin{equation}\label{shell_inversion_formula}
|A_t\cap\O|=|A_\d\cap\O|+\frac{1}{2\pi\I}\int_{c-\I\ty}^{c+\I\ty}t^{N-s}\breve{\zeta}_{A,\O}(s;\d)\di s.
\end{equation}
\end{theorem}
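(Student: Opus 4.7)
The plan is to carry out exactly the reduction outlined in the paragraph immediately preceding the theorem statement, turning it into a rigorous argument. Namely, I would apply the Mellin inversion theorem (Theorem \ref{mellin_inv}) to the auxiliary function
\[
f(t):=-t^{-N}\chi_{(0,\d)}(t)\,|A_{t,\d}\cap\O|
\]
in order to recover $|A_{t,\d}\cap\O|$ pointwise as a contour integral involving $\breve{\zeta}_{A,\O}(\,\cdot\,;\d)$, and then use the set-theoretic identity \eqref{shell_tb} to trade $|A_{t,\d}\cap\O|$ for $|A_t\cap\O|$.

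First I would verify the hypotheses of Theorem \ref{mellin_inv} for $f$. Since $t\mapsto|A_t\cap\O|$ is nondecreasing on $(0,+\ty)$ (and hence of locally bounded variation), and since $|A_{\d,\d}\cap\O|=0$, the function $f$ is continuous on $(0,\d)$ (it is automatically extended by zero on $[\d,+\ty)$ in a way compatible with its right limit at $\d$) and of locally bounded variation at every $y\in(0,\d)$. Next, for any real $c>N$, the trivial bound $|A_{t,\d}\cap\O|\leq|A_\d\cap\O|$ gives
\[
\int_0^{+\ty}t^{c-1}|f(t)|\di t\leq|A_\d\cap\O|\int_0^{\d}t^{c-1-N}\di t<\ty,
\]
so that $t\mapsto t^{c-1}f(t)\in L^1(0,+\ty)$. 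Comparing with Definition \ref{shell_defn}, this shows in particular that $\{\mathfrak{M}f\}(s)=\breve{\zeta}_{A,\O}(s;\d)$ whenever $\re s=c>N$.

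Applying Theorem \ref{mellin_inv} at any $y=t\in(0,\d)$ (where $f$ is continuous, so that its left and right limits coincide) then yields
\[
-t^{-N}|A_{t,\d}\cap\O|=\frac{1}{2\pi\I}\int_{c-\I\ty}^{c+\I\ty}t^{-s}\breve{\zeta}_{A,\O}(s;\d)\di s,
\]
which, after multiplying both sides by $-t^N$, is precisely the identity \eqref{shell_invv}. Finally, I would invoke Stachó's theorem (recalled just before Subsection \ref{subsec_shell}), which guarantees that $|\pa A_t|=0$ for every $t>0$, in order to conclude that $|A_{t,\d}\cap\O|=|A_\d\cap\O|-|A_t\cap\O|$ for all $t\in(0,\d]$. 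Substituting this into the inversion formula above and rearranging produces the claimed identity.

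The only delicate point — and thus the principal obstacle, such as it is — is ensuring that Mellin inversion applies \emph{pointwise} at every $t\in(0,\d)$ rather than merely almost everywhere or in a mean-value sense. This is why I take care to note that $f$ is actually continuous (not just BV) on $(0,\d)$, which is a consequence of the continuity from below of Lebesgue measure applied to the increasing family $\{A_t\cap\O\}_{t>0}$, together with the vanishing of $|\pa A_t|$. Once this is in place, the remaining manipulations are purely algebraic and involve only the identity \eqref{shell_tb}.
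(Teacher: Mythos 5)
Your proposal is correct and follows essentially the same route as the paper: the paper likewise applies the Mellin inversion theorem to $f(t):=-t^{-N}\chi_{(0,\d)}(t)|A_{t,\d}\cap\O|$ (noting that $f$ is continuous and of bounded variation, with $\{\mathfrak{M}f\}(s)=\breve{\zeta}_{A,\O}(s;\d)$ for $\re s=c>N$) to obtain \eqref{shell_invv}, and then deduces the theorem from the identity \eqref{shell_tb}, which rests on Stach\'o's result that $|\pa A_t|=0$. Your additional verifications (the $L^1$ bound for $c>N$ and the pointwise continuity needed for inversion at every $t\in(0,\d)$) simply make explicit what the paper leaves implicit.
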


It is now clear that if the shell zeta function of $(A,\O)$ satisfies the languidity conditions of Definition \ref{languid}, with the constant $c>N$ in the condition {\bf L1}, or the strong languidity conditions of Definition \ref{str_languid}, we can rewrite the results of Sections \ref{sec_point} and \ref{sec_distr} verbatim in terms of the shell zeta function.
Note that for this to work, it was crucial that in the truncated pointwise formula of Lemma \ref{trunc_point}, we had the freedom to choose any $c\in(\ov{\dim}_B(A,\O),N+1)$.
Furthermore, observe that the additional pole of the shell zeta function at $s=N$ will cancel out the term $|A_\d\cap\O|$ in \eqref{shell_inversion_formula} above.
More specifically, in the analog for the relative shell zeta function of the pointwise formula stated in Theorem \ref{pointwise_formula}, we obtain the following pointwise fractal tube formula with error term, expressed in terms of the shell zeta function $\breve{\zeta}_{A,\O}:=\breve{\zeta}_{A,\O}(\,\cdot\,;\d)$$:$
\begin{equation}\label{5.15}
\begin{aligned}
V^{[k]}_{A,\O}(t)&=\sum_{\omega\in\po(\breve{\zeta}_{A,\O},\bm{W})}\res\left(\frac{t^{N-s+k}}{(N\!-\! s\!+\! 1)_k}\breve{\zeta}_{A,\O}(s),\omega\right)+|A_\d\cap\O|\frac{t^k}{(1)_k}+\breve{R}^{[k]}_{A,\O}(t),
\end{aligned}
\vspace{6pt}
\end{equation}
valid pointwise for all $t\in(0,\d)$.
Here, just as in the statement of Theorem \ref{pointwise_formula}, the shell zeta function $\breve{\zeta}_{A,\O)}$ of the RFD $(A,\O)$ is assumed to be languid for some fixed $\d>0$ and some fixed constant $\kappa\in\eR$, as well as with the constant $c$ satisfying $c>N$.
Furthermore, the nonnegative integer $k$ is assumed to be such that $k>\kappa+1$ and for every $t\in(0,\d)$, the error term $\breve{R}_{A,\O}^{[k]}$ is given (much as in \eqref{error_term}) by the absolutely convergent (and hence, convergent) integral
\begin{equation}\label{5.3.14.1/2}
\breve{R}_{A,\O}^{[k]}(t)=\frac{1}{2\pi\I}\int_S\frac{t^{N-s+k}}{(N\!-\! s\!+\! 1)_k}\breve{\zeta}_{A,\O}(s;\d)\di s.
\end{equation}
Moreover, it satisfies the exact analog of the pointwise error estimate \eqref{R_upper_bound}, valid pointwise for all $t\in(0,\d)$.
Hence, it satisfies (for $\breve{\zeta}_{A,\O}$ instead of for $\widetilde{\zeta}_{A,\O}$) the error estimate \eqref{estm} and, in the special case when the screen $\bm S$ lies strictly to the left of the vertical line $\{\re s=\sup S\}$, it satisfies the exact analog (for $\breve{\zeta}_{A,\O}$) of the stronger error estimate \eqref{S(t)<sup}.

In addition, by singling out the residue at $s=N$ from the above sum and using Lemma \ref{reziduuumi} and Theorem \ref{an_shell}$(a)$, along with the functional equation \eqref{shell_dist}, we can rewrite the above equation (in \eqref{5.15}) as follows:
\begin{equation}
\begin{aligned}
V^{[k]}_{A,\O}(t)&=\sum_{\omega\in\po({\zeta}_{A,\O},\bm{W})}\res\left(\frac{t^{N-s+k}}{(N\!-\! s)_{k+1}}{\zeta}_{A,\O}(s;\d),\omega\right)+{R}^{[k]}_{A,\O}(t),
\end{aligned}
\end{equation}
where the pointwise error term ${R}_{A,\O}^{[k]}$ is now given by the absolutely convergent (and hence, convergent) integral
\begin{equation}\label{5.3.14.1.1/2}
{R}_{A,\O}^{[k]}(t)=\frac{1}{2\pi\I}\int_S\frac{t^{N-s+k}}{(N\!-\! s)_{k+1}}{\zeta}_{A,\O}(s;\d)\di s.
\end{equation}

We next introduce the analogs of the languidity conditions for a relative fractal drum, now formulated in terms of its relative distance zeta function.
We call them {\em $d$-languidity conditions} in order to stress that they are related to the distance zeta function.
\begin{definition}({\em $d$-languidity and strong $d$-languidity}).\label{d_lang}
We say that a relative fractal drum $(A,\O)$ in $\eR^N$ is {\em $d$-languid} (resp., {\em strongly $d$-languid}) if it is languid in the sense of Definition~\ref{languid} (resp., Definition~\ref{str_languid}), but with the relative tube zeta function $\widetilde{\zeta}_{A,\O}=\widetilde{\zeta}_{A,\O}(\,\cdot\,;\d)$ replaced by the relative distance zeta function $\zeta_{A,\O}={\zeta}_{A,\O}(\,\cdot\,;\d)$ and with the constant $c$ appearing in {\bf L1} satisfying $c>N$.
\end{definition}

The following lemma is an immediate consequence of the functional equation \eqref{shell_dist}.
It is crucial in the sense that it allows us to deduce the languidity exponent $\kappa$ of the shell zeta function directly from the $d$-languidity exponent $\kappa_d$ of the distance zeta function.
This cannot be done for the tube zeta function, due to the presence of the term $\delta^{s-N}|A_\delta\cap\O|$ in the functional equation \eqref{equ_tilde}; this is in fact the technical reason for introducing the shell zeta function in the first place.

\begin{lemma}\label{dist_lang}
Let $(A,\O)$ be a relative fractal drum in $\eR^N$ such that $\ov{\dim}_B(A,\O)<N$ and which is $d$-languid for some value $\d>0$ and with some exponent $\kappa_d\in\eR$.
Then the shell zeta function $\breve{\zeta}_{A,\O}$ of $(A,\O)$ satisfies the languidity conditions of Definition~\ref{languid} for the same value of $\d$ and with the exponent $\kappa:=\kappa_d-1$.

Furthermore, if $(A,\O)$ is strongly $d$-languid with the corresponding constant $B>0$ and for some exponent $\kappa_d\in\eR$ and some $\d>0$, then the shell zeta function $\breve{\zeta}_{A,\O}$ of $(A,\O)$ satisfies the strong languidity conditions of Definition \ref{str_languid} with the exponent $\kappa:=\kappa_d-1$ and with the same constant $B$ as well as the same value of $\d$.
\end{lemma}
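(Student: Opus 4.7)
\medskip
\noindent\textbf{Proof plan.} The plan is to exploit the functional equation \eqref{shell_dist}, which in the form
\[
\breve{\zeta}_{A,\O}(s;\d)=\frac{\zeta_{A,\O}(s;\d)}{N-s}
\]
immediately shows that dividing by $N-s$ converts any polynomial-in-$\im s$ bound on $\zeta_{A,\O}$ into a bound on $\breve{\zeta}_{A,\O}$ whose exponent is reduced by $1$. Since $\ov{\dim}_B(A,\O)<N$, the point $s=N$ lies to the right of the critical line; by Corollary~\ref{5.3.4.1/2}, the functional equation persists throughout any connected open neighborhood $U$ of the critical line to which $\zeta_{A,\O}$ extends meromorphically, so $\breve{\zeta}_{A,\O}$ extends to the same $U$ and we may write the quotient above freely in all estimates.

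\medskip
\noindent\textbf{Step 1 (condition \textbf{L1}).} Let $(T_n)_{n\in\Ze}$ and $c>N$ be as in the $d$-languidity of $(A,\O)$. Discarding finitely many indices, we may assume $|T_n|\ge 1$ for all $n$. For $\sigma\in(S(T_n),c)$, the elementary inequality
\[
|N-\sigma-\I T_n|\ge |T_n|\ge 1
\]
combined with the $d$-languid bound yields
\[
|\breve{\zeta}_{A,\O}(\sigma+\I T_n;\d)|=\frac{|\zeta_{A,\O}(\sigma+\I T_n;\d)|}{|N-\sigma-\I T_n|}\le \frac{C(|T_n|+1)^{\kappa_d}}{|T_n|}\le 2C(|T_n|+1)^{\kappa_d-1},
\]
which is exactly condition \textbf{L1} for $\breve{\zeta}_{A,\O}$ with exponent $\kappa=\kappa_d-1$.

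\medskip
\noindent\textbf{Step 2 (condition \textbf{L2}).} For $|\tau|\ge 1$ and $s=S(\tau)+\I\tau$, we have $|N-s|\ge|\tau|\ge 1$, so
\[
|\breve{\zeta}_{A,\O}(S(\tau)+\I\tau;\d)|\le \frac{C|\tau|^{\kappa_d}}{|\tau|}=C|\tau|^{\kappa_d-1},
\]
verifying \textbf{L2} with the same exponent $\kappa_d-1$. This establishes the first assertion of the lemma.

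\medskip
\noindent\textbf{Step 3 (strong case, condition \textbf{L2$'$}).} For the strong $d$-languid hypothesis, consider the sequence of screens $\bm{S}_m$. Since $\sup S_m\to-\ty$, we may discard the first finitely many $m$ (which does not affect the validity of strong languidity) to assume $N-\sup S_m\ge 1$ for every $m\ge 1$. Then, for every $\tau\in\eR$,
\[
|N-S_m(\tau)-\I\tau|\ge\max\{N-S_m(\tau),\,|\tau|\}\ge\tfrac12(|\tau|+1),
\]
so the strong $d$-languid bound on $\zeta_{A,\O}$ gives
\[
|\breve{\zeta}_{A,\O}(S_m(\tau)+\I\tau;\d)|\le\frac{C\,B^{|S_m(\tau)|}(|\tau|+1)^{\kappa_d}}{\tfrac12(|\tau|+1)}=2C\,B^{|S_m(\tau)|}(|\tau|+1)^{\kappa_d-1},
\]
for all $\tau\in\eR$ and all $m\ge 1$. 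Moreover, the uniform Lipschitz bound on $\bm{S}_m$ is inherited unchanged from the $d$-languid hypothesis, so $\breve{\zeta}_{A,\O}$ is strongly languid with the same constant $B$, the same $\d$, and exponent $\kappa=\kappa_d-1$.

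\medskip
\noindent\textbf{Main obstacle.} The only delicate point is controlling the pole of $1/(N-s)$ at $s=N$: this is why the standing assumption $\ov{\dim}_B(A,\O)<N$ is essential, for it keeps $s=N$ safely to the right of the critical strip where all languidity estimates are measured, and also why the bounds above use $|T_n|\ge 1$ and $|\tau|\ge 1$ (and, in the strong case, discard finitely many screens so that $N-\sup S_m\ge 1$). Once this is arranged, the estimates are completely mechanical.
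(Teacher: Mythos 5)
Your proof is correct and follows exactly the route the paper intends: the paper states the lemma as an immediate consequence of the functional equation \eqref{shell_dist}, and your argument simply spells this out by dividing the $d$-languid bounds by $|N-s|$, bounded below by $|T_n|$, $|\tau|$, or $\tfrac12(|\tau|+1)$ on the relevant segments and screens (with the sensible adjustment of discarding the finitely many indices and screens that would come too close to the pole of $\breve{\zeta}_{A,\O}$ at $s=N$). No gaps.
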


We are now able to state the main theorem of this section, which is the analog for $\zeta_{A,\O}$ of Theorem \ref{pointwise_formula} stated in terms of $\widetilde{\zeta}_{A,\O}$.

\begin{theorem}[Pointwise fractal tube formula with error term, via $\zeta_{A,\O}$]\label{pointwise_formula_rd}
Let $(A,\O)$ be a relative fractal drum in $\eR^N$ which is $d$-languid for some $\d>0$ and with exponent $\kappa_d\in\eR$.
Furthermore, assume that $\ov{\dim}_B(A,\O)<N$ and let $k>\kappa_d$ be a nonnegative integer.
Then, the following pointwise fractal tube formula, expressed in terms of the distance zeta function ${\zeta}_{A,\O}:={\zeta}_{A,\O}(\,\cdot\,;\d)$, is valid for every $t\in(0,\d)$$:$
\begin{equation}\label{point_form_d}
V^{[k]}_{A,\O}(t)=\sum_{\omega\in\po({\zeta}_(A,\O),\bm{W})}\res\left(\frac{t^{N-s+k}}{(N\!-\! s)_{k+1}}{\zeta}_{A,\O}(s),\omega\right)+R^{[k]}_{A,\O}(t).
\end{equation}
Here, for every $t\in(0,\d)$, the error term $R^{[k]}_{A,\O}$ is given by the absolutely convergent $($and hence, convergent$)$ integral
\begin{equation}\label{error_term_d}
R^{[k]}_{A,\O}(t)=\frac{1}{2\pi\I}\int_S\frac{t^{N-s+k}}{(N\!-\! s)_{k+1}}{\zeta}_{A,\O}(s)\di s.
\end{equation}
Furthermore, for every $t\in(0,\d)$, we have
\begin{equation}
|R^{[k]}_{A,\O}(t)|\leq t^{N+k}\max\{t^{-\sup S},t^{-\inf S}\}\left(\frac{C\big(1+\|S\|_{\mathrm{Lip}}\big)}{2\pi(k-\kappa_d)}+C'\right),
\end{equation}
where $C$ is the constant appearing in {\bf L1} and {\bf L2} and $C'$ is some suitable positive constant.
These constants depend only on the relative fractal drum $(A,\O)$ and the screen, but not on $k$.

In particular, we have the following pointwise error estimate$:$
\begin{equation}
R^{[k]}_{A,\O}(t)=O(t^{N-\sup S+k})\quad\mathrm{ as }\quad t\to 0^+.
\end{equation}

Moreover, if $S(\tau)<\sup S$ $($i.e., if the screen $\bm S$ lies strictly left of the vertical line $\{\re s=\sup S\}$$)$, then we have the following stronger pointwise error estimate$:$
\begin{equation}\label{S(t)<sup_d}  
R^{[k]}_{A,\O}(t)=o(t^{N-\sup S+k})\quad\mathrm{ as }\quad t\to 0^+.
\end{equation}
\end{theorem}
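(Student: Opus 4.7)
The strategy is to reduce the claim to the tube-zeta version (Theorem \ref{pointwise_formula}) via the relative shell zeta function $\breve{\zeta}_{A,\O}$, using the clean functional equation $\zeta_{A,\O}(s;\d)=(N-s)\breve{\zeta}_{A,\O}(s;\d)$ from \eqref{shell_dist}. The first observation is that by Lemma \ref{dist_lang}, the $d$-languidity of $\zeta_{A,\O}$ with exponent $\kappa_d$ transfers into the ordinary languidity of $\breve{\zeta}_{A,\O}$ with exponent $\kappa:=\kappa_d-1$, so the hypothesis $k>\kappa_d$ becomes the hypothesis $k>\kappa+1$ required by Theorem \ref{pointwise_formula}. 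Moreover, the requirement $c>N$ built into $d$-languidity is precisely what is needed to place the Mellin contour to the right of the sole additional pole of $\breve{\zeta}_{A,\O}$ at $s=N$ (Theorem \ref{an_shell}(a)), so that the Mellin inversion of Theorem \ref{shell_inversion} is applicable.

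Starting from Theorem \ref{shell_inversion} and integrating $k$ times as in the proof of Proposition \ref{kth_prim}, I would obtain, for all $t\in(0,\d)$ and any $c>N$,
\begin{equation*}
V^{[k]}_{A,\O}(t)=|A_\d\cap\O|\,\frac{t^k}{(1)_k}+\frac{1}{2\pi\I}\int_{c-\I\ty}^{c+\I\ty}\frac{t^{N-s+k}}{(N\!-\!s\!+\!1)_k}\breve{\zeta}_{A,\O}(s;\d)\di s.
\end{equation*}
I would then repeat, essentially verbatim, the contour-shift of Lemma \ref{trunc_point} and the passage to the limit $n\to\ty$ in Theorem \ref{pointwise_formula}, now applied to $\breve{\zeta}_{A,\O}$. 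The languidity bounds provided by the first step ensure both that the horizontal segments $\Gamma_U\cup\Gamma_L$ contribute vanishingly (thanks to $k>\kappa+1$) and that the resulting integral over the screen $\bm S$ converges absolutely. This produces the intermediate tube formula \eqref{5.15} expressed in terms of $\breve{\zeta}_{A,\O}$, with error term $\breve R^{[k]}_{A,\O}(t)$ given by \eqref{5.3.14.1/2}.

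To translate into $\zeta_{A,\O}$, I would single out the term $s=N$ from $\po(\breve{\zeta}_{A,\O},\bm W)$. Since $\ov{\dim}_B(A,\O)<N$ guarantees that $N\in\bm W$, and since by Theorem \ref{an_shell}(a) this pole is simple with residue $-|A_\d\cap\O|$, its contribution to the residue sum equals $-|A_\d\cap\O|\,t^k/(1)_k$, which cancels the free term exactly. Lemma \ref{reziduuumi} identifies the remaining poles of $\breve{\zeta}_{A,\O}$ in $\bm W$ with those of $\zeta_{A,\O}$ in $\bm W$. Finally, substituting $\breve{\zeta}_{A,\O}(s)=\zeta_{A,\O}(s)/(N-s)$ into both the residue sum and the screen integral collapses the factor $(N-s)(N-s+1)_k$ into the Pochhammer symbol $(N-s)_{k+1}$, producing precisely \eqref{point_form_d} together with the integral representation \eqref{error_term_d}.

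The pointwise error estimates, including the bound involving $\|S\|_{\mathrm{Lip}}$ and the refinement \eqref{S(t)<sup_d} when $S(\tau)<\sup S$, follow by inserting \eqref{shell_dist} into the corresponding bounds for $\breve R^{[k]}_{A,\O}$ furnished by the shell-zeta analogue of Theorem \ref{pointwise_formula}; the only bookkeeping change, replacing $(k-\kappa-1)$ by $(k-\kappa_d)$, reflects $\kappa=\kappa_d-1$. I expect the main delicate point to be the clean cancellation described in the third paragraph: this is the reason the shell zeta function, rather than $\widetilde{\zeta}_{A,\O}$, is the right intermediary, because its functional equation with $\zeta_{A,\O}$ carries no $\d$-dependent additive term (contrast the $\d^{s-N}|A_\d\cap\O|/(s-N)$ summand in \eqref{equ_tilde}), so the pole at $s=N$ and the constant from Mellin inversion are perfectly matched.
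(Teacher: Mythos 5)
Your proposal is correct and follows essentially the same route as the paper: transfer of $d$-languidity to the shell zeta function via Lemma \ref{dist_lang} (with $\kappa=\kappa_d-1$), the shell-zeta analogue of Theorem \ref{pointwise_formula} built on Theorem \ref{shell_inversion}, cancellation of the $|A_\d\cap\O|\,t^k/(1)_k$ term against the residue of $\breve{\zeta}_{A,\O}$ at $s=N$, and conversion to $\zeta_{A,\O}$ via Lemma \ref{reziduuumi} and the functional equation \eqref{shell_dist}, which merges $(N-s)(N-s+1)_k$ into $(N-s)_{k+1}$. The only cosmetic caveat is that for the $k$-fold integration and the contour shift one should take $c\in(N,N+1)$ rather than an arbitrary $c>N$, exactly the freedom emphasized in Lemma \ref{trunc_point}.
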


\begin{proof}
In light of Lemma \ref{dist_lang}, we have that $\breve{\zeta}_{A,\O}$, the shell zeta function of $(A,\O)$, also satisfies the appropriate languidity conditions with  $\kappa:=\kappa_d-1$ and for the same value of $\d$.
The theorem now follows much as in the case of the relative tube zeta function $\widetilde{\zeta}_{A,\O}$; see the proof of Theorem \ref{pointwise_formula} and the discussion following Theorem \ref{shell_inversion}.
\end{proof} 





\begin{remark}
In Theorem \ref{pointwise_formula_rd}, as well as in all of the following theorems below involving the relative distance zeta function, the additional assumption according to which $\ov{\dim}_B(A,\O)<N$ is made in order to avoid the situation where $s=N$ is a pole of $\widetilde{\zeta}_{A,\O}$.
\end{remark} 

The next result is the counterpart for $\zeta_{A,\O}$ of Theorem \ref{str_pointwise_formula}, which is stated in terms of $\widetilde{\zeta}_{A,\O}$.

\begin{theorem}[Exact pointwise fractal tube formula via $\zeta_{A,\O}$]\label{str_pointwise_formula_d_bez}
Let $(A,\O)$ be a relative fractal drum in $\eR^N$ which is strongly $d$-languid for some $\d>0$ and with exponent $\kappa_d\in\eR$.
Furthermore, let $k>\kappa_d-1$ be a nonnegative integer and assume that $\ov{\dim}_B(A,\O)<N$.
Then, the following exact pointwise fractal tube formula, expressed in terms of the distance zeta function ${\zeta}_{A,\O}:={\zeta}_{A,\O}(\,\cdot\,;\d)$, holds for every $t\in(0,\min\{1,\d,B^{-1}\})$$:$
\begin{equation}\label{point_form_w_d_bez}
V^{[k]}_{A,\O}(t)=\sum_{\omega\in\po({\zeta}_{A,\O},\Ce)}\res\left(\frac{t^{N-s+k}}{(N\!-\! s)_{k+1}}{\zeta}_{A,\O}(s),\omega\right).
\end{equation}
Here, $B$ is the constant appearing in {\bf L2'} and $\kappa_d$ is the exponent occurring in the statement of hypotheses {\bf L1} and {\bf L2'}.
\end{theorem}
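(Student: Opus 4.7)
The plan is to reduce Theorem \ref{str_pointwise_formula_d_bez} to the strongly languid exact pointwise formula for the shell zeta function $\breve{\zeta}_{A,\O}$, and then translate via the functional equation \eqref{shell_dist}. First, Lemma \ref{dist_lang} tells us that strong $d$-languidity of $(A,\O)$ with exponent $\kappa_d$ and constant $B$ yields strong languidity of $\breve{\zeta}_{A,\O}$ with exponent $\kappa := \kappa_d - 1$ and the same constant $B$. Hence the hypothesis $k > \kappa_d - 1$ becomes $k > \kappa$, which is exactly the level of regularity required for the shell-zeta incarnation of Theorem \ref{str_pointwise_formula}.

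Next, I would establish the shell-zeta analog of the exact pointwise formula. Starting from Theorem \ref{shell_inversion} and integrating $k$ times in $t$ as in the proof of Proposition \ref{kth_prim}, one obtains
\begin{equation*}
V^{[k]}_{A,\O}(t) = |A_\d\cap\O|\,\frac{t^k}{k!} + \frac{1}{2\pi\I}\int_{c-\I\ty}^{c+\I\ty}\frac{t^{N-s+k}}{(N-s+1)_k}\breve{\zeta}_{A,\O}(s;\d)\di s,
\end{equation*}
valid for every $c \in (N, N+1)$ (the upper bound avoiding the spurious poles of $(N-s+1)_k^{-1}$ located at $s = N+1, \dots, N+k$). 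Then I would replicate the contour-shifting argument from the proof of Theorem \ref{str_pointwise_formula}: truncate to the screens $\bm{S}_m$ supplied by strong languidity and apply the residue theorem. Since $k > \kappa$, the horizontal-segment errors vanish as $n \to \infty$ via {\bf L1}; and since $t < \min\{1, B^{-1}\}$, the uniform Lipschitz bound in {\bf L2'} forces the screen integrals to vanish as $m \to \infty$. Collecting all residues yields
\begin{equation*}
V^{[k]}_{A,\O}(t) = |A_\d\cap\O|\,\frac{t^k}{k!} + \sum_{\omega\in\po(\breve{\zeta}_{A,\O},\Ce)}\res\!\left(\frac{t^{N-s+k}}{(N-s+1)_k}\breve{\zeta}_{A,\O}(s),\omega\right),
\end{equation*}
valid for $t \in (0, \min\{1, \d, B^{-1}\})$.

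The third step is to absorb the boundary term $|A_\d\cap\O|\,t^k/k!$ into the residue sum by isolating the pole at $s = N$. Because $\ov{\dim}_B(A,\O) < N$, Theorem \ref{an_shell}(a) applies and gives that $s = N$ is a simple pole of $\breve{\zeta}_{A,\O}$ with residue $-|A_\d\cap\O|$. Since the prefactor $\frac{t^{N-s+k}}{(N-s+1)_k}$ is holomorphic at $s = N$ with value $t^k/k!$, the residue at $s = N$ equals $-|A_\d\cap\O|\,t^k/k!$, which exactly cancels the boundary contribution, leaving
\begin{equation*}
V^{[k]}_{A,\O}(t) = \sum_{\omega\in\po(\breve{\zeta}_{A,\O},\Ce)\setminus\{N\}}\res\!\left(\frac{t^{N-s+k}}{(N-s+1)_k}\breve{\zeta}_{A,\O}(s),\omega\right).
\end{equation*}

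The final step is purely algebraic. By Lemma \ref{reziduuumi}, $\po(\breve{\zeta}_{A,\O},\Ce)\setminus\{N\} = \po(\zeta_{A,\O},\Ce)$. The functional equation $\zeta_{A,\O}(s) = (N-s)\,\breve{\zeta}_{A,\O}(s)$ from \eqref{shell_dist}, together with the identity $(N-s)_{k+1} = (N-s)(N-s+1)_k$, gives
\begin{equation*}
\frac{t^{N-s+k}}{(N-s+1)_k}\breve{\zeta}_{A,\O}(s) = \frac{t^{N-s+k}}{(N-s)_{k+1}}\zeta_{A,\O}(s)
\end{equation*}
as meromorphic functions on $\Ce\setminus\{N\}$, so their residues at each $\omega \in \po(\zeta_{A,\O},\Ce)$ coincide, yielding \eqref{point_form_w_d_bez}. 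The main technical obstacle is the second step: carrying out the strong-languidity contour argument for $\breve{\zeta}_{A,\O}$ with the vertical line shifted to $\re s = c \in (N, N+1)$, and tracking the extra pole at $s = N$ throughout; the remainder is bookkeeping with the functional equations developed in Subsection \ref{subsec_shell}.
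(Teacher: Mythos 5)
Your proposal is correct and follows essentially the same route as the paper: Lemma \ref{dist_lang} transfers strong $d$-languidity to the shell zeta function with $\kappa=\kappa_d-1$, the contour argument of Theorem \ref{str_pointwise_formula} is rerun for $\breve{\zeta}_{A,\O}$ with $c\in(N,N+1)$, the residue at $s=N$ cancels the boundary term $|A_\d\cap\O|\,t^k/k!$, and the functional equation \eqref{shell_dist} together with Lemma \ref{reziduuumi} converts the sum into one over $\po(\zeta_{A,\O},\Ce)$. This is precisely the argument the paper sketches (via the discussion following Theorem \ref{shell_inversion}), only written out in more detail.
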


\begin{proof}
In light of Lemma \ref{dist_lang}  and the functional equation \eqref{shell_dist}, the proof of the theorem parallels that of Theorem~\ref{pointwise_formula_rd} and of Theorem \ref{str_pointwise_formula}, except (in the latter case) for the tube zeta function $\widetilde{\zeta}_{A,\O}(\,\cdot\,;\d)$ now being replaced by the shell zeta function $\breve{\zeta}_{A,\O}(\,\cdot\,;\d)$.
\end{proof}

A situation that occurs frequently in the applications is when a relative fractal drum $(A,\O)$ is `almost' strongly $d$-languid.
More precisely, $(A,\O)$ will satisfy all of the conditions of strong $d$-languidity, except the condition that ${\bf L1}$ is satisfied for {\em all} $\sigma<c$.
For example, let $A$ be the middle-third Cantor set constructed in $[0,1]$ and let $\O=(0,1)$.
Then, the relative distance zeta function $\zeta_{A,\O}$ is meromorphic on all of $\Ce$ and given for all $s\in\Ce$ by (see \cite[Example 3.4]{dtzf} or \cite[Example 2.1.81]{fzf}):
\begin{equation}
\zeta_{A,\O}(s)=\frac{2^{1-s}}{s(3^s-2)}.
\end{equation}
As one can easily check, it almost satisfies the strong languidity conditions with $\kappa_d:=-1$, where the sequence of screens $\bm{S}_m$ can be taken as the sequence of vertical lines $\{\re s=-m\}$ for $m\in\eN$.
The problem here arises from the factor $2^{-s}$ which tends exponentially fast to $+\ty$ as $\re s\to-\ty$, so that condition ${\bf L1}$ cannot be fulfilled for all $\sigma<c$.
In order to remedy this problem and obtain a pointwise formula in this and similar situations, we can multiply $\zeta_{A,\O}(s)$ by $2^s$ and then, the resulting function will be strongly $d$-languid.
On the other hand, by the scaling property of the relative distance zeta function (see \cite[Section 2.2]{mefzf} or \cite[Theorem 4.1.38]{fzf}), we have that $2^s\,\zeta_{A,\O}(s)=\zeta_{2A,2\O}(s)$, where $(2A,2\O)$ is the scaled version of the RFD $(A,\O)$.
(For a subset $A$ of $\eR^N$ and any $\lambda>0$, we define $\lambda A:=\{\lambda x:x\in A\}$.)
In light of the above discussion, we can now state the following corollary dealing with this situation and which will be used repeatedly (most often implicitly) in Section \ref{exp_app}.

\begin{corollary}[Exact pointwise fractal tube formula via $\zeta_{A,\O}$; scaled version]\label{str_pointwise_formula_d}
Let $(A,\O)$ be a relative fractal drum in $\eR^N$ such that $\ov{\dim}_B(A,\O)<N$.
Furthermore, assume that there exists a scaling factor $\lambda>0$ such that $(\lambda A,\lambda\O)$ is a strongly $d$-languid RFD in $\eR^N$, for some $\d>0$ and with exponent $\kappa_d\in\eR$.
Moreover, let $k>\kappa_d-1$ be a nonnegative integer.
Then, the following exact pointwise fractal tube formula, expressed in terms of the distance zeta function $\zeta_{A,\O}$, holds for every $t\in(0,\lambda^{-1}\min\{1,\d,B^{-1}\})$$:$
\begin{equation}\label{point_form_w_d}
V^{[k]}_{A,\O}(t)=\sum_{\omega\in\po({\zeta}_{A,\O},\Ce)}\res\left(\frac{t^{N-s+k}}{(N\!-\! s)_{k+1}}{\zeta}_{A,\O}(s),\omega\right).
\end{equation}
Here, $B$ is the constant appearing in {\bf L2'} $($for the function $s\mapsto{\zeta}_{\lambda A,\lambda\O}(s;\d)=\lambda^{s}{\zeta}_{A,\O}(s;\d\lambda^{-1}))$ and $\kappa_d$ is the exponent occurring in the statement of hypotheses {\bf L1} and {\bf L2'}.
\end{corollary}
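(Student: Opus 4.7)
The plan is to reduce Corollary \ref{str_pointwise_formula_d} to Theorem \ref{str_pointwise_formula_d_bez} by applying the latter to the strongly $d$-languid scaled RFD $(\lambda A,\lambda\O)$ and then transferring the resulting tube formula back to $(A,\O)$ via the scaling properties of the relative tube function and the relative distance zeta function. This mirrors the way scaling invariance is already being exploited implicitly in the discussion preceding the corollary (with $A$ the Cantor set and $\lambda=2$).

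First, I would record the two scaling identities I need. Using that $(\lambda A)_u \cap \lambda\O = \lambda(A_{u/\lambda}\cap\O)$ and iterating the definition of the $k$-th primitive, one gets
\begin{equation}\label{scaling_V}
V^{[k]}_{\lambda A,\lambda\O}(u)=\lambda^{N+k}\,V^{[k]}_{A,\O}(u/\lambda),\qquad u>0,
\end{equation}
and from the scaling property of the distance zeta function (cited in the text from \cite[Section 2.2]{mefzf}, \cite[Theorem 4.1.38]{fzf}) one has
\begin{equation}\label{scaling_z}
\zeta_{\lambda A,\lambda\O}(s;\d)=\lambda^{s}\,\zeta_{A,\O}(s;\d/\lambda).
\end{equation}
Since $s\mapsto \lambda^{s}$ is entire and nowhere-vanishing, multiplication by it does not create or cancel any poles, and the residues transform by the factor $\lambda^{\omega}$ at each pole $\omega$.

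Next, I would fix $t\in(0,\lambda^{-1}\min\{1,\d,B^{-1}\})$ and set $u:=\lambda t$, so that $u\in(0,\min\{1,\d,B^{-1}\})$ lies in the range of validity of Theorem~\ref{str_pointwise_formula_d_bez} applied to the strongly $d$-languid RFD $(\lambda A,\lambda\O)$ (with the same $k>\kappa_d-1$). That theorem yields
\begin{equation}
V^{[k]}_{\lambda A,\lambda\O}(\lambda t)=\sum_{\omega\in\po(\zeta_{\lambda A,\lambda\O},\Ce)}\res\left(\frac{(\lambda t)^{N-s+k}}{(N-s)_{k+1}}\zeta_{\lambda A,\lambda\O}(s;\d),\omega\right).
\end{equation}
I would then substitute \eqref{scaling_V} on the left-hand side and \eqref{scaling_z} on the right, observing that
\[
(\lambda t)^{N-s+k}\,\lambda^{s}=\lambda^{N+k}\,t^{N-s+k},
\]
so the factor $\lambda^{N+k}$ factors out of each residue and cancels with the $\lambda^{N+k}$ coming from \eqref{scaling_V}. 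This leaves exactly the right-hand side of \eqref{point_form_w_d} with $\zeta_{A,\O}(s;\d/\lambda)$ in place of $\zeta_{A,\O}(s)$. Finally, by Remark~\ref{holo_diff} (the $\d$-dependence is entire) together with Remark~\ref{a_b_rem}(b) (residues at poles are $\d$-independent), the sum of residues is independent of the choice of $\d/\lambda$, and the set of visible poles coincides with $\po(\zeta_{A,\O},\Ce)$.

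I do not expect any serious obstacle: the argument is bookkeeping with the two scaling identities. The main things to double-check carefully are (i) that the constant factor $\lambda^{N+k}$ can indeed be pulled inside the residue (it can, since it does not depend on $s$), (ii) that applying \eqref{scaling_z} inside the residue does not alter the pole structure (true because $\lambda^{s}$ is entire and nonvanishing), and (iii) that the admissible range for $t$ is exactly the preimage under $t\mapsto \lambda t$ of the interval $(0,\min\{1,\d,B^{-1}\})$ in Theorem~\ref{str_pointwise_formula_d_bez}, which gives the stated $(0,\lambda^{-1}\min\{1,\d,B^{-1}\})$.
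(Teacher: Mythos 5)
Your proposal is correct and follows essentially the same route as the paper's own proof: apply Theorem \ref{str_pointwise_formula_d_bez} to the strongly $d$-languid RFD $(\lambda A,\lambda\O)$, use the scaling identity $V^{[k]}_{\lambda A,\lambda\O}(\tau)=\lambda^{N+k}V^{[k]}_{A,\O}(\tau/\lambda)$ together with $\zeta_{\lambda A,\lambda\O}(s;\d)=\lambda^{s}\zeta_{A,\O}(s;\d\lambda^{-1})$, and substitute $t:=\tau/\lambda$ after cancelling the factor $\lambda^{N+k}$. Your extra remarks on the $\d$-independence of the residues and the invariance of the pole structure under multiplication by $\lambda^{s}$ are consistent with the paper and fill in the same bookkeeping.
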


\begin{proof}
Let us denote by $V_\lambda^{[k]}(\tau)$ the $k$-th primitive of the function
$$
\tau\mapsto|(\lambda A)_\tau\cap\lambda\O|.
$$
It is easy to show that $V_\lambda^{[0]}(\tau)=\lambda^NV^{[0]}(\tau/\lambda)$ (see also \cite[Lemma 4.6.10]{fzf}).
Using this scaling property, we then see that
\begin{equation}
V_\lambda^{[1]}(\tau)=\int_0^{\tau}V_\lambda^{[0]}(t)\di t=\lambda^N\int_0^{\tau}V^{[0]}_{A,\O}(t/\lambda)\di t=\lambda^{N+1}\int_0^{\tau/\lambda}V^{[0]}_{A,\O}(\xi)\di\xi,
\end{equation}
or, in other words, $V_\lambda^{[1]}(\tau)=\lambda^{N+1}V^{[1]}_{A,\O}(\tau/\lambda)$. Hence, by induction, we deduce that
\begin{equation}\label{lambda_k}
V_\lambda^{[k]}(\tau)=\lambda^{N+k}V^{[k]}_{A,\O}(\tau/\lambda),
\end{equation}
for all nonnegative integers $k$.

We now apply Theorem \ref{str_pointwise_formula_d_bez} to the relative fractal drum $(\lambda A,\lambda\O)$ and obtain the following exact fractal tube formula, valid pointwise for all $\tau\in(0,\min\{1,\d,B^{-1}\})$:
\begin{equation}\label{llll}
V_{\lambda}^{[k]}(\tau)=\sum_{\omega\in\po({\zeta}_{\lambda A,\lambda\O},\Ce)}\res\left(\frac{\tau^{N-s+k}}{(N\!-\! s)_{k+1}}{\zeta}_{\lambda A,\lambda\O}(s;\d),\omega\right).
\end{equation}

Next, combining \eqref{lambda_k} with \eqref{llll} and the scaling property of the relative distance zeta function (namely, $\zeta_{\lambda A,\lambda\O}(s)=\lambda^s\zeta_{A,\O}(s)$; see \cite[Section 2.3]{refds} or \cite[Theorem 4.1.38]{fzf}), we deduce that
\begin{equation}
\lambda^{N+k}V^{[k]}_{A,\O}(\tau/\lambda)=\sum_{\omega\in\po({\zeta}_{A,\O},\Ce)}\res\left(\frac{\tau^{N-s+k}\lambda^s}{(N\!-\! s)_{k+1}}{\zeta}_{A,\O}(s;\d\lambda^{-1}),\omega\right).
\end{equation}
Finally, we complete the proof of the corollary by multiplying the above identity by $\lambda^{-N-k}$ and introducing a new variable $t:=\tau/\lambda$.
\end{proof}

\begin{remark}
We point out that an analogous corollary can be stated in terms of the relative tube zeta function and the exact pointwise tube formula of Theorem~\ref{str_pointwise_formula}.
\end{remark}

The most interesting situation is, of course, the case when we can apply Theorems \ref{pointwise_formula_rd} and \ref{str_pointwise_formula_d_bez} or Corollary \ref{str_pointwise_formula_d} at the level $k=0$.
We now state the corresponding corollaries of these two theorems as a separate (and single) theorem.

\begin{theorem}[Pointwise fractal tube formula via $\zeta_{A,\O}$; level $k=0$]\label{pointwise_thm_d}\mbox{}

\medskip

$(i)$\ \ Under the same hypotheses as in Theorem~\ref{pointwise_formula_rd}, with $k:=0$, and using the same notation as in that theorem, with $\kappa_d<0$, the following pointwise fractal tube formula with error term, expressed in terms of the distance zeta function $\zeta_{A,\O}:=\zeta_{A,\O}(\,\cdot\,;\d)$, 
holds for all $t\in(0,\d)$$:$
\begin{equation}\label{point_tube_form_d}
|A_t\cap\O|=\sum_{\omega\in\po({\zeta}_{A,\O},\bm{W})}\res\left(\frac{t^{N-s}}{N\!-\! s}{\zeta}_{A,\O}(s),\omega\right)+R^{[0]}_{A,\O}(t),
\end{equation}
where $R^{[0]}_{A,\O}(t)$ is the error term given by formula~\eqref{error_term_d} with $k:=0$.
Furthermore, we have the following pointwise error estimate$:$
\begin{equation}\label{est_0_1}
R^{[0]}_{A,\O}(t)=O(t^{N-\sup S})\quad\mathrm{ as }\quad t\to 0^+.
\end{equation}
Moreover, if $S(\tau)<\sup S$ for every $\tau\in\eR$ $($i.e., if the screen $\bm S$ lies strictly to the left of the vertical line $\{\re s=\sup S\}$$)$, we then have the following stronger pointwise error estimate:
\begin{equation}\label{S(t)<sup_0_d}  
R^{[0]}_{A,\O}(t)=o(t^{N-\sup S})\quad\mathrm{ as }\quad t\to 0^+.
\end{equation}

\medskip

$(ii)$\ \ Finally, under the same hypotheses as in Theorem \ref{str_pointwise_formula_d_bez} or Corollary~\ref{str_pointwise_formula_d}, with $k:=0$ and $\kappa_d<1$, and if, in addition $(\lambda A,\lambda \O)$ is strongly $d$-languid for some $\lambda>0$, then the fractal tube formula \eqref{point_tube_form_d} holds pointwise for all $t\in(0,\lambda^{-1}\min\{1,\d,B^{-1}\})$, with $R^{[0]}_{A,\O}(t)\equiv 0$ and $\bm{W}:=\Ce$; so that \eqref{point_tube_form_d} becomes an {\rm exact} fractal tube formula in this case.
\end{theorem}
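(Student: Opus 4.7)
The plan is to deduce this theorem as a direct specialization of Theorem~\ref{pointwise_formula_rd} (for part $(i)$) and of Theorem~\ref{str_pointwise_formula_d_bez} together with Corollary~\ref{str_pointwise_formula_d} (for part $(ii)$), taking $k := 0$ in each case and verifying that the integrality/positivity constraints on the level $k$ are compatible with the respective assumptions $\kappa_d < 0$ and $\kappa_d < 1$.

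For part $(i)$, I would observe first that the hypothesis $k > \kappa_d$ of Theorem~\ref{pointwise_formula_rd} becomes, at level $k = 0$, precisely the assumption $\kappa_d < 0$, while $k = 0$ is trivially a nonnegative integer. Hence Theorem~\ref{pointwise_formula_rd} applies verbatim. Since the Pochhammer symbol satisfies $(N - s)_{k+1} = (N - s)_{1} = N - s$ when $k = 0$ (using \eqref{pochamer}), formula \eqref{point_form_d} collapses to \eqref{point_tube_form_d}, and the pointwise integral representation \eqref{error_term_d} for the error term becomes the integral representation of $R^{[0]}_{A,\O}(t)$ as stated. The pointwise error estimates \eqref{est_0_1} and \eqref{S(t)<sup_0_d} are then immediate specializations of the estimates stated at the end of Theorem~\ref{pointwise_formula_rd}, namely the $O$- and $o$-estimates with $k = 0$.

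For part $(ii)$, the hypothesis $k > \kappa_d - 1$ of Theorem~\ref{str_pointwise_formula_d_bez} becomes, at $k = 0$, the condition $\kappa_d < 1$, which is exactly the standing assumption. If the RFD $(A,\O)$ itself is strongly $d$-languid, Theorem~\ref{str_pointwise_formula_d_bez} applied with $k = 0$ yields \eqref{point_tube_form_d} with $R^{[0]}_{A,\O}\equiv 0$ and $\bm{W} := \Ce$, valid pointwise for $t \in (0, \min\{1, \d, B^{-1}\})$. If, instead, only the scaled RFD $(\lambda A,\lambda\O)$ is strongly $d$-languid for some $\lambda > 0$, then Corollary~\ref{str_pointwise_formula_d} (again at $k = 0$) gives the same exact formula \eqref{point_tube_form_d} on the rescaled interval $(0, \lambda^{-1}\min\{1, \d, B^{-1}\})$, by the scaling property \eqref{lambda_k} combined with the scaling property of $\zeta_{A,\O}$; once more, $(N - s)_{1} = N - s$ produces the stated denominator.

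There is no genuine obstacle beyond bookkeeping: one must only check that the two numerical sharpness conditions $\kappa_d < 0$ (respectively $\kappa_d < 1$) coincide with the integer-level inequalities $k > \kappa_d$ (resp., $k > \kappa_d - 1$) when $k = 0$, and then invoke the previously established Theorems~\ref{pointwise_formula_rd} and \ref{str_pointwise_formula_d_bez} (or Corollary~\ref{str_pointwise_formula_d}). The only mildly delicate point is to confirm that substituting $k = 0$ into the Pochhammer denominator $(N - s)_{k+1}$ yields $N - s$, and hence precisely the form announced in \eqref{point_tube_form_d}; this is immediate from \eqref{pochamer}.
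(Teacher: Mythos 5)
Your proposal is correct and follows essentially the same route as the paper, which states Theorem~\ref{pointwise_thm_d} precisely as the level $k=0$ specialization of Theorem~\ref{pointwise_formula_rd} (for part $(i)$) and of Theorem~\ref{str_pointwise_formula_d_bez}/Corollary~\ref{str_pointwise_formula_d} (for part $(ii)$), with no additional argument beyond noting that $k=0>\kappa_d$ (resp.\ $k=0>\kappa_d-1$) translates into $\kappa_d<0$ (resp.\ $\kappa_d<1$) and that $(N-s)_{k+1}$ reduces to $N-s$. Your bookkeeping of the error estimates and of the rescaled interval $(0,\lambda^{-1}\min\{1,\d,B^{-1}\})$ matches the paper's intended deduction.
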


The exact analog of Remark \ref{5.1.13.1/4}, 
and Remark \ref{5.1.13.3/4} holds in the present situation, except for the relative tube zeta function $\widetilde{\zeta}_{A,\O}$ replaced by the relative distance zeta function ${\zeta}_{A,\O}$ of the RFD $(A,\O)$.
We state the most interesting case in a separate theorem which is of course, the corollary of Theorem \ref{pointwise_thm_d} corresponding to the level $k=0$.

\begin{theorem}[Pointwise fractal tube formula via ${\zeta}_{A,\O}$; level $k=0$ and the case of simple poles]\label{5.3.15.1/2}
Assume that the hypotheses of Theorem \ref{pointwise_thm_d} hold.
Suppose, in addition, that all of the visible complex dimensions of the RFD $(A,\O)$ are simple $($i.e., all of the poles of ${\zeta}_{A,\O}$ or, equivalently, since $\ov{D}:=\ov{\dim}_B(A,\O)<N$ here, of $\widetilde{\zeta}_{A,\O}$, belonging to the window $\bm W$ are simple$)$.
Then, the pointwise fractal tube formula \eqref{point_tube_form_d}, expressed in terms of ${\zeta}_{A,\O}$, takes the following simpler form, valid for all $t\in(0,\d)$$:$
\begin{equation}\label{5.3.29.1/2}
|A_t\cap\O|=\sum_{\omega\in\po({\zeta}_{A,\O},\bm{W})}\frac{t^{N-\omega}}{N-\omega}\res\left({\zeta}_{A,\O}(s),\omega\right)+R^{[0]}_{A,\O}(t),
\end{equation}
where the $($pointwise$)$ error term $R^{[0]}_{A,\O}$ is the same as in Theorem \ref{pointwise_formula_rd} at level $k=0$ and hence, satisfies the same $($pointwise$)$ error estimates $[$\eqref{est_0_1} or \eqref{S(t)<sup_0_d}, depending on the hypotheses$]$ as in Theorem \ref{pointwise_thm_d}.
In particular, in the strongly languid case $($i.e., if $(\lambda A,\lambda\O)$ is strongly languid for some $\lambda>0$$)$, we have $R^{[0]}_{A,\O}\equiv 0$ and $\bm{W}:=\Ce$, so that \eqref{5.3.29.1/2} then becomes an {\rm exact} pointwise fractal tube formula, valid for all $t\in(0,\lambda^{-1}\min\{1,\d,B^{-1}\})$. 
\end{theorem}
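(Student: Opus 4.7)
The plan is to derive this theorem as an immediate corollary of Theorem~\ref{pointwise_thm_d} specialized to level $k=0$, by simplifying each residue in the sum under the hypothesis that every visible complex dimension is a simple pole. First, I would invoke Theorem~\ref{pointwise_thm_d}$(i)$ to write
\begin{equation*}
|A_t\cap\O|=\sum_{\omega\in\po({\zeta}_{A,\O},\bm{W})}\res\left(\frac{t^{N-s}}{N\!-\! s}{\zeta}_{A,\O}(s),\omega\right)+R^{[0]}_{A,\O}(t),
\end{equation*}
pointwise for all $t\in(0,\d)$, with the error term $R^{[0]}_{A,\O}$ given by \eqref{error_term_d} at level $k=0$ and obeying the estimates \eqref{est_0_1} and (under the strengthened screen hypothesis) \eqref{S(t)<sup_0_d}.

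Next, for each visible complex dimension $\omega\in\po(\zeta_{A,\O},\bm{W})$, I would compute the corresponding residue explicitly. Since $\re\omega\le\ov{\dim}_B(A,\O)<N$ by hypothesis, the point $s=\omega$ lies strictly to the left of $s=N$, so the auxiliary factor $s\mapsto t^{N-s}/(N-s)$ is holomorphic (and, for $t>0$, nonvanishing) on some open neighborhood of $\omega$. Combined with the assumption that $\omega$ is a simple pole of $\zeta_{A,\O}$, the standard rule for the residue of the product of a holomorphic function and a meromorphic function with a simple pole yields
\begin{equation*}
\res\left(\frac{t^{N-s}}{N-s}\zeta_{A,\O}(s),\omega\right)=\lim_{s\to\omega}(s-\omega)\frac{t^{N-s}}{N-s}\zeta_{A,\O}(s)=\frac{t^{N-\omega}}{N-\omega}\res(\zeta_{A,\O},\omega).
\end{equation*}

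Substituting this term-by-term into the sum yields \eqref{5.3.29.1/2}. Because the error term $R^{[0]}_{A,\O}$ is inherited unchanged from Theorem~\ref{pointwise_thm_d}, its asymptotic estimates transfer verbatim. For the strongly languid scenario, the same residue computation applied to Theorem~\ref{pointwise_thm_d}$(ii)$ produces the exact formula on $(0,\lambda^{-1}\min\{1,\d,B^{-1}\})$ with $\bm{W}:=\Ce$ and $R^{[0]}_{A,\O}\equiv 0$. I do not foresee any substantive obstacle: the whole content of the theorem is the simplicity-driven residue simplification, whose only essential input is the strict inequality $\ov{\dim}_B(A,\O)<N$, ensuring that the factor $(N-s)^{-1}$ remains regular at each pole of $\zeta_{A,\O}$ lying in $\bm W$.
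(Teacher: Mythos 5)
Your proposal is correct and follows exactly the route the paper intends: Theorem \ref{5.3.15.1/2} is stated as an immediate corollary of Theorem \ref{pointwise_thm_d} at level $k=0$, with the sum simplified via the standard residue computation at a simple pole, using that $\ov{\dim}_B(A,\O)<N$ (so every pole $\omega$ in $\bm W$ satisfies $\omega\neq N$ and the factor $t^{N-s}/(N-s)$ is holomorphic there), precisely as anticipated in Remarks \ref{5.1.13.1/4} and \ref{5.1.13.3/4}. No gaps; the error term and the strongly languid (exact) case are handled just as in the paper.
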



\subsection{Distributional Tube Formula via the Distance Zeta Function}\label{subsec_distr_dist}

In this subsection, we state the distributional analogs of the results from Section \ref{subsec_point_dist} above, still expressed in terms of the relative distance zeta function.
The proofs are completely analogous to the ones from Section \ref{sec_distr} for the case of the relative tube zeta function.
Again, we use the relative shell zeta function and the same scaling technique as in the proof of Corollary \ref{str_pointwise_formula_d} (and Theorem \ref{str_pointwise_formula_d_bez}) above in order to obtain the desired results under the hypotheses of $d$-languidity or of strong $d$-languidity. 

\begin{theorem}[Distributional fractal tube formula with error term, via $\zeta_{A,\O}$]\label{dist_error_d}
Let $(A,\O)$ be a $d$-languid relative fractal drum in $\eR^N$ for some $\d>0$ and $\kappa_d\in\eR$.
Furthermore, assume that $\ov{\dim}_B(A,\O)<N$.
Then, for every $k\in\Ze$, the distribution $\mathcal V^{[k]}_{A,\O}$ in $\mathcal{K}'(0,\d)$ $($and hence, also in $\mathcal{D}'(0,\d)$$)$ is given by the following distributional fractal tube formula, with error term and expressed in terms of the distance zeta function ${\zeta}_{A,\O}:={\zeta}_{A,\O}(\,\cdot\,;\d)$$:$
\begin{equation}\label{dist_form_error_d}
\mathcal V^{[k]}_{A,\O}(t)=\sum_{\omega\in\po({\zeta}_{A,\O},\bm{W})}\res\left(\frac{t^{N-s+k}}{(N\!-\! s)_{k+1}}{\zeta}_{A,\O}(s),\omega\right)+\mathcal R^{[k]}_{A,\O}(t).
\end{equation}
That is, the action of $\mathcal V^{[k]}_{A,\O}(t)$ on an arbitrary test function $\varphi\in\mathcal{K}(0,\d)$ is given by
\begin{equation}\label{error_action_}
\begin{aligned}
\big\langle\mathcal V^{[k]}_{A,\O},\varphi\big\rangle&=\sum_{\omega\in\po({\zeta}_{A,\O},\bm{W})}\res\left(\frac{\{\mathfrak{M}\varphi\}(N\!-\! s\!+\! 1\!+k)\,{\zeta}_{A,\O}(s)}{(N\!-\! s)_{k+1}},\omega\right)+\big\langle\mathcal R^{[k]}_{A,\O},\varphi\big\rangle.
\end{aligned}
\end{equation}
Here, the distribution $\mathcal R^{[k]}_{A,\O}$ in $\mathcal{K}'(0,\d)$ is the distributional error term given for all $\varphi\in\mathcal{K}(0,\d)$ by
\begin{equation}\label{R_distr_d}
\big\langle\mathcal R^{[k]}_{A,\O},\varphi\big\rangle=\frac{1}{2\pi\I}\int_S\frac{\{\mathfrak{M}\varphi\}(N\!-\! s\!+\! 1\!+\! k)\,{\zeta}_{A,\O}(s)}{(N\!-\! s)_{k+1}}\di s.
\end{equation}
Furthermore, the distribution $\mathcal R^{[k]}_{A,\O}(t)$ is of asymptotic order at most $t^{N-\sup S+k}$ as $t\to 0^+$; i.e,
\begin{equation}
\mathcal R^{[k]}_{A,\O}(t)=O(t^{N-\sup S+k})\quad\mathrm{as}\quad t\to0^+
\end{equation}
in the sense of Definition~\ref{dist_order_dis}.

Moreover, if $S(\tau) < \sup S$ for all $\tau\in\eR$ $($that is, if the screen lies strictly
to the left of the line $\re s =\sup S)$, then $\mathcal R^{[k]}_{A,\O}(t)$ is of asymptotic order less than $t^{N-\sup S+k}$; i.e.,
\begin{equation}\label{dist_estimate_o_d}
\mathcal R^{[k]}_{A,\O}(t)=o(t^{N-\sup S+k})\quad\mathrm{as}\quad t\to0^+.
\end{equation}
\end{theorem}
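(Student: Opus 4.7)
The plan is to reduce the statement to the distributional tube formula already established via the tube zeta function (Theorem~\ref{dist_error}), passing through the shell zeta function $\breve{\zeta}_{A,\O}$ which, by design, links $\widetilde{\zeta}_{A,\O}$ to $\zeta_{A,\O}$ in a clean algebraic way. The same scheme was used in the pointwise case (Theorem~\ref{pointwise_formula_rd} and the discussion following it) and will be adapted here to distributions.

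First, I would invoke Lemma~\ref{dist_lang} to convert $d$-languidity of $\zeta_{A,\O}$ with exponent $\kappa_d$ into languidity of $\breve{\zeta}_{A,\O}(\,\cdot\,;\d)$ with exponent $\kappa:=\kappa_d-1$ (and the same $\d$). Then, proceeding exactly as in the proof of Theorem~\ref{dist_error} but starting from the Mellin inversion identity \eqref{shell_invv} for the shell zeta function (instead of \eqref{tube_inversion_formula}), combined with the elementary identity \eqref{shell_tb}, I would obtain, for any $k\in\Ze$ and any $\varphi\in\mathcal{K}(0,\d)$, the distributional shell-tube formula
\begin{equation*}
\mathcal{V}^{[k]}_{A,\O}(t)=\sum_{\omega\in\po(\breve{\zeta}_{A,\O},\bm{W})}\res\!\left(\frac{t^{N-s+k}}{(N\!-\!s\!+\!1)_k}\breve{\zeta}_{A,\O}(s),\omega\right)+|A_\d\cap\O|\,\frac{t^k}{(1)_k}+\breve{\mathcal{R}}^{[k]}_{A,\O}(t),
\end{equation*}
valid in $\mathcal{K}'(0,\d)$, with $\breve{\mathcal{R}}^{[k]}_{A,\O}$ given by \eqref{5.3.14.1/2} (in its distributional form, analogous to \eqref{R_distr}). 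The extra constant term arises exactly as in \eqref{5.15} from the $|A_\d\cap\O|$ in \eqref{shell_tb}; for $k<0$ it is recovered automatically from the case $k+q>\kappa+1$ by distributional differentiation applied $q$ times, since differentiating a constant kills it only distributionally against test functions in $\mathcal{D}(0,\d)$ but the framework of $\mathcal{K}'(0,\d)$ keeps track of it as a coefficient.

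Next, I would absorb the constant term into the sum of residues by observing that, by Theorem~\ref{an_shell}(a), $s=N$ is a simple pole of $\breve{\zeta}_{A,\O}$ with residue $-|A_\d\cap\O|$, while the factor $1/(N-s+1)_k$ is holomorphic and equals $1/(1)_k$ at $s=N$ (because its possible poles lie either in $\{N+1,\dots,N+k\}$ for $k\geq 0$ or strictly to the left of $N$ for $k<0$, using \eqref{pochamer_gamma}). Hence the residue of the integrand at $s=N$ is exactly $-|A_\d\cap\O|\,t^k/(1)_k$, which cancels the extra term. Then I would apply the functional equation \eqref{shell_dist}, extended meromorphically to $\bm{W}$ by Corollary~\ref{5.3.4.1/2}, together with the algebraic identity $(N-s)(N-s+1)_k=(N-s)_{k+1}$ (valid for every $k\in\Ze$ in the gamma-function sense of \eqref{pochamer_gamma}); this rewrites each residue as one of $t^{N-s+k}\zeta_{A,\O}(s)/(N-s)_{k+1}$. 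Because the hypothesis $\ov{\dim}_B(A,\O)<N$ excludes $N$ from $\po(\zeta_{A,\O},\bm{W})$, Lemma~\ref{reziduuumi} guarantees that the index set coincides with $\po(\zeta_{A,\O},\bm{W})$, and a parallel translation transforms $\breve{\mathcal{R}}^{[k]}_{A,\O}$ into the $\mathcal{R}^{[k]}_{A,\O}$ of \eqref{R_distr_d}, yielding \eqref{dist_form_error_d} and \eqref{error_action_}.

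Finally, the asymptotic estimates on $\mathcal{R}^{[k]}_{A,\O}(t)$ would be obtained by repeating the argument of Theorem~\ref{dist_estimate} verbatim, now applied to the integral \eqref{R_distr_d}: substituting $\varphi\mapsto\varphi_a$ and using \eqref{mellin_a}, the integrand on the screen acquires a factor $a^{N-\re s+k}$, whose supremum along $\bm S$ is $a^{N-\sup S+k}$, giving the $O$-bound; the refinement to $o(t^{N-\sup S+k})$ when the screen lies strictly to the left of $\{\re s=\sup S\}$ follows from the standard splitting trick (split $\bm S$ into $|\im s|\leq T$ and $|\im s|>T$, pick $T$ so that the tail is $<\varepsilon$, exploit compactness on the bounded part). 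The main obstacle is really the bookkeeping in the second step: one must verify that the residue at $s=N$ cancels the boundary constant $|A_\d\cap\O|\,t^k/(1)_k$ for every $k\in\Ze$ (not just $k\geq 0$), and that the conversion $(N-s+1)_k\leadsto(N-s)_{k+1}$ through the functional equation does not introduce spurious zeros or poles inside $\bm W$ that would corrupt the index set of the residue sum; both verifications rest on the gamma-function interpretation \eqref{pochamer_gamma} and on the structural result recorded in Lemma~\ref{reziduuumi}.
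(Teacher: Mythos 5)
Your proposal is correct and follows essentially the same route as the paper, which obtains this theorem by transferring $d$-languidity to the shell zeta function (Lemma~\ref{dist_lang}), running the distributional argument of Theorem~\ref{dist_error} for $\breve{\zeta}_{A,\O}$, cancelling the pole at $s=N$ against the boundary term, converting to $\zeta_{A,\O}$ via the functional equation \eqref{shell_dist} (so that $(N-s)(N-s+1)_k=(N-s)_{k+1}$ and the index set becomes $\po(\zeta_{A,\O},\bm{W})$ by Lemma~\ref{reziduuumi}), and estimating the error as in Theorem~\ref{dist_estimate}. One small clarification: for $k<0$ the boundary term $|A_\d\cap\O|\,t^k/(1)_k$ simply vanishes, since $1/(1)_k=1/\Gamma(1+k)=0$, which is exactly what your residue-at-$N$ bookkeeping delivers -- it is not that $\mathcal{K}'(0,\d)$ ``keeps track'' of a constant under distributional differentiation (the derivative of a constant is zero there too, as test functions in $\mathcal{K}(0,\d)$ vanish with all derivatives at both endpoints).
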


In the case of a (possibly scaled) strongly $d$-languid relative fractal drum, as before, we obtain a distributional formula without an error term, as stated in the next theorem.

\begin{theorem}[Exact distributional fractal tube formula via $\zeta_{A,\O}$]\label{dist_no_error_d}
Let $(A,\O)$ be a relative fractal drum in $\eR^N$ and assume also that $\ov{\dim}_B(A,\O)<N$.
Furthermore, assume that there exists $\lambda>0$ such that $(\lambda A,\lambda\O)$ is strongly $d$-languid for some $\d>0$, $\kappa_d\in\eR$, and let $\d_0:=\lambda^{-1}\min\{1,\d,B^{-1}\}$.\footnote{Here, $B$ is the constant appearing in condition {\bf L2'} for the function $\zeta_{\lambda A}(s,\lambda\O;\d)$.}
Then, for every $k\in\Ze$, the distribution $\mathcal V^{[k]}_{A,\O}$ in $\mathcal{D}'(0,\d_0)$ is given in terms of ${\zeta}_{A,\O}:={\zeta}_{A,\O}(\,\cdot\,;\d)$ by
\begin{equation}\label{dist_form_no_error_d}
\mathcal V^{[k]}_{A,\O}(t)=\sum_{\omega\in\po({\zeta}_{A,\O},\Ce)}\res\left(\frac{t^{N-s+k}}{(N\!-\! s)_{k+1}}{\zeta}_{A,\O}(s),\omega\right).
\end{equation}
That is, the action of $\mathcal V^{[k]}_{A,\O}$ on an arbitrary test function $\varphi\in\mathcal{D}(0,\d_0)$ is given by
\begin{equation}\label{no_error_action_d}
\begin{aligned}
\big\langle\mathcal V^{[k]}_{A,\O}(t),\varphi\big\rangle&=\!\!\!\sum_{\omega\in\po({\zeta}_{A,\O},\Ce)}\!\!\!\res\left(\frac{\{\mathfrak{M}\varphi\}(N\!-\! s\!+\! 1\!+\! k)\,{\zeta}_{A,\O}(s)}{(N\!-\! s)_{k+1}},\omega\right).\\
\end{aligned}
\end{equation}
\end{theorem}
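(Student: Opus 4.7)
The plan is to mimic the proof of Theorem \ref{dist_no_error} for the tube zeta function, combined with the scaling device introduced in Corollary \ref{str_pointwise_formula_d}. More precisely, I would apply the distributional tube formula with error term (Theorem \ref{dist_error_d}) to the rescaled RFD $(\lambda A,\lambda\O)$ along the sequence of screens $(\bm{S}_m)_{m\geq 1}$ furnished by strong $d$-languidity, show that the corresponding distributional error terms vanish as $m\to\ty$ when tested against functions compactly supported in $(0,B^{-1})$, and then unscale to transfer the resulting exact formula to $(A,\O)$ on $\mathcal D'(0,\delta_0)$ with $\delta_0=\lambda^{-1}\min\{1,\delta,B^{-1}\}$.

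\textbf{Reduction to the case of large $k$.} Fix $k\in\Ze$ and choose $q\in\eN$ with $k+q>\kappa_d$, so that Theorem \ref{dist_error_d} applies at level $k+q$. By the definition of the distributional derivative (cf.~\eqref{dist_der_V}), one has $\langle\mathcal V^{[k]}_{A,\O},\varphi\rangle=(-1)^q\langle\mathcal V^{[k+q]}_{A,\O},\varphi^{(q)}\rangle$, so it suffices to establish \eqref{dist_form_no_error_d} at level $k+q$ and then distributionally differentiate $q$ times. On the residue side, this is compatible because $t^{N-s+k+q}/(N-s)_{k+q+1}$ is the $q$-th primitive in $t$ of $t^{N-s+k}/(N-s)_{k+1}$.

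\textbf{Vanishing of the error term.} For a test function $\varphi\in\mathcal D(0,\delta_0)$ with support contained in $(0,\nu B^{-1})$ for some $\nu\in(0,1)$, Theorem \ref{dist_error_d} applied to $(\lambda A,\lambda\O)$ with screen $\bm{S}_m$ produces the error term
\begin{equation*}
\langle\mathcal R^{[k+q]}_m,\varphi\rangle=\frac{1}{2\pi\I}\int_{\bm{S}_m}\frac{\{\mathfrak{M}\varphi\}(N-s+1+k+q)\,\zeta_{\lambda A,\lambda\O}(s;\delta)}{(N-s)_{k+q+1}}\,\D s.
\end{equation*}
For $\re s<0$, the compact support of $\varphi$ gives $|\{\mathfrak{M}\varphi\}(N-s+1+k+q)|\leq (\nu B^{-1})^{-\re s}C_\varphi$, where $C_\varphi:=\int_0^{+\ty}t^{N+k+q}|\varphi(t)|\,\D t$. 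Combining this with the strong $d$-languidity bound $|\zeta_{\lambda A,\lambda\O}(S_m(\tau)+\I\tau;\delta)|\leq CB^{|S_m(\tau)|}(|\tau|+1)^{\kappa_d}$, the estimate $|(N-s)_{k+q+1}|\geq(\sqrt{1+\tau^2}\,)^{k+q+1}$, and the uniform Lipschitz bound on $(\bm{S}_m)_{m\geq 1}$, one obtains, exactly as in the estimate \eqref{5.2.13.1/2}, an inequality of the form
\begin{equation*}
|\langle\mathcal R^{[k+q]}_m,\varphi\rangle|\leq K\nu^{|\sup S_m|}\int_{-\ty}^{+\ty}\frac{(1+|\tau|)^{\kappa_d}}{(\sqrt{1+\tau^2}\,)^{k+q+1}}\,\D\tau,
\end{equation*}
where the $\tau$-integral is convergent thanks to $k+q+1>\kappa_d+1$. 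Since $|\sup S_m|\to\ty$ as $m\to\ty$, the right-hand side tends to $0$, and hence $\mathcal R^{[k+q]}_m\to 0$ in $\mathcal D'(0,\min\{1,\delta,B^{-1}\})$.

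\textbf{Unscaling and main obstacle.} Passing to the limit $m\to\ty$ in the formula \eqref{dist_form_error_d} for $(\lambda A,\lambda\O)$ at level $k+q$ yields the exact distributional identity \eqref{dist_form_no_error_d} for $(\lambda A,\lambda\O)$ in $\mathcal D'(0,\min\{1,\delta,B^{-1}\})$. Combining the scaling identity $V^{[k+q]}_{\lambda A,\lambda\O}(\tau)=\lambda^{N+k+q}V^{[k+q]}_{A,\O}(\tau/\lambda)$ (as in the proof of Corollary \ref{str_pointwise_formula_d}) with the scaling property $\zeta_{\lambda A,\lambda\O}(s;\delta)=\lambda^s\zeta_{A,\O}(s;\delta/\lambda)$ and the change of variable $t:=\tau/\lambda$, one then transfers the identity to $(A,\O)$ on $\mathcal D'(0,\delta_0)$, the $\lambda$-powers canceling as in the pointwise scaled case. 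Finally, distributional differentiation of order $q$ restores the formula at the original level $k$. The main obstacle is the $m$-uniform control of the error integral: the geometric decay factor $\nu^{|\sup S_m|}$ that drives the vanishing of $\mathcal R^{[k+q]}_m$ hinges crucially on $\varphi$ being supported strictly inside $(0,B^{-1})$, which is precisely what forces the restriction $t\in(0,\delta_0)$ in the statement. A secondary subtlety is checking that the poles of $s\mapsto(N-s)_{k+q+1}^{-1}$ at $s=N,N+1,\ldots,N+k+q$ do not interfere with the contour deformations, which is ensured by the hypothesis $\ov{\dim}_B(A,\O)<N$ combined with the choice of the constant $c>N$ coming from the definition of $d$-languidity.
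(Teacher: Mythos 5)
Your proposal is correct and takes essentially the same route as the paper, which proves this result by repeating the argument of Theorem \ref{dist_no_error} (letting the error term of the formula with error term vanish along the sequence of screens $\bm{S}_m$ provided by strong languidity, for test functions supported in $(0,\nu B^{-1})$), combined with the scaling device of Corollary \ref{str_pointwise_formula_d} and the level-shifting/distributional differentiation trick. The only cosmetic difference is that you estimate the error directly in terms of $\zeta_{\lambda A,\lambda\O}(s)/(N-s)_{k+q+1}$ by invoking Theorem \ref{dist_error_d}, whereas the paper routes the estimate through the shell zeta function; by the functional equation \eqref{shell_dist} and Lemma \ref{dist_lang} these are the same bound.
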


We conclude this section by stating as a separate (and single) theorem the most interesting special case of Theorems \ref{dist_error_d} and \ref{dist_no_error_d}, when $k=0$.

\begin{theorem}[Distributional fractal tube formula via ${\zeta}_{A,\O}$; level $k=0$]\label{dist_tube_formula_d}
Under the same hypotheses as in Theorem~\ref{dist_error_d}, with $k:=0$, we have the following distributional equality in $\mathcal{K}'(0,\d)$ for the relative tube function $t\mapsto|A_t\cap\O|$ of the relative fractal drum $(A,\O)$ in $\eR^N:$
\begin{equation}\label{distr_tube_d}
|A_t\cap\O|=\sum_{\omega\in\po({\zeta}_{A,\O},\bm{W})}\res\left(\frac{{t^{N-s}}}{N\!-\! s}{\zeta}_{A,\O}(s),\omega\right)+\mathcal R^{[0]}_{A,\O}(t),
\end{equation}
where $\mathcal R^{[0]}_{A,\O}(t)$ is given by \eqref{R_distr_d}  for $k=0$ and $\mathcal R^{[0]}_{A,\O}(t)=O(t^{N-\sup S})$ as $t\to 0^+$ or, if $S(\tau) < \sup S$ for all $\tau\in\eR$, then  $\mathcal R^{[0]}_{A,\O}(t)=o(t^{N-\sup S})$ as $t\to 0^+$. 

Moreover, under the same hypotheses as in Theorem \ref{dist_no_error_d}, with $k:=0$, and if $(\lambda A,\lambda \O)$ is strongly $d$-languid for some $\lambda>0$, then the analog of \eqref{distr_tube_d} holds in $\mathcal{D}'(0,\d_0)$, where $\d_0:=\lambda^{-1}\min\{1,\d,B^{-1}\}$ and with $\mathcal R^{[0]}_{A,\O}(t)\equiv 0$ and $\bm{W}:=\Ce$; so that we obtain an {\rm exact} distributional fractal tube formula in this case.

Finally, if, in addition, each visible complex dimension of $(A,\O)$ is simple $($i.e., if each pole of $\zeta_{A,\O}$ or, equivalently, of $\widetilde{\zeta}_{A,\O}$, located in $\bm W$ is simple$)$, then the sum over the complex dimensions in \eqref{distr_tube_d} $($or in its analog with $\bm{W}:=\Ce$, for the exact tube formula$)$ becomes
\begin{equation}\label{5.3.41.1/2}
\sum_{\omega\in\po({\zeta}_{A,\O},\bm{W})}\frac{{t^{N-\omega}}}{N\!-\! \omega}\res\left({\zeta}_{A,\O}(s),\omega\right).
\end{equation}
\end{theorem}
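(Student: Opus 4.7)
The plan is simply to specialize the general distributional fractal tube formulas, Theorems~\ref{dist_error_d} and~\ref{dist_no_error_d}, to the case $k=0$, and then to rewrite the residues under the additional hypothesis that every visible complex dimension is simple. For the first assertion, I would invoke Theorem~\ref{dist_error_d} with $k:=0$; since the Pochhammer factor reduces to $(N-s)_{1}=N-s$, formulas \eqref{dist_form_error_d} and \eqref{R_distr_d} specialize verbatim to \eqref{distr_tube_d} and to the expression for $\mathcal R^{[0]}_{A,\O}$ given in \eqref{R_distr_d} at $k=0$. Since $t\mapsto|A_t\cap\O|$ is continuous (and hence locally integrable) on $(0,+\ty)$, Definition~\ref{Vdist} identifies $\mathcal{V}^{[0]}_{A,\O}$ with the regular distribution in $\mathcal{K}'(0,\d)$ generated by this function, so that the left-hand sides match. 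The asymptotic estimates $\mathcal R^{[0]}_{A,\O}(t)=O(t^{N-\sup S})$ and (when $S(\tau)<\sup S$ for all $\tau\in\eR$) $\mathcal R^{[0]}_{A,\O}(t)=o(t^{N-\sup S})$ are inherited immediately from the general estimates stated in Theorem~\ref{dist_error_d}.

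For the exact (error-term-free) statement in the strongly $d$-languid case, the argument is identical, except that Theorem~\ref{dist_no_error_d} is applied in place of Theorem~\ref{dist_error_d} at $k:=0$, with $\bm{W}$ replaced by $\Ce$, with the ambient space of distributions taken to be $\mathcal{D}'(0,\d_0)$, and with $\d_0:=\lambda^{-1}\min\{1,\d,B^{-1}\}$, where $B$ is the constant in hypothesis~{\bf L2'} for $\zeta_{\lambda A,\lambda\O}$. The scaling argument that enables the reduction from $(A,\O)$ to $(\lambda A,\lambda\O)$ has already been carried out inside Theorem~\ref{dist_no_error_d} itself (in the spirit of Corollary~\ref{str_pointwise_formula_d}), so no new work is required here.

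For the simplification displayed in \eqref{5.3.41.1/2} under the simple-pole hypothesis, I would evaluate
\[
\res\!\left(\frac{t^{N-s}}{N-s}\,\zeta_{A,\O}(s),\omega\right)
\]
at each simple pole $\omega$ of $\zeta_{A,\O}$ lying in $\bm{W}$ (respectively, in $\Ce$). The key point is that since $\ov{\dim}_B(A,\O)<N$ by hypothesis, Theorem~\ref{an_rel}(a) guarantees that $\zeta_{A,\O}$ is holomorphic on $\{\re s>\ov{\dim}_B(A,\O)\}$; in particular, $s=N$ is not a pole of $\zeta_{A,\O}$. Hence $\omega\neq N$ for every $\omega\in\po(\zeta_{A,\O},\bm{W})$, so the auxiliary factor $s\mapsto t^{N-s}/(N-s)$ is holomorphic at $s=\omega$, and the residue factors as $\frac{t^{N-\omega}}{N-\omega}\res(\zeta_{A,\O},\omega)$, which is exactly the summand appearing in \eqref{5.3.41.1/2}. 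The parenthetical equivalence between simplicity of poles of $\zeta_{A,\O}$ and of $\widetilde\zeta_{A,\O}$ on $\bm{W}$ (in the range $\bm{W}\setminus\{N\}$, which is all of $\bm{W}$ here since $\ov{D}<N$) follows from Lemma~\ref{reziduuumi}.

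I do not anticipate a genuine obstacle, since the statement is a direct corollary of the two preceding distributional tube formulas combined with one elementary residue computation. The only points requiring care are bookkeeping in nature: matching the Pochhammer normalization $(N-s)_{k+1}$ at $k=0$ with the factor $N-s$, verifying that $N$ lies outside $\po(\zeta_{A,\O},\bm{W})$ so that the simple-pole residue identity applies without ambiguity, and correctly transporting the $O$/$o$ error estimates from the general formulation to the present one.
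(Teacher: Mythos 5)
Your proposal is correct and follows exactly the route the paper intends: Theorem \ref{dist_tube_formula_d} is stated there as the $k=0$ specialization of Theorems \ref{dist_error_d} and \ref{dist_no_error_d}, with the simple-pole simplification coming from the elementary residue computation at $\omega\neq N$ (valid since $\ov{\dim}_B(A,\O)<N$ ensures $N\notin\po(\zeta_{A,\O},\bm{W})$). Your bookkeeping on the Pochhammer factor $(N-s)_{1}=N-s$, the identification of $\mathcal{V}^{[0]}_{A,\O}$ with the regular distribution, and the transfer of the $O$/$o$ error estimates are all as in the paper.
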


\subsection{The Relative Mellin Zeta Function}\label{subsec_mellin}

In this subsection, we introduce a new fractal zeta function, called the {\em relative Mellin zeta function}.
Our motivation for doing so is to use this new zeta function in order to obtain a distributional tube formula which is valid on a larger space of test functions.
This extension will be required in order to obtain a Minkowski measurability criterion in \cite{ftf_b} (see also \cite{fzf}) but will not be needed elsewhere in the present  paper.
Nevertheless, we include it here since it is a natural extension of the present theory of fractal tube formulas and essentially follows from the same ideas already used in previous sections. 
 
We want to extend our distributional tube formulas to the space $\mathcal{K}(0,+\ty)$.\label{Kspace+}\footnote{Here, $\mathcal{K}(0,+\ty)$ is defined exactly as $\mathcal{K}(0,\d)$ just before Definition \ref{Vdist}, except for $\d$ replaced by $+\ty$, and in this case, we require that for every $\varphi\in\mathcal{K}(0,+\ty)$, $t^m\varphi^{(q)}(t)\to 0$ as $t\to +\ty$, where $\varphi^{(q)}$ denotes the $q$-th derivative of $\varphi$.}
Recall that in Definition \ref{drum}, we have assumed that an RFD $(A,\O)$ has the property that there exists $\d>0$ such that $\O\subseteq A_{\d}$.
Therefore, for an RFD $(A,\O)$ we observe that $A_\d\cap\O=\O$ for all $\d$ sufficiently large; for such values of $\d$, we have that $|A_{\d}\cap\O|=|\O|$, which enables us to redefine the tube zeta function in the following way.
Namely, assume that $\overline{D}:=\overline{\dim}_B(A,\O)<N$ and recall the functional equation \eqref{equ_tilde}, written in the following integral form: 
\begin{equation}
\int_{A_\d\cap\O}d(x,A)^{s-N}\di x=\d^{s-N}|A_\d\cap\O|+(N-s)\int_0^{\d}t^{s-N-1}|A_t\cap\O|\di t,
\end{equation}
initially valid for all $s\in\Ce$ such that $\re s>\overline{D}$.
Now, we may take the complex number $s$ to be in the vertical strip $\{\re s>\ov{D}\}\cap\{\re s<N\}$ and let $\d\to+\ty$ to obtain the following equality:
\begin{equation}\label{mellin-tube}
\int_{\O}d(x,A)^{s-N}\di x=(N-s)\int_0^{+\ty}t^{s-N-1}|A_t\cap\O|\di t.
\end{equation}
Observe that on the right-hand side of \eqref{mellin-tube}, we have the Mellin transform of the function $t^{-N}|A_t\cap\O|$ and this integral is absolutely convergent inside the vertical strip $\{\re s>\ov{D}\}\cap\{\re s<N\}$.
Indeed, to see this, we note that
\begin{equation}\label{right_hand_5}
\int_0^{+\ty}t^{s-N-1}|A_t\cap\O|\di t=\int_0^{1}t^{s-N-1}|A_t\cap\O|\di t+\int_1^{+\ty}t^{s-N-1}|A_t\cap\O|\di t,
\end{equation}
where the integral over $(0,1)$ is equal to $\widetilde{\zeta}_A(s,\O;1)$ and hence, is absolutely convergent on $\{\re s>\overline{D}\}$, while for the integral over $(1,+\ty)$, we have
\begin{equation}
\begin{aligned}
\left|\int_1^{+\ty}t^{s-N-1}|A_t\cap\O|\di t\right|&\leq\int_1^{+\ty}t^{\re s-N-1}|A_t\cap\O|\di t\\
&\leq|\O|\int_1^{+\ty}t^{\re s-N-1}\di t=\frac{|\O|}{N-\re s}.
\end{aligned}
\end{equation}

The classic theorem about the holomorphicity of an integral depending analytically on a complex parameter (see \cite[Theorem 2.1.46]{fzf} or \cite{Matt}) implies that the integral on the right-hand side of \eqref{mellin-tube} defines a holomorphic function on the vertical strip $\{\overline{D}<\re s<N\}$ and upon analytic continuation, that the entire right-hand side of \eqref{mellin-tube} coincides (within that strip) with the relative distance zeta function $\zeta_{A,\O}(s)$; i.e., the identity \eqref{mellin-tube} holds as an equality between holomorphic functions defined on the open vertical strip $\{\ov{D}<\re s<N\}$.

Moreover, upon further meromorphic continuation (and since, by Theorem \ref{an_rel}, $\zeta_{A,\O}$ is holomorphic in the open right half-plane $\{\re s>\ov{D}\}$), we also deduce that if $\zeta_{A,\O}$ can be meromorphically continued to a given connected open neighborhood $U$ of the critical line $\{\re s=\ov{D}\}$, then (with the terminology and notation of Definition \ref{mellin_zeta_def} just below), so can the Mellin zeta function $\zeta_{A,\O}^{\mathfrak M}$.
Hence, we deduce that the following functional equation holds (between meromorphic functions):
\begin{equation}\label{5.4.26.1/2}
\zeta_{A,\O}(s)=(N-s)\zeta_{A,\O}^{\mathfrak M}(s),
\end{equation}
for all $s\in U$.

\begin{definition}\label{mellin_zeta_def}
Let $(A,\O)$ be an RFD in $\eR^N$ such that $\overline{\dim}_B(A,\O)<N$. We define the {\em Mellin zeta function} $\zeta_{A,\O}^{\mathfrak M}$ of $(A,\O)$ by
\begin{equation}\label{mellin_zeta_1}
{\zeta}^{\mathfrak{M}}_{A,\O}(s):=\int_0^{+\ty}t^{s-N-1}|A_{t}\cap\O|\di t,
\end{equation}
for all $s\in\Ce$ with $\re s\in(\overline{\dim}_B(A,\O),N)$, where the integral is taken in the Lebesgue sense.
\end{definition}

In the discussion preceding Definition \ref{mellin_zeta_def}, we have already proven a part of the following theorem.

\begin{theorem}\label{mellin_an}
Let $(A,\O)$ be an RFD in $\eR^N$ such that $\overline{\dim}_B(A,\O)<N$.
Then, the Mellin zeta function ${\zeta}^{\mathfrak{M}}_{A,\O}$ is holomorphic on the open vertical strip $\{\overline{\dim}_B(A,\O)\allowbreak<\re s<N\}$ and
\begin{equation}\label{mellin_der}
\frac{\di}{\di s}{\zeta}^{\mathfrak{M}}_{A,\O}(s)=\int_0^{+\ty}t^{s-N-1}|A_{t}\cap\O|\log t\di t,
\end{equation}
for all $s$ in $\{\overline{\dim}_B(A,\O)<\re s<N\}$.
Furthermore,  $\{\overline{\dim}_B(A,\O)<\re s<N\}$ is the largest vertical strip $($of the form $\{\alpha<\re s<\beta\}$, with $-\ty\leq\alpha<\beta\leq +\ty$$)$ on which the integral on the right-hand side of \eqref{mellin_zeta_1} is absolutely convergent $($i.e., is a convergent Lebesgue integral$)$.

Moreover, for all $s\in\Ce$ such that $\overline{\dim}_B(A,\O)<\re s<N$ and for any fixed $\d>0$ such that $\O\subseteq A_\d$, ${\zeta}^{\mathfrak{M}}_{A,\O}$ satisfies the following functional equations$:$
\begin{equation}\label{mellin_tube_t}
{\zeta}^{\mathfrak{M}}_{A,\O}(s)=\widetilde{\zeta}_{A,\O}(s;\d)+\frac{\d^{s-N}|\O|}{N-s}
\end{equation}
and
\begin{equation}\label{mellin_dist}
{\zeta}^{\mathfrak{M}}_{A,\O}(s)=\frac{{\zeta}_{A,\O}(s;\d)}{N-s}.
\end{equation}
\end{theorem}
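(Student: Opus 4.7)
The plan is to assemble the proof from ingredients already present in the preamble preceding Definition~\ref{mellin_zeta_def}, with the only genuinely new tasks being the verification that the strip $\{\ov{\dim}_B(A,\O)<\re s<N\}$ is maximal and a clean justification of differentiation under the integral sign.

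First, I would establish that the defining Lebesgue integral \eqref{mellin_zeta_1} is absolutely convergent on the open vertical strip $\{\ov{D}<\re s<N\}$ (where $\ov{D}:=\ov{\dim}_B(A,\O)$) via the splitting \eqref{right_hand_5}: the integral over $(0,1)$ equals $\widetilde{\zeta}_{A,\O}(s;1)$, whose abscissa of absolute convergence is exactly $\ov{D}$ (this is the tube-zeta counterpart of Theorem~\ref{an_rel}(a), see the last paragraph of Remark~\ref{tube_holo}), while the tail integral over $(1,+\ty)$ is bounded in absolute value by $|\O|/(N-\re s)$, as displayed in the excerpt. Consequently $\zeta^{\mathfrak M}_{A,\O}$ is well-defined on this strip.

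Second, to obtain holomorphicity together with formula \eqref{mellin_der}, I would invoke the standard theorem on the holomorphy of an integral depending analytically on a parameter, cited as \cite[Theorem~2.1.46]{fzf}. Applying it on any closed substrip $\{\ov{D}+\e\leq\re s\leq N-\e\}$ with $\e>0$ sufficiently small yields both the holomorphy of $\zeta^{\mathfrak M}_{A,\O}$ on $\{\ov{D}<\re s<N\}$ and the identity \eqref{mellin_der} via differentiation under the integral sign, since the integrand admits the two dominating functions $t^{\ov{D}+\e-N-1}|A_t\cap\O|$ near $t=0$ (whose integrability encodes the choice $\ov{D}+\e>\ov{D}$) and $|\O|\,t^{-1-\e}$ near $t=+\ty$. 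For maximality of the strip, I would argue that $\ov{D}$ is sharp because it coincides with the abscissa of convergence of $\widetilde{\zeta}_{A,\O}(\,\cdot\,;1)$, while $N$ is sharp because, by the defining condition $\O\subseteq A_{\d_0}$ of an RFD, one has $|A_t\cap\O|=|\O|>0$ for all $t\geq\d_0$ sufficiently large; hence the tail $\int_{\d_0}^{+\ty}t^{\re s-N-1}|\O|\di t$ diverges whenever $\re s\geq N$, precluding absolute convergence on any larger vertical strip.

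Third, the functional equation \eqref{mellin_dist} is precisely \eqref{5.4.26.1/2}, which was derived in the preamble by passing to the limit $\d\to+\ty$ in the integral form of \eqref{equ_tilde} within the strip $\{\ov{D}<\re s<N\}$; equation \eqref{mellin_tube_t} then follows at once by combining \eqref{mellin_dist} with \eqref{equ_tilde}, or equivalently by splitting the defining integral of $\zeta^{\mathfrak M}_{A,\O}(s)$ at $\d$ and evaluating the tail $\int_\d^{+\ty}t^{s-N-1}|\O|\di t=|\O|\d^{s-N}/(N-s)$, which is valid for $\re s<N$. The main technical nuisance I expect is keeping the analytic-continuation argument explicit so that \eqref{mellin_tube_t} and \eqref{mellin_dist} are established as genuine identities of holomorphic functions on the open strip, rather than merely as formal consequences of manipulations initially performed in the half-plane $\{\re s>\ov{D}\}$; this is resolved by a routine appeal to the principle of analytic continuation once both sides are known to be holomorphic on $\{\ov{D}<\re s<N\}$, which in turn follows from the first two steps applied to $\zeta^{\mathfrak M}_{A,\O}$, Theorem~\ref{an_rel}, and Remark~\ref{tube_holo}.
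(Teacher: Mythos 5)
Your proposal is correct and follows essentially the same route as the paper: absolute convergence on the strip via the splitting \eqref{right_hand_5}, holomorphy and \eqref{mellin_der} via the standard theorem on integrals depending analytically on a parameter, sharpness of the strip from the abscissa of convergence of $\widetilde{\zeta}_{A,\O}(\,\cdot\,;1)$ on the left and the divergent tail (using $|A_t\cap\O|=|\O|$ for large $t$) on the right, and the functional equations obtained from \eqref{5.4.26.1/2} together with either \eqref{equ_tilde} or the splitting of the defining integral at $\d$. The only cosmetic difference is that you spell out the dominating functions and the analytic-continuation bookkeeping more explicitly than the paper does.
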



\begin{proof}
We have already proven the first part of the theorem.
The optimality of the vertical strip follows directly from \eqref{right_hand_5} (or, more precisely, from \eqref{5.4.26.1/2}).
Namely, the lower bound $\overline{\dim}_B(A,\O)$ is a consequence of the presence of the first integral on the right-hand side of \eqref{right_hand_5} since the latter integral is equal to $\widetilde{\zeta}_A(s,\O;1)$.
Furthermore, the upper bound $N$ is a consequence of the presence of the second integral on the right-hand side of \eqref{right_hand_5}, since that integral is divergent for any real number $s$ such that $s>N$.
To see this, let $\d\geq 1$ be such that $\O\subseteq A_\d$ and make the following observation:
\begin{equation}
\begin{aligned}
\int_1^{+\ty}t^{s-N-1}|A_t\cap\O|\di t&\geq\int_{\d}^{+\ty}t^{s-N-1}|A_t\cap\O|\di t\\
&=|\O|\int_{\d}^{+\ty}t^{s-N-1}\di t=+\ty.
\end{aligned}
\end{equation}

The functional equation \eqref{mellin_dist} is already proven, while \eqref{mellin_tube_t} can be proven directly by splitting the integral defining ${\zeta}^{\mathfrak{M}}_{A,\O}$ over the intervals $(0,\d)$ and $(\d,+\ty)$ or by using the functional equation \eqref{equ_tilde} connecting the tube and distance zeta functions. 
\end{proof}

As a consequence of the functional equations \eqref{mellin_dist}, \eqref{mellin_tube_t} and the principle of analytic continuation, we immediately obtain the following two theorems, which follow from the corresponding ones for the relative distance and tube zeta functions (see \cite[Theorem 2.1 and Section 2.4]{refds} or \cite[Theorems 2.1.11 and 2.2.11]{fzf}, as well as Theorems \ref{pole1} and \ref{pole1mink_tilde}).

\begin{theorem}\label{an_mellin}
Let $(A,\O)$ be a relative fractal drum in $\eR^N$ such that $\overline{\dim}_{B}(A,\O)<N$. Then the following properties hold$:$

\bigskip 

$(a)$ The Mellin zeta function ${\zeta}^{\mathfrak{M}}_{A,\O}$ is meromorphic in the half-plane $\{\re s>\overline{\dim}_B(A,\Omega)\}$ with a single, simple pole at $s=N$.
Furthermore,
\begin{equation}
\res({\zeta}^{\mathfrak{M}}_{A,\O},N)=-|\O|.
\end{equation}

\bigskip

$(b)$ If the relative box $($or Minkowski$)$ dimension $D:=\dim_B(A,\O)$ exists, and $\M_*^D(A,\O)>0$, then ${\zeta}^{\mathfrak{M}}_{A,\O}(s)\to+\ty$ as $s\in\eR$
converges to $D$ from the right.   
\end{theorem}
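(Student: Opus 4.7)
The plan is to derive both statements from the corresponding properties of the relative distance and tube zeta functions (Theorem \ref{an_rel}) via the two functional equations \eqref{mellin_dist} and \eqref{mellin_tube_t} established in Theorem \ref{mellin_an}, combined with the principle of analytic continuation. The key observation is that ${\zeta}^{\mathfrak{M}}_{A,\O}$ is initially defined only on the open vertical strip $\{\overline{D}<\re s<N\}$ (where, as usual, $\overline{D}:=\overline{\dim}_B(A,\O)$), which is nonempty thanks to the standing assumption $\overline{D}<N$; our task is then to understand its meromorphic extension to the full half-plane $\{\re s>\overline{D}\}$.

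For part $(a)$, I would work with the functional equation \eqref{mellin_tube_t}, written in the form
\[
{\zeta}^{\mathfrak{M}}_{A,\O}(s)=\widetilde{\zeta}_{A,\O}(s;\d)+\frac{\d^{s-N}|\O|}{N-s}.
\]
By Theorem \ref{an_rel} applied to the relative tube zeta function (see also the last paragraph of Remark \ref{tube_holo}), the first summand $\widetilde{\zeta}_{A,\O}(\,\cdot\,;\d)$ is holomorphic on $\{\re s>\overline{D}\}$. The second summand is meromorphic on all of $\Ce$ with exactly one (simple) pole, located at $s=N$; since $\d^{s-N}$ equals $1$ at $s=N$, the residue at this pole equals $-|\O|$. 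Hence the right-hand side provides a meromorphic extension of ${\zeta}^{\mathfrak{M}}_{A,\O}$ to $\{\re s>\overline{D}\}$ with a single simple pole, at $s=N$ and with residue $-|\O|$, as claimed. Note that $-|\O|\neq 0$, so no cancellation with a possible zero of $\widetilde{\zeta}_{A,\O}$ at $s=N$ can occur.

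For part $(b)$, I would instead invoke the other functional equation \eqref{mellin_dist}, namely
\[
{\zeta}^{\mathfrak{M}}_{A,\O}(s)=\frac{{\zeta}_{A,\O}(s;\d)}{N-s}.
\]
Under the hypotheses of part $(b)$, we have $D=\dim_B(A,\O)\leq\overline{D}<N$; in particular, $D<N$, so part $(b)$ of Theorem \ref{an_rel} is applicable and yields $\zeta_{A,\O}(s;\d)\to+\infty$ as $s\in\eR$ converges to $D$ from the right. Because the factor $(N-s)^{-1}$ is continuous and strictly positive at $s=D$, it is bounded away from $0$ in a right-neighborhood of $D$, and the desired divergence ${\zeta}^{\mathfrak{M}}_{A,\O}(s)\to+\infty$ as $s\to D^+$ follows at once.

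No serious obstacle is anticipated: the proof is essentially a translation of the known analytic properties of $\widetilde{\zeta}_{A,\O}$ and ${\zeta}_{A,\O}$ through the functional equations already established in Theorem \ref{mellin_an}. The only subtlety worth double-checking is the applicability of the principle of analytic continuation, which requires the strip of initial holomorphy to be nonempty (ensured by $\overline{D}<N$) and the right-hand sides of \eqref{mellin_tube_t} and \eqref{mellin_dist} to define genuine meromorphic functions on the larger half-plane $\{\re s>\overline{D}\}$ — both verified via Theorem \ref{an_rel}.
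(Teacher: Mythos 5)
Your proposal is correct and follows essentially the same route as the paper: the paper likewise extends the functional equations \eqref{mellin_tube_t} and \eqref{mellin_dist} by analytic continuation beyond the strip $\{\overline{\dim}_B(A,\O)<\re s<N\}$, deduces part $(a)$ from the holomorphicity of $\widetilde{\zeta}_{A,\O}$ on $\{\re s>\overline{\dim}_B(A,\O)\}$ together with the explicit pole of $\d^{s-N}|\O|/(N-s)$ at $s=N$, and part $(b)$ from Theorem \ref{an_rel}$(b)$ via \eqref{mellin_dist}. Your residue computation and the observation that $D\leq\overline{\dim}_B(A,\O)<N$ makes Theorem \ref{an_rel}$(b)$ applicable are exactly the details the paper leaves implicit.
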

\begin{proof}
By the principle of analytic continuation, we conclude that the functional equalities \eqref{mellin_tube_t} and \eqref{mellin_dist} continue to hold on any connected open neighborhood $U\subseteq\Ce$ of the vertical strip $\{\overline{\dim}_{B}(A,\O)<\re s<N\}$ to which any of the three relative zeta functions has a holomorphic continuation.
(See also the text surrounding Equation \eqref{5.4.26.1/2}.)
As a result, part $(a)$ follows from the counterpart of Theorem~\ref{an_rel} for the relative tube zeta function (see also \cite[Theorem 2.2.11]{fzf}) and \eqref{mellin_tube_t}, while part $(b)$ follows from Theorem~\ref{an_rel} and \eqref{mellin_dist}. 
\end{proof}

Furthermore, in light of Theorem~\ref{pole1mink_tilde} and \eqref{mellin_tube_t}, one obtains the following analogous result.

\begin{theorem}\label{pole_mellin}
Assume that $(A,\O)$ is a Minkowski  nondegenerate RFD in $\eR^N$, 
that is, $0<\M_*^D(A,\O)\le{\M}^{*D}(A,\O)<\ty$ $($in particular, $D:=\dim_B(A,\O)$ exists$)$, 
and $D<N$.
If ${\zeta}^{\mathfrak{M}}_{A,\O}$ can be extended meromorphically to a connected open neighborhood of $s=D$,
then $D$ is necessarily a simple pole of ${\zeta}^{\mathfrak{M}}_{A,\O}$ and 
\begin{equation}\label{res_mellin}
\M_*^D(A,\O)\le\res({\zeta}^{\mathfrak{M}}_{A,\O},D)\le{\M}^{*D}(A,\O).
\end{equation}
Furthermore, if $(A,\O)$ is Minkowski measurable, then 
\begin{equation}\label{pole1minkg1_mellin}
\res({\zeta}^{\mathfrak{M}}_{A,\O}, D)=\M^D(A,\O).
\end{equation}
\end{theorem}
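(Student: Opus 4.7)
The plan is to reduce the claim to the corresponding statement for the relative tube zeta function (Theorem \ref{pole1mink_tilde}) by exploiting the functional equation \eqref{mellin_tube_t}. First, I would fix any $\delta>0$ with $\O\subseteq A_\delta$, so that by Theorem \ref{mellin_an} one has
$$
\zeta^{\mathfrak M}_{A,\O}(s)=\widetilde{\zeta}_{A,\O}(s;\delta)+\frac{\delta^{s-N}|\O|}{N-s},
$$
initially as an equality of holomorphic functions on the vertical strip $\{\ov{D}<\re s<N\}$ and, by the principle of meromorphic (analytic) continuation, on every connected open neighborhood of that strip to which either side extends. Since $D<N$ by hypothesis, the ``correction term'' $s\mapsto \delta^{s-N}|\O|/(N-s)$ is holomorphic on a connected open neighborhood of $s=D$.

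Consequently, $\zeta^{\mathfrak M}_{A,\O}$ admits a meromorphic extension to a connected open neighborhood $U$ of $s=D$ if and only if $\widetilde{\zeta}_{A,\O}(\,\cdot\,;\delta)$ does, and the principal parts at $s=D$ of the two sides coincide. In particular, $s=D$ is a pole of $\zeta^{\mathfrak M}_{A,\O}$ of the same order as it is a pole of $\widetilde{\zeta}_{A,\O}$, and
$$
\res\bigl(\zeta^{\mathfrak M}_{A,\O},D\bigr)=\res\bigl(\widetilde{\zeta}_{A,\O}(\,\cdot\,;\delta),D\bigr).
$$
Invoking Theorem \ref{pole1mink_tilde} (whose nondegeneracy and meromorphic extension hypotheses are now met), we conclude that $D$ is a simple pole of $\widetilde{\zeta}_{A,\O}$ with $\res(\widetilde{\zeta}_{A,\O},D)$ independent of $\delta$ and satisfying
$$
\M_*^D(A,\O)\le \res\bigl(\widetilde{\zeta}_{A,\O},D\bigr)\le \M^{*D}(A,\O),
$$
with equality to $\M^D(A,\O)$ whenever $(A,\O)$ is Minkowski measurable. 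Transporting this through the identification of residues yields the simplicity of $D$ as a pole of $\zeta^{\mathfrak M}_{A,\O}$ together with the inequalities \eqref{res_mellin} and the equality \eqref{pole1minkg1_mellin}.

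I do not expect any genuine obstacle here, so this is essentially a bookkeeping argument: the only subtlety to check is the $\delta$-independence of the residue, which is already part of the conclusion of Theorem \ref{pole1mink_tilde} and guarantees that the residue of $\zeta^{\mathfrak M}_{A,\O}$ computed from \eqref{mellin_tube_t} for any admissible $\delta$ is intrinsic to the RFD $(A,\O)$, as required by the statement. The hypothesis $D<N$ is used in an essential way, as it ensures that the correction term in the functional equation contributes nothing at $s=D$; were $D=N$ allowed, the $1/(N-s)$ factor would interfere with the pole analysis, which is precisely the reason why the analogous formulation in terms of $\widetilde{\zeta}_{A,\O}$ (rather than $\zeta_{A,\O}$) can dispense with this restriction but the present Mellin version cannot.
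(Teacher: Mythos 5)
Your argument is correct and is essentially the paper's own proof: the paper obtains Theorem \ref{pole_mellin} precisely ``in light of Theorem~\ref{pole1mink_tilde} and \eqref{mellin_tube_t}'', i.e., by observing that the term $\delta^{s-N}|\O|/(N-s)$ is holomorphic near $s=D$ (since $D<N$), so the principal parts, and hence the residues, of $\zeta^{\mathfrak M}_{A,\O}$ and $\widetilde{\zeta}_{A,\O}(\,\cdot\,;\delta)$ at $s=D$ coincide. Your added bookkeeping about $\delta$-independence and the role of the hypothesis $D<N$ is consistent with the paper (note only that $\ov{\dim}_B(A,\O)<N$ is already needed for $\zeta^{\mathfrak M}_{A,\O}$ to be defined at all).
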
 

\begin{lemma}\label{reziduuumi_mellin}
Assume that $(A,\O)$ is an RFD in $\eR^N$ with $\overline{\dim}_B(A,\O)<N$ and such that its tube or distance or Mellin zeta function is meromorphic on some  connected open neighborhood $U$ of the vertical strip $\{\overline{\dim}_B(A,\O)<\re s<N\}$.\footnote{Recall from Theorem \ref{an_rel} and its counterpart for the relative tube zeta function that $\zeta_{A,\O}=\zeta_{A,\O}$ and $\widetilde{\zeta}_{A,\O}=\widetilde{\zeta}_{A,\O}$ are holomorphic on the open right-half plane $\{\re s>\ov{\dim}_B(A,\O)\}$.}
Then, all of the above fractal zeta functions are meromorphic on $U$ and the multisets of poles located in $U\setminus\{N\}$ of each of these three zeta functions, $\widetilde{\zeta}_{A,\O}$, $\zeta_{A,\O}$ and  $\zeta_{A,\O}^{\mathfrak M}$, coincide$:$
\begin{equation}\label{5.4.35.1/2.1/2}
\po\big(\widetilde{\zeta}_{A,\O},U\setminus\{N\}\big)=\po\big({\zeta}_{A,\O},U\setminus\{N\}\big)=\po\big({\zeta}^{\mathfrak M}_{A,\O},U\setminus\{N\}\big).
\end{equation}
Moreover, if $\omega\in U\setminus\{N\}$ is a simple pole of any of these three zeta functions, then\footnote{Clearly, in the case when $\omega\in U\setminus\{N\}$ is a multiple pole, an analogous relation holds between the principal parts at $\omega$ of $\widetilde{\zeta}_{A,\O}(s)$, $\zeta_{A,\O}^{\mathfrak M}(s)$ and the meromorphic function $\zeta_{A,\O}(s)/(N-s)$.
Also, $\omega$ has the same multiplicity for either of $\widetilde{\zeta}_{A,\O}$, $\zeta_{A,\O}$ or ${\zeta}^{\mathfrak M}_{A,\O}$.}
\begin{equation}
\res\big({\zeta}^{\mathfrak{M}}_{A,\O},\omega\big)=\res\big(\widetilde{\zeta}_{A,\O},\omega\big)=\frac{\res\big({\zeta}_{A,\O},\omega\big)}{N-\omega}.
\end{equation}
\end{lemma}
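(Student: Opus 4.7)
The plan is to deduce this lemma directly from the two functional equations \eqref{mellin_tube_t} and \eqref{mellin_dist} established in Theorem \ref{mellin_an}, combined with the principle of analytic continuation. First, I would note that on the open vertical strip $V:=\{\overline{\dim}_B(A,\O)<\re s<N\}$, all three fractal zeta functions are holomorphic (by Theorem \ref{an_rel} and its counterpart for $\widetilde{\zeta}_{A,\O}$, together with Theorem \ref{mellin_an}), and the identities
\begin{equation}
\zeta_{A,\O}^{\mathfrak{M}}(s)=\widetilde{\zeta}_{A,\O}(s;\d)+\frac{\d^{s-N}|\O|}{N-s}=\frac{\zeta_{A,\O}(s;\d)}{N-s}
\end{equation}
hold as equalities between holomorphic functions on $V$. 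Since $V\subseteq U$ and $V$ is nonempty and open, the principle of analytic continuation ensures that whenever any one of the three zeta functions admits a meromorphic extension to $U$, these functional equations persist on $U$ as identities between meromorphic functions. Moreover, because the auxiliary factors $s\mapsto \d^{s-N}|\O|/(N-s)$ and $s\mapsto 1/(N-s)$ are meromorphic on all of $\Ce$ (with only a simple pole and a simple zero, respectively, at $s=N$), meromorphicity of any one of $\widetilde{\zeta}_{A,\O}$, $\zeta_{A,\O}$, or $\zeta_{A,\O}^{\mathfrak M}$ on $U$ transfers, via these functional equations, to meromorphicity of the other two on $U$.

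Next, to obtain the equality \eqref{5.4.35.1/2.1/2} of the multisets of poles on $U\setminus\{N\}$, I would observe that the term $\d^{s-N}|\O|/(N-s)$ is holomorphic on $U\setminus\{N\}$, so the first functional equation shows that $\widetilde{\zeta}_{A,\O}$ and $\zeta_{A,\O}^{\mathfrak M}$ have identical principal parts (and hence identical poles with identical multiplicities) on $U\setminus\{N\}$. Likewise, on $U\setminus\{N\}$ the factor $1/(N-s)$ is holomorphic and nonvanishing, so the second functional equation shows that $\zeta_{A,\O}$ and $\zeta_{A,\O}^{\mathfrak M}$ have the same poles with the same multiplicities on $U\setminus\{N\}$. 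Combining these two observations yields the desired triple equality of multisets.

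Finally, for a simple pole $\omega\in U\setminus\{N\}$, the residue identities follow by the usual residue calculus applied to the functional equations. Multiplying the first equation by $(s-\omega)$ and letting $s\to\omega$, the holomorphicity of $\d^{s-N}|\O|/(N-s)$ at $\omega$ gives $\res(\zeta_{A,\O}^{\mathfrak M},\omega)=\res(\widetilde{\zeta}_{A,\O},\omega)$; applying the same procedure to the second equation, with $1/(N-s)$ holomorphic and equal to $1/(N-\omega)\neq 0$ at $s=\omega$, yields $\res(\zeta_{A,\O}^{\mathfrak M},\omega)=\res(\zeta_{A,\O},\omega)/(N-\omega)$. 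There is essentially no obstacle beyond bookkeeping; the only point requiring care is to ensure the functional equations are genuinely transferred from the strip $V$ to the entire neighborhood $U$ simultaneously for all three zeta functions, which is precisely what the principle of analytic continuation supplies, given that $V$ lies in the natural domain of holomorphy of each. The analogous statement for multiple poles (mentioned in the footnote of the lemma) follows by the same argument applied to principal parts rather than residues.
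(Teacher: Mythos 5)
Your proof is correct and follows essentially the same route as the paper: the lemma is obtained there as an immediate consequence of the functional equations \eqref{mellin_tube_t} and \eqref{mellin_dist} of Theorem \ref{mellin_an}, extended from the vertical strip to $U$ by the principle of analytic continuation, with the pole/residue comparison on $U\setminus\{N\}$ following exactly as you describe.
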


We may now use the Mellin inversion theorem (Theorem \ref{mellin_inv}) to derive the following inversion formula for the Mellin zeta function.

\begin{theorem}\label{int_form_m}
Let $(A,\O)$ be an RFD in $\eR^N$ such that $\overline{\dim}_B(A,\O)<N$.
Then, for any $c\in(\overline{\dim}_B(A,\O),N)$ and $t>0$, the following formula is valid pointwise$:$
\begin{equation}\label{5.4.35}
|A_t\cap\O|=\frac{1}{2\pi\I}\int_{c-\I\ty}^{c+\I\ty}t^{N-s}{\zeta}^{\mathfrak{M}}_{A,\O}(s)\di s.
\end{equation}
\end{theorem}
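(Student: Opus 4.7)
The plan is to mimic the proof of Theorem \ref{tube_inversion}, but now using the Mellin zeta function directly (rather than the tube zeta function). The key observation is that, by Definition \ref{mellin_zeta_def}, the Mellin zeta function $\zeta^{\mathfrak{M}}_{A,\O}$ is literally the Mellin transform of the function
\begin{equation}\nonumber
f(t):=t^{-N}|A_t\cap\O|,
\end{equation}
extended to all of $(0,+\ty)$ (no truncation by a cutoff $\chi_{(0,\d)}$ is required here). So the goal is to verify the hypotheses of the Mellin inversion theorem (Theorem \ref{mellin_inv}) for $f$ and then read off the conclusion.

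First I would check local bounded variation: since $t\mapsto|A_t\cap\O|$ is nondecreasing (hence of locally bounded variation) on $(0,+\ty)$, and the product of two locally BV functions is locally BV, $f$ is locally of bounded variation on $(0,+\ty)$. In fact, thanks to the Stach\'o-type observation $|\pa A_t|=0$ recalled at the beginning of Section \ref{distance_tube}, the function $t\mapsto|A_t\cap\O|$ is continuous, so $f$ is continuous on $(0,+\ty)$ and the symmetric average on the left-hand side of the Mellin inversion formula \eqref{ddeff_inv} reduces to $f(y)$ itself at every $y>0$.

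Next I would verify the $L^1$ condition: for $c\in(\overline{\dim}_B(A,\O),N)$, the function $t\mapsto t^{c-1}f(t)=t^{c-N-1}|A_t\cap\O|$ belongs to $L^1(0,+\ty)$. This is precisely the absolute convergence of the integral defining $\zeta^{\mathfrak M}_{A,\O}(c)$, which was already established in the discussion around Equation \eqref{right_hand_5} by splitting the integral at $t=1$: the piece over $(0,1)$ coincides with $\widetilde{\zeta}_{A,\O}(c;1)$, absolutely convergent for $c>\overline{D}$ by Theorem \ref{an_rel} and the functional equation \eqref{equ_tilde}, while the piece over $(1,+\ty)$ is bounded by $|\O|/(N-c)$ since $|A_t\cap\O|\leq|\O|$ and $c<N$. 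Theorem \ref{mellin_an} then identifies the Mellin transform of $f$ on the vertical line $\{\re s=c\}$ with $\zeta^{\mathfrak M}_{A,\O}(s)$.

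With both hypotheses of Theorem \ref{mellin_inv} in place, I would apply the Mellin inversion formula \eqref{ddeff_inv}, which gives, for every $y>0$,
\begin{equation}\nonumber
y^{-N}|A_y\cap\O|=f(y)=\frac{1}{2\pi\I}\int_{c-\I\ty}^{c+\I\ty}y^{-s}\zeta^{\mathfrak{M}}_{A,\O}(s)\di s.
\end{equation}
Renaming $y$ as $t$ and multiplying through by $t^N$ yields \eqref{5.4.35}, valid pointwise for every $t>0$ and every $c\in(\overline{\dim}_B(A,\O),N)$. No step here should present a genuine obstacle; the only thing that requires care is recognizing that the admissible strip for $c$ is the two-sided strip $(\overline{D},N)$ rather than the half-plane $\{\re s>\overline{D}\}$ that appears in Theorem \ref{tube_inversion}, because without the cutoff $\chi_{(0,\d)}$ the $L^1$ condition at infinity forces $c<N$, exactly as encoded in Theorem \ref{mellin_an}.
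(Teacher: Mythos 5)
Your proposal is correct and follows essentially the same route as the paper's own proof: an application of the Mellin inversion theorem (Theorem \ref{mellin_inv}) to $f(t):=t^{-N}|A_t\cap\O|$, using its continuity and local bounded variation together with the $L^1$ bound on $t\mapsto t^{c-N-1}|A_t\cap\O|$ for $c\in(\overline{\dim}_B(A,\O),N)$ established in the discussion surrounding \eqref{right_hand_5} and Theorem \ref{mellin_an}. Your remark that the admissible abscissae form the two-sided strip $(\overline{D},N)$, rather than a half-plane as in Theorem \ref{tube_inversion}, correctly identifies the only point requiring care.
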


\begin{proof}
The conclusion follows directly from Theorem \ref{mellin_inv}, along with the fact that the function $t\mapsto t^{-N}|A_t\cap\O|$ is continuous and of locally bounded variation on $(0,+\ty)$ and $t\mapsto t^{c-N-1}|A_t\cap\O|$ is in $L^1(0,+\ty)$ for every $c\in(\overline{\dim}_B(A,\O),N)$.
(See also the proof of Theorem \ref{tube_inversion} since the reasoning here is analogous.)
\end{proof}

One can now impose languidity conditions on the Mellin zeta function $\zeta_{A,\O}^{\mathfrak M}$ and rewrite the results of Sections \ref{sec_point} and \ref{sec_distr} in terms of $\zeta_{A,\O}^{\mathfrak M}$ since the fact that we have to choose $c\in(\overline{\dim}_B(A,\O),N)$ in the above theorem is not a hindrance.
Indeed, recall that originally, we had the freedom to choose any $c\in(\overline{\dim}_B(A,\O),N+1)$ in Proposition \ref{kth_prim}.
Furthermore, choosing $c\in(\overline{\dim}_B(A,\O),N)$ also ensures that although $s=N$ is always a pole of the Mellin zeta function, it will never be a part of the sum over the residues of $\zeta_{A,\O}^{\mathfrak{M}}$ in the fractal tube formulas since it is always located strictly to the right of the vertical line $\{\re s=c\}$ over which we integrate in \eqref{5.4.35}.

One could now also potentially derive the corresponding results about the fractal tube formulas in terms of the distance zeta function $\zeta_{A,\O}$ directly from the Mellin zeta function $\zeta_{A,\O}^{\mathfrak M}$ and without the use of the shell zeta function $\breve{\zeta}_{A,\O}$.
However, one then has to be careful and always choose $\d$ sufficiently large so that $\O\subseteq A_\d$ in order for \eqref{mellin_dist} to be satisfied.
Another issue that is not fully resolved in this potential alternative approach is whether the restriction of having to choose $\d$ large enough for the inclusion $\O\subseteq A_\d$ to hold could increase the `languidity exponent' $\kappa_d$ of $\zeta_{A,\O}$.
This is not the case in all of the examples we will consider, but a general result along these lines has yet to be obtained.

\begin{proposition}\label{propB}
Let $(A,\O)$ be a relative fractal drum in $\eR^N$.
If the relative distance zeta function ${\zeta}_{A,\O}(\,\cdot\,;\d)$ satisfies the languidity conditions {\bf L1} and {\bf L2} for some $\d>0$ and $\kappa_d\in\eR$, then so does
${\zeta}_{A,\O}(\,\cdot\,;\d_1)$ for any $\d_1>0$ and for ${(\kappa_d)}_{\d_1}:=\max\{\kappa_d,0\}$.

Furthermore, the analogous statement is also true in the case when ${\zeta}_{A,\O}(\,\cdot\,;\d)$ is strongly $d$-languid, under the additional assumption that $\d\geq 1$ and $\d_1\geq 1$.
\end{proposition}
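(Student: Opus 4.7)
The plan is to mimic essentially verbatim the proof of Proposition \ref{propA}, but with the relative tube zeta function replaced by the relative distance zeta function, exploiting the fact that changing $\d$ alters $\zeta_{A,\O}(\,\cdot\,;\d)$ only by an entire function with growth controlled uniformly in $\im s$. Without loss of generality, assume $\d<\d_1$, and set
\begin{equation}\nonumber
g(s):=\zeta_{A,\O}(s;\d_1)-\zeta_{A,\O}(s;\d)=\int_{(A_{\d_1}\setminus A_{\d})\cap\O}d(x,A)^{s-N}\,\D x.
\end{equation}
As noted in Remark \ref{holo_diff}, $g$ is entire; indeed, the integration takes place over the ``shell'' where $\d\leq d(x,A)\leq\d_1$, so $d(x,A)^{s-N}$ is bounded on that region for every fixed $s$, and holomorphicity follows from the standard theorem on integrals depending analytically on a parameter.

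Next, I would estimate $|g(s)|$ in a way that is independent of $\im s$. On the shell $(A_{\d_1}\setminus A_\d)\cap\O$, we have $\d\leq d(x,A)\leq \d_1$, so that $|d(x,A)^{s-N}|=d(x,A)^{\re s-N}\leq\max\{\d^{\re s-N},\d_1^{\re s-N}\}$, and therefore
\begin{equation}\nonumber
|g(s)|\leq |(A_{\d_1}\setminus A_\d)\cap\O|\cdot\max\big\{\d^{\re s-N},\d_1^{\re s-N}\big\}.
\end{equation}
Since this bound depends only on $\re s$, the entire function $g$ trivially satisfies the languidity conditions {\bf L1} and {\bf L2} (with respect to any window lying in a fixed vertical strip) with languidity exponent $\kappa_g:=0$. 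Combining this with the assumed $d$-languidity of $\zeta_{A,\O}(\,\cdot\,;\d)$ and the identity $\zeta_{A,\O}(s;\d_1)=\zeta_{A,\O}(s;\d)+g(s)$, we conclude that $\zeta_{A,\O}(\,\cdot\,;\d_1)$ is $d$-languid with the same window and with exponent $(\kappa_d)_{\d_1}:=\max\{\kappa_d,0\}$, which is the first assertion.

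For the strongly $d$-languid case, the only additional point is that hypothesis {\bf L1} must hold for all $\sigma\in(-\infty,c)$, which forces us to control $|g(s)|$ as $\re s\to-\infty$. The upper bound above goes to infinity in that limit as soon as $\min\{\d,\d_1\}<1$, so the natural (and, as in Proposition \ref{propA}, essentially optimal) remedy is to require $\d,\d_1\geq 1$; then both $\d^{\re s-N}$ and $\d_1^{\re s-N}$ tend to $0$ as $\re s\to-\infty$, so $|g(s)|$ remains uniformly bounded on any left half-plane, and $g$ is itself strongly languid (relative to the sequence of screens furnished by the strong $d$-languidity of $\zeta_{A,\O}(\,\cdot\,;\d)$) with exponent $0$. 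The conclusion for the second statement then follows exactly as in the first. The only subtlety—and thus the ``main obstacle''—is this $\d,\d_1\geq 1$ hypothesis in the strongly languid case; it is not a genuine difficulty, but it is the reason the statement of the proposition cannot be made completely symmetric in $\d$ and $\d_1$, and it traces back directly to the pointwise growth of the function $\d\mapsto \d^{\re s-N}$ on the shell for $\re s\ll 0$.
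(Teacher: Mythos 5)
Your proof is correct and follows essentially the same route as the paper's: write $\zeta_{A,\O}(s;\d_1)=\zeta_{A,\O}(s;\d)+g(s)$ with $g$ entire, bound $|g(s)|$ by a quantity depending only on $\re s$ (so that $g$ is languid with exponent $0$), and observe that the strong-languidity case requires $\d,\d_1\geq 1$ to keep the bound under control as $\re s\to-\ty$. Your shell-volume constant $|(A_{\d_1}\setminus A_\d)\cap\O|$ is a marginally sharper version of the paper's $|\O|$, but the argument is otherwise identical.
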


\begin{proof}
Without loss of generality, we may assume that $\d<\d_1$.
Then, the conclusion follows from the fact that for a given a window $\bm W$, we have ${\zeta}_A(s,\O;\d_1)={\zeta}_{A,\O}(s;\d)+g(s)$ for all $s\in W$, where $g$ is defined for all $s\in\Ce$ and is an entire function.
(This fact follows directly from the classical theorem about the holomorphicity of an integral depending on a parameter; see \cite[Theorem 2.1.46]{fzf} or \cite{Matt}.)
Furthermore, for all $s\in\Ce$, we have the following upper bound on $|g(s)|$:
\begin{equation}\label{g_s_upper}
|g(s)|\leq\int_{(A_{\d_1}\setminus \ov{A_{\d}})\cap\O}d(x,A)^{\re s-N}\di x\leq|\O|\max\{\d^{\re s-N},\d_1^{\re s-N}\}.
\end{equation}
As we can see, the upper bound on $|g(s)|$ does not depend on $\im s$ and therefore, we conclude that $g$ satisfies the languidity conditions {\bf L1} and {\bf L2} with the languidity exponent $\kappa_g:=0$ and for any given window $\bm W$.
This observation implies that then, ${\zeta}_{A,\O}(\,\cdot\,;\d_1)$ is languid for $(\kappa_d)_{\d_1}:=\max\{\kappa_d,0\}$ and for the same window as  for ${\zeta}_{A,\O}(\,\cdot\,;\d)$.

The additional assumption about the strong $d$-languidity is needed since {\bf L1} must then be satisfied for all $\sigma\in(-\ty,c)$, in the notation of Definition \ref{languid}.
Furthermore, for this condition to be achieved, we need that $\d_1>\d\geq 1$ in \eqref{g_s_upper} since otherwise, we cannot obtain an upper bound on $|g(s)|$ when $\re s\to -\ty$.
\end{proof}

We refrain from restating all of the theorems of Sections \ref{sec_point} and \ref{sec_distr} in terms of the Mellin zeta function and leave this task for the interested reader.
We will restate only the distributional fractal tube formula with error term, which will be explicitly needed in \cite{ftf_b} (see also \cite{fzf}) for establishing a Minkowski measurability criterion for a large class of RFDs in terms of the location of the principal complex dimensions.
Recall that our original motivation for introducing the Mellin zeta function was to obtain a distributional fractal tube formula valid on a larger space of test functions,
more precisely, on the space $\mathcal{K}(0,+\ty)$; that is, the space of test functions $\varphi$ in the class $C^{\ty}(0,+\ty)$, such that for all $m\in\Ze$ and $q\in\eN$, we have $t^m\varphi^{(q)}(t)\to 0$, as $t\to 0^+$ and $t\to+\ty$.
We point out that the key difference from working with the tube or distance zeta function which enables us to obtain the distributional tube formula in this greater generality is in Theorem \ref{int_form_m}.
More precisely, the integral representation of the tube function $t\mapsto|A_t\cap\O|$ in Theorem \ref{int_form_m} is now valid for all $t>0$, while in Theorem \ref{tube_inversion} we obtained an integral representation valid only for $t\in(0,\d)$.
Hence, the integral representation given in Theorem \ref{int_form_m}, valid for all $t>0$, enables us to work with test functions $\mathcal{K}(0,+\ty)$ when deriving the distributional tube formulas in terms of the Mellin zeta function.
 
Also note that $\mathcal{D}(0,+\ty)\subseteq\mathcal{K}(0,+\ty)$,\label{Dspace+} and hence, we have the following (reverse) relation between the corresponding spaces of distributions (or dual spaces):
\begin{equation}\label{5.4.35.1/2}
\mathcal{K}'(0,+\ty)\subseteq\mathcal{D}'(0,+\ty).
\end{equation}

\begin{theorem}[Distributional fractal tube formula with error term, via $\zeta_{A,\O}^{\mathfrak M}$; level $k=0$]\label{dist_error_mellin}
Let $(A,\O)$ be a relative fractal drum in $\eR^N$ such that $\overline{\dim}_B(A,\O)<N$.
Furthermore, assume that $\zeta_{A,\O}^{\mathfrak M}$ satisfies the languidity conditions for some $\kappa\in\eR$ and $\d>0$.
Then, the regular distribution $\mathcal V^{[0]}_{A,\O}(t):=|A_t\cap\O|$ in $\mathcal{K}'(0,+\ty)$ is given by the following distributional identity in $\mathcal{K}'(0,+\ty)$$:$
\begin{equation}\label{dist_form_error_mellin}
\mathcal V^{[0]}_{A,\O}(t)=\sum_{\omega\in\po({\zeta}_{A,\O}^{\mathfrak{M}},\bm{W})}\res\left({t^{N-s}}{\zeta}_{A,\O}^{\mathfrak{M}}(s),\omega\right)+\mathcal{R}^{{\mathfrak M}[0]}_{A,\O}(t).
\end{equation}
That is, the action of $\mathcal V^{[0]}_{A,\O}$ on an arbitrary test function $\varphi\in\mathcal{K}(0,+\ty)$ is given by
\begin{equation}\label{error_action_mellin}
\begin{aligned}
\big\langle\mathcal V^{[0]}_{A,\O},\varphi\big\rangle&=\sum_{\omega\in\po({\zeta}_{A,\O}^{\mathfrak{M}},\bm{W})}\res\Big({\{\mathfrak{M}\varphi\}(N\!-\! s\!+\! 1)\,{\zeta}_{A,\O}^{\mathfrak{M}}(s)},\omega\Big)+\big\langle\mathcal{R}^{{\mathfrak M}[0]}_{A,\O},\varphi\big\rangle.
\end{aligned}
\end{equation}
Here, the distributional error term $\mathcal{R}^{{\mathfrak M}[0]}_{A,\O}$ is the distribution in $\mathcal{K}'(0,+\ty)$ given for all $\varphi\in\mathcal{K}(0,+\ty)$ by
\begin{equation}\label{R_distr_mellin}
\big\langle\mathcal{R}^{{\mathfrak M}[0]}_{A,\O},\varphi\big\rangle=\frac{1}{2\pi\I}\int_S{\{\mathfrak{M}\varphi\}(N\!-\! s\!+\! 1)\,{\zeta}_{A,\O}^{\mathfrak{M}}(s)}\di s.
\end{equation}
Furthermore, the distribution $\mathcal{R}^{{\mathfrak M}[0]}_{A,\O}(t)$ is of asymptotic order at most $t^{N-\sup S}$ as $t\to 0^+$; i.e.,
\begin{equation}
\mathcal{R}^{{\mathfrak M}[0]}_{A,\O}(t)=O(t^{N-\sup S})\quad\mathrm{ as }\quad t\to0^+,
\end{equation}
in the sense of Definition~\ref{dist_order_dis}.

Moreover, if $S(\tau) < \sup S$ for all $\tau\in\eR$ $($that is, if the screen $\bm S$ lies strictly
to the left of the vertical line $\{\re s =\sup S\}$$)$, then $\mathcal{R}^{{\mathfrak M}[0]}_{A,\O}(t)$ is of asymptotic order less than $t^{N-\sup S}$; i.e.,
\begin{equation}\label{dist_estimate_o_mellin}
\mathcal{R}^{{\mathfrak M}[0]}_{A,\O}(t)=o(t^{N-\sup S})\quad\mathrm{ as }\quad t\to0^+,
\end{equation} 
again in the sense of Definition~\ref{dist_order_dis}.
\end{theorem}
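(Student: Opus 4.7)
The plan is to follow the same strategy as in the proof of Theorem~\ref{dist_error}, but now using the Mellin inversion formula from Theorem~\ref{int_form_m} (which is valid for all $t>0$) in place of Theorem~\ref{tube_inversion} (which is only valid on $(0,\delta)$). Since we restrict to level $k=0$, we need only work directly with $V_{A,\O}^{[0]}(t)=|A_t\cap\O|$, viewed as a regular distribution in $\mathcal{K}'(0,+\infty)$ (well defined since $t\mapsto|A_t\cap\O|$ is continuous and bounded above by $|\O|$, so locally integrable on $(0,+\infty)$).

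First, I would fix $\varphi\in\mathcal{K}(0,+\infty)$ and any $c\in(\ov{\dim}_B(A,\O),N)$. Applying Theorem~\ref{int_form_m} and the Fubini--Tonelli theorem (whose applicability requires checking that the iterated absolute integral is finite, using the decay of $\varphi$ at both $0^+$ and $+\infty$, which makes $\{\mathfrak{M}\varphi\}(s)$ well defined for every $s\in\Ce$, and using the absolute convergence of $\zeta_{A,\O}^{\mathfrak M}$ on the vertical line $\{\re s=c\}$), one obtains
\begin{equation}\nonumber
\langle \mathcal{V}^{[0]}_{A,\O},\varphi\rangle=\int_0^{+\ty}|A_t\cap\O|\,\varphi(t)\di t=\frac{1}{2\pi\I}\int_{c-\I\ty}^{c+\I\ty}\{\mathfrak{M}\varphi\}(N-s+1)\,\zeta_{A,\O}^{\mathfrak M}(s)\di s.
\end{equation}

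Next, I would establish the truncated analog of Lemma~\ref{trunc_point} adapted to the Mellin zeta function: for each $n\geq 1$, replace the integral over $[c+\I T_{-n},c+\I T_n]$ by the closed contour consisting of this segment, the truncated screen $\bm S_{|n}$, and the two horizontal segments $\Gamma_U,\Gamma_L$. By the residue theorem and in view of~\eqref{res_res} (with $k=0$), the closed contour integral yields the truncated sum of residues $\sum_{\omega\in\po(\zeta_{A,\O}^{\mathfrak M},\bm W_{|n})}\res(\{\mathfrak{M}\varphi\}(N-s+1)\,\zeta_{A,\O}^{\mathfrak M}(s),\omega)$. Note that $s=N$ is never encountered because $c<N$ and the screen lies to the left of $\{\re s=\ov{\dim}_B(A,\O)\}<\{\re s=N\}$. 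Then I would bound the integrals over $\Gamma_U$ and $\Gamma_L$: by languidity condition {\bf L1}, $|\zeta_{A,\O}^{\mathfrak M}(\sigma+\I T_{\pm n})|\leq C(|T_{\pm n}|+1)^{\kappa}$ on the horizontal segments, while the rapid decay of $\{\mathfrak{M}\varphi\}(s)$ along horizontal lines (which follows from integrating by parts $q$ times using $t^m\varphi^{(q)}(t)\to 0$ at $0^+$ and $+\infty$ for any $m,q$) yields $|\{\mathfrak{M}\varphi\}(N-\sigma+1-\I T_{\pm n})|=O(|T_{\pm n}|^{-q})$ for any $q$. Choosing $q>\kappa+1$ uniformly in $\sigma\in[S(T_{\pm n}),c]$, the horizontal error terms tend to $0$ as $n\to\infty$, and the screen integral converges absolutely to the expression~\eqref{R_distr_mellin}.

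Letting $n\to\infty$ then produces the identity~\eqref{error_action_mellin}, and since the left-hand side defines a distribution in $\mathcal{K}'(0,+\infty)$, so does the sum of the series of residues on the right-hand side (being the difference of two such distributions); this yields \eqref{dist_form_error_mellin} as claimed.

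Finally, the asymptotic estimates on the error term $\mathcal{R}^{{\mathfrak M}[0]}_{A,\O}$ follow exactly as in the proof of Theorem~\ref{dist_estimate} at level $k=0$: using the scaling identity $\{\mathfrak{M}\varphi_a\}(s)=a^{s-1}\{\mathfrak{M}\varphi\}(s)$ from~\eqref{mellin_a}, one estimates
\begin{equation}\nonumber
\bigl|\langle \mathcal{R}^{{\mathfrak M}[0]}_{A,\O},\varphi_a\rangle\bigr|\leq \frac{1}{2\pi}\int_{\bm S}a^{N-\re s}\bigl|\{\mathfrak{M}\varphi\}(N-s+1)\bigr|\,|\zeta_{A,\O}^{\mathfrak M}(s)|\,|\di s|\leq C_\varphi\, a^{N-\sup S},
\end{equation}
where the bounding integral is finite thanks to {\bf L2} combined with the rapid decay of $\{\mathfrak{M}\varphi\}$ along the screen; the strictly-less-than case (when $S(\tau)<\sup S$ for all $\tau$) is handled by splitting the screen at $|\im s|=T$ as in the proof of estimate~\eqref{S(t)<sup} of Theorem~\ref{pointwise_formula}.

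The main obstacle I anticipate is verifying the rapid decay of $\{\mathfrak{M}\varphi\}(s)$ along horizontal and vertical lines uniformly with respect to the real part of $s$, since this is precisely what is gained by enlarging the test function space from $\mathcal{D}(0,+\infty)$ to $\mathcal{K}(0,+\infty)$ and what compensates for the possible polynomial growth of $\zeta_{A,\O}^{\mathfrak M}$ at the level $k=0$ (where no Pochhammer denominator is present to provide extra decay). Carrying out the integration-by-parts argument cleanly, and checking that the resulting bounds are integrable against $|\di s|$ along the (Lipschitz) screen, is the delicate technical point that justifies both the vanishing of the horizontal contributions in the contour-shift argument and the asymptotic estimates on the error term.
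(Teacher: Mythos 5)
Your overall strategy---mirroring the proofs of Theorems \ref{dist_error} and \ref{dist_estimate} with $\widetilde{\zeta}_{A,\O}$ replaced by $\zeta^{\mathfrak M}_{A,\O}$, Theorem \ref{tube_inversion} replaced by Theorem \ref{int_form_m}, and the rapid decay of $\{\mathfrak M\varphi\}$ for $\varphi\in\mathcal K(0,+\ty)$ compensating for the absence of the Pochhammer factor---is exactly what is intended here. However, your justification of the very first identity, $\langle\mathcal V^{[0]}_{A,\O},\varphi\rangle=\frac{1}{2\pi\I}\int_{c-\I\ty}^{c+\I\ty}\{\mathfrak M\varphi\}(N\!-\!s\!+\!1)\,\zeta^{\mathfrak M}_{A,\O}(s)\di s$, has a genuine gap: Fubini--Tonelli does not apply in the form you invoke it. At level $k=0$ there is no factor $(N\!-\!s\!+\!1)_k^{-1}$ to provide decay in $\im s$, and the absolute convergence of the integral defining $\zeta^{\mathfrak M}_{A,\O}$ at $s=c$ only gives $|\zeta^{\mathfrak M}_{A,\O}(c+\I\tau)|\le\zeta^{\mathfrak M}_{A,\O}(c)$, i.e.\ boundedness along the line $\{\re s=c\}$; in fact, by \eqref{mellin_dist} one gets $\zeta^{\mathfrak M}_{A,\O}(c+\I\tau)=O(1/|\tau|)$ and in general nothing better, which is not integrable in $\tau$. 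Hence the absolute double integral $\int_0^{+\ty}\int_{\{\re s=c\}}|\varphi(t)|\,t^{N-c}\,|\zeta^{\mathfrak M}_{A,\O}(s)|\,|\di s|\di t$ diverges, and the interchange of the $t$-integration with the (only conditionally convergent) inverse Mellin integral of Theorem \ref{int_form_m} is not justified as written.

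The step itself is correct and can be repaired in two standard ways. $(i)$ Follow the proof of Theorem \ref{dist_error} literally: first prove the formula at level $k=q$ with $q>\kappa+1$, using the analog of Proposition \ref{kth_prim} for $\zeta^{\mathfrak M}_{A,\O}$; there the factor $(N\!-\!s\!+\!1)_q^{-1}$ makes the $s$-integrand $O(|\im s|^{\kappa-q})$, so the double integral is absolutely convergent and Fubini is legitimate; then differentiate $q$ times distributionally, via $\{\mathfrak M\varphi\}(s)=(-1)^q(s)_q^{-1}\{\mathfrak M\varphi^{(q)}\}(s+q)$, to descend to $k=0$. $(ii)$ Alternatively, run Fubini in the opposite order: start from the right-hand side, which converges absolutely thanks to the rapid decay of $\{\mathfrak M\varphi\}$, insert the defining integral \eqref{mellin_zeta_1} of $\zeta^{\mathfrak M}_{A,\O}$ (legitimate since $c\in(\ov{\dim}_B(A,\O),N)$ lies in the strip of absolute convergence), interchange---now the absolute double integral equals $\big(\int_{\eR}|\{\mathfrak M\varphi\}(N\!-\!c\!+\!1\!+\!\I\tau)|\di\tau\big)\big(\int_0^{+\ty}u^{c-N-1}|A_u\cap\O|\di u\big)<\ty$---and apply Mellin inversion to $\varphi$ (rather than to the tube function) to recover $\langle\mathcal V^{[0]}_{A,\O},\varphi\rangle$. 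Once this is repaired, the remainder of your argument is sound: the contour shift with the horizontal segments controlled by {\bf L1} together with the uniform rapid decay of $\{\mathfrak M\varphi\}$ on the bounded strip $\re s\in[\inf S,c]$ (via repeated integration by parts), the absolute convergence of the screen integral via {\bf L2}, the fact that $s=N$ is never enclosed since $c<N$ (in agreement with the remark preceding the theorem), and the scaling identity $\{\mathfrak M\varphi_a\}(s)=a^{s-1}\{\mathfrak M\varphi\}(s)$ yielding the $O$ and $o$ error estimates exactly as in Theorem \ref{dist_estimate}.
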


\section{Examples and Applications}\label{exp_app}

In this final section, we illustrate the theory of fractal tube formulas developed in Sections \ref{sec_point}--\ref{distance_tube} by means of several examples of bounded (fractal) sets and relative fractal drums.
These examples include the line segment, the recovery of the known tube formulas (from \cite{lapidusfrank12}) for fractal strings (Subsection \ref{subsec_frstr}), the Sierpi\'nski gasket and the $3$-dimensional Sierpi\'nski carpet, along with the inhomogeneous higher-dimensional $N$-gasket RFDs, with $N\geq 3$ (Subsection \ref{subsec_sier}), a suitable version of the Cantor graph (the `devil's staircase') and an associated discussion of `fractality' expressed in terms of the presence of nonreal complex dimensions (Subsection \ref{subsec_devil}), fractal nests and (unbounded) geometric chirps (Subsection \ref{subsec_nestch}), as well as, finally, the recovery and significant extensions of the known fractal tube formulas (from [LapPe2--3, LapPeWi1--2]) 
for self-similar sprays (Subsection \ref{subsec_self_similar_sp}).

\subsection{The Line Segment, Convex Sets and Smooth Submanifolds}\label{subsec_lisp}

Let us first consider the trivial example of the unit interval in $\eR$, which illustrates the case when we cannot use the distance zeta function in order to recover the tube formula, since $D=N=1$.

\begin{example}\label{segment_example}
Let $I=[0,1]$ be the unit interval in $\eR$.
Then the meromorphic continuations to $\Ce$ of its distance and tube zeta functions are respectively given by
\begin{equation}
\zeta_I(s)=\frac{2\d^s}{s}\quad\textrm{and}\quad\widetilde{\zeta}_I(s)=\frac{2\d^s}{s}+\frac{\d^{s-1}}{s-1},\quad\textrm{for all}\ s\in\Ce.
\end{equation}
As we can see, the distance zeta function fails to provide information about the Minkowski content in this case, because the pole at $s=1$ is canceled by means of the functional equation~\eqref{equ_tilde}.
This also demonstrates why when working with meromorphic extensions of the relative distance zeta function, one must always assume additionally that $\ov{\dim}_{B}(A,\O)<N$, as opposed to the situation with the relative tube zeta function. 
Furthermore, it is clear that $\widetilde{\zeta}_I$ is strongly languid if we choose $\d>1$ for $\kappa:=-1$ and a sequence of screens consisting of the vertical lines $\{\re s=-m\}$, where $m\in\eN$.
We then recover from Theorem~\ref{str_pointwise_formula} (with $k=0$ in the notation of that theorem) the following exact pointwise tube formula:
\begin{equation}\label{interval_tube}
|I_t|={t^{N-0}}\res(\widetilde{\zeta}_I,0)+{t^{N-1}}\res(\widetilde{\zeta}_I,1)=2t+1,
\end{equation}
initially valid for all $t\in(0,\d)$.
Actually, since $\d>1$ may be taken arbitrary large, the exact tube formula \eqref{interval_tube} is valid for all $t>0$.
Note that, of course, it is immediate to check directly that the tube formula \eqref{interval_tube} holds for all $t>0$.

\end{example}

We next explain how to calculate the tube and distance zeta functions, as well as the complex dimensions, of compact convex sets (or, more generally, of sets of positive reach) and smooth compact submanifolds of Euclidean space $\eR^N$, based on key results of Federer \cite{Fed} unifying and extending Steiner's tube formula for convex sets \cite{Stein} and Weyl's tube formula  for compact submanifolds of $\eR^N$ \cite{Wey3} (see also \cite{BergGos} and \cite[Chapter 4]{Schn} for an exposition).

Recall that a closed subset $A$ of $\eR^N$ is said to be of {\em positive reach} if there exists $\d>0$ such that every point $x\in A_{\d}$ has a unique metric projection onto $A$; see \cite{Fed}.
The {\em reach} of $A$, denoted by $\mathrm{reach}(A)$, is then defined as the supremum of all such positive $\d$.
Clearly, every closed convex subset of $\eR^N$ is of infinite (and hence, positive) reach.
Furthermore, if, for instance, $A\subset\eR^2$ is an arc of a circle of radius $r$, then the reach of $A$ is equal to $r$.
Moreover, compact smooth submanifolds of Euclidean space $\eR^N$ are also examples of sets of positive reach.

In the present context, for a compact set $A\subset\eR^N$ of positive reach, it is easy to deduce from the tube formula obtained in \cite{Fed} an explicit expression for $\widetilde{\zeta}_{A}$.\footnote{Relative versions of Theorem \ref{6.1.1/2} are also possible, but we will not consider them here.}

\begin{theorem}\label{6.1.1/2}
Let $A$ be a $($nonempty$)$ compact set of positive reach in $\eR^N$.
Then, for any $\d>0$ such that $0<\delta<\mathrm{reach}(A)$, we have that
\begin{equation}
\widetilde{\zeta}_A(s):=\widetilde{\zeta}_A(s;\d)=\sum_{k=0}^{N}c_k\frac{\d^{s-k}}{s-k},
\end{equation}
where $|A_t|=\sum_{k=0}^{N}c_kt^{N-k}$ for all $t\in(0,\d)$ and the coefficients $c_k$ are the $($normalized$)$ Federer curvatures.
$($From the functional equation \eqref{equ_tilde}, one then deduces at once a corresponding explicit expression for $\zeta_A(s):=\zeta_A(s;\d)$.$)$

Hence, $\dim_BA$ exists and 
\begin{equation}
D:=D(\widetilde{\zeta}_A)=D(\zeta_A)=\dim_BA=\max\{k\in\{0,1,\ldots,N\}:c_k\neq 0\}
\end{equation}
and,\footnote{More precisely, the second equality in Equation \eqref{6.55} holds only if $D<N$.}
\begin{equation}\label{6.55}
\po:=\po(\widetilde\zeta_A)=\po(\zeta_A)\stq\{0,1,\dots,N\}.
\end{equation}
In fact,
\begin{equation}
\po=\big\{k\in\{0,1,\dots,N\}:c_k\ne0\big\}\stq\{k_0,\dots,D\},
\end{equation}
where $k_0:=\min\big\{k\in\{0,1,\dots,D\}:c_k\ne0\big\}$.
Furthermore, each of the complex dimensions of $A$ is simple. 

Finally, if $A$ is such that its affine hull is all of $\eR^N$ $($which is the case when the interior of $A$ is nonempty and, in particular, if $A$ is a convex body$)$, then $D=N$, while if $A$ is a $($smooth$)$ compact $d$-dimensional submanifold of $\eR^N$ $($with $0\le d\le N$$)$, then $D=d$.
\end{theorem}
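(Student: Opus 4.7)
The plan is to reduce the statement to a direct computation based on Federer's tube formula for sets of positive reach, then read off the zeta function and its poles by inspection, and finally handle the distance zeta function via the functional equation \eqref{equ_tilde}.

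First, I would invoke Federer's theorem (see \cite{Fed}, or \cite[Theorem 4.2.1]{Schn} in the convex case), which asserts that for any compact set $A\subset\eR^N$ of positive reach and for all $t\in(0,\mathrm{reach}(A))$, the parallel volume is a polynomial:
\begin{equation}
|A_t|=\sum_{k=0}^{N}c_k\,t^{N-k},
\end{equation}
where the $c_k$ are the (normalized) Federer curvatures of $A$. With $\delta<\mathrm{reach}(A)$ fixed, I substitute this expression into the definition \eqref{401/2} of the relative tube zeta function (taking $\O=A_\delta$). For $\re s$ sufficiently large, the integral splits as a finite sum:
\begin{equation}
\widetilde{\zeta}_A(s;\delta)=\int_0^\delta t^{s-N-1}\sum_{k=0}^N c_k t^{N-k}\,\D t=\sum_{k=0}^N c_k\int_0^\delta t^{s-k-1}\,\D t=\sum_{k=0}^N c_k\frac{\delta^{s-k}}{s-k}.
\end{equation}
The right-hand side is a finite sum of simple rational functions of $s$, hence a meromorphic function on all of $\Ce$; by the principle of analytic continuation, it is \emph{the} unique meromorphic extension of $\widetilde{\zeta}_A$ to $\Ce$. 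Its poles are simple and located precisely at those integers $k\in\{0,1,\dots,N\}$ for which $c_k\ne 0$.

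Next, I would identify $D=\dim_BA$. By Theorem \ref{an_rel} (together with the last sentence of Remark \ref{tube_holo}, which covers the tube zeta function without the hypothesis $\ov{\dim}_B(A,\O)<N$), the abscissa of convergence $D(\widetilde{\zeta}_A)$ equals $\ov{\dim}_BA$. From the above explicit formula, $\widetilde{\zeta}_A(s;\delta)$ is clearly holomorphic for $\re s$ strictly greater than the largest integer $k$ such that $c_k\ne 0$ and fails to be holomorphic at that value. Hence $\ov{\dim}_BA=\max\{k:c_k\ne 0\}$; and since the lower limit $\unb{\dim}_BA$ satisfies $\unb{\dim}_BA\le\ov{\dim}_BA$ while the polynomial tube formula forces $|A_t|\asymp t^{N-D}$ as $t\to0^+$, both dimensions coincide, so $\dim_BA$ exists and equals this maximum. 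The inclusions $\po\subseteq\{0,1,\dots,N\}$ and $\po\subseteq\{k_0,\dots,D\}$ follow immediately from the explicit list of poles. To transfer the pole information to $\zeta_A$, I would apply the functional equation \eqref{equ_tilde}, which shows (in the regime $D<N$, so that the prefactor $(N-s)$ does not cancel anything in $\po$) that $\po(\zeta_A)=\po(\widetilde{\zeta}_A)$; the simplicity of each pole is preserved under this equation.

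For the final assertions, I would argue as follows. If the affine hull of $A$ equals $\eR^N$ (for example, if $\mathring A\ne\emptyset$, which covers convex bodies), then $A$ has positive $N$-dimensional Lebesgue measure, and since $c_N=|A|_N$ (evaluating the polynomial tube formula as $t\to 0^+$), we have $c_N>0$, so $D=N$. If instead $A$ is a smooth compact $d$-dimensional submanifold of $\eR^N$, I invoke Weyl's tube formula \cite{Wey3} (see \cite{BergGos, Gra}), according to which
\begin{equation}
|A_t|=\sum_{j=0}^{\lfloor d/2\rfloor}\mathcal K_j(A)\,\omega_{N-d+2j}\,t^{N-d+2j},
\end{equation}
with leading coefficient $\mathrm{vol}_d(A)\,\omega_{N-d}>0$, so the largest $k$ with $c_k\ne 0$ equals $d$, i.e., $D=d$.

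The only non-routine point will be checking that the correspondence between the coefficient $c_k$ in Federer's polynomial and the residue of the resulting meromorphic function at $s=k$ is correctly preserved under the functional equation \eqref{equ_tilde}, and that no spurious poles are created; but since \eqref{equ_tilde} is a clean additive-plus-linear-in-$s$ identity and the explicit closed form for $\widetilde{\zeta}_A$ makes the structure of $\zeta_A$ completely transparent, this is essentially a one-line verification rather than a genuine obstacle.
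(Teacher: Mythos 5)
Your proposal is correct and takes essentially the same route the paper intends: substitute Federer's polynomial tube formula $|A_t|=\sum_{k=0}^N c_k t^{N-k}$ (valid for $0<t<\mathrm{reach}(A)$) into the definition \eqref{401/2}, integrate term by term to obtain the closed form $\sum_{k=0}^N c_k\,\d^{s-k}/(s-k)$, read off the simple poles at those $k$ with $c_k\neq 0$ (noting $c_D>0$ since $|A_t|>0$), identify $D$ with $\dim_BA$ via Theorem \ref{an_rel} and Remark \ref{tube_holo}, transfer to $\zeta_A$ through the functional equation \eqref{equ_tilde} (with the cancellation at $s=N$ explaining the footnote when $D=N$), and invoke Weyl's formula for the submanifold case. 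The one caveat is your step ``affine hull $=\eR^N$ implies $|A|_N>0$'', which is false for general sets of positive reach (e.g., $N+1$ affinely independent points have affine hull $\eR^N$, positive reach, and $D=0$); this final assertion should instead be justified under the nonempty-interior or convex-body reading singled out in the theorem's parenthetical, where $c_N=|A|_N>0$ is immediate --- an imprecision already present in the statement's wording rather than in the substance of your argument.
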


For example, for the $2$-torus $A\subset\eR^3$, we have $N=3$, $D=2$ (since the Euler characteristic of $A$ is equal to zero), $c_2\neq 0$,\footnote{Note that $c_2$ is just proportional to the the area of the $2$-torus, with the proportionality constant being a standard positive constant.} $c_1= 0$, and hence, $c_0=0$, $k_0=2$ and $\po=\{2\}$, as can also be easily checked via a direct computation.

\subsection{Tube Formulas for Fractal Strings}\label{subsec_frstr}

In the present subsection, we apply our general theory of fractal tube formulas for relative fractal drums (and, in particular, for bounded sets) in $\eR^N$ to the one-dimensional case (i.e., $N=1$) in order to recover the known (pointwise and distributional) fractal tube formulas for fractal strings obtained in \cite{lapidusfrank12}.
We begin by discussing the prototypical example of the Cantor string (viewed as an RFD), in Example \ref{ecant}, and further illustrate our results by means of the well-known example of the $a$-string (in Example \ref{ex_a}).
Along the way, we discuss the case of general fractal strings as well as the associated fractal tube formulas.

\begin{example}\label{ecant}({\em The standard ternary Cantor set and string}).
Let $C$ be the standard ternary Cantor set in $[0,1]$ and fix $\d\geq 1/6$.
Then, it is not difficult to show that the `absolute' distance zeta function of $C$ is meromorphic in all of $\Ce$ and given by
\begin{equation}\label{cant}
\zeta_{C,C_{\d}}(s)=\frac{2^{1-s}}{s(3^s-2)}+\frac{2\d^s}{s},\quad\textrm{for all}\ s\in\Ce,
\end{equation}
where the term $2\d^s/s$ corresponds to the integral over the `outer' neighborhood of the two endpoints $0$ and $1$ (see \cite[Example 3.4]{dtzf} or \cite[Example 2.1.18]{fzf} ).
Consequently, the relative distance zeta function of $(C,(0,1))$ is also meromorphic on all of $\Ce$ and given by
\begin{equation}\label{Cant_zeta}
\zeta_{C,(0,1)}(s)=\frac{2^{1-s}}{s(3^s-2)},\quad\textrm{for all}\ s\in\Ce.
\end{equation}
Furthermore, the sets of complex dimensions of the Cantor set $C$ and of the Cantor string $(C,(0,1))$, viewed as an RFD, coincide:
\begin{equation}\label{comp_dim_CC}
\po({\zeta}_C)=\po({\zeta}_{C,(0,1)})=\{0\}\cup\left(\log_32+\frac{2\pi}{\log3}\I\Ze\right).
\end{equation}
In \eqref{comp_dim_CC}, each of the complex dimensions is simple.
Furthermore, $D:=\dim_B(C,(0,1))$, the Minkowski dimension of the Cantor string, exists and $D=\log_32$, the Minkowski dimension of the Cantor set, which also exists.
Furthermore, $\mathbf{p}:=\frac{2\pi}{\log 3}$ is the oscillatory period of the Cantor set (or string), viewed as a  {\em lattice} self-similar set (or string); see \cite[Chapter 2, esp., Subsection 2.3.1 and Section 2.4]{lapidusfrank12}.

It is clear that $(\lambda C,\lambda(0,1))$ is strongly $d$-languid for $\kappa_d:=-1$, any $\lambda\geq 2$ and a sequence of screens consisting of the vertical lines $\{\re s=-m\}$ for $m\in\eN$, along with the constant $B_{\lambda}:=2/\lambda$ in the strong languidity condition {\bf L2'}.\footnote{Without loss of generality, we can fix $\d\geq 1$ here.}
Theorem~\ref{pointwise_thm_d} (or, really, Theorem \ref{5.3.15.1/2} since all of the complex dimensions of the RFD are simple) then enables us to recover the following exact pointwise fractal formula for the inner $t$-neighborhood of $C$, valid for all $t\in(0,\min\{1/\lambda,1/2\})=(0,1/2)$:
\begin{equation}\label{CC_compute}
\begin{aligned}
|C_t\cap(0,1)|&=\!\!\!\sum_{\omega\in\po({\zeta}_{C,(0,1)})}\!\!\!\res\left(\frac{t^{1-s}}{1-s}{\zeta}_{C,(0,1)}(s),\omega\right)=\!\!\!\sum_{\omega\in\po({\zeta}_{C,(0,1)})}\!\!\!\frac{t^{1-\omega}\res\left({\zeta}_{C,(0,1)},\omega\right)}{1-\omega}\\
&=\frac{1}{2\log 3}\sum_{k=-\ty}^{+\ty}\frac{(2t)^{1-\omega_k}}{(1-\omega_k)\omega_k}-2t=\frac{(2t)^{1-D}}{2\log 3}\sum_{k=-\ty}^{+\ty}\frac{(2t)^{-\I k\mathbf{p}}}{(1-\omega_k)\omega_k}-2t\\
&=t^{1-D}G\left(\log_{3}(2t)^{-1}\right)-2t,
\end{aligned}
\end{equation}
where $\omega_k:=D+\I k\mathbf{p}$ for each $k\in\Ze$, $D:=\dim_B(C,(0,1))=\log_32$ (as  above), and $\mathbf{p}:=\frac{2\pi}{\log 3}$ denote, respectively, the relative Minkowski dimension and the `oscillatory period' of the Cantor string RFD $(C,(0,1))$ in $\eR$ (or, equivalently, of the Cantor string ${\mathcal L}_{CS}$).
Furthermore, $G$ is the positive, nonconstant $1$-periodic function, which is bounded away from zero and infinity and given by the following Fourier series expansion:
\begin{equation}\label{5.5.10.1/4}
G(x):=\sum_{k\in\Ze}\frac{2^{-D}\E^{2\pi\I kx}}{\omega_k(1-\omega_k)\log 3}.
\end{equation}
In \eqref{CC_compute}, the second equality follows from the fact that all of the complex dimensions of $\zeta_{C,(0,1)}$ are simple (see also Theorem \ref{5.3.15.1/2} above), while the third equality is obtained by computing the residues of $\zeta_{C,(0,1)}$ at each $s:=\omega_k$ (for $k\in\Ze$) and at $s=0$;
in particular, we have that
\begin{equation}\label{5.5.10.1/2}
\res\left({\zeta}_{C,(0,1)},\omega_k\right)=\frac{2^{-\omega_k}}{\omega_k\log 3},\quad\textrm{for all}\ k\in\Ze.
\end{equation}

Of course, the above exact pointwise fractal tube formula \eqref{CC_compute} coincides with the one obtained by a direct computation for the Cantor string (see~\cite[Subsection~1.1.2]{lapidusfrank12}) or from the general theory of fractal tube formulas for fractal strings (see \cite[Chapter~8, esp., Sections 8.1 and 8.2]{lapidusfrank12}) and, in particular, for self-similar strings (see, especially, \cite[Subsection~8.4.1, Example 8.2.2]{lapidusfrank12}).\footnote{{\em Caution}: in \cite[Section~8.4]{lapidusfrank}, the Cantor string is defined slightly differently, and hence, $C$ is replaced by $3^{-1}C$.}
Note that the `absolute' tube function $|C_t|$ has the same expression as in \eqref{CC_compute} above but now without the term $-2t$, which is in accordance with~\eqref{cant}.

Finally, observe that, in agreement with the lattice case of the general theory of self-similar strings developed in \cite[Chapters~2--3, and Section 8.4]{lapidusfrank12}, we can rewrite the pointwise fractal tube formula \eqref{CC_compute} as follows (with $D:=\dim_BC=\log_23$):
\begin{equation}\label{5.5.10.3/4}
t^{-(1-D)}V_{C,(0,1)}(t)=t^{-(1-D)}|C_t\cap(0,1)|=G\left(\log_{3}(2t)^{-1}\right)+o(1),
\end{equation}
where $G$ is given by \eqref{5.5.10.1/4}.
Therefore, since $G$ is periodic and nonconstant, it is clear that $t^{-(1-D)}V_{C,(0,1)}(t)$ cannot have a limit as $t\to 0^+$.
It follows that the Cantor string RFD $(C,(0,1))$ (or, equivalently, the Cantor string ${\mathcal L}_{CS}$) is {\em not} Minkowski measurable but (since $G$ is also bounded away from zero and infinity) is Minkowski nondegenerate.
(This was first proved in [LapPo1--2] 
via a direct computation, leading to the precise values of $\mathcal{M}_*$ and $\mathcal{M^*}$, and reproved in \cite[Subsection~8.4.2]{lapidusfrank12} by using either the pointwise fractal tube formulas or a self-similar fractal string Minkowski measurability criterion; which is expanded to the case of $\eR^N$ with $N$ arbitrary in \cite{ftf_b}. 
\end{example}

The above example demonstrates how the theory developed in this paper generalizes (to arbitrary dimensions $N\geq 1$) the corresponding one for fractal strings developed in~\cite[Chapter~8]{lapidusfrank12}.\footnote{One should slightly qualify this statement, however, because the higher-dimensional counterpart of the theory of fractal tube formulas for self-similar strings developed in \cite[Section~8.4]{lapidusfrank12} is not developed in this paper in the general case of self-similar RFDs (and, for example, of self-similar sets satisfying the open set condition), except in the special case of self-similar sprays discussed in Subsection~\ref{subsec_self_similar_sp} below.}
More generally, the following result gives a general connection between the geometric zeta function of a nontrivial bounded fractal string $\mathcal{L}=(\ell_j)_{j\geq 1}$ and the (relative) distance zeta function of the bounded subset of $\eR$ given by
\begin{equation}
A_{\mathcal{L}}:=\bigg\{a_k:=\sum_{j\geq k}\ell_j:k\geq 1\bigg\}
\end{equation}
or, more specifically, of the RFD $(A_{\mathcal L},(0,\ell))$.

\begin{proposition}\label{geo_dist}
Let $\mathcal{L}=(\ell_j)_{j\geq 1}$ be a nontrivial bounded fractal string and let $\ell:=\zeta_{\mathcal{L}}(1)=\sum_{j=1}^{\ty}\ell_j$ denote its total length.
Then, for every $\delta\geq \ell_1/2$, we have the following functional equation for the distance zeta function of the relative fractal drum $(A_{\mathcal{L}},(0,\ell))$$:$
\begin{equation}\label{geo_equ}
\zeta_{A_{\mathcal{L}},(0,\ell)}(s;\d)=\frac{2^{1-s}\zeta_{\mathcal{L}}(s)}{s},
\end{equation}
valid on any connected open neighborhood $U\subseteq\Ce$ of the critical line $\{\re s=\ov{\dim}_B(A_{\mathcal{L}},(0,\ell))\}$ to which any of the two fractal zeta functions $\zeta_{{A_{\mathcal{L}}},(0,\ell)}$ and $\zeta_{\mathcal{L}}$ possesses a meromorphic continuation.\footnote{If we do not require that $\delta\geq \ell_1/2$, then we have that $\zeta_{A_{\mathcal{L}}}(s;\d)={2^{1-s}s^{-1}\zeta_{\mathcal{L}}(s)}+v(s)$, where $v$ is holomorphic on $\{\re s>0\}$.
On the other hand, for applying the theory, we may restrict ourselves to the case when $\delta\geq \ell_1/2$.}

Furthermore, if $\zeta_{\mathcal{L}}$ is languid for some $\kappa_{\mathcal{L}}\in\eR$, then $\zeta_{A_{\mathcal{L}},(0,\ell)}(\,\cdot\,;\d)$ is $d$-languid for $\kappa_d:=\kappa_{\mathcal{L}}-1$, with any $\d\geq \ell_1/2$.

Moreover, if $\zeta_{\mathcal{L}}$ is strongly languid, then so is $\zeta_{\lambda A_{\mathcal{L}},(0,\lambda \ell)}(s;\d\lambda)$ for any $\lambda\geq 2$ and any $\d\geq \ell_1/2$.
\end{proposition}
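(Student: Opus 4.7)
The plan is to start from the geometric picture of $A_{\mathcal{L}}$ inside $(0,\ell)$. The points $a_k=\sum_{j\ge k}\ell_j$ form a strictly decreasing sequence with $a_1=\ell$ and $a_k\to 0^+$, and $(0,\ell)\setminus A_{\mathcal{L}}$ is the disjoint union of the gaps $(a_{k+1},a_k)$, each of length $\ell_k$. For $x$ in such a gap, the two nearest points of $A_{\mathcal{L}}$ are its endpoints, so $d(x,A_{\mathcal{L}})=\min(x-a_{k+1},\,a_k-x)\le\ell_k/2\le\ell_1/2$. Since we may assume $\ell_k$ is nonincreasing (so $\ell_1=\sup_k\ell_k$), the hypothesis $\delta\ge\ell_1/2$ ensures that $(0,\ell)\subseteq(A_{\mathcal{L}})_\delta$, so the RFD is admissible and $(A_{\mathcal{L}})_\delta\cap(0,\ell)=(0,\ell)$; in particular, no extra boundary contribution appears in \eqref{rel_dist_zeta_d}.

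Next, I would compute one gap integral by symmetry around its midpoint, obtaining
\begin{equation*}
\int_{a_{k+1}}^{a_k}d(x,A_{\mathcal{L}})^{s-1}\,\D x=2\int_0^{\ell_k/2}u^{s-1}\D u=\frac{2^{1-s}\ell_k^s}{s},
\end{equation*}
valid for all $s$ with $\re s$ large enough. Summing over $k\ge 1$ and invoking the Fubini--Tonelli theorem in the open half-plane $\{\re s>D(\zeta_{\mathcal{L}})\}$ (where everything is absolutely convergent) yields the identity \eqref{geo_equ} as an equality between holomorphic functions there. The principle of analytic continuation then automatically propagates the functional equation to any connected open neighborhood $U$ of the critical line $\{\re s=\ov{\dim}_B(A_{\mathcal{L}},(0,\ell))\}$ on which either side admits a meromorphic extension, since $\ov{\dim}_B(A_{\mathcal{L}},(0,\ell))$ coincides with the abscissa of convergence $D(\zeta_{\mathcal{L}})$ (by Theorem~\ref{an_rel}).

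For the languidity statement, I would read off the growth of $\zeta_{A_{\mathcal{L}},(0,\ell)}(\cdot\,;\d)$ directly from \eqref{geo_equ}. On any horizontal segment or vertical screen with $\re s$ bounded, the multiplier $2^{1-s}/s$ is bounded by $C|\tau|^{-1}$ as $|\tau|\to\ty$, so hypotheses \textbf{L1} and \textbf{L2} for $\zeta_{\mathcal{L}}$ with exponent $\kappa_{\mathcal{L}}$ transfer to $\zeta_{A_{\mathcal{L}},(0,\ell)}(\cdot\,;\d)$ with exponent $\kappa_d:=\kappa_{\mathcal{L}}-1$. For the strong $d$-languidity claim, I would combine \eqref{geo_equ} with the scaling property $\zeta_{\lambda A,\lambda\O}(s)=\lambda^s\zeta_{A,\O}(s)$ (\cite[Theorem 4.1.38]{fzf}) to obtain
\begin{equation*}
\zeta_{\lambda A_{\mathcal{L}},(0,\lambda\ell)}(s;\lambda\d)=\frac{2(\lambda/2)^{s}\,\zeta_{\mathcal{L}}(s)}{s}.
\end{equation*}
For $\lambda\ge 2$, one has $(\lambda/2)^{\sigma}\le 1$ whenever $\sigma\le 0$, so the prefactor $(\lambda/2)^{s}/s$ stays bounded (indeed decays) as $\sigma\to-\ty$; this is exactly what is needed for \textbf{L2'} to be inherited from $\zeta_{\mathcal{L}}$ (with the same sequence of screens, whose horizontal translates by $\log\lambda$ can be absorbed into the uniform Lipschitz bound).

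The only mildly delicate point, and the one I expect to spend the most care on, is the last one: one must check that \textbf{L1} is preserved down to $-\infty$ (not just on a half-strip), which is the entire reason one scales by $\lambda\ge 2$ — the bare function $\zeta_{A_{\mathcal{L}},(0,\ell)}$ contains a factor $2^{-s}$ that grows exponentially as $\re s\to-\ty$, so without the rescaling one cannot hope for strong languidity. Verifying that the rescaling precisely cancels this growth (while preserving the $\kappa_{\mathcal{L}}-1$ exponent and the uniform Lipschitz bound on the screens) is the only substantive bookkeeping; the rest is a direct integration.
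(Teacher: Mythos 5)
Your proposal is correct and follows exactly the route the paper intends: the paper's own proof simply cites the direct derivation of \eqref{geo_equ} from the definitions (gap-by-gap integration of $d(x,A_{\mathcal{L}})^{s-1}$ plus analytic continuation) and asserts that the languidity transfers follow from the definitions, which is precisely what you carry out, including the correct handling of the $2^{-s}$ growth via the scaling $\lambda\geq 2$ for strong $d$-languidity. The only cosmetic remark is that at $\delta=\ell_1/2$ the midpoints of the longest gaps satisfy $d(x,A_{\mathcal{L}})=\delta$ and so are not in $(A_{\mathcal{L}})_{\delta}$, but this is a set of measure zero and does not affect the integral, so your identification $\zeta_{A_{\mathcal{L}},(0,\ell)}(s;\delta)=\zeta_{A_{\mathcal{L}},(0,\ell)}(s)$ stands.
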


\begin{proof}
The functional equation~\eqref{geo_equ} is derived in \cite[Example 2.9 and Theorem 2.10]{dtzf} (see also \cite[Example 2.1.57]{fzf}) and can be easily obtained directly from the appropriate definitions.
Furthermore, the statements about the languidity follow directly from the definition.
\end{proof}

\begin{remark}\label{5.5.4.1/4}
There is nothing special about the bounded set $A_{\mathcal L}\subset\eR$ associated with $\mathcal L$. 
In fact, in the statement of Proposition \ref{geo_dist}, we could replace $A_{\mathcal L}$ with $\partial\O$, where the bounded open set $\O\subset\eR$ is an arbitrary geometric realization of the fractal string $\mathcal L$.
Similarly, in recovering the fractal tube formulas for fractal strings obtained in \cite[Chapter 8]{lapidusfrank12}, one can use $\zeta_{\partial\O,\O}:=\zeta_{\partial\O,\O}(\,\cdot\,;\d)$ instead of $\zeta_{A_{\mathcal L},(0,\ell)}:=\zeta_{A_{\mathcal L},(0,\ell)}(\,\cdot\,;\d)$.
This is precisely what we will do in the subsequent discussion.
\end{remark}

Let $\pa\O$ be the boundary of $\O$, where the bounded open set $\O\subset\eR$ is any geometric realization of the bounded (nontrivial) fractal string $\mathcal L$ such that $\ov{\dim}_B(\pa\O,\O)<1$. 
That is, we can write $\O$ as a disjoint union of bounded open intervals $I_j$ (i.e., the connected components of $\O$) such that $I_j$ has length $\ell_j$, for each $j\geq 1$.
It is unimportant in which order the intervals $I_j$ are arranged.
Then, under suitable hypotheses (namely, we assume that either $\zeta_{\pa\O,\O}$ or $\zeta_{\mathcal{L}}$ has a meromorphic continuation to a connected open neighborhood of the critical line $\{\re s=\ov{\dim}_B(\pa\O,\O)\}$), we have (much as in \eqref{geo_equ} above) 
the following key {\em functional equation connecting the distance zeta function $\zeta_{\pa\O,\O}$ of the RFD $(\pa\O,\O)$ and the geometric zeta function $\zeta_{\mathcal{L}}$ of the fractal string} $\mathcal{L}:=(\ell_j)_{j=1}^{\ty}$:\footnote{We note that the functional equation \eqref{5.5.15.1/2D} is valid, without any hypothesis on the bounded fractal string $\mathcal{L}$ (or on its distance and geometric zeta functions), for all $s\in\Ce$ with $\re s$ sufficiently large (namely, for $\re s>\ov{D}$, where $\ov{D}:=\dim_B(\pa\O,\O)=D(\zeta_{\pa\O,\O})=D(\zeta_{\mathcal{L}})$).}
\begin{equation}\label{5.5.15.1/2D}
\zeta_{\pa \O,\O}(s)=\frac{2^{1-s}\zeta_{\mathcal L}(s)}{s},
\end{equation}
valid for all $s\in U$.
(Of course, it then follows that each of the two fractal zeta functions $\zeta_{\pa\O,\O}$ and $\zeta_{\mathcal{L}}$ has a unique meromorphic continuation to all of $U$.)

Consequently, by choosing $U:=\mathring{W}$ to be the interior of a suitable window $\bm W$ (with an associated screen $\bm S$), we deduce from the results of Section \ref{distance_tube} (especially, Subsections \ref{subsec_point_dist} and \ref{subsec_distr_dist}) that the tube function
\begin{equation}
V_{\mathcal L}(t):=|\{x\in\O\,:\,d(x,\pa\O)<t\}|_1=V_{\pa\O,\O}(t)
\end{equation}
can be expressed via the following fractal tube formula (with or without error term and pointwise or distributionally, depending on the assumptions), for every $\d\geq\ell_1/2$:\footnote{Namely, the hypotheses of Theorem \ref{pointwise_thm_d} (i.e., of Theorem \ref{pointwise_formula_rd} at level $k=0$), for the pointwise tube formula, or else, the hypotheses of Theorem \ref{dist_tube_formula_d} (i.e., of Theorem \ref{dist_error_d} at level $k=0$), for the distributional tube formula.}
\begin{equation}\label{5.5.12.1/4}
\begin{aligned}
V_{\mathcal L}(t)&=V_{\pa\O,\O}(t)=\sum_{\omega\in\po(\zeta_{\pa\O,\O},\bm{W})}\res\left(\frac{t^{1-s}}{1-s}\zeta_{\pa\O,\O}(s),\omega\right)+R^{[0]}_{\pa\O,\O}(t)\\
&=\sum_{\omega\in\po(\zeta_{\pa\O,\O},\bm{W})}\res\left(\frac{(2t)^{1-s}}{s(1-s)}\zeta_{\mathcal L}(s),\omega\right)+R^{[0]}_{\pa\O,\O}(t),
\end{aligned}
\end{equation}
where, in the languid case, we have the error estimate $R^{[0]}_{\pa\O,\O}(t)=O(t^{1-\sup S})$ as $t\to 0^+$ or $R^{[0]}_{\pa\O,\O}(t)=o(t^{1-\sup S})$ as $t\to 0^+$ (also depending on the hypotheses),\footnote{More specifically, in order to obtain the better  error estimate, we also have to assume that the screen is strictly to the left of the vertical line $\{\re s=\sup S\}$.} or else, $R^{[0]}_{\pa\O,\O}(t)\equiv 0$ and $\bm{W}:=\Ce$ in the strongly languid case.
Here, $\po(\zeta_{\pa\O,\O},\bm{W})$ denotes the set of visible complex dimensions of $(\pa\O,\O)$, visible through a given window $\bm W$ (with an associated screen $\bm S$), and in light of the counterpart for the RFD $(\pa \O,\O)$ of Equation \eqref{geo_equ} along with Remark~\ref{5.5.4.1/4}, we have that
\begin{equation}\label{5.5.12.1/2}
\po(\zeta_{\pa\O,\O},W\setminus\{0\})=\po(\zeta_{\mathcal L},W\setminus\{0\}),
\end{equation}
where the equality holds between multisets.
Furthermore, if $0\in W$ and if $\zeta_{\mathcal L}(0)$ is defined and not equal to zero (i.e., if $\zeta_{\mathcal L}(0)\neq 0$), then, $0\in\po(\zeta_{\pa\O,\O},\bm{W})$ and it has multiplicity one.
On the other hand, if $0\in\po(\zeta_{\mathcal L},\bm{W})$ and is a pole of multiplicity $m$ for some $m\in\eN$, then, $0\in\po(\zeta_{\pa\O,\O},\bm{W})$ and it has multiplicity $m+1$.
In other words, we have the following equality between multisets:
\begin{equation}\label{5.5.18.1/2D}
\po(\zeta_{\pa\O,\O},\bm{W})=\po(\zeta_{\mathcal L},\bm{W})\cup\{0\}_{0\in W,\ \zeta_{\mathcal L}(0)\neq 0},
\end{equation}
where $\{0\}_{0\in W,\ \zeta_{\mathcal L}(0)\neq 0}$ is equal to $\{0\}$ if $0\in W$ and $\zeta_{\mathcal L}(0)\neq 0$, and to the empty set otherwise.

If, in addition, each of the visible complex dimensions of $(\pa\O,\O)$ (i.e., each pole of $\zeta_{\pa\O,\O}$ in $\bm W$) is simple, then (in light of \eqref{5.5.15.1/2D}) the fractal tube formula \eqref{5.5.12.1/4} takes the following simpler form:
\begin{equation}\label{5.5.12.3/4}
\begin{aligned}
V_{\mathcal L}(t)&=V_{\pa\O,\O}(t)\\
&=\sum_{\omega\in\po(\zeta_{\mathcal L},\bm{W})}\frac{(2t)^{1-\omega}}{\omega(1-\omega)}\res\big(\zeta_{\mathcal L}(s),\omega\big)+\left\{2t\zeta_{\mathcal L}(0)\right\}_{0\in W}+R^{[0]}_{\pa\O,\O}(t),
\end{aligned}
\end{equation}
where the (pointwise or distributional) error term $R^{[0]}_{\pa\O,\O}(t)$ is estimated as above (in the languid case) or else, $R^{[0]}_{\pa\O,\O}(t)\equiv 0$ and $\bm{W}:=\Ce$ (in the strongly languid case).
Here, provided $\zeta_{\mathcal{L}}(0)$ is well defined, the term $\{2t\zeta_{\mathcal{L}}(0)\}_{0\in W}$ is equal to zero if $0\notin W$ and to $2t\zeta_{\mathcal{L}}(0)$ if $0\in W$.
If, however, $0$ is a simple, visible pole of $\zeta_{\mathcal L}$, then we should replace $\{2t\zeta_{\mathcal{L}}(0)\}_{0\in W}$ on the right-hand side of \eqref{5.5.12.3/4} with the term
\begin{equation}
2t\big(1-\log(2t)\big)\res(\zeta_{\mathcal L},0)+2t\zeta_{\mathcal L}[0]_0,
\end{equation}
where $\zeta_{\mathcal L}[0]_0$ stands for the constant term in the Laurent series expansion of $\zeta_{\mathcal L}$ around $s=0$.
This is in agreement with \cite[Corollary~8.3]{lapidusfrank12} (resp., \cite[Corollary~8.10]{lapidusfrank12}) in the case of a distributional (resp., pointwise) fractal tube formula.

Note that in light of \eqref{5.5.12.1/2},
formula \eqref{5.5.12.1/4} can be rewritten as follows, in terms of the set $\po(\zeta_{\mathcal L},\bm{W})$ of all visible poles of $\zeta_{\mathcal L}$ (see also Remark \ref{5.5.6.1/2R} below):
\begin{equation}\label{5.5.12.4/5}
\begin{aligned}
V_{\mathcal L}(t)&=V_{\pa\O,\O}(t)=\sum_{\omega\in\po(\zeta_{\mathcal L},\bm{W})}\res\left(\frac{(2t)^{1-s}}{s(1-s)}\zeta_{\mathcal L}(s),\omega\right)\\
&\phantom{=}+\left\{2t\zeta_{\mathcal L}(0)\right\}_{0\in W\setminus\po(\zeta_{\mathcal L},\bm{W})}+R^{[0]}_{\pa\O,\O}(t),
\end{aligned}
\end{equation}
which is in agreement with \cite[Theorem~8.1]{lapidusfrank12} (resp., \cite[Theorem~8.7]{lapidusfrank12}) in the case of a distributional (resp., pointwise) fractal tube formula.

Naturally, $\po(\zeta_{\mathcal{L}},\bm{W})$ is viewed as a multiset; that is, on the right-hand side of \eqref{5.5.12.1/2} or \eqref{5.5.18.1/2D}, each visible `scaling complex dimension' $\omega\in\po(\zeta_{\mathcal{L}},\bm{W})$ (i.e., each visible pole of the geometric zeta function $\zeta_{\mathcal{L}}$) occurs according to its multiplicity.
An entirely analogous comment can be made about the multiset  $\po(\zeta_{\pa\O,\O},\bm{W})$ and the associated visible complex dimensions $\omega\in\po(\zeta_{\pa\O,\O},\bm{W})$.


\begin{remark}\label{5.5.6.1/2R}
In \cite{lapidusfrank12}, the elements of $\po(\zeta_{\mathcal L},\bm{W})$ are called the (visible) complex dimensions of $\mathcal L$.
In the present paper, the relationship with the actual (visible) complex dimension of the RFD $(\pa\O,\O)$ (i.e., the visible poles of $\zeta_{\pa\O,\O}$) is given by \eqref{5.5.12.1/2} and the text surrounding it.
Much as in [LaPe2--3, LapPeWi1--2] 
and \cite[Section 13.1]{lapidusfrank12}, we propose to refer to the elements of $\po(\zeta_{\mathcal{L}},\bm{W})$ (i.e., to the visible poles of the geometric zeta function $\zeta_{\mathcal{L}}$) as the visible {\em scaling complex dimensions} of the fractal string $\mathcal{L}$.
Similarly, $\zeta_{\mathcal{L}}$ will also be occasionally referred to as the {\em scaling zeta function} of $\mathcal{L}$ (or rather, of the associated RFD $(\pa\O,\O)$) and denoted by $\zeta_{\mathfrak{S}}$. 
\end{remark}

\begin{remark}\label{5.5.4.3/4}
We leave it as an easy exercise for the interested reader to use the counterpart for the RFD $(\pa\O,\O)$ of the functional equation \eqref{geo_equ} in Proposition \ref{geo_dist} in order to express the languidity, as well as the strong languidity conditions, in terms of the geometric zeta function $\zeta_{\mathcal L}$ instead of the distance zeta function $\zeta_{\pa\O,\O}$.
Furthermore, the reader can easily check that the results of Example \ref{ecant} concerning the Cantor string
$
\mathcal{L}:=\left\{1,\frac{1}{3},\frac{1}{3},\frac{1}{9},\frac{1}{9},\frac{1}{9},\frac{1}{9},\ldots\right\}
$
(see, especially, Equation \eqref{CC_compute}) are compatible with both \eqref{5.5.12.3/4} and \eqref{5.5.12.4/5}.
Indeed, in light of \eqref{Cant_zeta} and \eqref{geo_equ}, we have (for all $s\in\Ce$)
\begin{equation}\label{5.5.12.6/7}
\zeta_{CS}(s)=\frac{1}{3^s-2},
\end{equation}
from which it follows that $\zeta_{CS}(0)=-1$ and (with $\bm{W}:=\Ce$) the term $\{2t\zeta_{\mathcal L}(0)\}_{0\in W}$ in both \eqref{5.5.12.3/4} and \eqref{5.5.12.4/5} becomes $-2t$, in agreement with \eqref{CC_compute}.
\end{remark}

\begin{example}\label{ex_a}({\em The $a$-string}).
For a given $a>0$, the $a$-string $\mathcal{L}_{a}$ can be realized as the bounded open set $\O_a\subset\eR$ obtained by removing the points $j^{-a}$ for $j\in\eN$ from the interval $(0,1)$; that is, 
\begin{equation}\label{O_a}
\O_a=\bigcup_{j=1}^{\ty}\big((j+1)^{-a},j^{-a}\big),
\end{equation}
so that the sequence of lengths of $\mathcal{L}_a$ is defined by
\begin{equation}\label{lj}
\ell_j:=j^{-a}-(j+1)^{-a},\ \textrm{ for }\ j=1,2,\ldots,
\end{equation}
and $\pa\O_a=\{j^{-a}\,:\,j\geq 1\}\cup\{0\}=A_{\mathcal{L}_a}\cup\{0\}$.
Hence, its geometric zeta function is given (for all $s\in\Ce$ such that $\re s>\dim_B\mathcal{L}_a$) by
$$
\zeta_{{\mathcal{L}}_a}(s)=\sum_{j=1}^{\ty}\ell_j^s=\sum_{j=1}^{\ty}\big(j^{-a}-(j+1)^{-a}\big)^s
$$
and it then follows from Proposition \ref{geo_dist} that for $\d>(1-2^{-a})/2$, its distance zeta function is given by 
\begin{equation}\label{a-dist}
\zeta_{A_{\mathcal{L}_a},(0,1)}(s;\d)=\frac{\zeta_{\mathcal{L}_a}(s)}{2^{s-1}s}=\frac{1}{2^{s-1}s}\sum_{j=1}^{\ty}\big(j^{-a}-(j+1)^{-a}\big)^s,
\end{equation}
where the second equality holds for all $s\in\Ce$ such that $\re s>\dim_B\mathcal{L}_a$ while the first equality holds for all $s\in\Ce$ (since, as will be recalled just below, $\zeta_{\mathcal{L}_a}$ and hence also $\zeta_{A_{\mathcal{L}_a},(0,1)}$, admits a meromorphic extension to all of $\Ce$).

Furthermore, the properties of the geometric zeta function $\zeta_{\mathcal{L}_a}$ of the $a$-string are well-known (see \cite[Theorem~6.21]{lapidusfrank12}).
Namely, $\zeta_{\mathcal{L}_a}$ has a meromorphic continuation to the whole of $\Ce$ and its poles in $\Ce$ are located at 
\begin{equation}
D:=\dim_B\mathcal{L}_a=\dim_BA_{\mathcal{L}_a}=\frac{1}{a+1}
\end{equation}
and at (a subset of) $\{-\frac{m}{a+1}:m\in\eN\}$.
Furthermore, all of its poles are simple and $\res(\zeta_{\mathcal{L}_a},D)=Da^D$.\footnote{In \cite[Theorem~6.21]{lapidusfrank12}, it is stated that $\res(\zeta_{\mathcal{L}_a},D)=a^D$, which is a misprint.
More specifically, in the proof of that theorem, the source of the misprint is the fact that the residue of $\zeta((a+1)s)$ at $s=1/(a+1)$ is equal to $1/(a+1)$ and not to $1$. Here, $\zeta$ is the Riemann zeta function.\label{25}}
Moreover, for any screen $\bm S$ not passing through a pole, the function $\zeta_{\mathcal{L}_a}$ satisfies {\bf L1} and {\bf L2} with $\kappa:=\frac{1}{2}-(a+1)\inf S$, if $\inf S\leq 0$ and $\kappa:=\frac{1}{2}$ if $\inf S\geq 0$.
From these facts and Equation \eqref{a-dist}, we conclude that the set $A_{\mathcal{L}_a}$ is $d$-languid with $\kappa_d:=-\frac{1}{2}-(a+1)\inf S$ if $\inf S\leq 0$ and with $\kappa_d:=-\frac{1}{2}$ if $\inf S\geq 0$.
For $M\in\eN\cup\{0\}$, we can now choose the screen $\bm{S}_M$ to be some vertical line between $-\frac{M+1}{1+a}$ and $-\frac{M+2}{1+a}$ and let $\bm{W}_M$ be the corresponding window.
Applying Theorem~\ref{dist_tube_formula_d}, we now obtain the following asymptotic distributional formula for the tube function $t\mapsto|(A_{\mathcal{L}_a})_t\cap(0,1)|$ when $t\to0^+$:
\begin{equation}
|(A_{\mathcal{L}_a})_t\cap(0,1)|=\!\!\sum_{\omega\in\po({\zeta}_{A_{{\mathcal{L}_a}}},W_M)}\!\!\res\left(\frac{t^{1-s}}{1-s}\zeta_{A_{\mathcal{L}_a}}(s;\d),\omega\right)+O(t^{1-\sup{S_M}}).
\end{equation}
More specifically, since we know that all the poles are simple and $\zeta_{\mathcal{L}_a}(0)=-1/2$ (see \cite[p.~205]{lapidusfrank12}),
we have that
\begin{equation}
\begin{aligned}
\res(\zeta_{A_{\mathcal{L}_a}},D)&={2^{1-D}}D^{-1}\res(\zeta_{{\mathcal{L}_a}},D)=2^{1-D}a^D,\ 
\res(\zeta_{A_{\mathcal{L}_a}},0)=2\zeta_{\mathcal{L}_a}(0)=-1.
\end{aligned}
\end{equation}
Consequently, and in agreement with the discussion following Proposition \ref{geo_dist} in the special case of simple complex dimensions (see, especially, Equation \eqref{5.5.12.3/4} above), we have that
\begin{equation}\label{a-str-t}
\begin{aligned}
|(A_{\mathcal{L}_a})_t\cap(0,1)|&=\frac{2^{1-D}a^D}{1-D}t^{1-D}-t-\sum_{m=1}^M\frac{\res\left(\zeta_{{\mathcal{L}_a}},-mD\right)(2t)^{1+mD}}{(1+mD)mD}\\
&+O\big(t^{1+(M+1)D}\big),\quad\textrm{as}\quad t\to0^+,
\end{aligned}
\end{equation}
where the sum is interpreted as being equal to $0$ if $M=0$.
In particular, since by choosing a screen to the right of $-D/2$, we conclude that \eqref{a-str-t} is actually valid pointwise because in that case $\kappa_d<0$ (see Theorem~\ref{pointwise_thm_d}), the above (pointwise) tube formula for such a screen then implies that $\dim_BA_{\mathcal{L}_a}=D$ (as was stated above), 
and that the $a$-string is Minkowski measurable with Minkowski content given by
\begin{equation}
\mathcal{M}^D(A_{\mathcal{L}_a})=\frac{2^{1-D}a^D}{1-D},
\end{equation}
as was first established in \cite[Example~5.1]{Lap1} and later reproved in [LapPo1--2] 
via a general Minkowski measurability  criterion for fractal strings (expressed in terms of the asymptotic behavior of $(\ell_j)_{j=1}^{\ty}$, here, $\ell_j\sim aj^{-1/D}$ as $j\to\ty$) and then, in [Lap-vFr1--3] 
(via the the theory of complex dimensions of fractal strings). 
We point out that \eqref{a-str-t} coincides with the `inner' tube formula of the $a$-string (see \cite[Subsection~8.1.2]{lapidusfrank12}).\footnote{More precisely, the two expressions coincide after we have taken into account the misprint mentioned in footnote \ref{25} and add the term $2\zeta_{\mathcal{L}}(0)$ which seems to be forgotten in \cite{lapidusfrank12}.}
\end{example}


\subsection{The Sierpi\'nski Gasket and $3$-Carpet}\label{subsec_sier}

In this subsection, we provide an exact, pointwise fractal tube formula for the Sierpi\'nski gasket (Example \ref{gsk_fract}) and for a three-dimensional analog of the Sierpi\'nski carpet (Example \ref{ex2}).
Naturally, although the required computation involved is somewhat more complicated, one could similarly derive from our general results in Section \ref{distance_tube} (and by using, in particular, the results of \cite[Example 4.1.4]{refds} or \cite{fzf} concerning the inhomogeneous Sierpi\'nski $N$-gasket RFD) exact, pointwise fractal tube formulas for the $N$-dimensional analogs of the Sierpi\'nski gasket and carpet, with $N\geq 2$ arbitrary.
We leave it to the interested reader to carry out the corresponding detailed computations and to imagine other (two- or higher-dimensional) examples of self-similar fractal sets or self-similar RFDs which can be dealt with explicitly within the present general theory of (higher-dimensional) fractal tube formulas.\footnote{The authors have recently obtained an explicit fractal tube formula for the Koch drum (or the Koch RFD).
This important example should be discussed in a later work and its conclusions compared with those of \cite{lappe1} (as discussed in \cite[Subsection 12.2.1]{lapidusfrank12}).}
The example of the $3$-carpet discussed in detail in Example \ref{ex2} below should give a good idea as to how  to proceed in other, related situations, including especially for the higher-dimensional inhomogeneous $N$-gasket RFDs (with $N\geq 4$).

\begin{example}({\em The Sierpi\'nski gasket}).\label{gsk_fract}
Let $A$ be the Sierpi\'nski gasket in $\eR^2$, constructed in the usual way inside the unit triangle.
Furthermore, we assume without loss of generality that $\d>1/4\sqrt{3}$, so that $A_\d$ be simply connected.
Then, the distance zeta function $\zeta_A$ of the Sierpi\'nski gasket is meromorphic on the whole complex plane and is given by
\begin{equation}
\zeta_A(s;\d)=\frac{6(\sqrt3)^{1-s}2^{-s}}{s(s-1)(2^s-3)}+2\pi\frac{\d^s}s+3\frac{\d^{s-1}}{s-1},
\end{equation}
for all $s\in\Ce$ (see \cite[Proposition 3.2.3]{fzf} or \cite[Example 3.27]{refds}). In particular, the set of complex dimensions of the Sierpi\'nski gasket is given by
\begin{equation}
\po({\zeta}_{A}):=\po(\zeta_A,\Ce)=\{0,1\}\cup\left(\log_23+\frac{2\pi}{\log 2}\I\Ze\right),
\end{equation}
with each complex dimension being simple.

By letting $\omega_k:=\log_23+\I k\mathbf{p}$ (for each $k\in\Ze$) and $\mathbf{p}:=2\pi/\log 2$, we have that
\begin{equation}
\res(\zeta_A,\omega_k)=\frac{6(\sqrt3)^{1-\omega_k}}{4^{\omega_k}(\log2)\omega_k(\omega_k-1)} \quad\textrm{for all}\ k\in\Ze,
\end{equation}
\begin{equation}
\res(\zeta_A,0)=3\sqrt{3}+2\pi,\quad\textrm{and}\quad\res(\zeta_A,1)=0.
\end{equation}
Similarly as in Example \ref{ecant}, one can check that $\zeta_{\lambda A}(\,\cdot\,;\d\lambda)$ is strongly languid with $\kappa_d:=-1$ for every $\d\geq 1/2\sqrt{3}$ and any $\lambda\geq 2\sqrt{3}$; so that we can apply Theorem~\ref{pointwise_thm_d} (or, more specifically, its corollary given in Theorem~\ref{5.3.15.1/2} at level $k=0$ and in the case of simple poles) in order to obtain the following exact pointwise fractal tube formula:
\begin{equation}\nonumber
\begin{aligned}
|A_t|&=\sum_{\omega\in\po({\zeta}_{A})}\res\left(\frac{t^{2-s}}{2-s}{\zeta}_A(s;\d),\omega\right)\\
&=t^{2-\log_23}\,\frac{6\sqrt{3}}{\log 2}\sum_{k=-\ty}^{+\ty}\frac{(4\sqrt{3})^{-\omega_k}t^{-\I k\mathbf{p}}}{(2-\omega_k)(\omega_k-1)\omega_k}+\left(\frac{3\sqrt{3}}{2}+\pi\right)t^2,\\
\end{aligned}
\end{equation}
valid for all $t\in(0,1/2\sqrt{3})$.
Note that this formula coincides with the one obtained in~\cite{lappe2} and \cite{lappewi1} and, more recently, via a different (but related) technique in~\cite{DeKoOzUr}.
\end{example}

\begin{figure}[ht]
\begin{center}
\includegraphics[trim=1.5cm 2.5cm 2.5cm 1.5cm,clip=true,width=6cm]{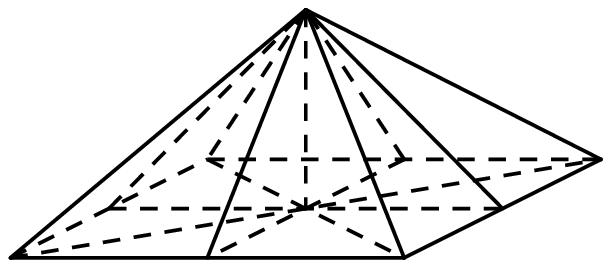}
\includegraphics[trim=6cm 3cm 9cm 2cm,clip=true,width=6cm]{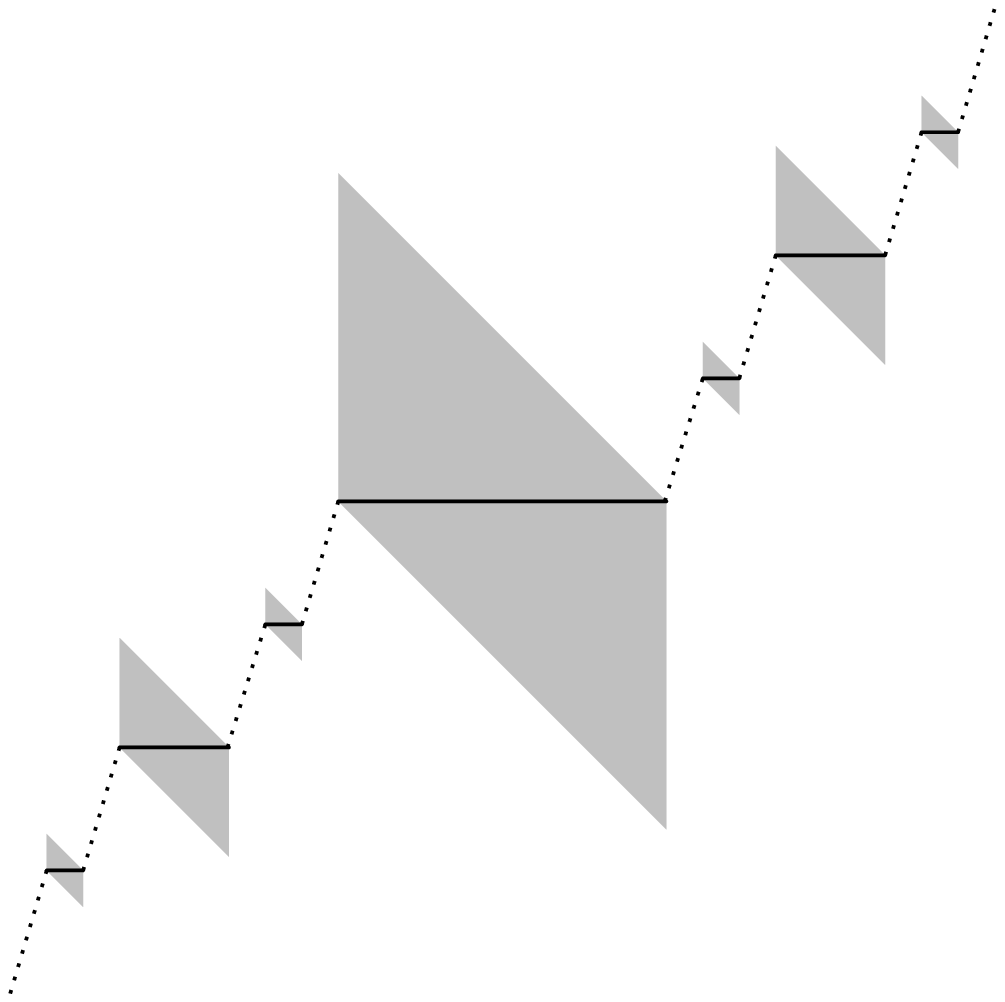}
\end{center}
\caption{{\bf Left:} The mutually congruent pyramids into which we subdivide the cube $A_1$ from Example \ref{ex2}. Eight of them, corresponding to one face of $A_1$, are shown here. {\bf Right:} The third step in the construction of the Cantor graph relative fractal drum $(A,\O)$ from Example~\ref{stair}.
One can see, in particular, the sets $B_k$, $\triangle_k$ and $\widetilde{\triangle}_k$ for $k=1,2,3$.}
\label{piramide}
\end{figure}

\begin{example}({\em The $3$-carpet}).\label{ex2}
Let $A$ be the three-dimensional analog of the Sierpi\'nski carpet.
More specifically, we construct $A$ by dividing the closed unit cube of $\eR^3$ into $27$ congruent cubes and remove the open middle cube.
Then, we iterate this step with each of the $26$ remaining smaller closed cubes; and so on, ad infinitum.
By choosing $\d>1/6$, we have that $A_\d$ is simply connected.
Let us now calculate the distance zeta function $\zeta_A$ of the three-dimensional carpet $A$.
Note that 
$$
\zeta_A(s;\d)=\zeta_{A,I}(s)+\zeta_{A,A_\d\setminus I}(s),
$$
where $I$ denotes the closed unit cube in $\eR^3$.
Let us denote by $B_1$ the open unit cube of side $1/3$ removed in the first step of the construction; so that we have the following equalities:
\begin{equation}\label{5542}
\zeta_{A,I}(s)=\zeta_{A,B_1}(s)+\zeta_{A,I\setminus B_1}(s)=\zeta_{\partial B_1,B_1}(s)+26\,\zeta_{3^{-1}A,3^{-1}I}(s),
\end{equation}
for all $s\in\Ce$ with $\re s$ sufficiently large.
The first equality is obvious, while the second equality in \eqref{5542} follows from the self-similarity of $A$.
More precisely, it follows since the relative fractal drum $(A,I\setminus B_1)$ consists of 26 copies of $(A,I)$ scaled down by $3^{-1}$.
Hence, by the scaling property of the relative distance zeta function (see \cite[Theorem 4.1.38]{fzf} or \cite[Section 2.2]{mefzf}), we have that
\begin{equation}\nonumber
\zeta_{A,I}(s)=\zeta_{\partial B_1,B_1}(s)+26\cdot 3^{-s}\zeta_{A,I}(s),
\end{equation}
which yields
\begin{equation}\label{par1}
\zeta_{A,I}(s)=\frac{\zeta_{\partial B_1,B_1}(s)}{1-26\cdot 3^{-s}},
\end{equation}
for all $s\in\Ce$ with $\re s$ sufficiently large.
The distance zeta function $\zeta_{\partial B_1,B_1}$ can be easily calculated by dividing the cube $B_1$ into $48$ mutually congruent pyramids (see Figure \ref{piramide}, left) and then integrating in local Cartesian coordinates $(x,y)\in\eR^2$ over each resulting pyramid:
\begin{equation}\label{par2}
\zeta_{\partial B_1,B_1}(s)=48\int_0^{1/6}\di x\int_0^x\di y\int_0^yz^{s-3}\di z=\frac{48\cdot 6^{-s}}{s(s-1)(s-2)},
\end{equation}
valid for all $s\in\Ce$ such that $\re s>2$.
On the other hand, the distance zeta function $\zeta_{A,A_\d\setminus I}(s)$ corresponding to the `outside' of the unit cube $I$ is easy to calculate once we have subdivided the parts that correspond to the faces, edges and vertices of the unit cube and used local Cartesian, cylindrical and spherical coordinates in $\eR^3$, respectively:
\begin{equation}
\begin{aligned}
\zeta_{A,A_\d\setminus I}(s)&=6\int_0^1\di x\int_0^1\di y\int_0^{\d}z^{s-3}\di z+12\int_0^{\pi/2}\di\varphi\int_0^{\d}r^{s-2}\di r\int_0^1\di z\\
&\phantom{=}+8\int_0^{\pi/2}\sin\theta\di\theta\int_0^{\pi/2}\di\varphi\int_0^{\d}r^{s-1}\di r=\frac{6\d^{s-2}}{s-2}+\frac{6\pi\d^{s-1}}{s-1}+\frac{4\pi\d^s}{s},
\end{aligned}
\end{equation} 
again valid for all $s\in\Ce$ such that $\re s>2$.
From the above calculation and from \eqref{par1} together with \eqref{par2}, we deduce that $\zeta_A$ can be meromorphically continued to all of $\Ce$ and is then given by
\begin{equation}
\zeta_A(s):=\zeta_A(s;\d)=\frac{48\cdot 2^{-s}}{s(s-1)(s-2)(3^s-26)}+\frac{4\pi\d^s}{s}+\frac{6\pi\d^{s-1}}{s-1}+\frac{6\d^{s-2}}{s-2},
\end{equation}
for every $s\in\Ce$.

It follows that the set of complex dimensions of the $3$-carpet $A$ is given by
\begin{equation}\label{3-carp_po}
\po(\zeta_A):=\po({\zeta}_A,\Ce)=\{0,1,2\}\cup\big(\log_326+\mathbf{p}\I\Ze\big),
\end{equation}
where $D:=\log_326\ (=D(\zeta_A))$ is the Minkowski (or box) dimension of the $3$-carpet $A$ and $\mathbf{p}:=2\pi/\log 3$ is the oscillatory period of $A$ (viewed as a lattice self-similar set).
In \eqref{3-carp_po}, each of the complex dimensions is simple.
Furthermore, a routine computation shows that
\begin{equation}
\res(\zeta_A,0)=4\pi-\frac{24}{25},\q \res(\zeta_A,1)=6\pi+\frac{24}{23},\q \res(\zeta_A,2)=\frac{96}{17}
\end{equation}
and, by letting $\omega_k:=\log_326+\I k\mathbf{p}$ (for all $k\in\Ze$),
\begin{equation}
\res(\zeta_A,\omega_k)=\frac{24}{13\cdot 2^{\omega_k}\omega_k(\omega_k-1)(\omega_k-2)\log 3}.
\end{equation}

One also easily checks that the hypotheses of Theorem \ref{pointwise_thm_d} (or, really, Theorem \ref{5.3.15.1/2} since all of the complex dimensions in \eqref{3-carp_po} are simple) are satisfied for every $\d\geq 1/2$ and any scaling factor $\lambda\geq 2$, and thus we obtain the following exact pointwise tube formula, valid for all $t\in(0,1/{2})$:
\begin{equation}\label{tube_3_carp}
\begin{aligned}
|A_t|&=\frac{24\,t^{3-\log_326}}{13\log 3}\sum_{k=-\ty}^{+\ty}\frac{2^{-\omega_k}t^{-\I k\mathbf{p}}}{(3-\omega_k)(\omega_k-1)(\omega_k-2)\omega_k}\\
&\phantom{=}+\left(6-\frac{6}{17}\right)t+\left(3\pi+\frac{12}{23}\right)t^2+\left(\frac{4\pi}{3}-\frac{8}{25}\right)t^3.
\end{aligned}
\end{equation}
In particular, from the above formula we conclude that $D:=\dim_BA=\log_326$ (as was noted before) and 
that the three-dimensional Sierpi\'nski carpet is not Minkowski measurable, which is expected (see \cite{Lap3}).
We also point out that the part $6t+3\pi t^2+4\pi t^3/3$ from the above Equation~\eqref{tube_3_carp} is exactly equal to $|I_t|-|I|$, where $I$ is the closed unit cube of $\eR^3$.

Finally, we note that clearly, the first term in the right-hand side of \eqref{tube_3_carp} can be rewritten in the following form (still with $D:=\dim_BA$):
\begin{equation}\label{5.5.27.1/2}
(2t)^{3-D}G\left(\log_3(2t)^{-1}\right),
\end{equation}
where $G$ is a positive, nonconstant $1$-periodic function which is bounded away from zero and infinity.
Therefore, also as expected (see \cite{Lap3}), the $3$-carpet is Minkowski nondegenerate: $0<\mathcal{M}_*(A)<\mathcal{M}^*(A)<\ty$.

\medskip

Of course, exactly the same comment as above about the Minkowski nonmeasurability and the Minkowski nondegeneracy could have been made about the Sierpi\'nski gasket discussed in Example \ref{gsk_fract}.
\end{example}

We caution the reader, however, that the situation concerning the $N$-dimensional Sierpi\'nski $N$-gasket studied in \cite[Example 4.1.4]{refds} or \cite[Example 4.2.24]{fzf} is more complicated in higher dimensions.
For instance, for $N=3$, this RFD is Minkowski degenerate (specifically, $\mathcal{M}=+\infty$) but (because its distance zeta function has a double pole at $s=D=2$) it is $h$-Minkowski measurable with respect to the gauge function $h(t):=\log t^{-1}$.
(See also \cite{brezis}, in addition to [LapRa\v Zu1,4].) 
Furthermore, and somewhat surprisingly, when $N\geq 4$, the Sierpi\'nski $N$-gasket RFD is Minkowski measurable and subcritically  Minkowski nonmeasurable (while nondegenerate); see \cite[Remark 5.5.26$(c)$]{fzf}.
This follows from the fact that when $N\geq 4$, the dimension of the boundary  of the generator of the $N$-gasket RFD (viewed as a self-similar fractal spray) is strictly larger that the similarity dimension of the RFD.



\subsection{A Relative Fractal Drum Generated by the Cantor Function}\label{subsec_devil}

The example dicussed in this subsection, namely, a version of the Cantor graph (or ``devil's staircase'', in the terminology of \cite{Man}) plays an important role in showing why the notion of complex dimensions gives a lot more information than the mere (Minkowski or Hausdorff) fractal dimension, as will be explained below in relation to the elusive notion of ``fractality''.

\begin{example}\label{stair}({\em The Cantor function RFD}).
In this example, we compute the distance zeta function of the RFD $(A,\O)$ in $\eR^{2}$, where $A$ is the graph of the Cantor function and $\O$ is the union of triangles $\triangle_k$ that lie above and the triangles $\widetilde{\triangle}_k$ that lie below each of the horizontal parts of the graph denoted by $B_k$.
(At each step of the construction there are $2^{k-1}$ mutually congruent triangles $\triangle_k$ and $\widetilde{\triangle}_k$.)
Each of these triangles is isosceles, has for one of its sides a horizontal part of the Cantor function graph, and has a right angle at the left end of $B_k$, in the case of $\triangle_k$, or at the right end of $B_k$, in the case of $\widetilde{\triangle}_k$.
(See Figure~\ref{piramide}, right.)

For obvious geometric reasons and by using the scaling property of the relative distance zeta function of the resulting RFD $(A,\O)$ (see \cite[Theorem 4.1.38]{fzf} or \cite[Section 2.2]{mefzf}), we then have the following identity:
\begin{equation}\label{eq5.5.52}
\begin{aligned}
\zeta_{A,\O}(s)&=\sum_{k=1}^{\ty}2^{k}\zeta_{B_k,\triangle_k}(s)=\sum_{k=1}^{\ty}2^{k}\zeta_{3^{-k}B_1,3^{-k}\triangle_1}(s)\\
&=\zeta_{B_1,\triangle_1}(s)\sum_{k=1}^{\ty}\frac{2^{k}}{3^{ks}}=\frac{2\zeta_{B_1,\triangle_1}(s)}{3^s-2},
\end{aligned}
\end{equation}
valid for all $s\in\Ce$ with $\re s$ sufficiently large. 
Here, $(B_1,\triangle_1)$ is the relative fractal drum described above with two perpendicular sides of length equal to 1.
It is straightforward to compute its relative distance zeta function:
\begin{equation}
\zeta_{B_1,\triangle_1}(s)=\int_0^1\di x\int_0^xy^{s-2}\di y=\frac{1}{s(s-1)},
\end{equation}
valid, initially, for all $s\in\Ce$ such that $\re s>1$ and, upon meromorphic continuation, for all $s\in\Ce$.
This fact, combined with (the last equality of) Equation \eqref{eq5.5.52}, gives us the distance zeta function of $(A,\O)$, which is clearly meromorphic on all of $\Ce$:
\begin{equation}\label{zeta_devil_stair}
\zeta_{A,\O}(s)=\frac{2}{s(3^s-2)(s-1)},\quad\textrm{for all}\ s\in\Ce.
\end{equation}
We therefore deduce that the set of complex dimensions of the RFD $(A,\O)$ is given by
\begin{equation}\label{devil_dim}
\po(\zeta_{A,\O}):=\po({\zeta}_{A,\O},\Ce)=\{0,1\}\cup\left(\log_32+\frac{2\pi}{\log3}\I\Ze\right),
\end{equation}
with each complex dimension being simple.

We will see in a moment that $\dim_B(A,\O)=1$ and that the RFD $(A,\O)$ is Minkowski measurable.
Moreover, we will also see that the (one-dimensional) Minkowski content of $(A,\O)$ is given by 
\begin{equation}\label{devil_mink}
\mathcal{M}^{1}(A,\O)=\frac{\res(\zeta_{A,\O},1)}{2-1}=2,
\end{equation}
which coincides with the length of the Cantor graph (i.e., the graph of the Cantor function, also called the devil's staircase in \cite{Man}).

In the sequel, we associate the RFD $(A,A_{1/3})$ in $\eR^2$ to the classic Cantor graph.
We do not know if~\eqref{devil_dim} coincides with the set of complex dimensions of the `full' graph of the Cantor function (i.e., the original devil's staircase), or equivalently, the RFD $(A,A_{1/3})$, but we expect that this is indeed the case since $(A,\O)$ is a `relative fractal subdrum' of $(A,A_{1/3})$.
Moreover, it is obvious that for the distance zeta function of the RFD $(A,A_{1/3})$ associated with the graph of the Cantor function, one has
\begin{equation}\label{pole-pole}
{\zeta}_{A,A_{1/3}}(s)=\zeta_{A,\O}(s)+\zeta_{A,A_{1/3}\setminus\O}(s).
\end{equation}
In order to prove that $\po(\zeta_{A,\O})$, given by~\eqref{devil_dim}, is a subset of the complex dimensions of the `full' Cantor graph, it would therefore remain to show that $\zeta_{A,A_{1/3}\setminus\O}(s)$ has a meromorphic continuation to some connected open neigborhood $U$ of the critical line $\{\re s=1\}$ such that $U$ contains the set of complex dimensions of $(A,\O)$, as given by \eqref{devil_dim}, and that there are no pole-pole cancellation in the right-hand side of~\eqref{pole-pole}.
%

One easily checks that $\lambda^s\zeta_{A,\O}(s;1/3)$ is strongly $d$-languid for any $\lambda\geq 1$, with $\kappa_d:=-2$, and thus we can apply Theorem~\ref{pointwise_thm_d} in order to obtain the following exact pointwise fractal tube formula for the RFD $(A,\O)$, valid for all $t\in(0,1)$:
\begin{equation}\label{racun_devil}
\begin{aligned}
V_{A,\O}(t):=|A_t\cap\O|&=\!\!\!\sum_{\omega\in\po({\zeta}_{A,\O})}\!\!\!\res\left(\frac{t^{2-s}}{2-s}{\zeta}_{A,\O}(s),\omega\right)=\!\sum_{\omega\in\po({\zeta}_{A,\O})}\!\frac{t^{2-\omega}}{2-\omega}\res\left({\zeta}_{A,\O},\omega\right)\\
&=2t+\frac{t^{2-\log_32}}{\log 3}\sum_{k=-\ty}^{+\ty}\frac{t^{-\I k\mathbf{p}}}{(2-\omega_k)(\omega_k-1)\omega_k}+t^2\\
&=2t^{2-D_{CF}}+{t^{2-D_{CS}}}G_{CF}\left(\log_3 t^{-1}\right)+t^2,\\
\end{aligned}
\vspace{4pt}
\end{equation}
where $\omega_k:=\log_32+\I k\mathbf{p}$ (for each $k\in\Ze$), $D_{CF}=\dim_{B}(A,\O)=1$, $D_{CS}=\log_32$ and $\mathbf{p}:=2\pi/\log 3$.

In the last line of \eqref{racun_devil}, $G_{CF}$ is a nonconstant $1$-periodic function on $\eR$, which is bounded away from zero and infinity.
It is given by the following absolutely convergent (and hence, convergent) Fourier series:
\begin{equation}\label{5.5.58.1//2}
G_{CF}(x):=\frac{1}{\log 3}\sum_{k=-\ty}^{+\ty}\frac{\E^{2\pi\I kx}}{(2-\omega_k)(\omega_k-1)\omega_k},\quad\textrm{for all}\ x\in\eR. 
\end{equation}
Note that in order to obtain the third equality in \eqref{racun_devil}, and hence also the above expression for $G_{CF}$ given in \eqref{5.5.58.1//2}, we have used the fact that (in light of \eqref{zeta_devil_stair} and \eqref{devil_dim})
\begin{equation}\label{5.5.8.1/2}
\res\big(\zeta_{A,\O}(s),\omega_k\big)=\frac{1}{\log 3(\omega_k-1)\omega_k},
\end{equation}
for all $k\in\Ze$.

It is interesting that it follows from \eqref{racun_devil} and \eqref{5.5.58.1//2} that even though this version of the Cantor graph, described by the RFD $(A,\O)$, is Minkowski measurable and hence does not have any oscillations of leading order, it has {\em oscillations of lower order}, corresponding to the complex dimensions of the Cantor set (or string) of the form $D_{CS}+\I k\mathbf{p}$, with $k\in\Ze$ (see Example \ref{ecant}, especially, Equation \eqref{comp_dim_CC}); that is, it has {\em subcritical oscillations}, of order $2-D_{CS}\approx 1.3691$, where $D_{CS}:=\log_32$ is the Minkowski dimension of the Cantor set (or string).
In fact, in light of the pointwise fractal tube formula \eqref{racun_devil} and since the RFD $(A,\O)$ has Minkowski content $\mathcal{M}_{CF}:=\mathcal{M}(A,\O)=2$ (see Equation \eqref{devil_mink} above), as well as Minkowski dimension $D_{CF}:=\dim_B(A,\O)=1$, we have that
\begin{equation}\label{5.5.34.1/4}
\begin{aligned}
0&<\liminf_{t\to 0^+}\,\,t^{-(2-D_{CS})}\left|\mathcal{M}_{CF}t^{2-D_{CF}}-V_{A,\O}(t)\right|\\
&<\limsup_{t\to 0^+}\,\,t^{-(2-D_{CS})}\left|\mathcal{M}_{CF}t^{2-D_{CF}}-V_{A,\O}(t)\right|<\ty.
\end{aligned}
\end{equation}
Hence, we see that even though the {\em leading term} (as $t\to 0^+$) in the fractal tube formula \eqref{racun_devil} is of order $2-D_{CF}=1$, determined by the Minkowski dimension $D_{CF}=1$ of $(A,\O)$, as should be case, and is {\em monotonic} (and therefore, {\em nonoscillatory}), the {\em asymptotic second term}, $h(t):=t^{2-D_{CS}}G(\log_3t^{-1})$, is of order $2-D_{CS}$, determined by the Minkowski dimension $D_{CS}=\log_32$ of the {\em Cantor set} (or string), and is {\em oscillatory} (in fact, multiplicatively periodic, or ``$\log$-periodic'', to use the physicists' terminology).
\end{example}

\begin{remark}\label{5.5.9.1/2}({\em Critical vs subcritical fractals}).
The above example motivates us to propose to call a geometric object ``fractal'' if it has at least one nonreal complex dimension (or if its fractal zeta function has a natural boundary along a suitable screen, in which case it is said to be ``hyperfractal'').
(See \cite[Sections 12.1 and 12.2]{lapidusfrank12}, along with \cite[Subsection~13.4.3]{lapidusfrank12}, as adapted and extended to our general higher-dimensional theory of complex dimensions in \cite[Definition 2.38]{refds} \cite[Definition 4.6.23 and Remark 4.6.24 of Subsection 4.6.3]{fzf}.)
Accordingly, the present version of the Cantor graph (i.e., the RFD $(A,\O)$ from Example \ref{stair} just above) is ``fractal'' in this sense.

In addition, following [Lap-vFr1--3] 
(see, especially, \cite[Section~3.7]{lapidusfrank12}), given $d\in\eR$ (with $d\leq N$), we say that a geometric object is {\em fractal in dimension $d$} if it has at least one {\em nonreal} (visible) complex dimension of real part $d$.\footnote{We allow here the number $d$ to be nonpositive, since it enables us to deal with a broader class of potential fractals.}
(Automatically, it will have at least one pair of nonreal complex conjugate complex dimensions of real part $d$.)
If the object in question is an RFD $(A,\O)$ (and, in particular, a bounded set $A$) in $\eR^N$, with upper (relative) Minkowski dimension $\ov{\dim}_B(A,\O)$ (or, in particular $\ov{\dim}_BA$) denoted by $\ov{D}$, then we can distinguish between the following two different and interesting cases:\footnote{We assume here implicitly that the fractal zeta  function of $(A,\O)$ under consideration has a meromorphic extension to a connected open neighborhood of the critical line $\{\re s=\ov{\dim}_{B}(A,\O)\}$, say, to the interior of a window $\bm W$ with associated screen $\bm S$ such that $\sup S<\ov{D}:=\ov{\dim}_B(A,\O)$.
We also assume that $\ov{D}\in\eR$; i.e. (since $\ov{D}\leq N$), $\ov{D}\neq -\ty$.}

\medskip

$(i)$ ({\em Critical case}). The RFD $(A,\O)$ is fractal in dimension $d:=\ov{D}$, in which case $(A,\O)$ is said to be {\em critically fractal}.
Indeed, under suitable hypotheses, it then follows from the fractal tube formulas of Sections \ref{sec_point}--\ref{distance_tube} that it has at least one nonreal complex dimension on the critical line $\{\re s=\ov{D}\}$, thereby giving rise to {\em geometric oscillations of leading order}.

\medskip

$(ii)$ ({\em Subcritical case}). The RFD $(A,\O)$ is not fractal in dimension $\ov{D}$ (i.e., it does not have any nonreal principal complex dimension), but it is fractal in some dimension $d<\ov{D}$.
The RFD $(A,\O)$ is then said to be {\em subcritically fractal}.
(Sometimes, we will also say that $(A,\O)$ is {\em ``strictly subcritically fractal''} in order to emphasize the fact that $d<D$, and we will say that $(A,\O)$ is {\em ``possibly subcritically fractal''} in order to indicate that $d\leq D$ instead of $d<D$.)

\medskip

[Other cases are possible, such as $(A,\O)$ being hyperfractal (in the sense of [LapRa\v Zu1--4]), 
even in case $(i)$ or $(ii)$, or else $(A,\O)$ being {\em nonfractal}; that is, neither having a nonreal (visible) complex dimension nor being hyperfractal.
However, we are not concerned with these situations in the present context.]

\medskip

Given an RFD $(A,\O)$, we define $\alpha\in\eR\cup\{-\ty\}$, the {\em subcriticality index} of $(A,\O)$, via the following formula:
\begin{equation}\label{5.5.34.1/2}
\alpha=\alpha_{A,\O}:=\sup\left\{d\in\eR\,:\,(A,\O)\textrm{ is fractal in dimension }d\right\}.
\end{equation}
By convention, we let $\a_{A,\O}=-\ty$ if $(A,\O)$ is not fractal for any $d\in\eR$.
(Clearly, we always have $\a_{A,\O}\leq\ov{D}\leq N$.)

\medskip

We note that even if $(A,\O)$ is subcritically fractal, it could happen that $\alpha_{A,\O}=\ov{D}:=\ov{\dim}_B(A,\O)$.
This is the case, for instance, if $(A,\O):=(\pa\O,\O)$ is a generic, nonlattice self-similar string, in the sense of \cite[Subsection~3.2.1]{lapidusfrank12}.\footnote{Recall from \cite[Chapters~2--3]{lapidusfrank12} that a self-similar string with distinct scaling ratios $\rho_1,\ldots,\rho_n$ in $(0,1)$ is said to be {\em lattice} (resp., {\em nonlattice}) if the rank of the group generated by $\rho_1,\ldots,\rho_n$ (viewed as a multiplicative subgroup of $(0,+\ty)$) is equal to $1$ (resp., $>1$), and {\em generic nonlattice} if the rank is equal to $n$, the maximal possible rank.}
Then, as was conjectured in \cite[Subsection~3.7.1]{lapidusfrank12} (as well as, more specifically, in reference [Lap-vF6] of \cite{lapidusfrank12}) and later proved in \cite{MorSepVi}, the {\em set of dimensions of fractality} of $(A,\O)$ (i.e., the set of real numbers $d$ such that $(A,\O)$ is fractal in dimension $d$) is dense in some compact interval of the form $[D_*,\ov{D}]$, with $D_*\in\eR$ and $D_*<\ov{D}$.
As a result, in light of \eqref{5.5.34.1/2}, it follows that $\alpha_{A,\O}=\ov{D}$.
However, $(A,\O)$ is not critically fractal (because according to \cite[Theorem~2.16]{lapidusfrank12}, a (generic) nonlattice string does not have any nonreal complex dimensions of real part $\ov{D}$), even though it is subcritically fractal in dimension $d<\ov{D}$ for a dense (and countable) set of real numbers $d$ in $[D_*,\ov{D}]$.
\end{remark}

We now return to the RFD considered in Example \ref{stair} (that is, the version of the Cantor graph denoted by $(A,\O)$), and we refer to Remark \ref{5.5.9.1/2} just above for the appropriate terminology and definitions.
As we have seen, $(A,\O)$ is fractal.
More specifically, {\em it is not critically fractal} (because its only complex dimension of real part $D_{CF}\ (=\ov{D}=\dim_{B}(A,\O))=1$ is $1$ itself, the Minkowski dimension of the Cantor graph, and it is simple) {\em but it is $($strictly$)$ subcritically fractal}.
In fact, it is subcritically fractal in a single dimension, namely, in dimension $d=D_{CS}=\log_32$, the Minkowski dimension of the Cantor set.
Consequently, in light of \eqref{5.5.34.1/2}, the subcriticality index of $(A,\O)$ is given by
$
\alpha_{A,\O}=D_{CS}=\log_32,
$
and it is attained.

We expect the same result to hold for the devil's staircase itself (i.e., the `full' graph of the Cantor function), represented by the RFD $(A,A_{1/3})$ and of which $(A,\O)$ is a `relative fractal subdrum', as above.
Clearly, in light of \eqref{pole-pole} and \eqref{devil_dim}, we have the following inclusions (between multisets):
\begin{equation}\label{5.5.34.4/5}
\begin{aligned}
\po(\zeta_{A,A_{1/3}})&\subseteq\po(\zeta_{A,\Omega})\cup\po(\zeta_{A,A_{1/3}\setminus\O})\subseteq\{0,1\}\cup\left\{D_{CS}+\frac{2\pi}{\log 3}\I\Ze\right\}.
\end{aligned}
\end{equation}

Also, we know for a fact that $\dim_B(A,A_{1/3})$ exists and 
\begin{equation}\label{DA1/3}
D(\zeta_{A,A_{1/3}})=\dim_B(A,A_{1/3})=1,
\end{equation}
so that
\begin{equation}\label{DA1/3PC}
\dim_{PC}(A,A_{1/3}):=\po_c(\zeta_{A,A_{1/3}})=\{1\}.
\end{equation}
(Thus, we also have that $\{1\}\stq \po(\zeta_{A,A_{1/3}})$ in \eqref{5.5.34.4/5}.) Note that \eqref{DA1/3} (and hence, \eqref{DA1/3PC}) follows from the rectifiability of the devil's staircase, combined with a well-known result in \cite{federer} and with part $(b)$ of Theorem \ref{an_rel}.

As was mentioned earlier in the discussion of Example \ref{stair} (and was predicted in \cite[Subsections 12.1.2 and 12.3.2]{lapidusfrank12}, based on an `approximate tube formula'), we expect that $\po(\zeta_{A,A_{1/3}})=\po(\zeta_{A,\O})$, as given by \eqref{devil_dim}, and hence, that we actually have equalities instead of inclusions in \eqref{5.5.34.4/5}, even equalities between multisets.
If so, then the `full' Cantor graph $(A,A_{1/3})$ is fractal, not critically fractal, but (strictly) subcritically fractal in the single dimension $d:=D_{CS}=\log_32$.

Clearly, both $(A,\O)$ and $(A,A_{1/3})$ should be fractal for a proper definition of fractality.
This would completely resolve the following apparent paradox: the RFD $(A,A_{1/3})$ is not ``fractal'' according to Mandelbrot's original definition of fractality given in \cite{Man},\footnote{Indeed, Mandelbrot's definition, given in \cite[p.\ 15]{Man}, can be stated as follows.
A geometric object is ``fractal'' if its Hausdorff dimension is strictly greater than (i.e., is not equal to) its topological dimension.
However, note that the Hausdorff, Minkowski and topological dimensions coincide and are equal to $1$ in the case of  (either the `full' or the `partial') Cantor graph.
If, in addition, we replaced ``Hausdorf dimension'' by (relative, upper) ``Minkowski dimension'' in the above definition and we interpreted the topological dimension in the obvious way, we would also reach the analogous conclusion for both $(A,A_{1/3})$ and $(A,\O)$, which therefore would still not be fractal according to this modified Mandelbrot definition.} even though everyone feels and expects it to be ``fractal'' simply after having glanced at the `full' Cantor graph $(A,A_{1/3})$ (the `devil's staircase' in the sense of \cite{Man}).
The same is true for the `partial' Cantor graph $(A,\O)$, for which we can now rigorously prove that it is ``fractal'' (in the sense of the theory of complex dimensions) even though it is only (strictly) subcritically fractal, which may explain, in hindsight, why some practitioners refer to it as a ``borderline fractal''.

We conclude this discussion by quoting (as in \cite[p.\ 335]{lapidusfrank12}) Mandelbrot \cite[p.\ 82]{Man} writing about the devil's staircase (the `full' Cantor graph, depicted in \cite[Plate 83, p.\ 83]{Man}):

\bigskip

\begin{center}
\begin{minipage}[c]{0.8\textwidth}
{\em One would love to call the present curve a fractal, but to achieve this goal we would have to define fractals less stringently, on the basis of notions other than $D$} [the Hausdorff dimension] {\em alone.}
\end{minipage}
\end{center}

\bigskip

Thanks to the higher-dimensional theory of complex dimensions of fractals and associated fractal tube formulas developed in this paper and in [LapRa\v Zu1--8], 
building on the corresponding theory for fractal strings developed in [Lap-vFr1--3], 
we are now tentatively close to having resolved this apparent paradox.
Furthermore, if we use the `partial' Cantor graph $(A,\O)$ as a suitable substitute for the `full' Cantor graph, viewed as the RFD $(A,A_{1/3})$, the corresponding paradox is indeed completely resolved here.
We invite the interested reader to extend the conclusions of the present example (i.e., Example \ref{stair}) from $(A,\O)$ to $(A,A_{1/3})$, and thereby, to fully prove the conjectures and statements made in \cite[Subsection~12.1.2]{lapidusfrank12} as well as here about the devil's staircase itself.

\subsection{Fractal Nests and Unbounded Geometric Chirps}\label{subsec_nestch}

In this subsection, we apply our general fractal tube formulas to several families of fractal nests (Example \ref{ex_nest}) and of (unbounded) geometric chirps (Example \ref{unb_chirp}).
Both of these families are examples of ``fractal'' sets which are {\em not} self-similar or, more generally, `self-alike' in any sense.

\begin{example}({\em Fractal nests}).\label{ex_nest}
We let $\mathcal{L}=(\ell_j)_{j\geq 1}$ be a bounded fractal string and, as before, let $A_{\mathcal{L}}=\{a_k:k\in\eN\}\subset\eR$, with $a_k:=\sum_{j\geq k}\ell_j$ for each $k\geq 1$.
Furthermore, consider now $A_{\mathcal{L}}$ as a subset of the $x_1$-axis in $\eR^2$ and let $A$ be the planar set obtained by rotating $A_{\mathcal{L}}$ around the origin; i.e., $A$ is a union of concentric circles of radii $a_k$ and center at the origin which we call the {\em fractal nest of center type} generated by the fracctal string $\mathcal{L}$; see Figure \ref{nest_center}.
For $\d>\ell_1/2$, the distance zeta function of $A$ is given by
\begin{equation}
\zeta_{A}(s)=\frac{2^{2-s}\pi}{s-1}\sum_{j=1}^{\ty}\ell_j^{s-1}(a_j+a_{j+1})+\frac{2\pi\d^{s}}{s}+\frac{2\pi a_1\d^{s-1}}{s-1};
\end{equation}
see \cite[Example 3.5.1]{fzf}.
The last two terms in the above formula correspond to the annulus $a_1<r<a_1+\d$ and we will neglect them; that is, we will consider only the relative distance zeta function $\zeta_{A,\O}$, with $\O:=B_{a_1}(0)$.\footnote{Here, $B_r(x)$ denotes the open ball of radius $r$ with center at $x$.}
Furthermore, since $a_{j+1}=a_j-\ell_j$ for each $j\geq 1$, we have
\begin{equation}
\begin{aligned}
\zeta_{A,\O}(s)&=\frac{2^{2-s}\pi}{s-1}\sum_{k=1}^{\ty}\ell_j^{s-1}(2a_j-\ell_j)
=\frac{2^{3-s}\pi}{s-1}\zeta_1(s)-\frac{2^{2-s}\pi}{s-1}\zeta_{\mathcal{L}}(s),
\end{aligned}
\end{equation}
where we have denoted by $\zeta_1$ the first of the two sums appearing after the first equality and where $\zeta_{\mathcal{L}}$ is the geometric zeta function of the fractal string $\mathcal{L}$.

\medskip

\begin{figure}[t]
\begin{center}
\includegraphics[trim=0cm 2.7cm 0cm 0cm,clip=true,width=8cm]{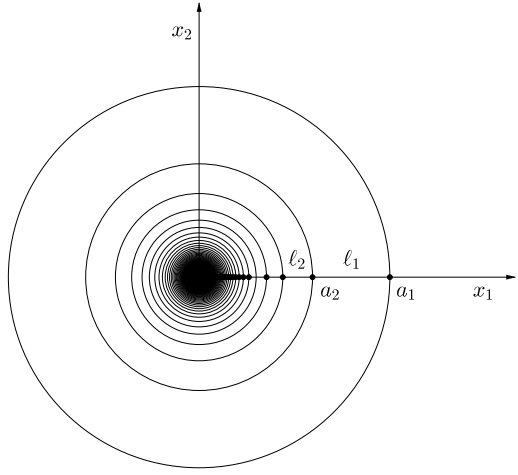}
\end{center}
\vskip-.5cm
\caption{The fractal nest of center type in the plane $\eR^2$ generated by the fractal string $\mathcal{L}=(\ell_j)_{j\ge1}$.
Note that for every $k\ge1$ we have $\ell_k=a_k-a_{k+1}$, where $a_k:=\sum_{j\ge k}\ell_j$. Furthermore, we have $A_{\mathcal{L}}:=\{a_k:k\ge1\}$.}\label{nest_center}
\end{figure}

Let us next consider a special case of the fractal nest above; that is, the relative fractal drum $(A_a,\O)$ corresponding to the $a$-string $\mathcal{L}:=\mathcal{L}_a$, with $a>0$; so that $\ell_j:=j^{-a}-(j+1)^{-a}$ for all $j\geq 1$ and hence, $a_j=j^{-a}$ for every $j\geq 1$.
In this case, we have that
\begin{equation}\label{zeta_nest}
\zeta_{A_a,\O}(s)=\frac{2^{3-s}\pi}{s-1}\sum_{j=1}^{\ty}j^{-a}\ell_j^{s-1}-\frac{2^{2-s}\pi}{s-1}\zeta_{\mathcal{L}}(s).
\end{equation}
Since the geometric zeta function $\zeta_{\mathcal{L}}=\zeta_{\mathcal{L}_a}$ has already been analyzed in Example \ref{ex_a} (based on the results of \cite[Subsection~6.5.1]{lapidusfrank12}), we will now do the same for the zeta function $\zeta_1$ by means of a technique analogous to the one used in the proof of \cite[Theorem 6.21]{lapidusfrank12}.
Here, $\zeta_1(s)$ is initially defined by the following Dirichlet series (still with $\ell_j:=j^{-a}-(j+1)^{-a}$, for all $j\geq 1$):
\begin{equation}\label{5.5.65.1/2..}
\zeta_1(s)=\sum_{j=1}^{\ty}j^{-a}\ell_j^{s-1},
\end{equation}
for all $s\in\Ce$ with $\re s$ sufficiently large.
Hence, we have $\zeta_1(s)=\zeta_{\mathcal{L},-a}(s-1)$, in the notation of the next theorem, and $\zeta_1(s)=\zeta_{\mathcal{L},-a,1}(s)$, in the notation of Corollary \ref{kor_zet} following it.
\end{example}

\begin{theorem}\label{anest}
Let $a>0$, $b\in\eR$, and let $\mathcal{L}=\mathcal{L}_a$ be the $a$-string with lengths $\ell_j$ given by \eqref{lj}; i.e., $\ell_j=j^{-a}-(j+1)^{-a}$ for all $j\geq 1$.
Then, the Dirichlet series $\zeta_{\mathcal{L},b}(s):=\sum_{j=1}^{\ty}j^{b}\ell_j^{s}$ $($defined initially for all $s\in\Ce$ with $\re s$ sufficiently large$)$  has a meromorphic continuation to all of $\Ce$. The poles of $\zeta_{\mathcal{L},b}$ are located at 
\begin{equation}
C:=D(\zeta_{\mathcal{L},b})=\frac{b+1}{a+1}
\end{equation}
and in $($a subset of$)$ $\big\{\frac{b-m}{a+1}:m\in\eN_0\big\}\setminus\{0\}$, and they are all simple.\footnote{Here, as usual, we let $\eN_0:=\eN\cup\{0\}$.}
In particular, we have the following inclusions$:$\footnote{For `generic' values of $a$ and $b$, the second inclusion in \eqref{5.5.67.1/2} should be an equality while for `most' values of those parameters, $\po(\zeta_{\mathcal{L},b})$ should at least contain an infinite subset of $\{\frac{b-m}{a+1}:m\in\eN_0\}$. However, this informal comment will not be needed in the sequel and the underlying conjecture has not been proved.\label{foot_zeta_b}}
\begin{equation}\label{5.5.67.1/2}
\begin{aligned}
\left\{\frac{b+1}{a+1}\right\}&\subseteq\po(\zeta_{\mathcal{L},b}):=\po(\zeta_{\mathcal{L},b},\Ce)\subseteq\left\{\frac{b+1}{a+1}\right\}\cup\left(\left\{\frac{b-m}{a+1}:m\in\eN_0\right\}\setminus\{0\}\right).
\end{aligned}
\end{equation}
Furthermore, the residue of $\zeta_{\mathcal{L},b}$ at $C=\frac{b+1}{a+1}$ is equal to $\frac{a^C}{a+1}$; so that $\frac{b+1}{a+1}$ is always a $($necessarily simple$)$ pole of $\zeta_{\mathcal{L},b}$.

Moreover, for any screen $\bm{S}_\sigma$  chosen to be a vertical line $\{\re s=\sigma\}$, with $\sigma\in\eR\setminus\po(\zeta_{\mathcal{L},b})$, the zeta function $\zeta_{\mathcal{L},b}$ satisfies the languidity conditions {\bf L1} and {\bf L2}, with $\kappa=\frac{1}{2}+b-(a+1)\sigma$ if $\sigma\leq\frac{b}{a+1}$ and $\kappa=\frac{1}{2}(1+b-(a+1)\sigma)$ if $\sigma\in\big[\frac{b}{a+1},\frac{b+1}{a+1}\big]$.

Finally, we have that $\zeta_{\mathcal{L},b}(0)=\zeta(-b)$ for all $b\in\eR\setminus\{-1\}$, where $\zeta$ is the Riemann zeta function.
\end{theorem}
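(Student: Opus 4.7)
The plan is to follow the template of \cite[Theorem 6.21]{lapidusfrank12} (which treats the case $b=0$), adapting the analysis to the extra weight $j^b$. The key idea is to expand $\ell_j^s$ asymptotically in powers of $j^{-1}$ so as to express $\zeta_{\mathcal{L},b}(s)$ as an explicit linear combination of shifted Riemann zeta functions, plus a holomorphic remainder.

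First I would write $\ell_j = j^{-a}(1 - (1+1/j)^{-a})$ and invoke the generalized binomial series, which gives
\begin{equation*}
\ell_j = a\,j^{-(a+1)}\bigl(1 + g(j)\bigr), \qquad g(j) = \sum_{k \geq 1} g_k\,j^{-k},
\end{equation*}
a convergent expansion for $j$ sufficiently large with $g(j) = O(j^{-1})$ as $j \to \ty$. Raising to the power $s$ and applying the binomial series once more to $(1 + g(j))^s$, I would collect the result by powers of $j^{-1}$ to obtain, for each integer $M \geq 0$,
\begin{equation*}
\ell_j^s = a^s j^{-(a+1)s}\Bigl(\sum_{m=0}^{M} h_m(s)\,j^{-m} + R_M(s,j)\Bigr),
\end{equation*}
where $h_0 \equiv 1$, each $h_m$ is a polynomial in $s$ of degree at most $m$ with $h_m(0) = 0$ for $m \geq 1$ (since every contributing term carries a factor $\binom{s}{k}$ with $k \geq 1$), and $R_M(s,j) = O(j^{-(M+1)})$ uniformly for $s$ in compact subsets of $\Ce$. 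After separating off the finitely many small-$j$ terms (which are entire in $s$), multiplying by $j^b$ and summing over $j$ yields the core identity
\begin{equation*}
\zeta_{\mathcal{L},b}(s) = a^s \sum_{m=0}^{M} h_m(s)\,\zeta\bigl((a+1)s - b + m\bigr) + a^s E_M(s) + F(s),
\end{equation*}
initially valid for $\re s > C$, where $F$ is entire and $E_M(s) := \sum_{j \geq J_0} j^{b-(a+1)s} R_M(s,j)$ is holomorphic on the half-plane $\{\re s > (b - M)/(a+1)\}$. Since $M$ can be taken arbitrarily large, this extends $\zeta_{\mathcal{L},b}$ meromorphically to all of $\Ce$ and forces its poles to lie in $\{(b+1-m)/(a+1) : m \in \eN_0\}$, matching the inclusions in \eqref{5.5.67.1/2} after relabeling (with the point $s = 0$ removed by the special-value analysis below).

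From the $m = 0$ term alone (all others being regular there), the chain rule and $\res_{z=1}\zeta(z) = 1$ give $\res(\zeta_{\mathcal{L},b}, C) = a^C h_0(C)/(a+1) = a^C/(a+1) \neq 0$, confirming that $C$ is a simple pole of $\zeta_{\mathcal{L},b}$ and hence that $C = D(\zeta_{\mathcal{L},b})$. The value $\zeta_{\mathcal{L},b}(0) = \zeta(-b)$ for $b \neq -1$ then follows directly from the identity by setting $s = 0$: the $m \geq 1$ terms vanish because $h_m(0) = 0$, and the remainder also vanishes since $R_M(0,j) = 0$ identically (as $(1 + g(j))^0 = 1$). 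The languidity exponents emerge by inserting the classical convexity bounds $|\zeta(\sigma + i\tau)| \ll |\tau|^{1/2 - \sigma}$ for $\sigma \leq 0$ and $|\zeta(\sigma + i\tau)| \ll |\tau|^{(1-\sigma)/2}$ for $0 \leq \sigma \leq 1$ into the leading factor $\zeta((a+1)s - b)$ along the vertical line $\{\re s = \sigma\}$; the other zeta shifts and the polynomial prefactors $h_m$ contribute subleading powers of $|\tau|$, producing the claimed $\kappa$ in each regime.

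The main technical obstacle I anticipate is the uniform control of $R_M(s,j)$ needed to justify both the meromorphic extension and the languidity estimates, in particular the polynomial growth of $E_M(s)$ in $|\im s|$ on (a neighborhood of) the screen $\bm{S}_\sigma$. This, however, is standard given the explicit $O(j^{-(M+1)})$ bound on $R_M$ combined with Cauchy-type derivative estimates for the binomial coefficients $\binom{s}{k}$, and proceeds exactly as in the proof of \cite[Theorem 6.21]{lapidusfrank12}; the only genuinely new computation relative to that reference is the appearance of the weight $j^b$, which merely shifts all arguments of $\zeta$ by $-b$ and thereby translates the poles horizontally by $b/(a+1)$.
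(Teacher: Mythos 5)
Apart from the last claim, your proposal is essentially the paper's own proof: the same expansion of $\ell_j^s=a^sj^{-(a+1)s}(1+\cdots)^s$ in powers of $j^{-1}$ (truncated at an arbitrary $M$, with a remainder holomorphic on $\{\re s>(b-M)/(a+1)\}$), the same reduction to a finite combination of shifted Riemann zeta functions $\zeta((a+1)s-b+m)$, the same identification of the candidate simple poles and of the residue $a^{C}/(a+1)$ at $C$, and the same appeal to the convexity bounds for $\zeta$ on vertical lines for the languidity exponents (the paper is no more detailed than you are about the uniform control of the remainder, and it likewise waves at the domination of the lower-order terms; note only that the prefactors $h_m(s)$ grow like $|\tau|^m$, which exactly offsets, rather than being subleading to, the decrease of the exponent of $\zeta((a+1)s-b+m)$).

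Where you genuinely depart from the paper is the final claim $\zeta_{\mathcal{L},b}(0)=\zeta(-b)$, and there your step has a real gap. You set $s=0$ in the identity and assert that the $m\geq 1$ terms vanish because $h_m(0)=0$; but when $b\in\eN_0$ the term with $m=b+1$ couples this zero with the simple pole of $\zeta((a+1)s-b+m)$ at $s=0$, and the product has a finite but generically nonzero limit, namely $h_m'(0)/(a+1)$. Concretely, for $b=0$ one has $h_1(s)=-\frac{a+1}{2}\,s$, so this term contributes $-\frac{1}{2}$ and your own expansion yields $\zeta_{\mathcal{L},0}(0)=\zeta(0)-\frac{1}{2}=-1$ rather than $\zeta(0)$; the same value $-1$ is confirmed for $a=1$ by expanding $(j(j+1))^{-s}$ about $j+\frac{1}{2}$, using $\zeta(0,\frac{3}{2})=-1$. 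So your argument only goes through for $b\notin\eN_0$ (and for such $b$ it is fine, indeed more self-contained than the paper's), and at $b\in\eN_0$ no repair is possible without adding the correction term; you should restrict the parameter range or compute that correction. (The exclusion of $s=0$ from the pole set does survive, since the limit is finite; only the value is affected. You should also record the easy check that the entire correction $F$ coming from the finitely many small-$j$ terms satisfies $F(0)=0$, since you use it implicitly.) For comparison, the paper proves this last point differently: it observes that for $b<-1$ the Dirichlet series converges at $s=0$ and equals $\zeta(-b)$, and then invokes analytic continuation in the parameter $b$; note that this route also requires analyticity of $b\mapsto\zeta_{\mathcal{L},b}(0)$ across $b\in\eN_0$, which is precisely what your computation calls into question, since the polar divisor $(a+1)s-b+m=1$ passes through the point $(s,b)=(0,m-1)$.
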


\begin{proof}
We begin by computing the first term of an asymptotic expansion of $\ell_j$:
\begin{equation}
\ell_j=j^{-a}-(j+1)^{-a}=a\int_j^{j+1}x^{-a-1}\di x=aj^{-a-1}+H(j),
\end{equation}
where $j\geq 1$ and $H(j):=a\int_j^{j+1}(x^{-a-1}-j^{-a-1})\di x$.
We next introduce a new variable $t:=x/j-1$ and let
\begin{equation}\label{hj}
h_j:=a^{-1}j^{a+1}H(j)=j\int_0^{1/j}\big((1+t)^{-a-1}-1\big)\di t.
\end{equation}
Note that $h_j=O(1/j)$ as $j\to\ty$.
By now choosing an integer $M\geq 0$, we have
\begin{equation}
\begin{aligned}
j^b\ell_j^s&=j^b\big(aj^{-a-1}(1+h_j)\big)^s\\
&=a^sj^{b-s(a+1)}\left(\sum_{n=0}^{M}\binom{s}{n}h_j^n+O\left(\frac{(|s|+1)^{M+1}}{j^{M+1}}\right)\right)\ \textrm{as}\ j\to\ty,
\end{aligned}
\end{equation}
where we have let
\begin{equation}
\binom{s}{n}:=\frac{(s-n+1)_n}{n!},\quad\textrm{for all}\ s\in\Ce\textrm{ and }n\in\eN_0.
\end{equation}
(Clearly, $\binom{s}{n}$ is a natural generalization of the usual binomial coefficient to an arbitrary value of the parameter $s\in\Ce$.)
We thus obtain the following identity:
\begin{equation}\label{zhj}
\zeta_{\mathcal{L},b}(s)=\sum_{n=0}^{M}a^s\binom{s}{n}\sum_{j=0}^{\ty}h_j^nj^{b-s(a+1)}+f(s),
\end{equation}
where $f(s)$ is defined and holomorphic on the open half-plane $\{\re s>\frac{b-M}{a+1}\}$.
Furthermore, the first term (i.e., the term corresponding to $n=0$ in the above sum) is equal to $a^s\zeta((a+1)s-b)$, where $\zeta$ is the Riemann zeta function, and thus has a single, simple pole at $s=C:=D(\zeta_{\mathcal{L},b})=\frac{b+1}{a+1}$.\footnote{See, e.g., \cite{Titch2} or \cite{Edw} for the relevant properties of the Riemann zeta function.
Recall, in particular, that $\zeta$ has a meromorphic continuation to all of $\Ce$ with a single, simple pole at $s=1$ (with residue $1$) and that it is initially defined by the Dirichlet series $\zeta(s)=\sum_{j=1}^{\ty}j^{-s}$ for all $s\in\Ce$ with $\re s>1$.}
In order to compute the residue of $a^s\zeta((a+1)s-b)$ at $s=\frac{b+1}{a+1}$, we use the fact that the principal part of the Riemann zeta function at $s=1$ is equal to $1/(s-1)$ and consequently,
\begin{equation}
\lim_{s\to C}(s-C)a^s\zeta((a+1)s-b)=\lim_{s\to C}a^s\frac{s-C}{(a+1)s-b-1}=\frac{a^{\frac{b+1}{a+1}}}{a+1}.
\end{equation}
A well-known result about the growth of the Riemann zeta function along vertical lines (see, e.g., \cite[Section~9.2]{Edw}) implies that the first term in \eqref{zhj} grows along the vertical lines $\{\re s=\sigma\}$, with $\sigma\in\eR$, as $(|t|+1)^{\frac{1}{2}+b-\sigma(a+1)}$ if $\sigma<\frac{b}{a+1}$, as $(|t|+1)^{\frac{1}{2}(b+1-(a+1)\sigma)}$ for $\sigma\in\big[\frac{b}{a+1},\frac{b+1}{a+1}\big]$, and is bounded from above by a constant (possibly depending on $\sigma$) if $\sigma>\frac{b+1}{a+1}$.

It now remains to analyze the functions
\begin{equation}\label{hjn}
\sum_{j=1}^{\ty}h_j^nj^{b-(a+1)s},
\end{equation}
for each $n\geq 1$.

Let us fix $M\in\eN_0$, for now.
Then, the asymptotic expansion $(1+t)^{-a-1}=\sum_{m=0}^{M}\binom{-a-1}{m}t^m+O(t^{M+1})$ as $t\to0^+$, together with \eqref{hj}, yields
\begin{equation}
\begin{aligned}
h_j&=j\int_0^{1/j}\sum_{m=1}^{M}\binom{-a-1}{m}t^m\di t+O(j^{-M-1})\\
&=-\frac{1}{a}\sum_{m=1}^{M}\binom{-a}{m+1}j^{-m}+O(j^{-M-1})\quad\textrm{as}\ j\to+\ty.
\end{aligned}
\end{equation}
We proceed by taking the $n$-th power of the above expansion to obtain an asymptotic expansion for $h_j^n$ and substitute this into \eqref{hjn}.
This enables us to express each of the functions in \eqref{hjn} as a sum of constant multiples of $\zeta(m+(a+1)s-b)$, for $n\leq m\leq M$, and of a remainder term of order $O(j^{-M-1})$.
Since $\zeta(m+(a+1)s-b)$ has a simple pole at $s=\frac{b+1-m}{a+1}$ and in view of \eqref{zhj}, we conclude that $\zeta_{\mathcal{L},b}(s)$ has a meromorphic continuation to the open right half-plane $\{\re s>\frac{b+1-M}{1+a}\}$, with simple poles at $s=\frac{b+1-m}{1+a}$ for $m=0,1,2,\ldots,M$.
To be more specific, some of these potential poles of $\zeta_{\mathcal{L},b}$ may not actually be poles (due to cancellations), depending on the choice of the parameters $a$ and $b$.
(See, however, the unproven assertion in footnote \ref{foot_zeta_b}.)
Furthermore, $0$ is never a pole of $\zeta_{{\mathcal{L},b}}$, since by looking at \eqref{zhj} we can see that it is canceled by the factor $\binom{s}{m}$ for $m\geq 1$.  
Moreover, since $M$ is arbitrary, we conclude that $\zeta_{\mathcal{L},b}$ has a meromorphic continuation to all of $\Ce$.
Next, note that for each integer $m\geq 1$, the growth of $\zeta(m+(a+1)s-b)$ is dominated by the growth of the first term $a^s\zeta((a+1)s-b)$ and therefore, we have proved the statement about the languidity of $\zeta_{\mathcal{L},b}$.

Finally, the last statement of the theorem follows from an application of the principle of analytic continuation since we deduce directly from the definition of $\zeta_{\mathcal{L},b}$ that $\zeta_{\mathcal{L},b}(0)=\zeta(-b)$ for all $b\in\{\re s<-1\}$.
\end{proof}

In order to complete the present discussion of the example of the fractal nests, as well as in preparation for the example of the unbounded geometric chirps (Example \ref{unb_chirp} below), we will need the following simple consequence of the above theorem.

\begin{corollary}\label{kor_zet}
Let $a>0$, $b\in\eR$, $\tau\in\eR$ and let $\mathcal{L}:=\mathcal{L}_a$ be the $a$-string with lengths $\ell_j$ given by \eqref{lj}.
Then, the Dirichlet series $\zeta_{\mathcal{L},b,\tau}(s):=\sum_{j=1}^{\ty}j^{b}\ell_j^{s-\tau}$ $($initially defined for all $s\in\Ce$ with $\re s$ sufficiently large$)$ has a meromorphic continuation to all of $\Ce$. The poles of $\zeta_{\mathcal{L},b,\tau}$ are located at 
\begin{equation}
D(\zeta_{\mathcal{L},b,\tau})=\frac{b+1}{a+1}+\tau
\end{equation}
and in $($a subset of$)$ $\big\{\frac{b-m}{a+1}+\tau:m\in\eN_0\big\}\setminus\{\tau\}$, and they are all simple.
In particular, we have the following inclusions$:$\footnote{A comment entirely analogous to the one made in footnote \ref{foot_zeta_b} on page \pageref{foot_zeta_b} holds relative to `generic' (or else `most') values of the parameters $a$, $b$ and $\tau$.
(Recall that $\mathcal{L}=\mathcal{L}_a$, so that $\zeta_{\mathcal{L},b,\tau}$ depends on $a$, $b$ and $\tau$.)}
\begin{equation}\label{5.5.67.1/22}
\begin{aligned}
\left\{\frac{b+1}{a+1}+\tau\right\}&\subseteq\po(\zeta_{\mathcal{L},b,\tau}):=\po(\zeta_{\mathcal{L},b,\tau},\Ce)\\
&\subseteq\left\{\frac{b+1}{a+1}+\tau\right\}\cup\left(\left\{\frac{b-m}{a+1}+\tau:m\in\eN_0\right\}\setminus\{\tau\}\right).
\end{aligned}
\end{equation}
Furthermore, the residue of $\zeta_{\mathcal{L},b,\tau}$ at $\frac{b+1}{a+1}+\tau$ is equal to $\frac{a^{({b+1})/({a+1})}}{a+1}$; so that $D(\zeta_{\mathcal{L},b,\tau})=\frac{b+1}{a+1}+\tau$ is always a $($necessarily simple$)$ pole of $\zeta_{\mathcal{L},b,\tau}$.

Moreover, for any screen $\bm{S}_\sigma$ chosen to be a vertical line $\{\re s=\sigma\}$, with $\sigma\in\eR\setminus\po(\zeta_{\mathcal{L},b,\tau})$, the zeta function $\zeta_{\mathcal{L},b,\tau}$ satisfies the languidity conditions {\bf L1} and {\bf L2}, with $\kappa=\frac{1}{2}+b-(a+1)\sigma$ if $\sigma\leq\frac{b}{a+1}+\tau$ and $\kappa=\frac{1}{2}(1+b-(a+1)\sigma)$ if $\sigma\in\big[\frac{b}{a+1}+\tau,\frac{b+1}{a+1}+\tau\big]$.

Finally, we have that $\zeta_{\mathcal{L},b,\tau}(\tau)=\zeta(-b)$ for all $b\in\eR\setminus\{-1\}$.
\end{corollary}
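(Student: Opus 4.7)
The plan is to observe that $\zeta_{\mathcal{L},b,\tau}$ is obtained from the zeta function $\zeta_{\mathcal{L},b}$ already analyzed in Theorem \ref{anest} by a simple horizontal translation of the complex variable. Indeed, directly from the defining Dirichlet series (valid wherever both are absolutely convergent), one has
\begin{equation}
\zeta_{\mathcal{L},b,\tau}(s) = \sum_{j=1}^{\infty} j^{b}\ell_j^{s-\tau} = \zeta_{\mathcal{L},b}(s-\tau),
\end{equation}
and by the principle of analytic continuation, this functional identity persists as an identity between meromorphic functions on $\Ce$, once the right-hand side has been meromorphically extended via Theorem \ref{anest}. The whole proof will consist of reading off each conclusion of the corollary from the corresponding conclusion of Theorem \ref{anest} after performing the substitution $s\mapsto s-\tau$.

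First, the meromorphic extension of $\zeta_{\mathcal{L},b,\tau}$ to all of $\Ce$ follows at once. Since absolute convergence of $\zeta_{\mathcal{L},b,\tau}$ at $s$ is equivalent to absolute convergence of $\zeta_{\mathcal{L},b}$ at $s-\tau$, the abscissae of convergence are related by $D(\zeta_{\mathcal{L},b,\tau}) = D(\zeta_{\mathcal{L},b}) + \tau = \tfrac{b+1}{a+1}+\tau$. The set of poles of $\zeta_{\mathcal{L},b,\tau}$ is the $+\tau$-translate of the set of poles of $\zeta_{\mathcal{L},b}$, and translation preserves the multiplicity of each pole; hence all poles remain simple, and the inclusions \eqref{5.5.67.1/22} follow from the corresponding inclusions in Theorem \ref{anest}. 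In particular, since $0$ is never a pole of $\zeta_{\mathcal{L},b}$, the point $\tau$ is excluded from the pole set of $\zeta_{\mathcal{L},b,\tau}$, which accounts for the exclusion of $\tau$ on the right-hand side of \eqref{5.5.67.1/22}.

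For the remaining assertions, residues are invariant under translation of the independent variable, so
\begin{equation}
\res\!\big(\zeta_{\mathcal{L},b,\tau},\,\tfrac{b+1}{a+1}+\tau\big) \;=\; \res\!\big(\zeta_{\mathcal{L},b},\,\tfrac{b+1}{a+1}\big) \;=\; \frac{a^{(b+1)/(a+1)}}{a+1},
\end{equation}
by Theorem \ref{anest}; in particular this point is actually a pole. For the languidity statement, a vertical screen $\{\re s=\sigma\}$ for $\zeta_{\mathcal{L},b,\tau}$ corresponds to the vertical screen $\{\re s=\sigma-\tau\}$ for $\zeta_{\mathcal{L},b}$, and growth in the imaginary direction is preserved under horizontal translation; applying the languidity bound of Theorem \ref{anest} at $\sigma-\tau$ and rewriting the case thresholds via $\sigma-\tau\le\tfrac{b}{a+1}\iff\sigma\le\tfrac{b}{a+1}+\tau$ (and analogously for the second interval) yields the claimed exponents. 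Finally, evaluating at $s=\tau$ gives $\zeta_{\mathcal{L},b,\tau}(\tau)=\zeta_{\mathcal{L},b}(0)=\zeta(-b)$ for all $b\in\eR\setminus\{-1\}$. Since the whole argument is a translation, there is essentially no obstacle; the only point requiring care is the bookkeeping of the case thresholds in the languidity exponent under the substitution $\sigma\leftrightarrow\sigma-\tau$.
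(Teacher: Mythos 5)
Your proposal is correct and is essentially identical to the paper's own proof, which consists precisely of the one-line observation that $\zeta_{\mathcal{L},b,\tau}(s)=\zeta_{\mathcal{L},b}(s-\tau)$ and then reads all conclusions off from Theorem \ref{anest}; you have simply spelled out the translation bookkeeping (abscissa, poles, residues, languidity thresholds, value at $s=\tau$) in more detail.
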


\begin{proof}
Since $\zeta_{\mathcal{L},b,\tau}(s)=\zeta_{\mathcal{L},b}(s-\tau)$, this an immediate consequence of Theorem \ref{anest}.
\end{proof}

Let us now return to Example \ref{ex_nest}, where the distance zeta function of $({A_a},\O)$ is given by \eqref{zeta_nest}; see also \eqref{5.5.65.1/2..} and the brief discussion following it.
We therefore deduce from Corollary \ref{kor_zet} and the discussion of $\zeta_{\mathcal{L}}=\zeta_{\mathcal{L}_a}$ in Example \ref{ex_nest}, combined with an application of the principle of analytic continuation, that $\zeta_{A_a,\O}$ is meromorphic on all of $\Ce$ and is given for all $s\in\Ce$ by
\begin{equation}\label{za1}
\zeta_{A_a,\O}(s)=\frac{2^{3-s}\pi}{s-1}\zeta_{\mathcal{L},-a,1}(s)-\frac{2^{2-s}\pi}{s-1}\zeta_{\mathcal{L}}(s).
\end{equation}
Moreover, 
the set of complex dimensions of $({A_a},\O)$ satisfies the inclusion
\begin{equation}
\begin{aligned}
\po({\zeta}_{A_a,\O})&:=\po({\zeta}_{A_a,\O},\Ce)\subseteq\left\{1,\frac{2}{a+1},\frac{1}{a+1}\right\}\cup\left\{-\frac{m}{a+1}:m\in\eN\right\}.
\end{aligned}
\end{equation}
Provided $a\neq1$, all of the above (potential) complex dimension are simple and if $a=1$ the complex dimension $\omega=1$ has multiplicity 2.
Furthermore, we are certain that $\frac{2}{a+1}$ is always a complex dimension of $({A_a},\O)$ since it is never canceled, as a pole.
Namely, by letting $D:=\frac{2}{a+1}$, we have for all positive $a\neq 1$ that
\begin{equation}
\res\left({\zeta}_{A_a,\O},D\right)=\frac{2^{2-D}D\pi}{D-1}a^{D-1}.
\end{equation}
We will see shortly that it will then follow from the fractal tube formula for $(A_a,\O)$ that 
if $a\in(0,1)$, $\dim_B({A_a},\O)=D(\zeta_{A_a,\O})=D$ and $(A_a,\O)$ is Minkowski measurable with Minkowski content given by
\begin{equation}\label{mink_a<1}
\mathcal{M}^D({A_a},\O)=\frac{2^{2-D}D\pi}{(2-D)(D-1)}a^{D-1}.
\end{equation}

Furthermore, 
it will also follow that if $a>1$, we have that $\dim_B({A_a},\O)=1$ and the corresponding residue is given by
\begin{equation}
\res({\zeta}_{A_a,\O},1)=4\pi\zeta_{\mathcal{L},-a,1}(1)-2\pi\zeta_{\mathcal{L}}(1)=4\pi\zeta(a)-2\pi
\end{equation}
Therefore, still for $a>1$, the RFD $(A_a,\O)$ is Minkowski measurable with Minkowski content given by
\begin{equation}\label{mink_a>1}
\mathcal{M}^1({A_a},\O)=4\pi\zeta(a)-2\pi;
\end{equation}
note that $\mathcal{M}^1({A_a},\O)$ is positive since $\zeta(a)>1$ for $a>1$; so that $2\pi<\mathcal{M}^1({A_a},\O)<\ty.$

In the critical case when $a=1$, we have that $s=1$ is a pole of second order (i.e., of multiplicity two) of ${\zeta}_{A_1,\O}(s)$ and since it is a simple pole of $\zeta_{\mathcal{L},-1,1}$,  we deduce from \eqref{za1} that 
\begin{equation}\label{res-1}
\res({\zeta}_{A_1,\O},1)=4\pi\zeta_{\mathcal{L},-1,1}[1]_0-2\pi,
\end{equation}
where for each $m\in\Ze$, $\zeta_{\mathcal{L},-1,1}[\omega]_m$\label{mthcoeff} indicates the $m$-th coefficient in the Laurent series expansion of $\zeta_{\mathcal{L},-1,1}$ around $s=\omega$.
We conclude that in this case (i.e., when $a=1$), by Theorem \ref{pole1} (and part $(b)$ of Theorem \ref{an_rel}), the RFD $({A_1},\O)$ must be Minkowski degenerate with $\dim_B({A_1},\O)=D(\zeta_{A_a,\O})=1$.\footnote{Actually, it can also be shown directly that $\mathcal{M}^1({A_1},\O)$ exists in this case and is equal to $+\ty$.}
We can also compute the coefficient corresponding to $(s-1)^{-2}$ in the Laurent expansion of ${\zeta}_{A_1,\O}$ around $s=1$, by using Corollary \ref{kor_zet}:
\begin{equation}\label{res-2}
{\zeta}_{A_1,\O}[1]_{-2}=4\pi\res(\zeta_{\mathcal{L},-1,1},1)=2\pi.
\end{equation}

Assume now that $a\neq 1$.
For $M\in\eN\cup\{0\}$, as before, we choose the screen $\bm{S}_M$ to be some vertical line between $-\frac{M+1}{1+a}$ and $-\frac{M+2}{1+a}$, and let $\bm{W}_M$ be the corresponding window.
After having applied Theorem~\ref{dist_tube_formula_d}, we then obtain the following asymptotic distributional formula for the tube function $V(t):=|(A_{a})_t\cap\O|$, as $t\to0^+$:
\begin{equation}\label{5.5.53.1/2}
\begin{aligned}
V(t)&=\frac{2^{2-D}D\pi}{(2-D)(D-1)}a^{D-1}t^{2-D}+\big(4\pi\zeta(a)-2\pi\big)t+\frac{\res\left(\zeta_{A_a,\O},\frac{1}{a+1}\right)t^{2-\frac{1}{a+1}}}{2-\frac{1}{a+1}}\\ 
&\phantom{=}+\sum_{m=1}^M\frac{\res\left(\zeta_{A_a,\O},-\frac{m}{a+1}\right)t^{2+\frac{m}{a+1}}}{2+\frac{m}{a+1}}+O\big(t^{2+\frac{M+1}{a+1}}\big)\quad\textrm{as}\quad t\to0^+,
\end{aligned}
\vspace{5pt}
\end{equation}
where the sum is interpreted as being equal to $0$ if $M=1$. 
By choosing as a screen a vertical line $\{\re s=\sigma\}$, with $\sigma>-\frac{1}{2(a+1)}$, we obtain a pointwise fractal tube formula with a pointwise error term of order $O(t^{2-\sigma})$; indeed, in light of Corollary \ref{kor_zet}, we have that $\kappa_d<0$ and hence, we can apply part $(i)$ of Theorem \ref{pointwise_thm_d}.
This pointwise formula is still given by \eqref{5.5.53.1/2} but now interpreted pointwise and valid for all $t>0$.
It is actually initially valid for all $t\in(0,\d)$ but since $\d>\ell_1/2$ may be taken arbitrary large, we conclude that it is valid for all $t>0$.
Of course, we actually do not know much about the above error term when $t$ is not close to zero, which is unimportant since we are not interested in the value of $V(t)=|(A_{a})_t\cap\O|$ for large $t$.
(Note also that clearly, $|(A_{a})_t\cap\O|=|\O|=|B_1(0)|=\pi$ for $t$ sufficiently large.)
From the obtained pointwise formula, it now follows (as was claimed) that for $a\neq 1$, $(A_a,\O)$ is Minkowski measurable with $\dim_B(A_a,\O)=\max\{1,D\}$ and with Minkowski content given by \eqref{mink_a<1} if $a<1$ or \eqref{mink_a>1} if $a>1$.   

Let us next consider the critical case when $a=1$. Choose a screen given by the vertical line $\{\re s=\sigma\}$, with $\sigma\in(-3/4,-1/2)$; we then obtain the following pointwise fractal tube formula with error term:
\begin{equation}
\begin{aligned}
V(t)&=\res\left(\frac{t^{2-s}}{2-s}\zeta_{A_1,\O}(s),1\right)+\frac{2}{3}{\res\left(\zeta_{A_1,\O},\frac{1}{2}\right)t^{\frac{3}{2}}}\\
&\phantom{=}+\frac{2}{5}{\res\left(\zeta_{A_1,\O},-\frac{1}{2}\right)t^{\frac{5}{2}}}+O(t^{2-\sigma})\ \textrm{as}\ t\to0^+.\\
\end{aligned}
\end{equation}
We expand the function $t^{2-s}/(2-s)$ into a Taylor series around $s=1$, as folows:
\begin{equation}\label{taylor}
\frac{t^{2-s}}{2-s}=t\sum_{n=0}^{\ty}(s-1)^n\sum_{k=0}^{n}\frac{(-1)^{n-k}(\log t^{-1})^k}{k!(n-k)!}.
\end{equation}
We then deduce from \eqref{res-1} and \eqref{res-2} that
\begin{equation}
\res\left(\frac{t^{2-s}}{2-s}\zeta_{A_1,\O}(s),1\right)=2\pi t\log t^{-1}+4\pi t(\zeta_{\mathcal{L},-1,1}[1]_0-1);
\end{equation}
so that (still pointwise)
\begin{equation}
\begin{aligned}
V(t)&=2\pi t\log t^{-1}+4\pi t(\zeta_{\mathcal{L},-1,1}[1]_0-1)+o(t)\quad \textrm{as}\ t\to0^+.
\end{aligned}
\end{equation}
The above tube formula is in agreement with the fact that $(A_1,\O)$ is Minkowski degenerate but it is also clear that one  can choose the function $h(t):=\log t^{-1}$, for all $t\in(0,1)$, as an appropriate gauge function (see \cite[Section 2.5]{refds},\cite{ftf_b} or \cite[Subsection 6.1.1]{fzf} for an introduction to gauge functions; see also \cite{lapidushe}).
More precisely, one then has that $\mathcal{M}^1(A_1,\O,h)$, the {\em gauge relative Minkowski content} of $(A_1,\O)$, is well defined and
\begin{equation}
\mathcal{M}^1(A_1,\O,h)=\lim_{t\to0^+}\frac{|(A_1)_t\cap\O|}{t\,h(t)}=2\pi.
\end{equation}
In particular, the RFD $(A,\O)$ is $h$-Minkowski measurable.

\begin{example}({\em Unbounded geometric chirps}).\label{unb_chirp}
In this example, we consider and study a type of unbounded geometric chirp depicted in Figure \ref{pres_4}.
A standard {\em geometric $(\a,\b)$-chirp}, with positive parameters $\a$ and $\b$, is a simple geometric approximation of the graph of the function $f(x)=x^{\a}\sin(\pi x^{-\b})$, for all $x\in(0,1)$.
(See \cite[Example 4.4.1 and Proposition 4.4.2]{fzf}.)

\begin{figure}[t]
\begin{center}
\includegraphics[width=12cm]{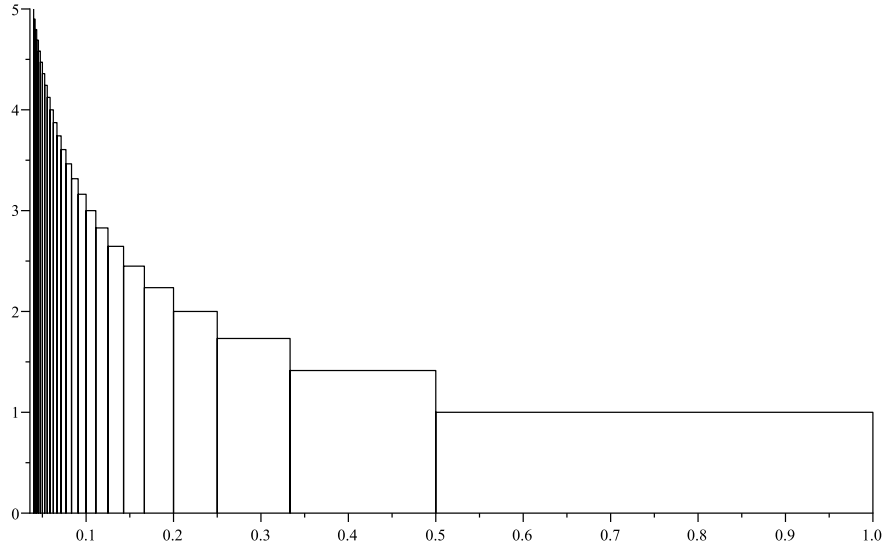}
\caption{The unbounded $(\a,\b)$-geometric chirp from Example \ref{unb_chirp}, which approximates the function $x\mapsto x^{\a}\sin(\pi x^{-\b})$ for $x\in(0,1)$. Here, $\a=-1/2$ and $\b=1$. In the corresponding RFD $(A,\O)$, the set $A$ is defined as the union of the vertical segments while the set $\O$ is defined as the union of the (open) rectangles.}
\label{pres_4}
\end{center}
\end{figure}

By choosing parameters $-1<\a<0<\beta$, we obtain an example of an unbounded chirp function $f$ which we approximate by the unbounded geometric $(\a,\b)$-chirp.
More specifically, let $A_{\a,\b}$ be the union of vertical segments with abscissae $x=j^{-1/\b}$ and of lengths $j^{-\a/\b}$, for every $j\in\eN$.
Furthermore, define $\O$ as a union of the open rectangles $R_j$ for $j\in\eN$, where $R_j$ has a base of length $j^{-1/\b}-(j+1)^{-1/\b}$ and height $j^{-\a/\b}$.
The relative distance zeta function of $(A,\O)$ can be easily computed by the interested reader (see also \cite[Example 4.4.1]{fzf}) and is given by
\begin{equation}\label{zetaab}
\begin{aligned}
\zeta_{A_{\a,\b},\O}(s)&=\frac{2^{2-s}}{(s-1)}\sum_{j=1}^{\ty}j^{-\a/\b}\left(j^{-1/\b}-(j+1)^{-1/\b}\right)^{s-1}=\frac{2^{2-s}}{(s-1)}\zeta_{\mathcal{L},-\a/\b,1}(s),
\end{aligned}
\vspace{6pt}
\end{equation}
where $\mathcal{L}$ is the $\b^{-1}$-string.
In light of Corollary \ref{kor_zet}, we conclude that $\zeta_{A_{\a,\b},\O}(s)$ has a meromorphic continuation to all of $\Ce$ and
\begin{equation}
\po({\zeta}_{A_{\a,\b},\O})\subseteq\left\{1,2-\frac{1+\a}{1+\b}\right\}\cup\left\{D_m:m\in\eN\right\},
\end{equation}
where $D_m:=2-\frac{1+\a+m\b}{1+\b}$.
Let $D:=2-\frac{1+\a}{1+\b}$.
Also, by the same corollary and from \eqref{zetaab} we have that both $1$ and $D$ are simple poles of $\zeta_{A_{\a,\b},\O}$.
Furthermore, we have that $D>1$ and, consequently, from the tube formula \eqref{chirp_tube} just below, it follows that $\dim_B(A_{\a,\b},\O)=D$ and that the RFD $(A_{\a,\b},\O)$ is Minkowski measurable with Minkowski content given by
\begin{equation}
\begin{aligned}
\mathcal{M}^D(A_{\a,\b},\O)=\frac{2^{2-D}}{(2\! -\! D)(D\!-\! 1)}\frac{\b^{\frac{1+\a}{1+\b}}}{1\!+\! \b}=\frac{(2\b)^{2-D}}{(2\!-\! D)(D\!-\! 1)(1\!+\! \b)}.
\end{aligned}
\end{equation}
Moreover, the residue at $s=1$ is given by
\begin{equation}
\res({\zeta}_{A_{\a,\b},\O},1)=2\zeta_{\mathcal{L},-\a/\b,1}(1)=2\,\zeta\Big(\frac{\a}{\b}\Big).
\end{equation}
It follows that $s=1$ is indeed a simple pole of ${\zeta}_{A_{\a,\b},\O}(s)$.

Similarly as in Example \ref{ex_nest}, for $M\in\eN\cup\{0\}$, we choose the screen $\bm{S}_M$ to be a vertical line $\{\re s=\sigma\}$, for some real number $\sigma$ lying strictly between $2-\frac{1+\a+(M+1)\b}{1+\b}$ and $2-\frac{1+\a+(M+2)\b}{1+\b}$, and let $\bm{W}_M$ be the corresponding window.
From Theorem~\ref{dist_tube_formula_d}, we then obtain the following asymptotic distributional formula for the tube function $V(t):=|(A_{\a,\b})_t\cap\O|$:
\begin{equation}\label{chirp_tube}
\begin{aligned}
V(t)&=\frac{(2\b t)^{2-D}}{(2-D)(D-1)(1+\b)}+\frac{t^{2-D_1}\res({\zeta}_{A_{\a,\b},\O},D_1)}{2-D_1}+2t\,\zeta\Big(\frac{\a}{\b}\Big)\\
&\phantom{=}+\sum_{m=2}^{M}\frac{t^{2-D_m}\res({\zeta}_{A_{\a,\b},\O},D_m)}{2-D_m}+O(t^{2-D_{M+1}})\quad\textrm{as}\quad t\to0^+.
\end{aligned}
\end{equation}
Note that the second noninteger complex dimension, namely, $D_1=1-\frac{\a}{1+\b}$, is also greater than 1.
Finally, by choosing as a screen a vertical line to the right of $-\frac{2\a+\b}{1+\b}$, we actually obtain a pointwise formula still given by \eqref{chirp_tube} above; indeed, we then have $\kappa_d<0$, so that we can apply part $(i)$ of Theorem \ref{pointwise_thm_d}.
\end{example}

\subsection{Tube Formulas for Self-Similar Sprays}\label{subsec_self_similar_sp}

We conclude this paper by explaining how the results obtained here may also be applied to recover and significantly extend, as well as place within a general conceptual framework, the tube formulas for  self-similar sprays generated by an arbitrary open set $G\subset\eR^N$ of finite $N$-dimensional Lebesgue measure.
(See, especially, [LapPe2--3] 
extended to a significantly more general setting in \cite{lappewi1}, along with the exposition of those results given in \cite[Section~13.1]{lapidusfrank12}; see also \cite{DeKoOzUr} for another, but related, proof of some of those results.)

Recall that a self-similar spray (with a single generator $G$, assumed to be bounded and open) is defined as a collection $(G_k)_{k\in\eN}$ of pairwise disjoint (bounded) open sets $G_k\subset\eR^N$, with $G_0:=G$ and such that for each $k\in\eN$, $G_k$ is a scaled copy of $G$ by some factor $\lambda_k>0$.
(We let $\lambda_0:=1$.)
The associated scaling sequence $(\lambda_k)_{k\in\eN}$ is obtained from a ratio list $\{r_1,r_2,\ldots,r_J\}$, with $0<r_j<1$ for each $j=1,\ldots,J$ and such that $\sum_{j=1}^{J}r_j^N<1$, by considering all possible words built out of the scaling ratios $r_j$.
Here, $J\geq 2$ and the scaling ratios $r_1,\ldots,r_J$ are repeated according to their multiplicities

Let us next assume that $(A,\O)$ is the self-similar spray considered as a relative fractal drum  and defined as $A:=\partial(\sqcup_{k=0}^{\ty} G_k)$ and $\O:=\sqcup_{k=0}^{\ty} G_k$, with $\ov{\dim}_B(\partial G,G)<N$.
Then, by \cite[Theorem 3.36]{refds} or  \cite[Theorem 4.2.17]{fzf}, we have the following key formula, called a {\em factorization formula}, for its associated distance zeta function $\zeta_{A,\O}$, expressed in terms of the distance zeta function of the boundary of the generator (relative to the generator), $\zeta_{\pa G,G}$, and the scaling ratios $\{r_j\}_{j=1}^{J}$:
\begin{equation}\label{5.5.102..}
\zeta_{A,\O}(s)=\frac{\zeta_{\partial G,G}(s)}{1-\sum_{j=1}^{J}r_j^s}.
\end{equation}
(See also Remark \ref{5.5.20.1/2R} below.)
It now suffices to assume that the relative distance zeta function $\zeta_{\partial G,G}$ of the generating relative fractal drum $(\partial G,G)$ satisfies suitable languidity conditions in order to apply (at level $k=0$) the fractal tube formulas of Sections \ref{sec_point}--\ref{distance_tube} and to obtain a pointwise or distributional formula, with or without error term, for the `inner' volume of $\sqcup_{k=0}^{\ty} G_k$:\footnote{Here and throughout the rest of this subsection, we use the notation $V_{A,\O}$, consistent with the statement of a pointwise tube formula.
In the case of the distributional tube formulas, we should use instead the notation $\mathcal{V}_{A,\O}$.
(And analogously for the error term $R_{A,\O}(t)$ in \eqref{spray_formula}, which should then be denoted by $\mathcal{R}_{A,\O}(t)$, in the distributional case.)
For notational simplicity, however, we will not do so in this discussion.}
\begin{equation}\label{spray_formula}
\begin{aligned}
V_{A,\O}(t)&:=|A_t\cap\O|\\
&\phantom{:}=\!\!\!\sum_{\omega\in(\mathfrak{D}\cap \bm{W})\cup\po(\zeta_{\partial G,G},\bm{W})}\!\!\!\res\left(\frac{t^{N-s}\zeta_{\partial G,G}(s)}{(N-s)\Big(1-\sum_{j=1}^Jr_j^s\Big)},\omega\right)+ R_{A,\O}(t),
\end{aligned}
\end{equation}
where $\mathfrak{D}$ denotes the set of solutions in $\Ce$ of $\sum_{j=1}^Jr_j^s=1$, the complexified Moran equation, and $R_{A,\O}:=R_{A,\O}^{[0]}$ is a pointwise or distributional error term (or else $R_{A,\O}(t)\equiv 0$ and $\bm{W}:=\Ce$, in the case of an exact tube formula, provided $\zeta_{\pa G,G}$ is strongly $d$-languid), depending on the $d$-languidity growth conditions satisfied by $\zeta_{\partial G,G}$.

In the $d$-languid (but not necessarily strongly $d$-languid) case, $R_{A,\O}=R_{A,\O}^{[0]}$ satisfies the following (pointwise or distributional) error estimate (at level $k=0$):
\begin{equation}\label{5.5.105.1/4E}
R_{A,\O}(t)=O(t^{N-\sup S})\q\textrm{as}\q t\to 0^+,
\end{equation}
where $S$ is the screen associated to the window $\bm W$.

\begin{remark}\label{5.5.20.1/2R}
Observe that we can rewrite Equation \eqref{5.5.102..} as follows:
\begin{equation}\label{5.5.105.1/2E}
\zeta_{A,\O}(s)=\zeta_{\mathfrak{S}}(s)\cdot\zeta_{\pa G,G}(s),
\end{equation}
where the geometric zeta function $\zeta_{\mathfrak{S}}$ of the associated self-similar string (with scaling ratios $\{r_j\}_{j=1}^J$ and a single gap length, equal to one, in the terminology of \cite[Chapters 2 and 3]{lapidusfrank12}) is meromorphic in all of $\Ce$ and given for all $s\in\Ce$ by
\begin{equation}\label{5.5.105.3/4E}
\zeta_{\mathfrak{S}}(s)=\frac{1}{1-\sum_{j=1}^Jr_j^s}.
\end{equation}
In general, given a connected open set $U\subseteq\Ce$, $\zeta_{A,\O}$ is meromorphic in $U$ if and only if $\zeta_{\pa G,G}$ is; furthermore, in that case, the {\em factorization formula} \eqref{5.5.105.1/2E} (or \eqref{5.5.102..}) then holds for all $s\in U$.
We note that in the sequel and following [LapPe2--3] 
and [LapPeWi1--2], 
we will often refer to $\zeta_{\mathfrak{S}}$ as the {\em scaling zeta function} of the self-similar spray $(A,\O)$ and to its poles in $\Ce$ (composing the multiset $\mathfrak{D}$)\label{matfrD} as the {\em scaling complex dimensions} of $(A,\O)$.
We will also sometimes write $\mathfrak{D}_{\mathfrak{S}}$ instead of $\mathfrak{D}$, so that $\mathfrak{D}_{\mathfrak{S}}:=\mathfrak{D}$; hence, similarly, $\mathfrak{D}_{\mathfrak{S}}\cap W=\mathfrak{D}\cap W$, the set of {\em visible scaling complex dimensions} of $(A,\O)$, denotes the set of poles of $\zeta_{\mathfrak{S}}$ visible through the window $\bm W$.
(See Equation \eqref{spray_formula} above.)

Typically, we will work with generators such that $\zeta_{\pa G,G}$ is strongly $d$-languid and consequently, since $\zeta_{\mathfrak{S}}$ (as given by \eqref{5.5.105.3/4E}) is strongly $d$-languid (after a possible scaling by an appropriate scaling factor $\lambda_{\mathfrak{S}}>0$; see Corollary \ref{str_pointwise_formula_d} and the discussion preceding it), $\zeta_{A,\O}$ will be strongly $d$-languid (also after a possible scaling by the same scaling factor $\lambda_{\mathfrak{S}}$) and given by the factorization formula \eqref{5.5.105.1/2E} (or \eqref{5.5.102..}), for all $s\in\Ce$.
As a result, unless we need to work with a `truncated tube formula' (corresponding to a fractal tube formula with error term associated with a suitable screen $\bm S$), we will be able to obtain an {\em exact} fractal tube formula, as we will now see.
\end{remark}

Assume next that the generator $G$ is {\em monophase} (in the sense of [LapPe2--3] 
and [LapPeWi1--2]); 
 that is, the volume of its `inner' $t$-neighborhood is given by a polynomial $\sum_{i=0}^{N-1}\kappa_it^{N-i}$ for all $t\in\eR$ such that $0<t<g$.
Here, $g$ is the {\em inradius} of $G$, i.e., the supremum of the radii of all the balls which are contained in $G$.
Since then,
\begin{equation}
V_{\pa G,G}(t):=|(\partial G)_t\cap G|=\sum_{i=0}^{N-1}\kappa_it^{N-i},
\end{equation}
for $0<t<g$, we can explicitly calculate the relative tube zeta function of $G$, as follows:
\begin{equation}\label{zeta_monoph}
\widetilde{\zeta}_{\partial G,G}(s;g)=\int_0^{g}t^{s-N-1}\sum_{i=0}^{N-1}\kappa_it^{N-i}\di t=\sum_{i=0}^{N-1}\frac{\kappa_ig^{s-i}}{s-i}.
\end{equation}
It is obviously meromorphic on all of $\Ce$ and still given by \eqref{zeta_monoph} for all $s\in\Ce$.

Using the functional equation which connects the relative tube and distance zeta functions (see Equation \eqref{equ_tilde}), we now obtain the following explicit expression for the relative distance zeta function of the generator $G$:
\begin{equation}\label{ispravka}
\begin{aligned}
\zeta_{\pa G,G}(s)&:={\zeta}_{\partial G,G}(s;g)=g^{s-N}|(\pa G)_g\cap G|+(N-s)\widetilde{\zeta}_{\partial G,G}(s;g)\\
&\phantom{:}=g^{s-N}|G|+(N-s)\sum_{i=0}^{N-1}\frac{\kappa_ig^{s-i}}{s-i}=(N-s)\sum_{i=0}^{N}\frac{\kappa_ig^{s-i}}{s-i},
\end{aligned}
\end{equation}
where we have let $\kappa_N:=-|G|$.

Consequently, by substituting \eqref{ispravka} into \eqref{spray_formula}, we recover (and significantly extend as well as place within the broader framework of the theory of fractal tube formulas via fractal zeta functions) a well-known result obtained in \cite{lappe2} and more generally in \cite{lappewi1}, as well as more recently via a different (but related) technique in \cite{DeKoOzUr}:
\begin{equation}\label{spray_formula_1}
V_{A,\O}(t):=|A_t\cap\O|=\!\!\!\!\sum_{\omega\in\mathfrak{D}\cup\{0,1,\ldots,N-1\}}\!\!\!\!\res\left(t^{N-s}\frac{{\sum_{i=0}^{N}\kappa_i\frac{g^{s-i}}{s-i}}}{\Big(1-\sum_{j=1}^Jr_j^s\Big)},\omega\right).
\end{equation}
This is an exact pointwise fractal tube formula.
Indeed, after an appropriate scaling by a factor $\lambda_G>0$, ${\zeta}_{\pa G,G}$ is shown to be strongly $d$-languid with $(\kappa_d)_G:=0$ for a suitable infinite sequence of vertical lines $\{\re s=\alpha_m\}$, $m\geq 1$ with $\alpha_m\in\eR$ and $\alpha_m\to -\ty$ as $m\to \ty$.
Also, it is easy to check (after an appropriate scaling by a factor $\lambda_{\mathfrak{S}}>0$) that $\zeta_{\mathfrak{S}}(s)=(1-\sum_{j=1}^Jr_j^s)^{-1}$ is strongly $d$-languid, with $(\kappa_d)_{\mathfrak{S}}:=0$ (see \cite[Equation $(6.36)$, p.\ 195]{lapidusfrank12}).
Hence (after a suitable scaling by $\lambda:=\lambda_{A,\O}$, depending on both $\lambda_G$ and $\lambda_{\mathfrak{S}}$), we deduce from the factorization formula \eqref{5.5.102..} (or, equivalently, \eqref{5.5.105.1/2E}) that $\zeta_{A,\O}$ is strongly $d$-languid, with exponent $(\kappa_d)_{A,\O}:=0$ for this same sequence of vertical lines $\{\re s=\alpha_m\}$, $m\geq 1$.
We can therefore conclude from Theorem \ref{pointwise_thm_d} that the tube formula \eqref{spray_formula_1} is valid pointwise and without an error term in this case, for all positive $t$ sufficiently small.\footnote{For an exact interval within which the fractal tube formula is valid, one has to explicitly calculate the scaling factors $\lambda_G$ and $\lambda_{\mathfrak{S}}$, which we leave as an exercise for the interested reader.}

If needed, one can also obtain a corresponding `truncated' pointwise fractal tube formula (with error term), relative to a suitable screen.

A completely analogous reasoning can be used for the case of {\em pluriphase} generators $G$ for which the `inner' tubular volume is given as a piecewise polynomial.
(See \cite{lappe2} or \cite{lappewi1} for the corresponding precise definition.)
In a future work, we plan to investigate for which classes of generators the tube formula \eqref{spray_formula} can be applied pointwise or distributionally.
It is clearly a very large class, corresponding to essentially all of the self-similar sprays (and hence, also all of the self-similar tilings, in the sense of [Pe, LapPe2--3, LapPeWi1--2, PeWi]) 
of interest, including (in light of the results of \cite{KoRati}) those with generators that are convex polyhedra (or polytopes), under mild assumptions.

\begin{remark}\label{5.5.21.1/2R}
We point out that in \cite{lappewi1}, which (prior to the present work and that in [LapRa\v Zu1--5]) 
was the paper containing the most elaborate results concerning the fractal tube formulas for self-similar sprays (and other fractal sprays), a lot of effort was required to obtain analogous (but less general) fractal tube formulas, with or without error term.
Furthermore, the `tubular zeta functions' used in [LapPe2--3] 
and, in the more general context of \cite{lappewi1}, were introduced in an ad hoc manner.
Here, by contrast, both the fractal tube formulas and the fractal zeta functions (in the present situation, the distance zeta functions) occurring in the corresponding formulas follow naturally from the general theory developed in [LapRa\v Zu1--6],  
and in particular, in the present paperr.

We close this comment by mentioning that the interested reader can find in [LapPe2--3], 
[LapPeWi1--2], 
as well as in the exposition given in \cite[Subsection 13.1.4]{lapidusfrank12}, a number of concrete examples illustrating the fractal tube formulas for self-similar sprays (or tilings).
These examples include the Cantor tiling, the Koch tiling, the Sierpi\'nski gasket and carpet tilings, along with the pentagasket tiling (see, e.g., \cite[Example 13.33]{lapidusfrank12}), which is an interesting and natural example of a self-similar spray with multiple generators.
In all of these examples, the underlying generators of the fractal sprays (or tilings) are convex polygons and therefore, satisfy the required assumptions.
\end{remark}

The next three examples illustrate interesting phenomena that may occur in the inhomogeneous self-similar setting, in the sense of \cite{fraser} (see also \cite{bd,fraser2,bfm,os}).
We obtain here their corresponding fractal tube formulas and illustrate the interesting situations which may arise, in particular, for self-similar sprays (or tilings), or, more generally, for self-similar RFDs.
These examples enable us, in particular, to illustrate our proposed definition of (critical and subcritical) fractality (see Remark \ref{5.5.9.1/2}).
Accordingly, the sets and RFDs considered in these examples are indeed fractal, in that sense, and their fractality reflects their intrinsic geometric oscillations, as is made evident by the corresponding fractal tube formulas.

\begin{figure}[ht]
\begin{center}
\includegraphics[width=5cm]{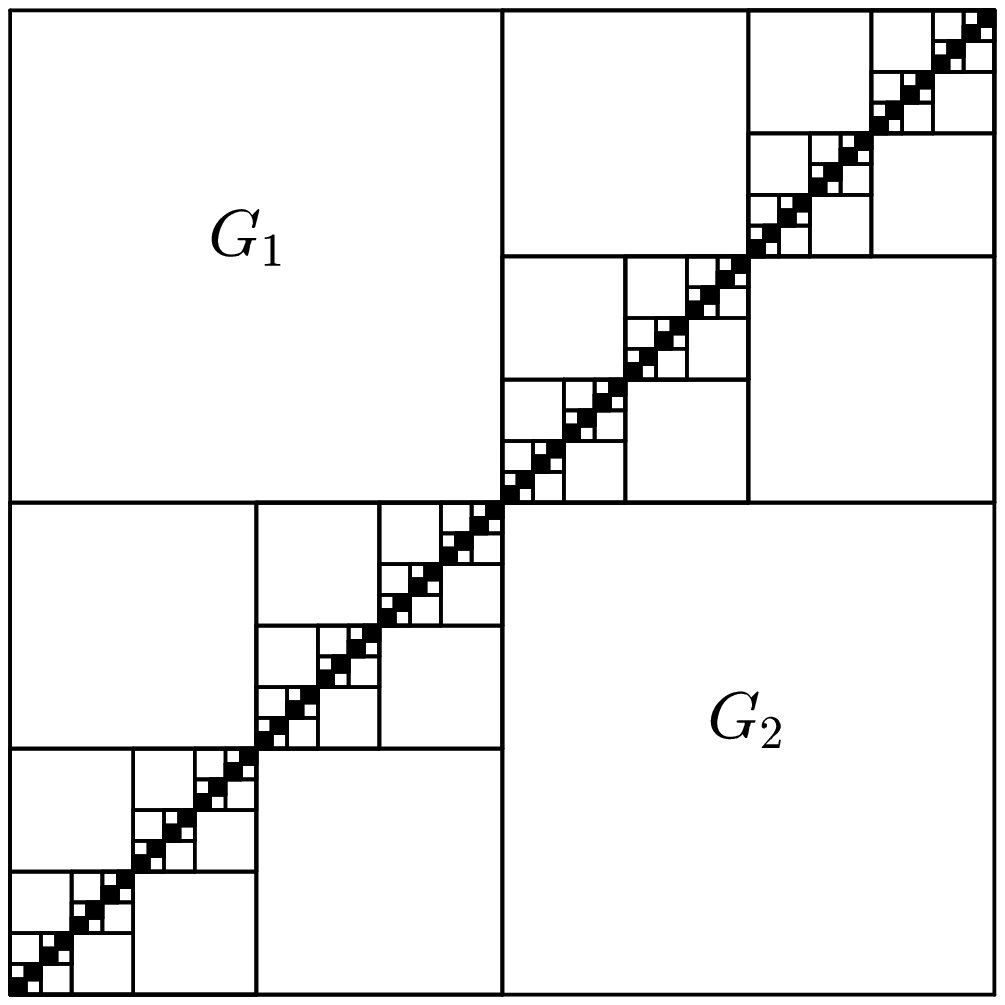}
\includegraphics[trim=0 0.8cm 0 0,clip,width=5.7cm]{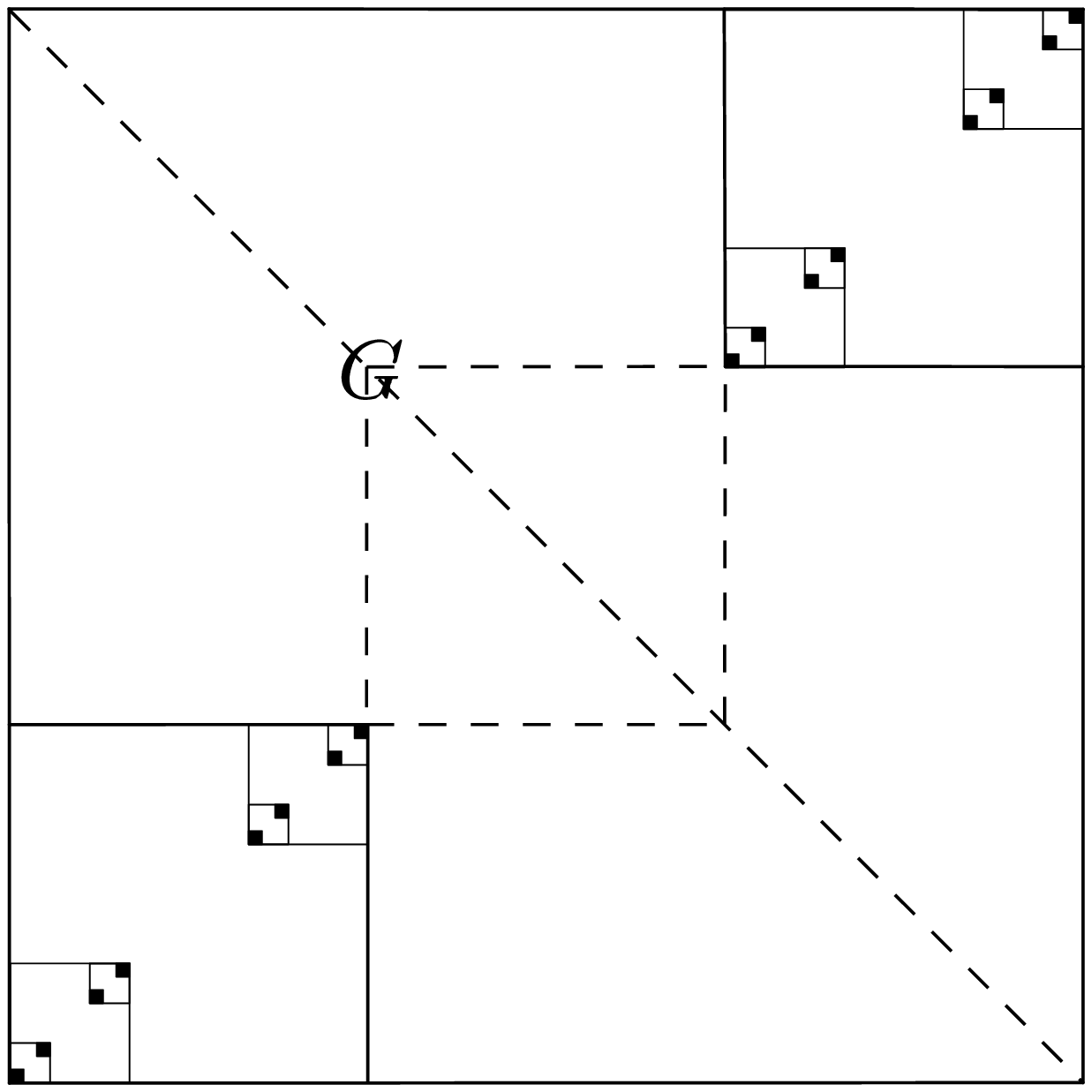}
\caption{{\bf Left:} The $1/2$-square fractal $A$ from Example \ref{1/2-tube_formula}. We start with a unit square $[0,1]^2$ and in the first step remove the open squares $G_1$ and $G_2$. In the next step, we repeat this with the remaining squares $[1/2,1]^2$ and $[0,1/2]^2$; we then continue this process ad infinitum and, by definition, $A\subset\eR^2$ is the compact set which remains behind. The first 6 iterations are depicted. Here, $G:=G_1\cup G_2$ is the single generator of the corresponding self-similar spray or RFD $(A,\O)$, where $\O=(0,1)^2$. {\bf Right:} The $1/3$-square fractal $A$ from Example \ref{1/3-tube_formula}. We start with a unit square $[0,1]^2$ and, in the first step, remove the open polygon $G$. In the next step, we repeat this with the remaining squares $[1/3,1]^2$ and $[0,1/3]^2$; we then continue this process ad infinitum and, by definition, $A\subset\eR^2$ is the compact set which remains behind. The first 4 iterations are depicted. Here, $G$ is the single generator of the corresponding self-similar spray or RFD $(A,\O)$, where $\O:=(0,1)^2$.}\label{kv_0.5}
\end{center}
\end{figure}

\begin{example}({\em Fractal tube formula for the $1/2$-square fractal}).\label{1/2-tube_formula}
Let us consider the $1/2$-square fractal $A$ from \cite[Example 3.38]{refds} and depicted in Figure \ref{kv_0.5}, left.
Its distance zeta function was obtained in \cite{refds}, where it was shown to be meromorphic on all of $\Ce$ and given by
\begin{equation}\label{dist_1/2_1}
\zeta_A(s)=\frac{2^{-s}}{s(s-1)(2^s-2)}+\frac{4}{s-1}+\frac{2\pi}{s},
\end{equation}
for every $s\in\Ce$.
In \eqref{dist_1/2_1}, without loss of generality, we have chosen $\delta:=1$.
Furthermore, as was discussed in \cite{refds}, it follows at once from \eqref{dist_1/2_1} that
\begin{equation}
\begin{aligned}
D(\zeta_A)=1,\q
\po(\zeta_A):=\po(\zeta_A,\Ce)=\{0\}\cup\left(1+\mathbf{p}\I\Ze\right)
\end{aligned}
\end{equation}
and 
\begin{equation}\label{PCsq1}
\dim_{PC} A:=\po_c(\zeta_A)=1+\mathbf{p}\I\Ze,
\end{equation}
where the oscillatory period $\mathbf{p}$ of $A$ is given by $\mathbf{p}:=\frac{2\pi}{\log 2}$ and all of the complex dimensions in $\po(\zeta_A)$ are simple, except for $\omega_0=1$ which is a double pole of $\zeta_A$.

One easily sees that $\lambda A$ is strongly $d$-languid for $\kappa_d:=-1$, any $\lambda\geq 2$ and a sequence of screens consisting of the vertical lines $\{\re s=-m\}$, $m\in\eN$, along with the constant $B_{\lambda}:=2/\lambda$ in the strong languidity condition {\bf L2'}.
Therefore, we can use Theorem~\ref{pointwise_thm_d} in order to recover the following exact pointwise fractal tube formula, valid for all $t\in(0,\min\{1/\lambda,1/2\})=(0,1/2)$:
\begin{equation}\label{racun_1/2}
\begin{aligned}
V_{A}(t):=|A_t|&=\sum_{\omega\in\po({\zeta}_{A})}\res\left(\frac{t^{2-s}}{2-s}{\zeta}_A(s),\omega\right)\\
&=\res\left(\frac{t^{2-s}}{2-s}{\zeta}_A(s),1\right)+\sum_{\omega\in\po({\zeta}_{A})\setminus\{1\}}\frac{t^{2-\omega}}{2-\omega}\res\left({\zeta}_A,\omega\right).
\end{aligned}
\end{equation}
We now let $\omega_k:=1+\I\mathbf{p}k$ for each $k\in\Ze$ and note that
\begin{equation}
\res\left({\zeta}_A,0\right)=1+2\pi\q\textrm{and}\q\res\left({\zeta}_A,\omega_k\right)=\frac{4^{-\I\mathbf{p}k}}{4\omega_k(\omega_k-1)},
\end{equation}
for every $k\in\Ze\setminus\{0\}$.

In order to compute the residue at $\omega_0=1$ in \eqref{racun_1/2}, we reason analogously as in Example \ref{ex_nest} (see Equation \eqref{taylor} and the text surrounding it) to conclude that
\begin{equation}\label{res_1}
\begin{aligned}
\res\left(\frac{t^{2-s}}{2-s}{\zeta}_A(s),1\right)&=t\sum_{n=0}^1\sum_{k=0}^n\frac{(-1)^{n-k}(\log t^{-1})^k\zeta_A[1]_{-n-1}}{k!(n-k)!}\\
&=t\,\left(\zeta_A[1]_{-1}-\zeta_{A}[1]_{-2}+\zeta_A[1]_{-2}\log t^{-1}\right).
\end{aligned}
\end{equation}
(Recall that for $q\in\Ze$, $\zeta_A[1]_{q}$ denotes the $q$-th coefficient in the Laurent expansion of $\zeta_A$ around $s=1$.)
The coefficients $\zeta_{A}[1]_{-2}$ and $\zeta_A[1]_{-1}$ are not difficult to compute and are given by 
\begin{equation}
\zeta_{A}[1]_{-2}=\frac{1}{4\log 2}\q\textrm{and}\q\zeta_A[1]_{-1}=\frac{29\log 2-2}{8\log 2},
\end{equation}
which, combined with \eqref{res_1}, yields
\begin{equation}
\res\left(\frac{t^{2-s}}{2-s}{\zeta}_A(s),1\right)=\frac{1}{4\log 2}t\log t^{-1}+\frac{29\log 2-4}{8\log 2}t.
\end{equation}
Finally, we obtain the following exact fractal tube formula for the $1/2$-square fractal $A$, valid for all $t\in(0,1/2)$:
\begin{equation}\label{1/2_tube_formula_eq}
\begin{aligned}
V_{A}(t):=|A_t|&=\frac{1}{4\log 2}t\log t^{-1}+\frac{29\log 2-4}{8\log 2}t\\
&\phantom{=}+t\sum_{k\in\Ze\setminus\{0\}}\frac{(4t)^{-\I\mathbf{p}k}}{4\omega_k(\omega_k-1)(2-\omega_k)}+\frac{1+2\pi}{2}t^2\\
&=\frac{1}{4\log 2}t\log t^{-1}+t\,G\left(\log_2(4t)^{-1}\right)+\frac{1+2\pi}{2}t^2,
\end{aligned}
\end{equation}
where $G$ is a nonconstant $1$-periodic function on $\eR$, which is bounded away from zero and infinity.
It is given by the following absolutely convergent (and hence, convergent) Fourier series:
\begin{equation}
G(x):=\frac{29\log 2-4}{8\log 2}+\frac{1}{4}\sum_{k\in\Ze\setminus\{0\}}\frac{\E^{2\pi\I kx}}{(2-\omega_k)(\omega_k-1)\omega_k},\quad\textrm{for all}\ x\in\eR. 
\end{equation}

To conclude our discussion of this example, we note that it is now clear from the fractal tube formula \eqref{1/2_tube_formula_eq} for the $1/2$-square fractal that $\dim_BA=1$ and that $A$ is Minkowski degenerate with $\mathcal{M}^1(A)=+\ty$.
On the other hand, $A$ is $h$-Minkowski measurable with $h(t):=\log t^{-1}$ (for all $t\in(0,1)$) and with $h$-Minkowski content given by
$
\mathcal{M}^1(A,h)=(4\log 2)^{-1}.
$
Finally, although $D:=\dim_BA=1$ (which is also the topological dimension of $A$) and hence, $A$ would not be considered fractal in the classical sense, we also see from \eqref{1/2_tube_formula_eq} that the nonreal complex dimensions of $A$ with real part equal to $D$ give rise to (intrinsic) geometric oscillations of order $t^{2-D}$ (or simply, $2-D$) in its fractal tube formula.
More specifically, according to our proposed definition of fractality given in Remark \ref{5.5.9.1/2}, $A$ is {\em critically fractal} in dimension $d:=D=\dim_BA=1$.
\end{example}

\begin{example}({\em Fractal tube formula for the $1/3$-square fractal}).\label{1/3-tube_formula}
Let us now consider the $1/3$-square fractal $A$ from \cite[Example 3.39]{refds} and depicted in Figure \ref{kv_0.5}, right.
Its distance zeta function was obtained in \cite{refds}, where it was shown to be meromorphic on all of $\Ce$ and given by
\begin{equation}\label{zeta_1/3_square_2}
\zeta_{A}(s)=\frac{2}{s(3^s-2)}\left(\frac{6}{s-1}+Z(s)\right)+\frac{4}{s-1}+\frac{2\pi}{s},
\end{equation}
for all $s\in\Ce$.
Here, the entire function $Z$ is given by
$
Z(s):=\int_0^{\pi/2}(\cos\varphi+\sin\varphi)^{-s}\di\varphi
$
and,  without loss of generality, we have chosen $\d:=1$.
Furthermore, as was discussed in \cite{refds}, it follows at once from \eqref{zeta_1/3_square_2} that
$
D(\zeta_A)=1
$
and
\begin{equation}\label{c_dim_A1/3_2}
\po(\zeta_A):=\po(\zeta_A,\Ce)\subseteq\{0\}\cup\left(\log_32+\mathbf{p}\I\Ze\right)\cup\{1\},
\end{equation}
where the oscillatory period $\mathbf{p}$ of $A$ is given by $\mathbf{p}:=\frac{2\pi}{\log 3}$ and all of the complex dimensions in $\po(\zeta_A)$ are simple.
In Equation \eqref{c_dim_A1/3_2}, we only have an inclusion since, at least in principle, some of the complex dimensions with real part $\log_32$ may be canceled by the zeros of $6/(s-1)+Z(s)$.
However, it can be checked numerically that there exist nonreal complex dimensions with real part $\log_32$ in $\po(\zeta_A)$. 
Furthermore, observe that we have
\begin{equation}
|Z(s)|\leq\begin{cases}
2^{-\re s/2-1}\pi&\textrm{if}\q \re s<0,\\
\pi/2&\textrm{if}\q \re s\geq 0,
\end{cases}
\end{equation}
from which we conclude that that $\lambda A$ is strongly $d$-languid for $\kappa_d:=-1$, any $\lambda\geq \sqrt{2}$ and a sequence of screens consisting of the vertical lines $\{\re s=-m\}$, $m\in\eN$, along with the constant $B_{\lambda}:=\sqrt{2}/\lambda$ in the strong languidity condition {\bf L2'}.
Therefore, we can use Theorem~\ref{pointwise_thm_d} to recover the following exact pointwise fractal tube formula, valid for all $t\in(0,\min\{1/\lambda,1/\sqrt{2}\})=(0,1/{\sqrt{2}})$:
\begin{equation}\label{racun_1/3}
\begin{aligned}
V_{A}(t):=|A_t|&=\sum_{\omega\in\po({\zeta}_{A})}\res\left(\frac{t^{2-s}}{2-s}{\zeta}_A(s),\omega\right)=\sum_{\omega\in\po({\zeta}_{A})}\frac{t^{2-s}}{2-s}\res\left({\zeta}_A,\omega\right)\\
&=16t+\frac{t^{2-\log_32}}{\log 3}\sum_{k=-\ty}^{+\ty}\frac{(3t)^{-\I\mathbf{p}k}}{\omega_k(2-\omega_k)}\left(\frac{6}{\omega_k-1}+Z(\omega_k)\right)+\frac{12+\pi}{2}t^2\\
&=16t+t^{2-\log_32}G\left(\log_3(3t)^{-1}\right)+\frac{12+\pi}{2}t^2.
\end{aligned}
\end{equation}
Here, we have used the fact that
\begin{equation}
\res\left({\zeta}_A,1\right)=16,\q\res\left({\zeta}_A,0\right)=12+\pi
\end{equation}
and
\begin{equation}
\res\left({\zeta}_A,\omega_k\right)=\frac{3^{-\I\mathbf{p}k}}{(\log 3)\omega_k}\left(\frac{6}{\omega_k-1}+Z(\omega_k)\right),
\end{equation}
where we have let $\omega_k:=\log_32+\I\mathbf{p}k$ for each $k\in\Ze$.
It can be checked numerically that $\res\left({\zeta}_A,\omega_k\right)\neq 0$ (at least) for $k=-1,0,1$ and we conjecture that this is also true for all $k\in\Ze$.\footnote{We caution the reader that we do not have a rigorous proof of this last statement.}
However, the fact that $\res\left({\zeta}_A,\omega_k\right)\neq 0$ for $k=-1,0,1$ suffices to deduce that the function $G$ in the last line of \eqref{racun_1/3} is a {\em nonconstant} $1$-periodic function on $\eR$, which is bounded away from zero and infinity and is given by the following absolutely convergent (and hence, convergent) Fourier series:
\begin{equation}
G(x):=\frac{1}{\log 3}\sum_{k=-\ty}^{+\ty}\frac{\E^{2\pi\I kx}}{(2-\omega_k)\omega_k}\left(\frac{6}{\omega_k-1}+Z(\omega_k)\right),\quad\textrm{for all}\ x\in\eR. 
\end{equation}

In conclusion, we observe that it is clear from the fractal tube formula \eqref{racun_1/3} that $\dim_BA=1$ and $A$ is Minkowski measurable, with Minkowski content given by
$
\mathcal{M}^1(A)=16.
$
Moreover, since the set $A$ is rectifiable, we have that $H^1(A)=\mathcal{M}^1(A)/2=8$, which can, of course, also be computed directly. 
On the other hand, although $D:=\dim_BA=1$ (which also coincides with the topological dimension of $A$) and thus $A$ would not be considered fractal in the classical sense, we also see from \eqref{racun_1/3} that the nonreal complex dimensions of $A$ with real part equal to $\log_32$ give rise to (intrinsic) geometric oscillations of order $t^{2-\log_32}$ (or simply, $2-\log_32$) in its fractal tube formula.
Therefore, according to our proposed definition of fractality given in Remark \ref{5.5.9.1/2}, the $1/3$-square fractal $A$ is fractal; more precisely, it is {\em strictly subcritically fractal} in dimension $d:=\log_32$.
\end{example}

\begin{example}({\em Fractal tube formula for the self-similar fractal nest}).\label{ss_nest_tube_formula}
Let us now consider the self-similar fractal nest $A$ from \cite[Example 3.40]{refds}, which is defined as a union of concentric circles in $\eR^2$ centered at the origin and of radius $a^k$ for $k\in\eN_0$, with $a\in(0,1)$ being a real parameter.
Its distance zeta function was obtained in \cite[Example 4.25]{refds}, where it was shown to be meromorphic on all of $\Ce$ and given by
\begin{equation}\label{ss_nest_zeta_2}
\zeta_{A}(s)=\frac{2^{2-s}\pi(1+a)(1-a)^{s-1}}{(s-1)(1-a^s)}+\frac{2\pi}{s-1}+\frac{2\pi}{s},
\end{equation}
for all $s\in\Ce$, where without loss of generality, we have chosen $\d:=1$.
Recall from \cite[Example 4.25]{refds} that we have
$
D(\zeta_A)=1
$
and
\begin{equation}\label{pizA_1}
\po(\zeta_A):=\po(\zeta_A,\Ce)=\mathbf{p}\I\Ze\cup\{1\},
\end{equation}
where the oscillatory period $\mathbf{p}$ of $A$ is given by $\mathbf{p}:=\frac{2\pi}{\log a^{-1}}$ and all of the complex dimensions in $\po(\zeta_A)$ are simple.

It is now easy to check that $\lambda A$ is strongly $d$-languid with $\kappa_d:=-1$, any $\lambda\geq 2$ if $a\in(0,1/2]$ or any $\lambda\geq 2(1-a)/a$ if $a\in(1/2,1)$ and (in both cases) for a sequence of screens consisting of vertical lines $\{\re s=-m\}$, $m\in\eN$, in the strong languidity condition {\bf L2'}.
Again, we can use Theorem~\ref{pointwise_thm_d} in order to obtain the following exact pointwise fractal tube formula, valid for all $t\in(0,\min\{1/2,a/{2(1-a)}\})$:
\begin{equation}\label{racun_ss_nest}
\begin{aligned}
V_{A}(t):=|A_t|&=\sum_{\omega\in\po({\zeta}_{A})}\res\left(\frac{t^{2-s}}{2-s}{\zeta}_A(s),\omega\right)=\sum_{\omega\in\po({\zeta}_{A})}\frac{t^{2-s}}{2-s}\res\left({\zeta}_A(s),\omega\right)\\
&=\frac{4\pi}{1-a}t+\left(\pi+\frac{4\pi(1+a)}{(\log a^{-1})(1-a)}\sum_{k=-\ty}^{+\ty}\frac{\left(\frac{2t}{1-a}\right)^{-\I\mathbf{p}k}}{(\omega_k-1)(2-\omega_k)}\right)t^2\\
&=\frac{4\pi}{(1-a)}t+t^2\,G\left(\log_{a^{-1}}\left(\frac{2t}{1-a}\right)\right).
\end{aligned}
\end{equation}
Here, we have used the fact that
\begin{equation}
\res\left({\zeta}_A,1\right)=\frac{4\pi}{1-a},\q\res\left({\zeta}_A,0\right)=2\pi+\frac{4\pi(1+a)}{(\log a)(1-a)}
\end{equation}
and
\begin{equation}
\res\left({\zeta}_A,\omega_k\right)=\frac{4\pi(1+a)}{(\log a^{-1})(\omega_k-1)}\left(\frac{2}{1-a}\right)^{-\I\mathbf{p}k},
\end{equation}
where we have let $\omega_k:=\I\mathbf{p}k$ for each $k\in\Ze$.
Furthermore, the function $G$ appearing in the last line of \eqref{racun_ss_nest} is a nonconstant $1$-periodic function on $\eR$, which is bounded away from zero and infinity and is given by the following absolutely convergent (and hence, convergent) Fourier series:
\begin{equation}
G(x):=\pi+\frac{4\pi(1+a)}{(\log a^{-1})(1-a)}\sum_{k=-\ty}^{+\ty}\frac{\E^{2\pi\I kx}}{(2-\omega_k)(\omega_k-1)},\quad\textrm{for all}\ x\in\eR. 
\end{equation}

It clearly follows from the fractal tube formula \eqref{racun_ss_nest} that $\dim_BA=1$ and $A$ is Minkowski measurable with Minkowski content given by
$
\mathcal{M}^1(A)=\frac{4\pi}{1-a}.
$
Furthermore, since the set $A$ is rectifiable, we have that $H^1(A)=\mathcal{M}^1(A)/2=2\pi/(1-a)$, which, of course, can also be easily checked via a direct computation.

Finally, we conclude this example by observing that although $D:=\dim_BA=1$ (which is also the topological dimension of $A$) and $A$ would not be considered fractal in the classical sense, we also see from \eqref{racun_ss_nest} that the nonreal complex dimensions of $A$ with real part equal to $0$ give rise to (intrinsic) geometric oscillations of order $t^{2}$ (or simply $2$) in its fractal tube formula.
Consequently, according to our proposed definition of fractality given in Remark \ref{5.5.9.1/2}, the self-similar fractal nest $A$ is indeed fractal; more precisely, it is {\em strictly subcritically fractal} in dimension $d:=0$.
\end{example}

We close this section by mentioning that one could provide many further examples illustrating our tube formulas, as applied to self-similar sprays or, more generally, fractal sprays.
These examples would include the Koch curve tiling, the Sierpi\'nski gasket tiling, the pentagasket tiling and the Menger sponge tiling depicted, respectively, in Figures 6.1--6.5 of \cite{lappewi1}.
We refer to [LaPe2--3] for the corresponding exact (pointwise) fractal tube formulas.
We point out that the pentagasket tiling is of special interest because it is a natural example of a self-similar spray with multiple generators.\footnote{Of course, in the case of fractal sprays with multiple (say $Q$) generators, it suffices to apply the results of the present subsection to each of the corresponding $Q$ fractal sprays with a single generator, and then to add-up the resulting $Q$ fractal tube formulas.}

Other interesting examples include the Cantor carpet, the $U$-shaped modification of the Sierpi\'nski carpet (which has a generator which is itself ``fractal''), the binary trees, and the Apollonian packings depicted, respectively, in Figures 6.6, 6.9, 6.11 and 6.12 of \cite{lappewi1} and whose associated fractal tube formulas are provided or discussed in Subsections 6.1--6.4 of \cite{lappewi1}.

We also mention that the authors have recently obtained an explicit fractal tube formula for the Koch drum (or the Koch snowflake RFD), by using the general theory developed in this paper.
This important example (which is {\em not} a fractal spray) should be discussed in a later work, as well as compared with the earlier tube formula obtained by the first author and E.\ Pearse in \cite{lappe1}, as described in \cite[Subsection 12.2.1]{lapidusfrank12}.

Finally, we point out that our methods apply naturally to fractal sprays which are not necessarily self-similar (such as the last three examples mentioned just above).
Moreover, as was alluded to in the introduction of this paper, our general (pointwise or distributional) fractal tube formulas can be extended (under suitable hypotheses) to include the case where the associated fractal zeta function have nonremovable singularities which are not poles.
Some examples of such situations are provided in [LapRa\v Zu1,4] 
but we plan to develop the corresponding systematic theory in a later work.

\frenchspacing


\begin{thebibliography}{LapRa\v{Z}u7}\label{refs}





\bibitem[BakFraMa]{bfm} S.\ Baker, J.\ M.\ Fraser, A.\ M\'ath\'e, Inhomogeneous self-similar sets with overlaps, preprint, 2015.\ (Also: e-print, {\tt arXiv:1509.03589v1 [math.CA]}, 2015.)

\bibitem[BarDem]{bd} M.\ F.\ Barnsley and S.\ Demko, Iterated function systems and the global construction of fractals, 
{\em Proc.\ Royal Soc.\ London Ser.\ A}, {\bf399} (1985), 243--275.

\bibitem[BergGos]{BergGos} M.\ Berger and B.\ Gostiaux, {\em Differential Geometry$:$ Manifolds, Curves and Surfaces}, English translation, Springer-Verlag, Berlin, 1988.

\bibitem[Bla]{Bla} W.\ Blaschke, {\em Integralgeometrie}, Chelsea, New York, 1949.

\bibitem[Bre]{Bre} H.\ Brezis, {\em Analyse Fonctionnelle: Th\'eorie et Applications}, Masson, Paris, 1983; expanded English version: {\em Functional Analysis, Sobolev Spaces and Partial Differential Equations}, Springer, New York, 2011.

\bibitem[CheeM\"uSchr1]{CheeMuSchr1} J.\ Cheeger, W. M\"uller and R.\ Schrader, On the curvature of piecewise flat manifolds, {\em Commun.\ Math.\ Phys.}\ {\bf 92} (1984), 405--454.

\bibitem[CheeM\"uSchr2]{CheeMuSchr2} J.\ Cheeger, W. M\"uller and R.\ Schrader, Kinematic and tube formulas for piecewise linear spaces, {\em Indiana Univ.\ Math.\ J.}\ {\bf 35} (1986), 737--754.

\bibitem[DemDenKo\" U]{DDeKU} B.\ Demir, A.\ Deniz, S.\ Ko\ced cak and A.\ E.\ \" Ureyen, Tube formulas for graph-directed fractals, {\em Fractals}, No.\ 3, {\bf 18} (2010), 349--361.

\bibitem[DemKo\" O\" U]{DemKoOU} B.\ Demir, \ced{S}.\ Ko\ced cak, Y.\ \" Ozdemir and A.\ E.\ \" Ureyen, Tube formulas for self-similar fractals with non-Steiner-like generators, preprint, 2010.\ (Also: e-print, {\tt arXiv:0911.4966 [math.MG]}, 2009.)

\bibitem[DenK\" O\" U]{DeKoOzUr} A.\ Deniz, \c{S}.\ Ko\c cak, Y.\ \" Ozdemir and A.\ E.\ \" Ureyen, Tube volumes via functional equations, {\em J.\ Geom.} {\bf 105} (2014), 1--10.

\bibitem[DubSep]{DubSep} E.\ Dubon and J.\ M.\ Sepulcre, On the complex dimensions of nonlattice fractal strings in connection with Dirichlet polynomials, {\em J.\ Experimental Math.}\ No.\ 1, {\bf 23} (2014), 13--24.

\bibitem[ElLapMacRo]{ElLaMaRo} K.\ E.\ Ellis, M.\ L.\ Lapidus, M.\ C.\ Mackenzie and J.\ A.\ Rock,
Partition zeta functions, multifractal spectra, and tapestries
of complex dimensions, in: {\it Beno\^it Mandelbrot$:$ A Life in Many
Dimensions} (M.\ Frame and N.\ Cohen, eds.), The Mandelbrot Memorial Volume, World Scientific,
Singapore, 
2015, pp.\ 267--322.\
 (Also: e-print, {\tt arXiv:1007.1467v2 [math-ph]},
2011; IHES preprint, IHES/M/12/15, 2012.)

%



\bibitem[Edw]{Edw} H.\ M.\ Edwards, {\em Riemann's Zeta Function}, Academic Press,
New York, 1974.

\bibitem[EsLi1]{EsLi1} D.\ Essouabri and B.\ Lichtin, Zeta functions of discrete self-similar sets, {\em Adv.\ in Math.}\ {\bf 232} (2013), 142--187.

\bibitem[EsLi2]{EsLi2} D.\ Essouabri and B.\ Lichtin, $k$-point configurations of discrete self-similar sets, in: {\it Fractal Geometry and Dynamical Systems in Pure and Applied Mathematics I$:$ Fractals in Pure Mathematics} (D.\ Carf\`i, M.\ L.\ Lapidus, E.\ P.\ J.\ Pearse and M.\ van Frankenhuijsen, eds.), Contemporary Mathematics, vol.\ {600}, Amer.\ Math.\ Soc., Providence, R.\ I., 2013, pp.\ 21--50.\ (dx.doi.org/10.1090/conm/600/11947.)

\bibitem[EstKa]{EsKa} R.\ Estrada and R.\ P.\ Kanwal, {\em A Distributional Approach to Asymptotics$:$ Theory and Applications}, second edition, Brikh\"auser, Boston, 2002.


\bibitem[Fal1]{falc} K.\ J.\ Falconer, {\em Fractal Geometry$:$ Mathematical Foundations and Applications}, third edition, John Wiley and Sons, Chichester,
2014. (First and second editions: 1990 and 2003.)

\bibitem[Fal2]{Fal2}
K.\ J.\ Falconer, On the Minkowski measurability of fractals, {\it Proc.\ Amer.\ Math.\ Soc.} {\bf 123} (1995), 1115--1124.

\bibitem[Fed1]{Fed} H.\ Federer, Curvature measures, {\em Trans.\ Amer.\ Math.\ Soc.} {\bf 93} (1959), 418--491.

\bibitem[Fed2]{federer} H.\ Federer, {\em Geometric Measure Theory}, Springer-Verlag, New-York, 1969.

\bibitem[Foll]{folland} G.\ B.\ Folland, {\em Real Analysis$:$ Modern Techniques and Their Applications}, second edition, John Wiley \& Sons, New York, 1999.

\bibitem[Fr]{Fr}
M.\ Frantz, Lacunarity, Minkowski content, and self-similar sets in $\eR$, in: {\it Fractal Geometry and Applications$:$ A Jubilee of Beno\^it Mandelbrot} (M.\ L.\ Lapidus and M.\ van Frankenhuijsen, eds.), Proc.\ Sympos.\ Pure Math., vol.\ {72}, Part 1, Amer.\ Math.\ Soc., Providence, R.\ I., 2004, pp.\ 77--91.

\bibitem[Fra1]{fraser} J.\ M.\ Fraser, Inhomogeneous self-similar sets and box dimensions, {\em Studia Mathematica} No.\ 2, {\bf213} (2012), 133--155.

\bibitem[Fra2]{fraser2} J.\ M.\ Fraser, Inhomogeneous self-affine carpets, e-print, {\tt arXiv:1307.5474v2 [math.MG]}, 2013.

\bibitem[Fu1]{Fu1} J.\ H.\ G.\ Fu, Tubular neighborhoods in Euclidean spaces, {\em Duke Math. J.} {\bf 52} (1985), 1025--1046.

\bibitem[Fu2]{Fu2} J.\ H.\ G.\ Fu, Curvature measures of subanalytic sets, {\em Amer.\ J.\ Math.}\ {\bf 116} (1994), 819--880.

\bibitem[Gat]{gatzouras} D.\ Gatzouras, Lacunarity of self-similar and stochastically self-similar sets,
{\it Trans.\ Amer.\ Math.\ Soc.}\ {\bf 352} (2000), 1953--1983.

\bibitem[Gra]{Gra} A.\ Gray, {\em Tubes}, Progress in Math., vol.\ 221, second edition, Birkh\"auser, Boston, 2004.


\bibitem[HamLap]{HamLa} B.\ M.\ Hambly and M.\ L.\ Lapidus, Random fractal strings: their zeta functions, complex dimensions and spectral asymptotics, {\it Trans.\ Amer.\ Math.\ Soc.} No. 1, {\bf 358} (2006), 285--314.


\bibitem[HeLap]{lapidushe}
C.\ Q.\ He and M.\ L.\ Lapidus,
   Generalized Minkowski content, spectrum of fractal drums, fractal
   strings and the Riemann zeta-function,
   {\it Mem.\ Amer.\ Math.\ Soc.}\ No.\ 608, {\bf 127} (1997), 1--97.
	
\bibitem[HerLap1]{HerLa1}
H.\ Herichi and M.\ L.\ Lapidus, 
{\em Quantized Number Theory, Fractal Strings, and the Riemann Hypothesis$:$ From Spectral Operators to
 Phase Transitions and Universality},Fractals and Dynamics in Mathematics, Science, and the Arts: Theory and Applications, 2.\ World Scientific, Hackensack, N.J:\ 2018, to appear, approx.\ 330 pages.


\bibitem[HerLap2]{HerLa2}
H.\ Herichi and M.\ L.\ Lapidus, Riemann zeros and phase transitions via the spectral operator on fractal strings, {\it J.\ Phys.\ A$:$ Math.\ Theor.} {\bf 45} (2012) 374005, 23pp.\ (Also: e-print, {\tt arXiv:1203.4828v2 
[math-ph]}, 2012; IHES preprint, IHES/M/12/09, 2012.)

\bibitem[HerLap3]{HerLa3}
H.\ Herichi and M.\ L.\ Lapidus, Fractal complex dimensions, Riemann hypothesis and invertibility of the spectral operator, in: {\it Fractal Geometry and Dynamical Systems in Pure and Applied Mathematics I$:$ Fractals in Pure Mathematics} (D.\ Carf\`i, M.\ L.\ Lapidus, E.\ P.\ J.\ Pearse and M.\ van Frankenhuijsen, eds.), Contemporary Mathematics, vol.\ {600}, Amer.\ Math.\ Soc., Providence, R.\ I., 2013, 
pp.\ 51--89.\ (dx.doi.org/10.1090/conm/600/11948.)\ (Also: e-print,  {\tt arXiv:1210.0882v3 [math.FA]}, 2013; IHES preprint, IHES/M/12/25, 2012.) 


\bibitem[H\"o]{ho2} L.\ H\"ormander, {\em The Analysis of Linear Partial Differential Operators}, vol.\ I, {\em Distribution Theory and Fourier Analysis},
second edition.\ (of the 1983 edn.), Springer-Verlag, Berlin, 1990.

\bibitem[HuLaWe]{HuLaWe} D.\ Hug, G.\ Last and W.\ Weil, A local Steiner-type formula for general closed sets and applications, {\em Math.\ Z.}, {\bf 246} (2004), 237--272.

\bibitem[Hut]{hutchinson} J.\ Hutchinson, Fractals and self-similarity, {\it Indiana Univ.\ J.\ Math.}\ {\bf30} (1981), 713--747.

\bibitem[In]{In} Ingham A.\ E., {\em The Distribution of Prime Numbers}, second edition
(reprinted from the 1932 edn.), Cambridge Univ.\ Press, Cambridge, 1992.

\bibitem[JaffMey]{jm} S.\ Jaffard and Y.\ Meyer, {\em Wavelet Methods for Pointwise Regularity and Local Oscillations of Functions}, 
{\it Mem.\ Amer.\ Math.\ Soc.}\
 No.\ {587}, {\bf 123} (1996), 1--110.

\bibitem[JohLap]{johlap}
G.\ W.\ Johnson and M.\ L.\ Lapidus, {\em The Feynman Integral
and Feynman’s Operational Calculus}, Oxford Science Publications, Oxford Mathematical
Monographs, Oxford Univ.\ Press, Oxford, 2000.\ (Corrected
reprinting and paperback edition, 2002.)

\bibitem[JohLapNi]{johlapni}
G.\ W.\ Johnson, M.\ L.\ Lapidus and L.\ Nielsen, {\em Feynman’s Operational Calculus and Beyond$:$ Noncommutativity and Time-Ordering}, Oxford Science Publications, Oxford Mathematical
Monographs, Oxford Univ.\ Press, Oxford, 2015.


\bibitem[KeKom]{kombrink} M.\ Kesseb\"ohmer and S.\ Kombrink,
Fractal curvature measures and Minkowski content for self-conformal subsets of the real line, 
{\em Adv.\ in Math.}\ {\bf230} (2012), 2474--2512.

\bibitem[KlRot]{KlRot} D.\ A.\ Klain and G.-C.\ Rota, {\em Introduction to Geometric Probability}, Accademia Nazionale dei Lincei, Cambridge Univ.\ Press, Cambridge, 1999. 

\bibitem[KoRati]{KoRati} S.\ Ko\ced cak and A.\ V.\ Ratiu, Inner tube formulas for polytopes, {\em Proc.\ Amer.\ Math.\ Soc.}\ No.\ 3, {\bf 140} (2012), 999--1010.\ (Also: e-print, {\tt arXiv:1008.2040v1 [math.MG]}, 2010.)

\bibitem[Kom]{Kom} S.\ Kombrink, A survey on Minkowski measurability of self-similar sets and self-conformal fractals in $\eR^d$, survey article, in: {\em Fractal Geometry and Dynamical Systems in Pure and Applied Mathematics I$:$ Fractals in Pure Mathematics} (D.\ Carfi, M.\ L.\ Lapidus, E.\ P.\ J.\ Pearse and M.\ van Frankenhuijsen, eds.), Contemporary Mathematics, vol.\ {600}, Amer.\ Math.\ Soc., Providence, R.\ I., 2013, pp.\ 135--159.\ (dx.doi.org/10.1090/conm/600/11931.)

\bibitem[KomPeWi]{KomPeWi} S.\ Kombrink, E.\ P.\ J.\ Pearse and S.\ Winter, Lattice-type self-similar sets with pluriphase generators fail to be Minkowski measurable, {\em Math.\ Z.},\ in press, 2016.\ (Also: e-print, {\tt arXiv:1501.03764v1 [math.PR]}, 2015.)

\bibitem[Kow]{Kow} O.\ Kowalski, Additive volume invariants of Riemannian manifolds, {\em Acta Math.}\ {\bf 145} (1980), 205--225.



\bibitem[LalLap1]{lallap1} N.\ Lal and M.\ L.\ Lapidus, Hyperfunctions and spectral zeta functions of
Laplacians on self-similar fractals, {\em J.\ Phys.\ A: Math.\ Theor.}\ {\bf 45} (2012),
{365205}, l4pp.\ (dx.doi.org/10.1090/conm/601/11959.)\ (Also: e-print, {\tt arXiv:12O2.4126v2 [math-ph]}, 2012; IHES
preprint, IHES/M/12/14, 2012.)


\bibitem[LalLap2]{lallap2}
N.\ Lal and M.\ L.\ Lapidus, The decimation method for Laplacians on fractals:
Spectra and complex dynamics, in: {\em Fractal Geometry and Dynamical Systems
in Pure and Applied Mathematics II: Fractals in Applied Mathematics}
(D.\ Carfi, M.\ L.\ Lapidus, E.\ P.\ J.\ Pearse and M.\ van Flankenhuijsen, eds.),
Contemporary Mathematics, vol.\ {601}, Amer.\ Math.\ Soc., Providence, R.I.,
2013, pp.\ 227--249.\ (dx.doi.org/10.1090/conm/601/11959.) (Also: e-print, {\tt arXiv:1302.4007v2 [math-ph]}, 2014;
IHES preprint, IHES/M/12/31, 2012.)

\bibitem[Lap1]{Lap1} M.\ L.\ Lapidus, Fractal drum, inverse spectral problems for elliptic operators
and a partial resolution of the Weyl--Berry conjecture, {\it Trans.\ Amer.\ Math.\ Soc.} {\bf 325}
(1991), 465--529.

\bibitem[Lap2]{Lap2} M.\ L.\ Lapidus, Spectral and fractal geometry: From the Weyl-Berry conjecture for the vibrations of fractal drums to the
Riemann zeta-function, in: {\em Differential Equations and Mathematical
Physics} (C.\ Bennewitz, ed.), Proc.\ Fourth UAB Internat.\
Conf.\ (Birmingham, March 1990), Academic Press, New
York, 1992, pp.\ 151--182.

\bibitem[Lap3]{Lap3} M.\ L.\ Lapidus, Vibrations of fractal drums, the Riemann hypothesis,
waves in fractal media, and the Weyl--Berry conjecture,
in: {\em Ordinary and Partial Differential Equations} (B.
D.\ Sleeman and R.\ J.\ Jarvis, eds.), vol.\ IV, Proc.\ Twelfth
Internat.\ Conf.\ (Dundee, Scotland, UK, June 1992), Pitman
Research Notes in Math. Series, vol.\ {289}, Longman Scientific
and Technical, London, 1993, pp.\ 126--209.

\bibitem[Lap4]{La4} M.\ L.\ Lapidus, Analysis on fractals, Laplacians on self-similar sets, noncommutative geometry and spectral dimensions, {\em Topological Methods in Nonlinear Analysis}, {\bf 4} (1994), 137--195. (Special issue dedicated to Jean Leray.)

\bibitem[Lap5]{La5} M.\ L.\ Lapidus, Towards a noncommutative fractal geometry?
Laplacians and volume measures on fractals, in: {\em Harmonic
Analysis and Nonlinear Differential Equations $($A Volume in Honor of Victor L.\ Shapiro$)$}, Contemporary Mathematics, vol. 208, Amer.\ Math.\ Soc., Providence, R.\ I., 1997, pp.\ 211--252.

\bibitem[Lap6]{lapz} M.\ L.\ Lapidus, {\em In Search of the Riemann Zeros$:$ Strings, Fractal Membranes and Noncommutative Spacetimes}, Amer.\ Math.\ Soc., Providence, R.\ I., 2008.

\bibitem[Lap7]{lap7} M.\ L.\ Lapidus, Towards quantized number theory: Spectral operators and an asymmetric criterion for the Riemann hypothesis, {\em Philos.\ Trans.\ Royal Soc.\ Ser.\ A}, No.\ 2047, {\bf 373} (2015), 24pp.; DOI:10.1098/rsta.2014.0240.\ (Special issue titled ``{\em Geometric Concepts in the Foundations of Physics}''.) (Also: e-print, {\tt arXiv:1501.05362v2 [math-ph]}, 2015; IHES preprint, IHES/M/15/12, 2015.)

\bibitem[Lap8]{lap8} M.\ L.\ Lapidus, The sound of fractal strings and the Riemann hypothesis, in: {\em Analytic Number Theory$:$ In Honor of Helmut Maier's 60th Birthday} (C.\ B.\ Pomerance and T.\ Rassias, eds.), Springer Internat.\ Publ.\ Switzerland, Cham, 2015, pp.\ 201--252; doi:10.1007/978-3-319-22240-0${}_-$\kern-1pt14.\ (Also: e-print, arXiv: {\tt1505.01548v1 [math-ph]}, 2015; IHES preprint, 
{\tt IHES/M/15/11}, 2015.)



\bibitem[LapL{\' e}Ro]{LaLeRo} M.\ L.\ Lapidus, J.\ L{\'e}vy-V{\'e}hel and J.\ A.\ Rock, Fractal
strings and multifractal zeta functions, {\it Lett.\ Math.\
Phys.} No.\ 1, {\bf 88} (2009), 101--129 (special issue dedicated
to the memory of Moshe Flato). (Springer Open
Access: DOI 10.1007/s1105-009-0302-y.) (Also: e-print,
{\tt arXiv:math-ph/0610015v3}, 2009.)


\bibitem[LapLu]{LaLu} M.\ L.\ Lapidus and H.\ Lu, The geometry of p-adic fractal
strings: A comparative survey, in: {\it Advances in Non-Archimedean Analysis}, Proc.\ 11th Internat. Conf.\ on
{\it ``p-Adic Functional Analysis''} (Clermont-Ferrand, France,
July 2010), (J.\ Araujo, B.\ Diarra and A.\ Escassut, eds.), Contemporary
Mathematics, vol.\ {551}, Amer. Math. Soc., Providence,
R.~I., 2011, pp.\ 163--206.\ (Also: e-print, {\tt arXiv:1105.2966v1
[math.MG]}, 2011.)

\bibitem[LapLu-vFr1]{LapLuFr1} M.\ L.\ Lapidus, H.\ Lu and M.\ van Frankenhuijsen, Minkowski measurability and exact fractal tube formulas for $p$-adic self-similar strings, in: {\em Fractal Geometry and Dynamical Systems in Pure and Applied Mathematics I$:$ Fractals in Pure Mathematics} (D.\ Carfi, M.\ L.\ Lapidus, E.\ P.\ J.\ Pearse and M.\ van Frankenhuijsen, eds.), Contemporary Mathematics, vol.\ {600}, Amer.\ Math.\ Soc., Providence, R.\ I., 2013, pp.\ 161--184.\ (dx.doi.org/10.1090/conm/600/11949.)\ (Also: e-print, {\tt arXiv:1209.6440v1 [math.MG]}, 2012; IHES preprint, IHES/M/12/23, 2012.)


\bibitem[LapLu-vFr2]{LapLuFr2} M.\ L.\ Lapidus, H.\ Lu and M.\ van Frankenhuijsen, Minkowski dimension and explicit tube formulas for $p$-adic fractal strings, preprint, 2016.\ (Also:
e-print, {\tt arXiv:1603.09409v1 [math-ph]}, 2016.)

\bibitem[LapMa1]{LaMa2} M.\ L.\ Lapidus and H.\ Maier, Hypoth{\`e}se de Riemann, cordes
fractales vibrantes et conjecture de Weyl--Berry modifi{\' e}e, {\it C.\
R.\ Acad.\ Sci.\ Paris S{\'e}r.\ I Math.} {\bf 313} (1991), 19--24.

\bibitem[LapMa2]{LapMa2} M.\ L.\ Lapidus and H.\ Maier, The Riemann hypothesis and inverse spectral problems for fractal strings,
{\it J.\ London Math.\ Soc.} (2) {\bf 52} (1995), 15--34.




\bibitem[{LapPe}1]{lappe1}
M.\ L.\ Lapidus and E.\ P.\ J.\ Pearse, A tube formula for the
Koch snowflake curve, with applications to complex dimensions,
{\it J.\ London Math.\ Soc.} No. 2, {\bf 74} (2006), 397--414.\ (Also:
e-print, {\tt arXiv:math-ph/0412029v2}, 2005.)


\bibitem[{LapPe}2]{lappe3}
M.\ L.\ Lapidus and E.\ P.\ J.\ Pearse, Tube formulas for self-similar
fractals, in: {\it Analysis on Graphs and its Applications}
(P.\ Exner, {\it et al.}, eds.), Proc.\ Symposia Pure Math., vol.\ {77},
Amer.\ Math.\ Soc., Providence, R.~I., 2008, pp. 211--230.\ (Also:
e-print, {\tt arXiv:math.DS/0711.0173}, 2007; IHES preprint, IHES/M/08/28,
2008.)

\bibitem[{LapPe}3]{lappe2}
M.\ L.\ Lapidus and E.\ P.\ J.\ Pearse, Tube formulas and
complex dimensions of self-similar tilings, {\it Acta Applicandae
Mathematicae} No. 1, {\bf 112} (2010), 91--137. (Springer Open
Access: DOI 10.1007/S10440-010-9562-x.) (Also: e-print,
{\tt arXiv:math.DS/0605527v5}, 2010; IHES preprint, IHES/M/08/27, 2008.)


\bibitem[LapPeWi1]{lappewi1}
Lapidus M.\ L., Pearse E.\ P.\ J., and  Winter S., Pointwise tube
formulas for fractal sprays and self-similar tilings with arbitrary
generators, {\it Adv. in Math.} {\bf 227} (2011), 1349--1398. (Also:
e-print, {\tt arXiv:1006.3807v3} [math.MG], 2011.)

\bibitem[LapPeWi2]{lappewi2}
Lapidus M.\ L., Pearse E.\ P.\ J., and  Winter S.,
Minkowski measurability results for self-similar tilings and fractals with monophase generator,
in: {\it Fractal Geometry and Dynamical Systems in Pure and Applied Mathematics$:$ I.\ Fractals in Pure Mathematics}
(D.\ Carf\` i, M.\ L.\ Lapidus, E.\ P.\ J.\ Pearse and M.\ van Frankenhuijsen, eds.),
Contemporary Mathematics, vol.\ {600}, Amer.\ Math.\ Soc., Providence, R.\ I.\, 2013,
pp.\ 185--203.\ (Also: e-print, {\tt  arXiv:1104.1641v3 [math.MG]}, 2012; IHES preprint, IHES/M/12/33, 2012.)


\bibitem[LapPo1]{LaPo1} M.\ L.\ Lapidus and C.\ Pomerance, Fonction z{\^e}ta de Riemann
et conjecture de Weyl--Berry pour les tambours fractals, {\it C.\
R.\ Acad.\ Sci.\ Paris S{\' e}r.\ I Math.} {\bf 310} (1990), 343--348.

\bibitem[LapPo2]{lapiduspom}
M.\ L.\ Lapidus and C.\ Pomerance, The Riemann zeta-function and the one-dimensional Weyl--Berry conjecture for fractal drums,
{\it Proc.\ London Math.\ Soc.} (3) {\bf 66} (1993), No.\ 1, 41--69.


\bibitem[LapPo3]{LapPo2} M.\ L.\ Lapidus and C.\ Pomerance, Counterexamples to the modified Weyl-Berry conjecture on fractal drums, {\it Math.\ Proc.\ Cambridge Philos.\ Soc.} {\bf 119} (1996), 167--178.

\bibitem[LapRa\v{Z}u1]{fzf} M.\ L.\ Lapidus, G.\ Radunovi\'c and D.\ \v Zubrini\'c, {\em Fractal Zeta Functions and Fractal Drums$:$ Higher-Dimensional Theory of Complex Dimensions}, Springer Monographs in Mathematics, Springer, New York, 2016, in press, approx.\ 680 pages. 




\bibitem[LapRa\v{Z}u2]{dtzf} M.\ L.\ Lapidus, G.\ Radunovi\'c and D.\ \v Zubrini\'c, Distance and tube zeta functions of fractals and arbitrary compact sets, {\it Adv. in Math.} {\bf 307} (2017), 1215--1267.\ (Also: e-print, {\tt arXiv:1506.03525v2 [math-ph]}, 2016; IHES preprint, IHES/M/15/15, 2015.)



\bibitem[LapRa\v{Z}u3]{mefzf} M.\ L.\ Lapidus, G.\ Radunovi\'c and D.\ \v Zubrini\'c, Complex dimensions of fractals and meromorphic extensions of fractal zeta functions, preprint, 2016.\ (Also: e-print, {\tt arXiv:1508.04784v3 [math-ph]}, 2016.)
 
\bibitem[LapRa\v{Z}u4]{refds} M.\ L.\ Lapidus, G.\ Radunovi\'c and D.\ \v{Z}ubrini\'c,  {Zeta functions and complex dimensions of relative fractal drums$:$ Theory, examples and applications}, {\em Dissertationes Math.\ $($Rozprawy Mat.$)$} {\bf 526} (2017), 1--105.\ (Also: e-print, {\tt arXiv:1603.00946v3 [math-ph]}, 2016.)







\bibitem[LapRa\v{Z}u5]{brezis} M.\ L.\ Lapidus, G.\ Radunovi\'c and D.\ \v Zubrini\'c, 
Fractal zeta functions and complex dimensions of relative fractal drums, {\em J.\ Fixed Point Theory and Appl.} No.\ 2, {\bf 15} (2014), 321--378.\ Festschrift issue in honor of Haim Brezis' 70th birthday. (DOI: 10.1007/s11784-014-0207-y.) (Also: e-print, {\tt arXiv:1407.8094v3 [math-ph]}, 2014.)

\bibitem[LapRa\v{Z}u6]{tabarz} M.\ L.\ Lapidus, G.\ Radunovi\'c and D.\ \v{Z}ubrini\'c, Fractal zeta functions and complex dimensions: A general higher-dimensional theory, survey article, in: {\em 
Fractal Geometry and Stochastics V} (C.\ Bandt, K.\ Falconer and M.\ Z\"ahle, eds.), Proc.\ Fifth Internat.\ Conf.\ (Tabarz, Germany, March 2014), {\em Progress in Probability}, vol.\ {70}, Birkh\"auser/Springer Internat., Basel, Boston and Berlin, 2015, pp.\ 229--257; DOI: 10.1007/978-3-319-18660-3${}_-$\kern-1pt13.\ (Based on a plenary lecture given by the first author at that conference.) (Also: e-print, {\tt arXiv:1502.00878v3 [math.CV]}, 2015.)

\bibitem[LapRa\v{Z}u7]{cras_ftf} M.\ L.\ Lapidus, G.\ Radunovi\'c and D.\ \v Zubrini\'c, Fractal tube formulas and a Minkowski measurability criterion for compact subsets of Euclidean spaces, {\em Discrete and Continuous Dynamical Systems Ser.\ S}, in press 2018.\ (Also: e-print, {\tt arXiv:1411.5733v4 [math-ph]}, 2016; IHES preprint, IHES/M/15/17, 2015.)

\bibitem[LapRa\v{Z}u8]{ftf_b} M.\ L.\ Lapidus, G.\ Radunovi\'c and D.\ \v Zubrini\'c, Minkowski measurabilty criteria for compact sets and relative fractal drums in Euclidean spaces, preprint, 2016.\ (Also: e-print, {\tt arXiv:1609.04498 [math-ph]}, 2016.)


\bibitem[LapRo]{lapidusrock} M.\ L.\ Lapidus and J.\ A.\ Rock, Towards zeta functions and complex dimensions of multifractals,
{\it Complex Variables and Elliptic Equations} No.\ 6, {\bf 54} (2009), 545--560.\ (Also: e-print, {\tt    arXiv:math-ph/0810.0789}, 2008; IHES preprint, IHES M/15/16, 2015.)


\bibitem[LapRo\v{Z}u]{laprozu} M.\ L.\ Lapidus, J.\ A.\ Rock and D.\ \v Zubrini\'c, Box-counting fractal strings, zeta functions, and equivalent forms of Minkowski dimension, in: {\it Fractal Geometry and Dynamical Systems in Pure and Applied Mathematics I$:$ Fractals in Pure Mathematics} (D.\ Carf\`i, M.\ L.\ Lapidus, E.\ P.\ J.\ Pearse and M.\ van Frankenhuijsen, eds.), Contemporary Mathematics, vol.\ {600}, Amer.\ Math.\ Soc., Providence, R.\ I., 2013, pp.\ 239--271.\
(Also: e-print, {\tt arXiv:1207.6681v2 [math-ph]}, 2013; IHES preprint, IHES/M/12/22, 2012.)

\bibitem[Lap-vFr1]{lapidusfrank}
M.\ L.\ Lapidus and  M.\ van Frankenhuijsen, {\em Fractal Geometry and Number Theory}: Complex Dimensions of
Fractal Strings and Zeros of Zeta Functions, Birkh\"auser, Boston, 2000.

\bibitem[Lap-vFr2]{lapidusfrank06} M.\ L.\ Lapidus and  M.\ van Frankenhuijsen,
{\em Fractal Geometry, Complex Dimensions and Zeta Functions$:$ Geometry and Spectra of Fractal Strings},
Springer Monographs in Mathematics, Springer, New York, 2006.


\bibitem[Lap-vFr3]{lapidusfrank12} M.\ L.\ Lapidus and  M.\ van Frankenhuijsen,
{\em Fractality, Complex Dimensions,
and Zeta Functions$:$
Geometry and Spectra of Fractal Strings}, second revised and enlarged edition (of the 2006 edn.\ [Lap-vFr2]), Springer Monographs in Mathematics, Springer, New York, 2013.

\bibitem[L\'eMen]{lemen}
J.\ L\'evy-V\'ehel and F.\ Mendivil, Multifractal and higher-dimensional
zeta functions, {\em Nonlinearity} No.\ 1, {\bf24} (2011),
259--276.

\bibitem[Man]{Man} B.\ B.\ Mandelbrot, {\it The Fractal Geometry of Nature}, revised and
enlarged edition\ (of the 1977 edn.), W.\ H.\ Freeman, New York, 1983.

\bibitem[Mattn]{Matt} L.\ Mattner, Complex differentiation under the integral, {\em Nieuw Arch. Wiskd.} No. 1, (5) {\bf 2} (2001), 32--35.




\bibitem[Mil]{Mil} J.\ Milnor, Euler characteristic and finitely additive Steiner measure, in: {\em John Milnor$:$ Collected Papers}, vol.\ {1}, {\em Geometry}, Publish or Perish, Houston, 1994, pp.\ 213--234.\ (Previously unpublished.)

\bibitem[Mink]{minkow} H.\ Minkowski, Theorie der konvexen K\"orper, insbesondere Begr\"undung ihres Oberfl\"achenbegriffs, in: {\em Gesammelte Abhandlungen von Hermann Minkowski} (part II, Chapter XXV), Chelsea, New York, 1967, pp.\ 131--229.\ (Originally reprinted in: Gesamm.\ Abh., vol.\ II, Leipzig, 1911.)

\bibitem[Mit\v{Z}u]{mizu} D.\ Mitrovi\'c and D.\ \v Zubrini\'c, {\em Fundamentals of Applied Functional Analysis}, 
Pitman Monographs and Surveys in Pure and Applied Mathematics, Addison-Wesley-Longman, 1998.

\bibitem[MorSep]{MorSep} G.\ Mora and J.\ M.\ Sepulcre, Privileged regions in critical strips of non-lattice Dirichlet polynomials, {\em Complex Anal.\ Oper.\ Theory} No.\ 4, {\bf 7} (2013), 1417--1426.

\bibitem[MorSepVi]{MorSepVi} G.\ Mora, J.\ M.\ Sepulcre and T.\ Vidal, On the existence of exponential polynomials with prefixed gaps, {\em Bull.\ London Math.\ Soc.}\ 
No.\ 6, {\bf 45} (2013), 1148--1162.




\bibitem[Ol1]{Ol1} L.\ Olsen, Multifractal tubes: Multifractal zeta functions, multifractal Steiner tube formulas and explicit formulas, in: {\em Fractal Geometry and Dynamical Systems in Pure and Applied Mathematics I$:$ Fractals in Pure Mathematics} (D.\ Carf\`i, M.\ L.\ Lapidus and M.\ van Frankenhuijsen, eds.), Contemporary Mathematics, vol.\ {600}, Amer.\ Math.\ Soc., Providence, R.\ I., 2013, 
pp.\ 291--326.\ (dx.doi.org/10.1090/conm/600/11920.)

\bibitem[Ol2]{Ol2} L.\ Olsen, Multifractal tubes, in: {\em Further Developments in Fractals and Related Fields}, Trends in Mathematics, Birkh\"auser/Springer, New York, 2013, pp.\ 161--191.

 \bibitem[OlSni]{os}  L.\ Olsen and N.\ Snigireva, $L^q$ spectra spectra and R\'enyi dimensions of in-homogeneous self-similar measures, {\em Nonlinearity} {\bf 20}, (2007), 151--175.

\bibitem[Pe]{pe2}  E.\ P.\ J.\ Pearse, Canonical self-affine tilings by iterated function
systems, {\it Indiana Univ.\ Math.\ J.} No. 6, {\bf 56} (2007), 3151--3169.\ (Also: e-print, {\tt arXiv:math.MG/0606111}, 2006.)



\bibitem[PeWi]{pewi}
E.\ P.\ J.\ Pearse and S.\ Winter, Geometry of canonical self-similar
tilings, {\it Rocky Mountain J. Math.} No.\ 4, {\bf 42} (2012), 1327--1357.\ (Also:
e-print, {\tt arXiv:0811.2187}, 2009.)

\bibitem[PiStVi]{PiStVi} S.\ Pilipovi\' c, B.\ Stankovi\' c and J.\ Vindas, {\em Asymptotic behavior of generalized functions. Series on Analysis, Applications and Computations}, vol.\ 5, World Scientific, Hackensack, NJ, 2012.


%


\bibitem[Ra1]{ra} G.\ Radunovi\'c, {\em Fractal Analysis of Unbounded Sets in Euclidean Spaces and Lapidus Zeta Functions}, Ph.D. Thesis, University of Zagreb, Croatia, 2015.

\bibitem[Ra2]{ra2} G.\ Radunovi\'c, Fractality and Lapidus zeta functions at infinity, {\em Mathematical  Communications} {\bf21}
(2016), 141--162, 2016.\ (Also: e-print, {\tt arXiv:1510.06449v2 [math-ph]}, 2015.)

\bibitem[RatWi]{winter} J.\ Rataj and S.\ Winter, Characterization of Minkowski measurability in terms of surface area, {\it J.\ Math.\ Anal.\ Appl.} {\bf 400} (2013), 120--132.\ (Also:
e-print, {\tt arXiv: 1111.1825v2 [math.CA]}, 2012.)

\bibitem[ReeSim1]{resi1} 
M.\ Reed and B.\ Simon, {\em Methods of Modern Mathematical
Physics}, vol.\ I, {\em Analysis of Operators}, Academic Press, New
York, 1980.



%





\bibitem[Schn1]{Schn1} R.\ Schneider, Curvature measures of convex bodies, {\em Ann.\ Mat.\ Pura Appl.}\ IV, {\bf 116} (1978), 101--134.

\bibitem[Schn2]{Schn} R.\ Schneider, {\em Convex Bodies$:$ The Brunn--Minkowski Theory}, Cambridge Univ.\ Press, Cambridge, 2003.

\bibitem[Schw]{Schw} L.\ Schwartz, {\em Th\'eorie des Distributions}, revised and enlarged edition (of the 1951 edn.), Hermann, Paris, 1966.




\bibitem[Sta]{stacho} L.\ L.\ Stach\'o, On the volume function of parallel sets, Acta Sci.~Math. {\bf 38} (1976), 365--374.

\bibitem[Stein]{Stein} J.\ Steiner, \" Uber parallele Fl\" achen, {\em Monatsb.\ preuss.\ Akad.\ Wiss.}, Berlin, 1840, pp.\ 114--118.\ (Reprinted in: {\em Gesamm.\ Werke}, vol.\ II, pp.\ 173--176.)


\bibitem[Tep1]{tep1} A.\ Teplyaev, Spectral zeta functions of symmetric fractals,
in: {\em Fractal Geometry and Stochastics III}, Progress in Probability, vol. {57}, Birkh\"auser-Verlag, Basel,
2004, pp.\ 245--262.


\bibitem[Tep2]{tep2} A.\ Teplyaev, Spectral zeta functions of fractals and the complex
dynamics of polynomials, {\em Trans.\ Amer.\ Math.\ Soc.}\ {\bf 359}
(2007), 4339--4358. (Also: e-print, {\tt arXiv:math.SP/0505546},
2005.)

\bibitem[Tit1]{titch} E.\ C.\ Titchmarsh, {\em Introduction to the Theory of Fourier Integrals}, second edition, Oxford University Press, Oxford, 1948.

\bibitem[Tit2]{Titch2}
E.\ C.\ Titchmarsh, {\it The Theory of the Riemann Zeta-Function}, second edition (revised by D.\ R.\ Heath-Brown), Oxford Univ.\ Press, Oxford, 1986.

\bibitem[Tri1]{tricot} Tricot C., {\em Curves and Fractal Dimension}, Springer-Verlag, Berlin, 1995.

\bibitem[Tri2]{tri2} Tricot C., General Hausdorff functions, and the notion of one-sided measure and dimension, {\em Ark.\ Mat.}\ {\bf 48} (2010), 149--176.




\bibitem[Wey]{Wey3} H.\ Weyl, On the volume of tubes, {\em Amer.\ J.\ Math.} {\bf 61} (1939), 461--472.

\bibitem[Wi]{Wi} S.\ Winter, Curvature measures and fractals, {\em Dissertationes Math.\ $($Rozprawy Mat.$)$} {\bf 453} (2008), 1--66. 

\bibitem[WiZ\"a]{WiZa} S.\ Winter and M.\ Z\"ahle, Fractal curvature measures of self-similar sets, {\em Adv.\ Geom.}\ {\bf 13} (2013), 229--244.


\bibitem[Z\"a1]{Za1} M.\ Z\"ahle, Integral and current representation of Federer's curvature measures, {\em Arch.\ Math.}\ {\bf 46} (1986), 557--567.

\bibitem[Z\"a2]{Za2} M.\ Z\"ahle, Curvatures and currents for unions of sets with positive reach, {\em Geom. Dedicata} {\bf 23} (1987), 155--171.

\bibitem[Z\"a3]{Za3} M.\ Z\"ahle, Approximation and characterization of generalized Lipschitz--Killing curvatures, {\em Ann.\ Global Anal.\ Geom.}\ No.\ 3, {\bf 8} (1990), 249--260.

\bibitem[Z\"a4]{Za4} M.\ Z\"ahle, Lipschitz--Killing curvatures of self-similar random fractals, {\em Trans.\ Amer.\ Math.\ Soc.}\ No.\ 5, {\bf 363} (2011), 2663--2684.\ (DOI: 10.1090/S0002-9947-2010-05198-0.)

\bibitem[Z\"a5]{Za5} M.\ Z\"ahle, Curvature measures of fractal sets, survey article, in: {\em Fractal Geometry and Dynamical Systems in Pure and Applied Mathematics I$:$ Fractals in Pure Mathematics} (D.\ Carf\`i, M.\ L.\ Lapidus and M.\ van Frankenhuijsen, eds.), Contemporary Mathematics, vol.\ {600}, Amer.\ Math.\ Soc., Providence, R.I., 2013, pp.\ 381--399.\ (dx.doi.org/10.1090/conm/600/11953.)


%
%

\bibitem[\v Zu]{rae} \v Zubrini\'c D., Analysis of Minkowski contents of fractal sets and applications, 
{\it Real Anal.\ Exchange} No. 2, {\bf 31} (2005/2006), 315--354.






\end{thebibliography}
\end{document}